\newcommand{\ind}{\mbox{$\perp\!\!\!\perp$}}
\DeclareMathOperator*{\argmin}{arg\,min}
\renewcommand{\algocf@captiontext}[2]{\quad #1\algocf@typo. \AlCapFnt{}#2} 
\def\@algocf@capt@plain{top}
\renewcommand{\algocf@makecaption}[2]{%
  \addtolength{\hsize}{\algomargin}%
  \sbox\@tempboxa{\algocf@captiontext{#1}{#2}}%
  \ifdim\wd\@tempboxa >\hsize
    \hskip .5\algomargin%
    \parbox[t]{\hsize}{\algocf@captiontext{#1}{#2}}
  \else%
    \global\@minipagefalse%
    \hbox to\hsize{\box\@tempboxa}
  \fi%
  \addtolength{\hsize}{-\algomargin}%
}
\def\T{{ \mathrm{\scriptscriptstyle T} }}
\newtheorem{proposition}{Proposition}
\newtheorem{theorem}{Theorem}
\newtheorem{lemma}{Lemma}
\newcommand*{\addFileDependency}[1]{
  \typeout{(#1)}
  \@addtofilelist{#1}
  \IfFileExists{#1}{}{\typeout{No file #1.}}
}
\newcommand*{\myexternaldocument}[1]{%
    \externaldocument{#1}%
    \addFileDependency{#1.tex}%
    \addFileDependency{#1.aux}%
}
\newtheorem{assumption}{Assumption}
\newtheorem{definition}{Definition}
\newcommand{\p}{{P}}
\newcommand{\bpsi}{\Psi}
\newcommand{\QED}{$\hfill\square$}
\newcommand{\RR}{\mathbb{R}}
\newcommand{\ee}{\end{aligned} \end{equation}}
\newcommand{\eq}{\end{quote}}
\newcommand{\diag}{\text{diag}}
\newcommand{\ep}{\end{parts}}
\newcommand{\bqp}{\begin{quote}\begin{parts}}
\newcommand{\epq}{\end{parts}\end{quote}}
\newcommand{\M}{}
\newcommand{\E}{E}
\newcommand{\var}{\text{Var}}
\newcommand{\lp}{\mathcal{L}_{\mathcal{P}}^2}
\def\T{{ \mathrm{\scriptscriptstyle T} }} 
\newcommand{\Rom}[1]{\text{\uppercase\expandafter{\romannumeral #1\relax}}}
\newcommand{\bee}{\begin{equation}\begin{aligned}}
\renewcommand*{\Pr}{\text{pr}}
\newtheorem{remark}{Remark}
\title{Towards R-learner 
with Continuous Treatments}
\author{Yichi Zhang\thanks{Department of Statistics, Indiana University Bloomington; Email: \texttt{yiczhan@iu.edu}}\,\,,  Dehan Kong\thanks{Department of Statistical Sciences, University of Toronto; Email: \texttt{dehan.kong@utoronto.ca}}\,\,,   and   Shu Yang\thanks{Department of Statistics, North Carolina State University; Email: \texttt{syang24@ncsu.edu}}}
\date{}
\begin{document}
\maketitle
\begin{abstract}
The R-learner is widely used in causal inference due to its flexibility and efficiency in estimating the conditional average treatment effect. However, extending the R-learner framework from binary to continuous treatments introduces a non-identifiability issue, as the functional zero constraint inherent to the conditional average treatment effect cannot be directly imposed in the R-loss under continuous treatments. To address this, we propose a two-step identification strategy: we first identify an intermediary function via Tikhonov regularization, and then recover the conditional average treatment effect using a zero-constraining operator. Building on this strategy, an $\ell_2$-regularized R-learner framework is developed to estimate the conditional average treatment effect for continuous treatments. The new  framework accommodates modern, flexible machine learning algorithms to estimate both nuisance functions and target estimand. Theoretical properties are demonstrated when the target estimand is approximated by sieve approximation with B-splines, including error rates, asymptotic normality, and confidence intervals.  
\end{abstract}


\section{Introduction}
Estimating heterogeneous treatment effects is fundamental in causal inference and provides insights into various fields, including precision medicine, education, online marketing, and offline policy evaluation. Let $T$ be a treatment, $Y^{(t)}$ be the potential outcome had a subject received treatment level $T = t$, and $X$ be  pre-treatment covariates.  The treatment effect heterogeneity can be quantified by 
\begin{gather}\label{def:CATEtau}
\tau(x, t) = E\big(Y^{(t)} - Y^{(0)}\mid  X = x\big),
\end{gather} 
where $t = 0$ is a reference treatment level. Early works of conditional average treatment effect estimation focus on semiparametric models, including partially linear models \citep{robinson1988root} and  structural nested models \citep{robins1994correcting}. Recent years have witnessed the rapid growth of newly-developed methods with flexible models; see, e.g., \citet{chernozhukov2018double, wager2018estimation, kennedy2020optimal} and the references therein. One prevailing stream of works includes nonparametric \textit{meta-learners} including S- and X-learners \citep{kunzel2019metalearners} and R-learner \citep{nie2021quasi}, which are model-free and can be implemented via any off-the-shelf regression algorithm. S- and X-learners are tied to approximating the potential outcome surfaces using, e.g., the Bayesian additive regression trees \citep{hill2011bayesian}, deep learning \citep{shalit2017estimating}, and the causal random forest \citep{wager2018estimation}. However, they are not directly estimating the treatment effect. On the contrary, the R-learner and its variants \citep{kennedy2020optimal} target the treatment effect estimation. The R-learner capitalizes on the decomposition of the outcome model initially proposed by \citet{robinson1988root} in partially linear models and extends for machine learning-based treatment effect estimation \citep{nie2021quasi}. Notably,  when using the  two nuisance functions estimated under flexible models, the R-learner preserves the oracle property of  treatment effect estimation as though the nuisance functions were known. Despite these advantages, the current R-learner framework applies only to binary or categorical treatments. 
\par
In this article, we extend the R-learner framework to estimate the conditional average treatment effect flexibly with continuous treatments. This extension is nontrivial in both identification and estimation. 
Echoing the approach of \citet{nie2021quasi}, we focus  on adapting the generalized  R-learner loss function with continuous treatments.  Unlike the binary-treatment case, we demonstrate that directly minimizing the generalized R-loss does not uniquely identify $\tau(x,t)$ but instead identifies a broad class of functions. 
This is because  
the zero condition of $\tau(x,t)$: $\tau(x,0)\equiv 0$, cannot be easily encoded into the  R-loss when the treatment is continuous. We resolve this non-dentification issue by introducing a two-step identification strategy. This strategy is actualized through our $\ell_2$-regularized R-learner, leveraging the principles of Tikhonov regularization \citep{tikhonov1963solution}. It first approximates an intermediary
$\tilde\tau(x,t)=\tau(x,t)-E\{\tau(X,T)\mid X = x\}$, and then estimates $\tau(x,t)$ by transforming the intermediary estimation through a    zero-constraining operator, whose  output functions always satisfy the same zero condition as $\tau(x,t)$.  We elucidate the new R-learning framework through the method of sieves and provide a thorough investigation of the asymptotic properties. Unlike the classical sieve regression, theoretical analysis of the sieve R-learner involves low-rank matrices inherited from the non-identification nature of the generalized R-loss, which utilizes the toolkit in the matrix perturbation theory and  spectral analysis  \citep{bhatia2013matrix}. Whenever the nuisance functions can be approximated under the $o_P(n^{-1/4})$-convergence rate, the convergence rate of our proposed  estimator does not rely on the smoothness of the outcome model but relies only on the smoothness of the conditional average treatment effect  and propensity score functions---the two intrinsic components in $\tilde\tau(x,t)$. We   derive asymptotic normality of the R-learner, under which we propose a closed-form variance estimator and confidence intervals for inference. Numerical experiments  show the valid performance of our proposed R-learner in both estimation and inference.  
\subsection{Setup and notation}\label{sec:pre}
Let $\{Z_i = ( X_i, T_i, Y_i)\}_{i = 1}^n$ be independent and identically distributed samples from the distribution of $(X, T, Y)$, where $X = (X^{(1)},\dots,X^{(d)})$ is a $d$-dimensional vector of covariates. 
Under Rubin's causal model framework \citep{rubin1974estimating}, $Y^{(t)}$ is the potential outcome had the unit received treatment level $T =  t\in \RR$. The causal estimand is {\color{black}$\tau(x,t)$} defined in \eqref{def:CATEtau}. Due to the fundamental problem in causal inference that not all potential outcomes can be observed for a particular unit, $\tau(x, t)$ is not identifiable without further assumptions. 
We employ  common assumptions for continuous treatments \citep{kennedy2017non}.
\begin{assumption}[No unmeasured confounding]\label{A:UNC} We have
$\{Y^{(t)}\}_{t\in\mathbb{T}}\ind T \mid X$.
\end{assumption}
\begin{assumption}[Stable unit and treatment value]\label{A:NI}
When $T = t\in\mathbb{T}$, we have $Y = Y^{(t)}$. 
\end{assumption}
\begin{assumption}[Positivity]\label{A:CS}
There exists an $\varepsilon > 0$ such that the generalized propensity score $f(T = t\mid  X=x) \in(\epsilon,1/\epsilon)$ for any $(x,t)\in \mathbb{X}\times\mathbb{T}$. 
\end{assumption}
\par We summarize the notation used throughout the paper. For any vector $ v$, 
$\|v\|$ denotes its 
$\ell_2$ norm.   For any random variable $W\in\mathbb{W}$, $f(w)$ and $\mathcal{P}(w)$ denote its probability density function and probability measure. For any  function $g(w)$, $P_n \{g(W)\} = \sum_{i = 1}^n g(W_i)/n$ denotes its empirical expectation and $\|g\|_{\mathcal{L}^2} = \{\int_{w \in \mathbb{W}}g^2(w)dw\}^{1/2}$, $\|g\|_{\mathcal{L}^2_{\mathcal{P}}} = \{\int_{w \in \mathbb{W}}g^2(w)d\mathcal{P}(w)\}^{1/2}$, $\|g\|_{\mathbb{W}} = \sup_{w\in\mathbb{W}}|g(w)|$  denote its $\mathcal{L}^2$, $\mathcal{L}^2_{\mathcal{P}}$ and $\mathcal{L}^{\infty}$ norms. $\mathcal{L}_{\mathcal{P}}^2(W)$ represents the function space of all $g(w)$ with a bounded $\mathcal{L}^2_{\mathcal{P}}$ norm. When $g(w)$ is a multivariate function,  denote $\|g\|_{\mathbb{W}} = \sup_{w\in\mathbb{W}}\|g(w)\|$.  We require two nuisance functions, the conditional outcome mean and generalized propensity score: 
\bee\nonumber
m(x)= E(Y\mid X=x ),\quad\varpi(t\mid x)=f(T = t\mid X  =x).
\ee
We denote the full conditional outcome mean model $\mu(x,t) = E(Y\mid X = x , T = t)$, and hereby define the observation noises,
\bee\label{obvi}
\varepsilon_i = Y_i - \mu(X_i,T_i), \quad i = 1,\dots, n,
\ee 
where $E(\varepsilon_i\mid X_i,T_i) = 0$, following the definition of $\mu(x,t)$.  
\section{Continuous-treatment R-learner}\label{sec:main}
\subsection{The generalized R-loss}\label{sec:grloss}
We first generalize the idea of the Robinson's residual \citep{robinson1988root,nie2021quasi} to the continuous-treatment scenario. The unconfoundedness and stable unit and treatment value imply
\bee\label{key:ob}
Y_i^{(T_i)} &=  \mu(X_i,T_i) + \varepsilon_i = \mu(X_i,0) + \tau(X_i,T_i) + \varepsilon_i
,
\ee
where the first equality follows from  Assumption \ref{A:NI} and equation \eqref{obvi}, and the second equality follows from Assumption \ref{A:UNC} and the definition of $\tau(x,t)$. Model (\ref{key:ob}) is nonparametric and free of any additional structural assumptions.
Given $X_i$,  taking the conditional expectation on (\ref{key:ob}) leads to
\bee\label{key:ob2}
m(X_i)  = E\big(Y^{(T_i)}\mid X = X_i\big)  
=\mu(X_i,0) + E_{\varpi}\{\tau(X,T)\mid X = X_i\},
\ee
where the last equality is followed by the law of  total expectation such that
$
E(\varepsilon_i\mid X_i) = E\big\{E(\varepsilon_i\mid X_i,T_i)\mid X_i\big\} = E\big(0\mid X_i\big) = 0.
$
The notation $E_{\varpi}\{\tau(X,T)\mid X = X_i\}$ in \eqref{key:ob2} highlights the dependency of the conditional expectation on the generalized propensity score as $E_{\varpi}\{\tau(X,T)\mid X = X_i\} = \int_{t\in\mathbb{T}}\tau(X_i,t)\varpi(t\mid X_i)dt$.
By subtracting \eqref{key:ob2}  from \eqref{key:ob} on both left- and right-hand sides, we have 
\begin{gather}\label{con:decom1}
Y_i^{(T_i)}  - m(X_i) = \tau(X_i,T_i)  - E_{\varpi}\{\tau(X,T)\mid X = X_i\} + \varepsilon_i.
\end{gather}
By treating the left-hand side of (\ref{con:decom1}) as the response and the right-hand side except $\varepsilon_i$ as the mean function, we derive the following population loss function,
\bee\label{con:decom2}
L_c(h) = E\big[Y - m(X)-h(X,T) +E_{\varpi}\{h(X,T)\mid X\}\big]^2,
\ee
which is minimized at $h = \tau$. The above derivation parallels that of the binary-treatment R-learner. In fact, a similar loss function to $L_c(h)$ appears in \citet[$\mathsection$7]{nie2021quasi} under the multi-treatment setting. We view $L_c(h)$ as a natural generalization of the binary-treatment R-loss function \citep[$\mathsection$2]{nie2021quasi} to the continuous-treatment setting, and thus refer to $L_c(h)$ as the generalized R-loss.   In particular, under the binary-treatment case, $\tau(x,t)$ reduces to $\{\tau(x,0),\tau(x,1)\}$, where $\tau(x,0) = E(Y^{(0)} - Y^{(0)}\mid X = x) = 0$ for any $x \in \mathbb{X}$, and $\tau(x,1)$ is the conditional average treatment effect of interest. It suffices to estimate $\tau(x,1)$  by solving the $h(\cdot,1)$ that minimizes \eqref{con:decom2}, after imposing a zero condition of $h(\cdot,0)$:
\bee\label{eq:shape}
h(x,0) = 0, \quad \text{for  any } x\in \mathbb{X}.
\ee
More specifically, observing that under \eqref{eq:shape} one has $h(X,T) - E_{e}\{h(X,T)\mid X\}
 =\{T - e(X)\}h(X,1)$ a.s., where 
$e(x) = \Pr(T = 1\mid X = x)$ is the propensity score, 
 the R-loss function  \eqref{con:decom2} reduces to
\bee\label{rloss:binary}
L_{b}(h) = E\big[Y - m(X)-\{T - e(X)\}h(X,1)\big]^2
\ee
as   in \citet{nie2021quasi}, 
which is also minimized at $h = \tau$. 
\subsection{{\color{black}Non-identifiability of R-learner with continuous treatments}}\label{sec:simple}
The generalization of the R-loss from the binary treatment
to the continuous treatment is natural, which however results in a transition of the identifiability of $\tau(x,t)$.  
Suppose we construct $\hat{\tau}(x,t)$ by directly minimizing the empirical analogy of $L_c(\cdot)$ following \citet{nie2021quasi}. 
The R-learner for   continuous treatment will have poor estimation performance, due to the non-unique identifiability of the generalized R-loss. 
To illustrate, we conduct a simple simulation study where we use B-splines for the nonparametric approximation of $\hat{\tau}(x,t)$. 
The R-learner approximates $\tau(x,t)$ well when the treatment is binary (Fig.~\ref{fig:example}a) and poorly when the treatment is continuous (Fig.~\ref{fig:example}c). The simulation details are deferred to $\mathsection$\ref{sec:simeg}. 
 \begin{figure}[tp]\centering
      \begin{subfigure}[b]{0.19\textwidth}\centering\includegraphics[width=1\linewidth]{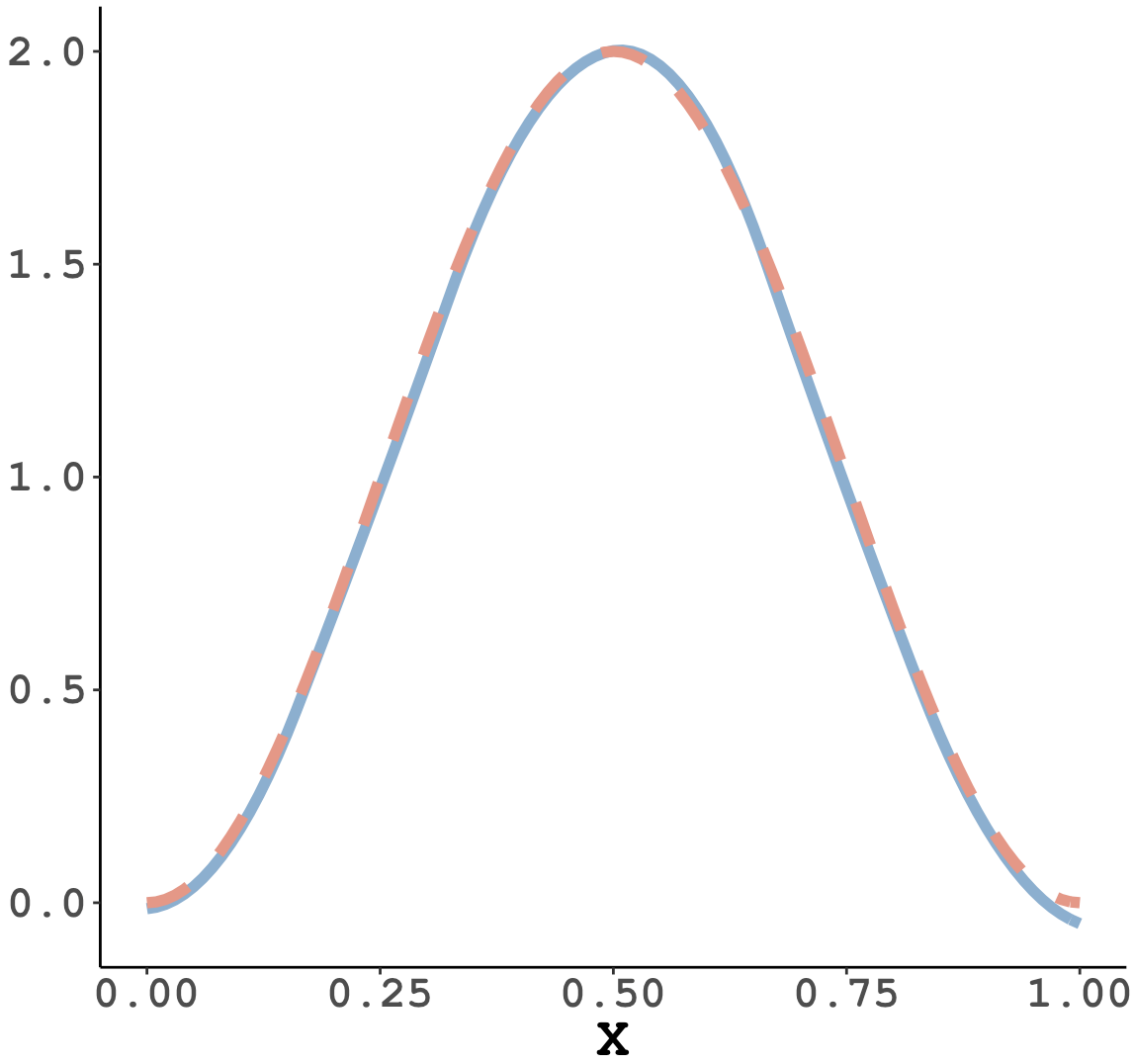}
         \caption{}
     \end{subfigure}
           \begin{subfigure}[b]{0.19\textwidth}\centering\includegraphics[width=1\linewidth]{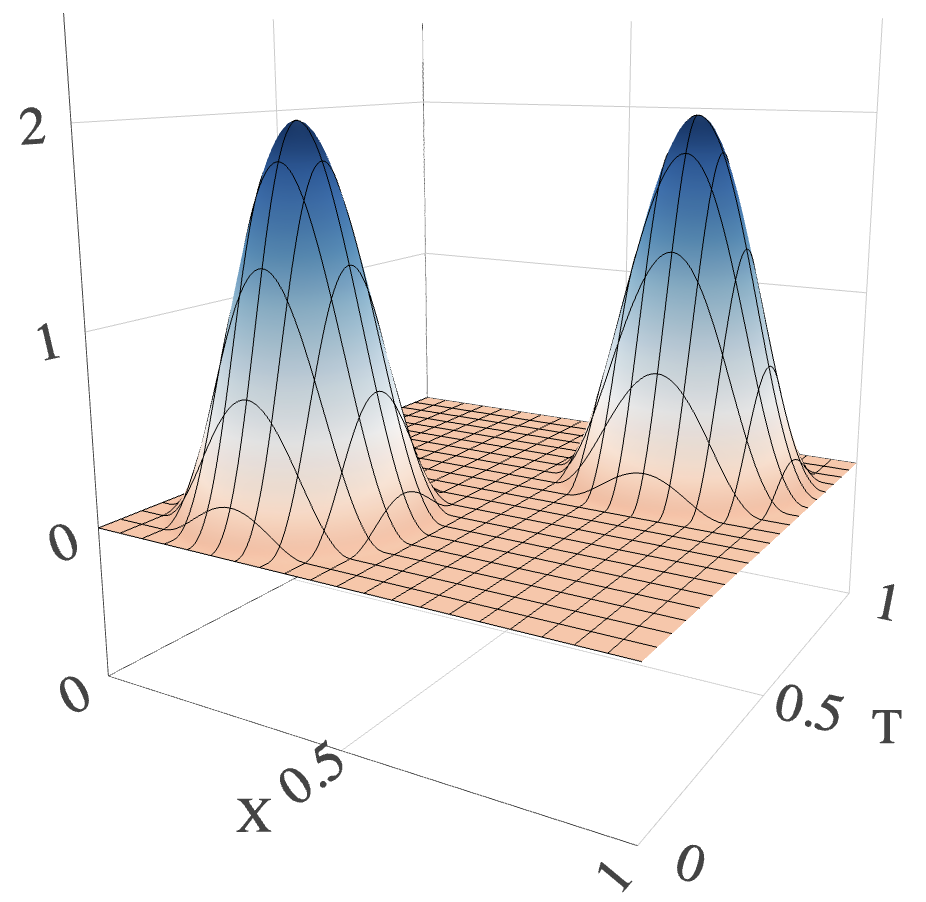}
         \caption{}
 
     \end{subfigure}
                \begin{subfigure}[b]{0.19\textwidth}\centering\includegraphics[width=1\linewidth]{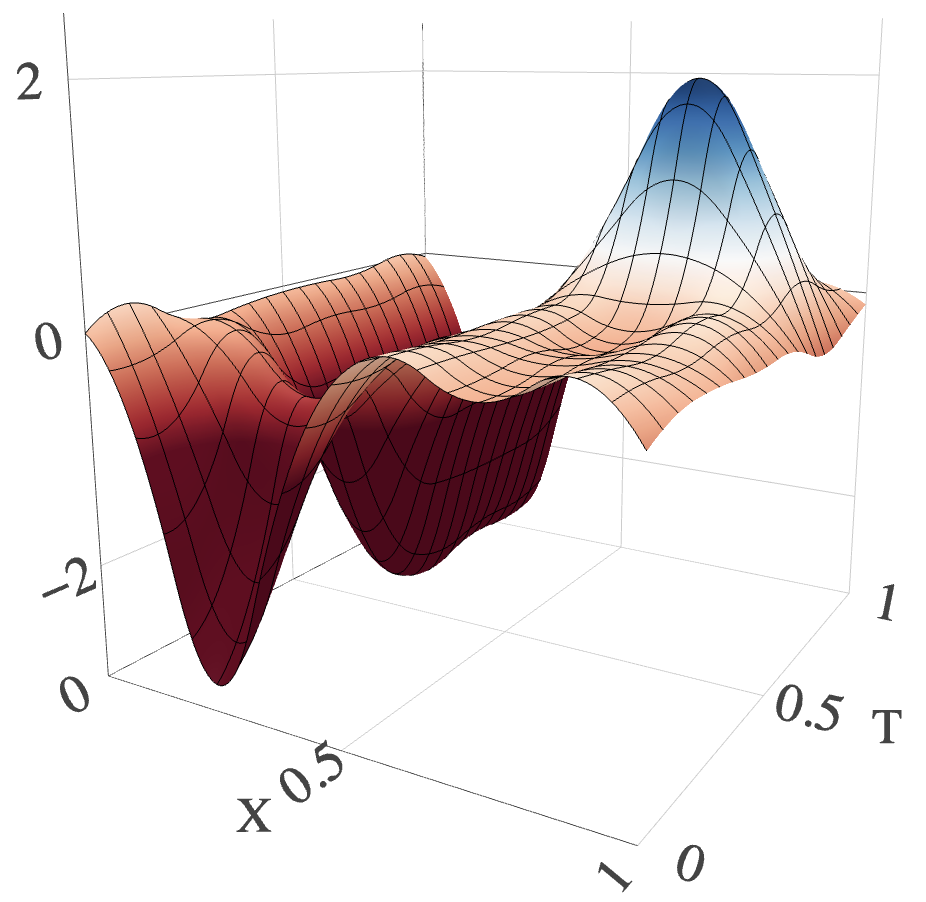}
         \caption{}
     \end{subfigure}
                \begin{subfigure}[b]{0.19\textwidth}\centering\includegraphics[width=1\linewidth]{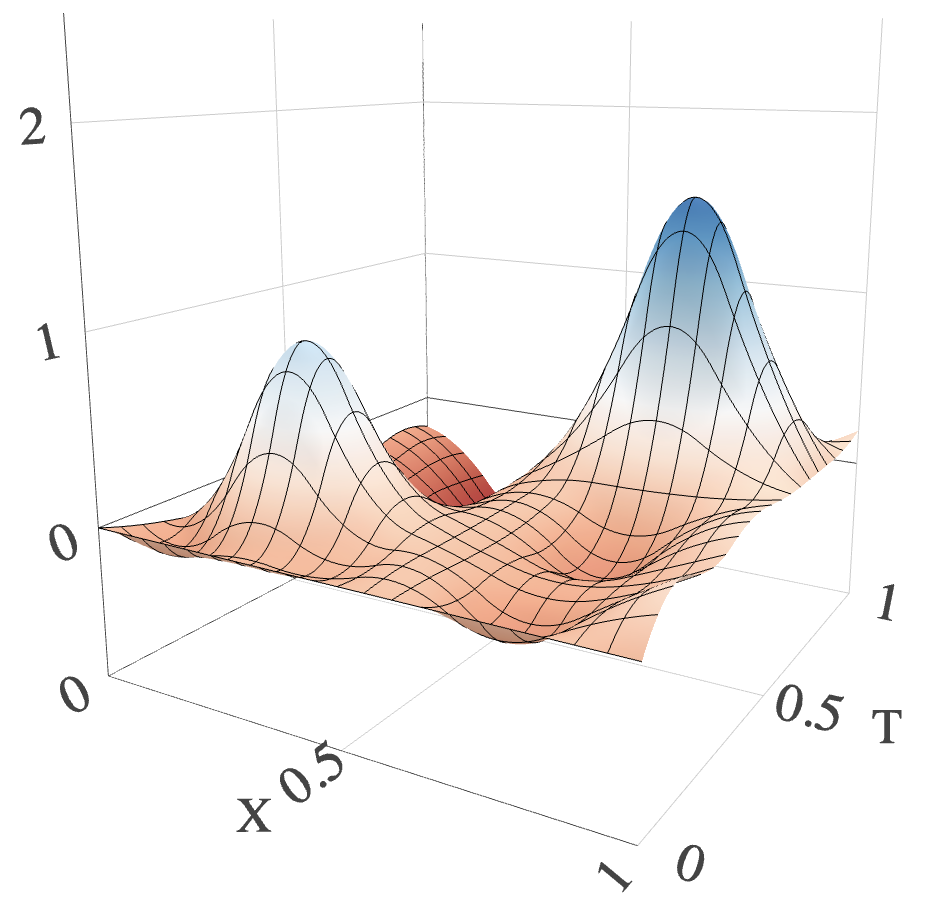}
         \caption{}
     \end{subfigure}
     \begin{subfigure}[b]{0.19\textwidth}\centering\includegraphics[width=1\linewidth]{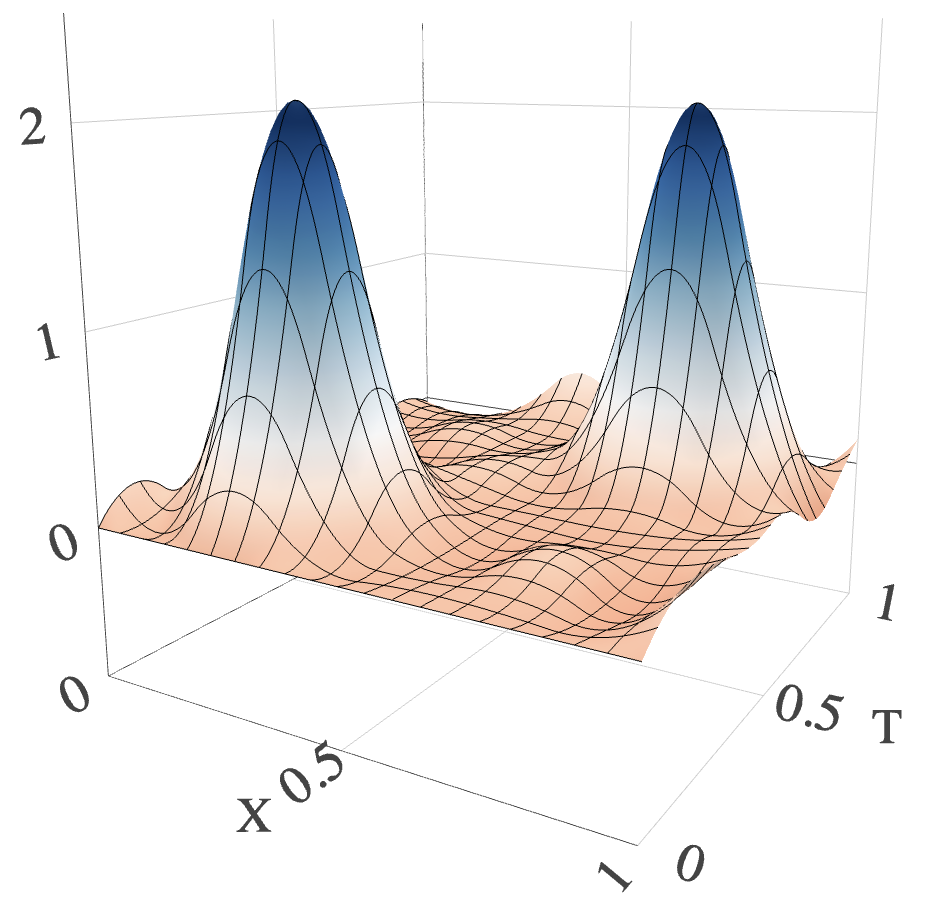}
         \caption{}
     \end{subfigure}
\caption{A simple simulation   with $n = 5000$. When the treatment is binary (Panel a),  the R-learner  (the  red  dashed curve) is consistent for $\tau(\cdot,1)$ (the blue curve).  
When the treatment is continuous, the directly generalized R-learner (Panel c) and the one-step generalized R-learner with a functional zero constraint (Panel d) are far away from $\tau$ (Panel b).  The  two-step $\ell_2$-regularized R-learner (Panel e) recovers $\tau$ well. The panels will be referenced  with  the corresponding estimators introduced in sequence in~$\mathsection$\ref{sec:main}.}
\label{fig:example}
\end{figure}
\par
We provide a theoretical explanation for the success and failure of the identifiabilities of the R-learner for binary treatment and  generalized R-learner for continuous treatment, respectively. 
Denote 
\bee\label{Sdef}
\mathcal{S} = \{h \mid h(X,T) = \tau(X,T) + s(X)\,\text{a.s., for any }  s\in \mathcal{L}^2_{\mathcal{P}}(X)\}.
\ee
 It is easy to check that for any $h \in \mathcal{S}$,
 $$
Y - m(X)-[h(X,T) - E_{\varpi}\{h(X,T)\mid X\}] = Y - m(X)-[\tau(X,T) - E_{\varpi}\{\tau(X,T)\mid X\}] \quad \text{a.s.}.
 $$
From \eqref{con:decom2}, any function {\color{black}$h\in\mathcal{S}$} minimizes the generalized R-loss $L_c(\cdot)$. Therefore, when $T$ is continuous, directly minimizing the generalized R-loss fails to uniquely identify the target estimand $\tau(x,t)$, as there are infinitely many solutions in $\mathcal{S}$. This result theoretically substantiates the ill-posedness of estimating $\tau(x,t)$ by minimizing the empirical counterpart of $L_c(\cdot)$ using nonparametric estimators, and also explains the failure-to-estimate issue illustrated in Fig.~\ref{fig:example}.
\par  Part (i) of Proposition \ref{thm:id} below rigorously proves that $\mathcal{S}$ in fact contains all  minima of  $L_c(\cdot)$ in  $\mathcal{L}_{\mathcal{P}}^2(X,T)$. In contrast, minimizing the binary-treatment R-loss \eqref{rloss:binary} which incorporates the zero condition \eqref{eq:shape},  can successfully  identify $\tau $, because  \eqref{eq:shape} narrows  the general solution set $\mathcal{S}$ into  
\bee\label{solutionsetbinary}\mathcal{S}^{\natural}=\{h\mid h(X,T)=\tau(X,T) \text{ a.s.}\}
\ee
with a formal proof relegated to  $\mathsection$\ref{sec:partIIint} in the Supplementary Material. Despite the popular use of  R-loss \eqref{rloss:binary} in  literature \citep{zhao2022selective,nie2021quasi}, the corresponding identification problem has not been rigorously discussed. Part (ii) of Proposition \ref{thm:id} fulfills this gap.
\begin{proposition}\label{thm:id}
Suppose Assumptions \ref{A:UNC}--\ref{A:NI} hold. We have the following identification results.
\begin{itemize}
\item[(i)] Suppose $T$ is a continuous treatment and $\tau\in\mathcal{L}_{\mathcal{P}}^2(X,T)$. Then $\mathcal{S}$  is the solution set of the following optimization problem,
\bee\label{opt:1}
\argmin_{h\in \mathcal{L}_{\mathcal{P}}^2(X,T)}L_c(h).
\ee
\item[(ii)] Suppose $T$ is a binary treatment and $\tau(\cdot,1)\in \mathcal{L}_{\mathcal{P}}^2(X)$. Additionally assume the positivity assumption such that  $e(x)$ satisfying $e(x) \in (\epsilon',1- \epsilon')$ for some fixed $\epsilon' > 0$ and for all $x\in \mathbb{X}$. Then among the set of interested functions, $\mathcal{L}_b = \{h\mid  h(\cdot,1)\in\mathcal{L}_{\mathcal{P}}^2(X) \text{ and }h(X,0) = 0 \text{ a.s.}\}$, {\color{black}$\mathcal{S}^\natural$ in \eqref{solutionsetbinary}} is the solution set of the   optimization problem:
$
\argmin_{h\in\mathcal{L}_b}L_b(h).
$
\end{itemize}

\end{proposition}
 \subsection{One-step nonparametric identification with a functional zero constraint}\label{sec:fakeid}
{\color{black}The contrast between Proposition~\ref{thm:id} (i) and (ii) suggests that, in order to identify $\tau(x,t)$ nonparametrically for the continuous-treatment R-learner, it can be beneficial to leverage the zero condition \eqref{eq:shape} satisfied by $\tau(x,t)$. 
A straightforward way to impose the zero condition 
 is to solve \eqref{opt:1} among all functions satisfying \eqref{eq:shape}.  From a population level, one may consider to solve the following  one-step optimization problem with  a functional zero constraint:
\bee\label{opt:1:rec}
\argmin_{h\in \mathcal{L}_{\mathcal{P}}^2(X,T)\cap\{h\mid h(x,0) = 0\text{ for any }x\in\mathbb{X}\}}L_c(h).
\ee
However, Proposition \ref{po:nonunique} shows that, this strategy continues to fail in achieving a nonparametric identification of $\tau(x, t)$.
\begin{proposition}\label{po:nonunique}
Suppose Assumptions \ref{A:UNC}--\ref{A:NI} hold, and $(X,T)$ has a bounded density function, i.e., $\sup_{(x,t)\in\mathbb{X}\times \mathbb{T}} |f(x,t)|< \infty$. Let $\check{\tau}(x,t\mid s)$ be a function taking the following form,
\bee\label{checktau}
\check{\tau}(x,t\mid s) = \begin{cases}
\tau(x,t) + s(x) & \text{when } t\neq 0
\\
0 & \text{when } t = 0
\end{cases},
\ee
where $s $ can be any function in $\mathcal{L}^2_{\mathcal{P}}(X)$. Then    $\check{\tau}(x,t\mid s)$ with any $s\in\mathcal{L}_{\mathcal{P}}^2(X)$ solves \eqref{opt:1:rec}. 
\end{proposition}
The failure of \eqref{opt:1:rec} to   identify $\tau$,  stems from the fundamental nature that the solutions of \eqref{opt:1}  in $\mathcal{S}$  
take the form of $\tau  + s $ in an almost-surely sense rather than in an exact sense.  Consequently, imposing the zero condition \eqref{eq:shape}  is insufficient to eliminate the solutions  in $\mathcal{S}$ that manifest as $\check{\tau}(x,t\mid s)$  as long as the density function of $(X,T)$ is bounded.  More specifically, any $\check{\tau}(x,t\mid s)$   can 
 satisfy the zero condition  \eqref{eq:shape} while still belong to $\mathcal{S}$. Thus the functional minima of \eqref{checktau} are non-unique, and the subsequent empirical nonparametric estimation procedure based on \eqref{opt:1:rec} still suffers severe ill-posedness due to such non-uniqueness.  
To be more specific, the B-spline functions satisfying the zero condition, can approximate any minimum $\check{\tau}(x,t\mid s)$ with smooth $\tau(x,t)$ and $s(x)$ arbitrarily well, 
 as the number of basis grows; see Proposition~\ref{prop:zerospline}(ii) 
 for details.  So for an R-learner formulated through  the empirical resolution of \eqref{opt:1:rec} with B-spline functions, although it can satisfy the zero condition \eqref{eq:shape}, it is still  ill-posed.  
  We demonstrate such ill-posedness by continuing our numerical experiment in $\mathsection$\ref{sec:simple}. The new   R-learner  satisfies the zero condition yet still fails to approximate $\tau(x,t)$ well  (Fig. \ref{fig:example}d). 
As shown in $\mathsection$\ref{sec:simeg} in the Supplementary File, after increasing the sample size, it continuously yields poor estimation performance. 
 
 \subsection{Two-step Tikhonov identification and $\ell_2$ regularization R-learner}\label{sec:id}
To resolve the ill-posedness of nonparametrically estimating $\tau (x,t)$ through the one-step R-loss optimization \eqref{opt:1:rec}, it is necessary to address the non-uniqueness   of the optimization solutions. Tikhonov  $\ell_2$-regularization \citep{tikhonov1963solution}, originally developed in non-linear functional analysis, aims to identify a specific solution for the functional least-squares problems with non-unique solutions \citep[$\mathsection$37.14]{zeidler2013nonlinear}. Tikhonov regularization, together with the idea of the functional zero constraint in $\mathsection$\ref{sec:fakeid}, inspires our two-step identification strategy for $\tau(x,t) $.
\par
In {Step I}, from the population level, we solve the 
$\ell_2$-regularized variant of \eqref{opt:1} with some given $\rho > 0$,
\bee\label{id:eq2} 
{\tau}_\rho = \argmin_{h\in \mathcal{L}_{\mathcal{P}}^2(X,T)} L_{c,\ell_2}(h\mid \rho) &= \argmin_{h\in \mathcal{L}_{\mathcal{P}}^2(X,T)} L_c(h) +\rho \|h\|_{\mathcal{L}_{\mathcal{P}}^2}^2.
\ee  
The new loss $L_{c,\ell_2}(h\mid \rho)$ is strictly convex over $\mathcal{L}_{\mathcal{P}}^2(X,T)$ due to the addition of a strictly convex functional $\rho\|h\|_{\mathcal{L}^2_{\mathcal{P}}}^2 = \rho E\{h^2(X,T)\}$. Thus minimizing $L_{c,\ell_2}(h\mid \rho)$  becomes  well-posed 
and yields a unique functional minimum ${\tau}_\rho $. 
    Theorem~\ref{pro:tech} explicitly characterizes this unique minimum. 
 
\begin{theorem}\label{pro:tech} Define the following  intermediary function in  $\mathcal{S}$:
\bee\label{ttaudef}
\tilde{\tau}(x,t) = \tau(x,t) - \E\{\tau(X,T)\mid X = x\}.
\ee  When Assumptions \ref{A:UNC}--\ref{A:NI} hold and $\tau\in\mathcal{L}_{\mathcal{P}}^2(X,T)$, given $\rho>0$, the solution set of \eqref{id:eq2}
is $\mathcal{S}_\rho = \{h\mid h(X,T) ={\tau_\rho}(X,T) \ \text{a.s.}\}$ with 
$
 {\tau_\rho}(x,t) = (1+\rho)^{-1}\tilde{\tau}(x,t).
$
\end{theorem}
 Theorem~\ref{pro:tech} implies that we can identify an intermediary function in  $\mathcal{S}$, namely, $\tilde{\tau}    $ in \eqref{ttaudef}, by augmenting~$\tau_\rho$ with a factor of $(1+\rho)$, i.e., $\tilde{\tau} = (1+\rho)\tau_\rho$.  
The $\tilde{\tau}    $ is a solution for the original     R-loss~$L_c(h)$. We defer a more detailed discussion on the intuition behind why minimizing the \(\ell_2\)-regularized R-loss helps to identify this specific solution \(\tilde{\tau}\) for the original R-loss, to Section~\ref{details:R-learner} in the Supplementary File.

In {Step II},  we transform $\tilde{\tau}  $ through a zero-constraining operator $\mathscr{C} (\cdot): \mathcal{L}_{\mathcal{P}}^2(X,T) \mapsto \mathcal{L}_{\mathcal{P}}^2(X,T)$ such that $\mathscr{C} (h)(x,t) = h(x,t) - h(x,0)$ for any $h\in\mathcal{L}_{\mathcal{P}}^2(X,T)$. Then we have
\[
\mathscr{C} (\tilde{\tau})(x,t) 
= \mathscr{C} ((1+\rho)\tau_\rho)(x,t) =  (1 + \rho) \left\{ {\tau}_\rho(x,t)  -  {\tau}_\rho(x, 0)\right\}.
\]
The operator \(\mathscr{C} (\cdot)\)  ensures that any function undergoing its transformation  will satisfy the zero condition~\eqref{eq:shape}. 
Because ${\tau}$ is the only function in $\mathcal{S}$ that satisfies the zero condition \eqref{eq:shape}, transforming any function in the solution set $\mathcal{S}$ will indeed identify $\tau$. Thus with $\tilde{\tau}\in\mathcal{S}$, $\mathscr{C} (\tilde{\tau})$ in our second step ultimately identifies~$\tau$. Formally, we have the following theoretical justification. 

\begin{theorem}\label{thm:onetwoid} 
Suppose Assumption~\ref{A:CS} holds  and  $\tau(x,t)$ is continuous at $t = 0$ for any $x\in\mathbb{X}$. Then $\mathscr{C} (h )  = \tau $, a.s., for any $h\in\mathcal{S}$.  As a special case,  $\mathscr{C}(\tilde{\tau}) = \tau$, a.s..
\end{theorem}
Our two-step identification strategy for $\tau(x,t)$ based on Tikhonov $\ell_2$-regularization, differs from the classical expectation-based identification approach, where the causal estimand is identified by the expectation of a specific estimating function, and the estimator is   the empirical average of that function; see e.g., the augmented inverse
propensity score weighting estimator for average treatment effect \citep{robins1994estimation}, or the debiased estimator for a general   estimand \citep{chernozhukov2023simple}. Our two-step identification relies on a sequence of functionals indexed by $\rho$, and any   
  fixed and positive $\rho$ with $n$ is sufficient to develop a well-posed empirical R-learner. On the other hand, a fast vanishing $\rho$ will   introduce large estimation variance. Intuitively, a small $\rho$ will result in a weak $\ell_2$-regularization. Then with finite samples, such weak $\ell_2$-regularization can not make the minimization of the empirical analogy of $L_{c,\ell_2}(h\mid \rho)$ significantly different from naively minimizing the empirical analogy of $L_{c}(h)$, which is ill-posed as shown in $\mathsection$\ref{sec:simple}. 
Such phenomenon will also be revealed by Theorem \ref{thm:main:mainpaper}, where a fast vanishing $\rho$   cannot yield a well-controlled error rate of our proposed estimator, while a fixed $\rho > 0$ could.

\par
Our identification strategy   leads to an  $\ell_2$-regularized R-learning procedure, briefly, the $\ell_2$-regularized R-learner. Formal algorithm  and implementation details of $\ell_2$-regularized R-learner are deferred to $\mathsection$\ref{details:R-learner} in   Supplementary File.    
   The newly proposed R-learner inherits many practical and theoretical advantages from the original R-learner \citep{nie2021quasi}. Practically,  minimizing the empirical analogy of $L_{c,\ell_2}(h\mid \rho)$ separates the process of estimating the nuisance functions and that of estimating the target estimand, both of which can be implemented by flexible machine learning methods. {\color{black}Theoretically,    when $\tau(x,t)$ is approximated by the method of sieve, our proposed R-learner is  theoretically robust to slow convergence rates of  the nuisance estimators.  Notably, this paper mainly implements our $\ell_2$-regularized R-learner with the method of sieve for our asymptotical and numerical analysis. Nevertheless, our estimation strategy and general Algorithm \ref{ago:1} are not tied on the method of sieve. In fact, any loss-minimization method, e.g., the regularized nonparametric regression, deep neural networks, and boosting can be flexibly used to specify $\mathcal{H}$ and minimize \eqref{emp:loss}  in Algorithm \ref{ago:1}, for $\tau(x,t)$ estimation. For future research, it is interesting to explore the practical and theoretical performances of the $\ell_2$-regularized R-learner, implemented with other machine learning algorithms for $\tau(x,t)$ approximation.}   
\par {Continuing with the numerical experiment in $\mathsection$\ref{sec:simple}, 
we demonstrate the effectiveness of the proposed $\ell_2$-regularized  R-learner (Fig. \ref{fig:example}e), which well approximates 
the true $\tau(x,t)$ (Fig. \ref{fig:example}b). We further generalize the S-learner and X-learner proposed in \citet{kunzel2019metalearners} into the continuous-treatment case, with details introduced in $\mathsection$\ref{sec:SX}. The $\tau(x,t)$ estimators by the generalized S- and X-learners for the numerical experiment in $\mathsection$\ref{sec:simple}, are reported in Fig.~\ref{fig:exampless}, and we can see our proposed R-learner also outperforms these two generalized meta-learners.
\section{Implementation with the method of sieve and theoretical properties}\label{sec:sieve}
To obtain in-depth understanding of  our  $\ell_2$-regularized R-learner, we study its implementation details and theoretical properties.   
The sieve approximation \citep{geman1982nonparametric} has been broadly studied and applied for nonparametric estimation due to its good interpretability and  theoretical properties \citep[e.g.,][]{chen2007large}. In particular, we consider a triangular array  expansion of $h(x,t)$,
\bee\label{h:bspline}
h(x,t) = \phi^\T\begin{bmatrix}
\Psi_1(x,t),&\cdots &,\Psi_K(x,t)
\end{bmatrix}^\T = \phi^\T\Psi(x,t),
\ee
where $\Psi(x,t)$ is the vector of basis functions, $\phi\in\RR^{K}$ is the coefficient vector, and $K$ is the number of basis functions.
The empirical counterpart of $L_{c,\ell_2}(h\mid \rho)$ with $h$ in \eqref{h:bspline}, is 
\bee\nonumber
\hat{L}_{c,\ell_2}(h\mid \rho,\hat{\varpi},\hat{m})  = &P_n \left[\Big[Y - \hat{m} (X) -\phi^\T\Psi(X,T)  + \phi^\T\E_{\hat{\varpi}}\{\Psi(X,T)\mid  X\}\Big]^2 + \rho\big\{\phi^\T\Psi(X,T)\big\}^2\right],
\ee
where $\hat{m}$ and $\hat{\varpi}$ are nuisance function estimators for ${m}$ and ${\varpi}$  trained via any generic and fine-tuned machine learning method, respectively. Following Step~I in $\mathsection$\ref{sec:id}, we estimate ${\tau}_\rho$ by solving the empirical counterpart of \eqref{id:eq2}, which has a closed-form solution: 
\bee\nonumber
\hat{{\tau}}_\rho(x,t) = \argmin_{h = \phi^\T\Psi}\hat{L}_{c,\ell_2}(h\mid \rho,\hat{\varpi},\hat{m}) = \hat{\phi}^\T\Psi(x,t).
\ee
Here we denote $\hat{R}_n = P_n[\{\M\Psi(X,T) - \hat{\Gamma}(X)\}\{\M\Psi(X,T) - \hat{\Gamma}(X)\}^\T]$, $\hat{Q}_n = P_n[\M\Psi(X,T)\M\Psi^\T(X,T)]$, and
\bee\label{def:phi}
\hat{\phi} = \big(\hat{R}_n + \rho\hat{Q}_n\big)^{-1} P_n\Big[\big\{\Psi(X,T) - \hat{\Gamma}(X)\big\}\big\{Y - \hat{m}(X)\big\}\Big].
\ee
Following  Step~II in $\mathsection$\ref{sec:id}, our proposed estimator for $\tau(x,t)$ is
\bee\nonumber
\hat{\tau}(x,t) = \mathscr{C} \left((1+\rho)\hat{{\tau}}_\rho\right)(x,t) = (1 + \rho)\hat{\phi}^\T\big\{\Psi(x,t) -  \Psi(x,0)\big\}.
\ee
The formal algorithm to implement  the $\ell_2$-regularized R-learner with sieve approximation, incorporated with a sample splitting procedure and more details about nuisance function training, are  provided in $\mathsection$\ref{sec:realsieveRlearner} in the Supplementary File. For theoretical analysis, we choose $\Psi(x,t)$ as the tensor-product of B-splines;~see e.g.,
\citet{chen2015optimal}. 
We provide technicality including the theoretical properties of the B-spline basis in $\mathsection$\ref{sec:bspline} and  regularity conditions in $\mathsection$\ref{sec:prea}. To address the effect of nuisance function estimation, we consider the following concentration conditions for   $\hat{m}(x)$ and $\hat{\varpi}(x)$ with $r_m,r_{\varpi} \precsim  1$, 
\begin{eqnarray}\label{rate:m:main}
 &&\big\|\hat{m} - m\big\|_{\mathcal{L}^2_{\mathcal{P}}}= {o}_{{P}}(r_m), \quad \sup_{x\in\mathbb{X}}\|\hat{\varpi}(\cdot\mid x) - {\varpi}(\cdot\mid x)\|_{\mathcal{L}^2}= o_\p(r_{\varpi}).
\end{eqnarray}
We present the asymptotic properties of the $\ell_2$-regularized R-learner $\hat{\tau}(x,t)$ with sieve approximation, as obtained by Algorithm~\ref{ago:1}. The following theorem follows as a special case of the more general Theorem~\ref{thm:main}, which is established under weaker assumptions and deferred to the Supplementary File.
\begin{theorem}\label{thm:main:mainpaper}
Suppose all   conditions in the main Theorem~\ref{thm:main} hold,  
and $\hat{m}(x),\hat{\varpi}(x)$ satisfy \eqref{rate:m:main}.   When   $r_{m}, r_{\varpi} \precsim n^{-1/4}$ and $\tilde{\tau}(x,t)$ belongs to the H{\"o}lder class $ \Lambda(p,c,\mathbb{X}\times\mathbb{T})$ for some $p,c > 0$ (c.f., Definition~\ref{def:holder}), we have the following results. 
\begin{itemize}
\item\text{(Consistency). }When choosing $K$ and $\rho$ such that $K\asymp n^{(d + 1)/(2p)}$ and 
$  n^{-1 +(d + 1)/(2p)}\log n\prec \rho\precsim 1$, we have
\bee\label{main:consist}
\big|\hat{\tau}(x_0,t_0) - \tau(x_0,t_0)\big|=\mathcal{O}_P(n^{-1/2 + (d + 1)/(4p)}).
\ee
\item\text{(Limiting distribution). }Suppose further the $(2 + c_0)$-order moment condition in \eqref{clt:con}
holds for some fixed $c_0 > 0$. Choosing $K\asymp n^{\epsilon_{\text{clt}} + (d + 1)/(2p)}$ for any $\epsilon_{\text{clt}} \in (0,1/2 - (d + 1)/(2p))$ and $ \rho \asymp n^{-1/2}$, and with $\tilde{\sigma}$    defined in \eqref{def:tsigma}, we have 
\bee\label{main:clt}
{\sqrt{n}}\tilde{\sigma}^{-1}\big\{\hat{\tau}(x_0,t_0) - \tau(x_0,t_0)\big\}\leadsto \mathcal{N}(0,1).
\ee
\item(Confidence interval). Let $\hat{\sigma}$ be obtained by  Algorithm \ref{alg:sigma} with  
$
\|\hat{\mu} - \mu\|_{\mathbb{X}\times\mathbb{T}} = o_P(1).
$
Then we have
\bee\label{main:ci}
{\sqrt{n}}\hat{\sigma}^{-1}\big\{\hat{\tau}(x_0,t_0) - \tau(x_0,t_0)\big\}\leadsto \mathcal{N}(0,1).
\ee
\end{itemize}
\end{theorem}
 Theorem~\ref{thm:main:mainpaper} establishes that, given reasonably well-estimated nuisance functions at the $\mathcal{O}(n^{-1/4})$ rate and under certain smoothness conditions, our $\ell_2$-regularized R-learner is consistent and satisfies pointwise asymptotic normality, facilitating valid statistical inference. Notably, the error rate in \eqref{main:consist} is attainable as long as $\rho$ is not too small, with $\rho \asymp 1$ and $\rho \asymp n^{-1/2}$ serving as two special cases. This aligns with the identification result in $\mathsection$\ref{sec:id}, demonstrating that our two-step identification strategy successfully identifies $\tau$ even when $\rho$ remains fixed at certain positive value. For   simplicity, we focus on the scenario of $ \rho \asymp n^{-1/2}$ for our inference results, but such condition can also be relaxed as shown in Theorem~\ref{thm:main}. Further technical discussions regarding (i)  the conditions for Theorem~\ref{thm:main:mainpaper}, (ii) the order requirements for tuning parameters, and (iii) the construction of confidence intervals, are provided in Section~\ref{sec:thm}.
\bibliographystyle{chicago}
\bibliography{ref}

\begin{thebibliography}{}

\bibitem[\protect\citeauthoryear{Aizerman, Braverman, and Rozonoer}{Aizerman et~al.}{1964}]{aizerman1964theoretical}
Aizerman, M.~A., E.~M. Braverman, and L.~I. Rozonoer (1964).
\newblock Theoretical foundation of potential functions method in pattern recognition.
\newblock {\em Avtomat. i Telemekh.\/}~{\em 25\/}(6), 917--936.

\bibitem[\protect\citeauthoryear{Belloni, Chernozhukov, Chetverikov, and Fern{\'a}ndez-Val}{Belloni et~al.}{2019}]{belloni2019conditional}
Belloni, A., V.~Chernozhukov, D.~Chetverikov, and I.~Fern{\'a}ndez-Val (2019).
\newblock Conditional quantile processes based on series or many regressors.
\newblock {\em J. Econom.\/}~{\em 213\/}(1), 4--29.

\bibitem[\protect\citeauthoryear{Belloni, Chernozhukov, Chetverikov, and Kato}{Belloni et~al.}{2015}]{belloni2015some}
Belloni, A., V.~Chernozhukov, D.~Chetverikov, and K.~Kato (2015).
\newblock Some new asymptotic theory for least squares series: Pointwise and uniform results.
\newblock {\em J. Econom.\/}~{\em 186\/}(2), 345--366.

\bibitem[\protect\citeauthoryear{Bhatia}{Bhatia}{2013}]{bhatia2013matrix}
Bhatia, R. (2013).
\newblock {\em Matrix Analysis}, Volume 169.
\newblock Springer Science \& Business Media.

\bibitem[\protect\citeauthoryear{Cai and Zhang}{Cai and Zhang}{2018}]{cai2018rate}
Cai, T.~T. and A.~Zhang (2018).
\newblock Rate-optimal perturbation bounds for singular subspaces with applications to high-dimensional statistics.
\newblock {\em Ann. Stat.\/}~{\em 46\/}(1), 60--89.

\bibitem[\protect\citeauthoryear{Chen}{Chen}{2007}]{chen2007large}
Chen, X. (2007).
\newblock Large sample sieve estimation of semi-nonparametric models.
\newblock {\em Handbook of Econometrics\/}~{\em 6}, 5549--5632.

\bibitem[\protect\citeauthoryear{Chen and Christensen}{Chen and Christensen}{2015}]{chen2015optimal}
Chen, X. and T.~M. Christensen (2015).
\newblock Optimal uniform convergence rates and asymptotic normality for series estimators under weak dependence and weak conditions.
\newblock {\em J. Econom.\/}~{\em 188\/}(2), 447--465.

\bibitem[\protect\citeauthoryear{Chernozhukov, Chetverikov, Demirer, Duflo, Hansen, Newey, and Robins}{Chernozhukov et~al.}{2018}]{chernozhukov2018double}
Chernozhukov, V., D.~Chetverikov, M.~Demirer, E.~Duflo, C.~Hansen, W.~Newey, and J.~Robins (2018).
\newblock Double/debiased machine learning for treatment and structural parameters: Double/debiased machine learning.
\newblock {\em Econom. J.\/}~{\em 21\/}(1), C1--C68.

\bibitem[\protect\citeauthoryear{Chernozhukov, Newey, and Singh}{Chernozhukov et~al.}{2023}]{chernozhukov2023simple}
Chernozhukov, V., W.~K. Newey, and R.~Singh (2023).
\newblock A simple and general debiased machine learning theorem with finite-sample guarantees.
\newblock {\em Biometrika\/}~{\em 110\/}(1), 257--264.

\bibitem[\protect\citeauthoryear{Cui, Pu, Shi, Miao, and Tchetgen}{Cui et~al.}{2020}]{cui2020semiparametric}
Cui, Y., H.~Pu, X.~Shi, W.~Miao, and E.~T. Tchetgen (2020).
\newblock Semiparametric proximal causal inference.
\newblock {\em arXiv:2011.08411\/}.

\bibitem[\protect\citeauthoryear{Davis and Kahan}{Davis and Kahan}{1970}]{davis1970rotation}
Davis, C. and W.~M. Kahan (1970).
\newblock The rotation of eigenvectors by a perturbation. iii.
\newblock {\em SIAM J. Numer. Anal.\/}~{\em 7\/}(1), 1--46.

\bibitem[\protect\citeauthoryear{Douglas, Dupont, and Wahlbin}{Douglas et~al.}{1975}]{douglas1975optimal}
Douglas, J., T.~Dupont, and L.~Wahlbin (1975).
\newblock Optimal $l_{\infty}$ error estimates for galerkin approximations to solutions of two-point boundary value problems.
\newblock {\em Math. Comput.\/}~{\em 29\/}(130), 475--483.

\bibitem[\protect\citeauthoryear{Eckart and Young}{Eckart and Young}{1936}]{eckart1936approximation}
Eckart, C. and G.~Young (1936).
\newblock The approximation of one matrix by another of lower rank.
\newblock {\em Psychometrika\/}~{\em 1\/}(3), 211--218.

\bibitem[\protect\citeauthoryear{Geman and Hwang}{Geman and Hwang}{1982}]{geman1982nonparametric}
Geman, S. and C.-R. Hwang (1982).
\newblock Nonparametric maximum likelihood estimation by the method of sieves.
\newblock {\em Ann. Stat.\/}~{\em 10}, 401--414.

\bibitem[\protect\citeauthoryear{Golub, Heath, and Wahba}{Golub et~al.}{1979}]{golub1979generalized}
Golub, G.~H., M.~Heath, and G.~Wahba (1979).
\newblock Generalized cross-validation as a method for choosing a good ridge parameter.
\newblock {\em Technometrics\/}~{\em 21\/}(2), 215--223.

\bibitem[\protect\citeauthoryear{Golub and Van~Loan}{Golub and Van~Loan}{2013}]{golub2013matrix}
Golub, G.~H. and C.~F. Van~Loan (2013).
\newblock {\em Matrix Computations}, Volume~3.
\newblock JHU Press.

\bibitem[\protect\citeauthoryear{Hansen}{Hansen}{2014}]{hansen2014nonparametric}
Hansen, B.~E. (2014).
\newblock Nonparametric sieve regression: Least squares, averaging least squares, and cross-validation.
\newblock {\em Handbook of Applied Nonparametric and Semiparametric Econometrics and Statistics\/}, 215--248.

\bibitem[\protect\citeauthoryear{Hill}{Hill}{2011}]{hill2011bayesian}
Hill, J.~L. (2011).
\newblock Bayesian nonparametric modeling for causal inference.
\newblock {\em J. Comput. Graph. Stat.\/}~{\em 20\/}(1), 217--240.

\bibitem[\protect\citeauthoryear{Hoerl and Kennard}{Hoerl and Kennard}{1970}]{hoerl1970ridge}
Hoerl, A.~E. and R.~W. Kennard (1970).
\newblock Ridge regression: Biased estimation for nonorthogonal problems.
\newblock {\em Technometrics\/}~{\em 12\/}(1), 55--67.

\bibitem[\protect\citeauthoryear{Horn and Johnson}{Horn and Johnson}{1991}]{horn2012matrix}
Horn, R.~A. and C.~R. Johnson (1991).
\newblock {\em Topics in Matrix Analysis}.
\newblock Cambridge University Press.

\bibitem[\protect\citeauthoryear{Huang et~al.}{Huang et~al.}{2003}]{huang2003local}
Huang, J.~Z. et~al. (2003).
\newblock Local asymptotics for polynomial spline regression.
\newblock {\em Ann. Stat.\/}~{\em 31\/}(5), 1600--1635.

\bibitem[\protect\citeauthoryear{Ichimura and Todd}{Ichimura and Todd}{2007}]{ichimura2007implementing}
Ichimura, H. and P.~E. Todd (2007).
\newblock Implementing nonparametric and semiparametric estimators.
\newblock {\em Handb. Econom.\/}~{\em 6}, 5369--5468.

\bibitem[\protect\citeauthoryear{Kennedy}{Kennedy}{2023}]{kennedy2020optimal}
Kennedy, E.~H. (2023).
\newblock Towards optimal doubly robust estimation of heterogeneous causal effects.
\newblock {\em Electron. J. Stat.\/}~{\em 17\/}(2), 3008--3049.

\bibitem[\protect\citeauthoryear{Kennedy, Balakrishnan, and G'Sell}{Kennedy et~al.}{2020}]{kennedy2020sharp}
Kennedy, E.~H., S.~Balakrishnan, and M.~G'Sell (2020).
\newblock Sharp instruments for classifying compliers and generalizing causal effects.
\newblock {\em Ann. Stat.\/}~{\em 48\/}(4), 2008--2030.

\bibitem[\protect\citeauthoryear{Kennedy, Ma, McHugh, and Small}{Kennedy et~al.}{2017}]{kennedy2017non}
Kennedy, E.~H., Z.~Ma, M.~D. McHugh, and D.~S. Small (2017).
\newblock Non-parametric methods for doubly robust estimation of continuous treatment effects.
\newblock {\em J. R. Stat. Soc. Ser. {\rm B}\/}~{\em 79\/}(4), 1229--1245.

\bibitem[\protect\citeauthoryear{Kunoth, Lyche, Sangalli, and Serra-Capizzano}{Kunoth et~al.}{2018}]{kunoth2018splines}
Kunoth, A., T.~Lyche, G.~Sangalli, and S.~Serra-Capizzano (2018).
\newblock {\em Splines and PDEs: from Approximation Theory to Numerical Linear Algebra}.
\newblock Springer.

\bibitem[\protect\citeauthoryear{K{\"u}nzel, Sekhon, Bickel, and Yu}{K{\"u}nzel et~al.}{2019}]{kunzel2019metalearners}
K{\"u}nzel, S.~R., J.~S. Sekhon, P.~J. Bickel, and B.~Yu (2019).
\newblock Metalearners for estimating heterogeneous treatment effects using machine learning.
\newblock {\em Proc. Natl. Acad. Sci. U.S.A.\/}~{\em 116\/}(10), 4156--4165.

\bibitem[\protect\citeauthoryear{Newey}{Newey}{1997}]{newey1997convergence}
Newey, W.~K. (1997).
\newblock Convergence rates and asymptotic normality for series estimators.
\newblock {\em J. Econom.\/}~{\em 79\/}(1), 147--168.

\bibitem[\protect\citeauthoryear{Neyman}{Neyman}{1990}]{splawa1990application}
Neyman, J. (1923/1990).
\newblock On the application of probability theory to agricultural experiments.
\newblock {\em Statist. Sci.\/}~{\em 5}, 465--472.
\newblock Essay on principles. Section 9. Translated by Dabrowska, D.M. and Speed, T.P.

\bibitem[\protect\citeauthoryear{Nie and Wager}{Nie and Wager}{2021}]{nie2021quasi}
Nie, X. and S.~Wager (2021).
\newblock Quasi-oracle estimation of heterogeneous treatment effects.
\newblock {\em Biometrika\/}~{\em 108\/}(2), 299--319.

\bibitem[\protect\citeauthoryear{O'Sullivan}{O'Sullivan}{1986}]{o1986statistical}
O'Sullivan, F. (1986).
\newblock A statistical perspective on ill-posed inverse problems.
\newblock {\em Statist. Sci.\/}~{\em 1}, 502--518.

\bibitem[\protect\citeauthoryear{Robins}{Robins}{1994}]{robins1994correcting}
Robins, J.~M. (1994).
\newblock Correcting for non-compliance in randomized trials using structural nested mean models.
\newblock {\em Commun. Stat. - Theory Methods\/}~{\em 23\/}(8), 2379--2412.

\bibitem[\protect\citeauthoryear{Robins, Rotnitzky, and Zhao}{Robins et~al.}{1994}]{robins1994estimation}
Robins, J.~M., A.~Rotnitzky, and L.~P. Zhao (1994).
\newblock Estimation of regression coefficients when some regressors are not always observed.
\newblock {\em J. Am. Stat. Assoc.\/}~{\em 89\/}(427), 846--866.

\bibitem[\protect\citeauthoryear{Robinson}{Robinson}{1988}]{robinson1988root}
Robinson, P.~M. (1988).
\newblock Root-n-consistent semiparametric regression.
\newblock {\em Econometrica\/}~{\em 56}, 931--954.

\bibitem[\protect\citeauthoryear{Rubin}{Rubin}{1974}]{rubin1974estimating}
Rubin, D.~B. (1974).
\newblock Estimating causal effects of treatments in randomized and nonrandomized studies.
\newblock {\em J. Educ. Psychol.\/}~{\em 66\/}(5), 688.

\bibitem[\protect\citeauthoryear{Rudelson}{Rudelson}{1999}]{rudelson1999random}
Rudelson, M. (1999).
\newblock Random vectors in the isotropic position.
\newblock {\em J. Funct. Anal.\/}~{\em 164\/}(1), 60--72.

\bibitem[\protect\citeauthoryear{Rudin}{Rudin}{1991}]{rudin}
Rudin, W. (1991).
\newblock {\em Functional Analysis}.
\newblock McGraw-Hill Science/Engineering/Math.

\bibitem[\protect\citeauthoryear{Schacke}{Schacke}{2004}]{schacke2004kronecker}
Schacke, K. (2004).
\newblock On the kronecker product.
\newblock Master's thesis, University of Waterloo.

\bibitem[\protect\citeauthoryear{Schick}{Schick}{1986}]{schick1986asymptotically}
Schick, A. (1986).
\newblock On asymptotically efficient estimation in semiparametric models.
\newblock {\em Ann. Stat.\/}, 1139--1151.

\bibitem[\protect\citeauthoryear{Shalit, Johansson, and Sontag}{Shalit et~al.}{2017}]{shalit2017estimating}
Shalit, U., F.~D. Johansson, and D.~Sontag (2017).
\newblock Estimating individual treatment effect: generalization bounds and algorithms.
\newblock In {\em Proc. 34th Int. Conf. Mach. Learn. (ICML 2017)}, pp.\  3076--3085. PMLR.

\bibitem[\protect\citeauthoryear{Shi, Zhang, Lu, and Song}{Shi et~al.}{2022}]{shi2021statistical}
Shi, C., S.~Zhang, W.~Lu, and R.~Song (2022).
\newblock Statistical inference of the value function for reinforcement learning in infinite-horizon settings.
\newblock {\em J. R. Stat. Soc. Ser. {\rm B}\/}~{\em 84\/}(3), 765--793.

\bibitem[\protect\citeauthoryear{Tikhonov}{Tikhonov}{1963}]{tikhonov1963solution}
Tikhonov, A.~N. (1963).
\newblock On the solution of ill-posed problems and the method of regularization.
\newblock In {\em Doklady Akademii Nauk}, Volume 151, pp.\  501--504. Russian Academy of Sciences.

\bibitem[\protect\citeauthoryear{Tikhonov, Goncharsky, Stepanov, and Yagola}{Tikhonov et~al.}{1995}]{tikhonov1995numerical}
Tikhonov, A.~N., A.~Goncharsky, V.~Stepanov, and A.~G. Yagola (1995).
\newblock {\em Numerical Methods for the Solution of Ill-posed Problems}, Volume 328.
\newblock Springer Science \& Business Media.

\bibitem[\protect\citeauthoryear{Van~der Laan, Polley, and Hubbard}{Van~der Laan et~al.}{2007}]{van2007super}
Van~der Laan, M.~J., E.~C. Polley, and A.~E. Hubbard (2007).
\newblock Super learner.
\newblock {\em Stat. Appl. Genet. Mol. Biol.\/}~{\em 6\/}(1).

\bibitem[\protect\citeauthoryear{Wager and Athey}{Wager and Athey}{2018}]{wager2018estimation}
Wager, S. and S.~Athey (2018).
\newblock Estimation and inference of heterogeneous treatment effects using random forests.
\newblock {\em J. Am. Stat. Assoc.\/}~{\em 113\/}(523), 1228--1242.

\bibitem[\protect\citeauthoryear{Wasserman}{Wasserman}{2004}]{wasserman2004all}
Wasserman, L. (2004).
\newblock {\em All of Statistics: a Concise Course in Statistical Inference}, Volume~26.
\newblock Springer.

\bibitem[\protect\citeauthoryear{Wasserman}{Wasserman}{2006}]{wasserman2006all}
Wasserman, L. (2006).
\newblock {\em All of Nonparametric Statistics}.
\newblock Springer Science \& Business Media.

\bibitem[\protect\citeauthoryear{Weyl}{Weyl}{1912}]{weyl1912asymptotische}
Weyl, H. (1912).
\newblock Das asymptotische verteilungsgesetz der eigenwerte linearer partieller differentialgleichungen (mit einer anwendung auf die theorie der hohlraumstrahlung).
\newblock {\em Math. Ann.\/}~{\em 71\/}(4), 441--479.

\bibitem[\protect\citeauthoryear{Yadlowsky, Namkoong, Basu, Duchi, and Tian}{Yadlowsky et~al.}{2018}]{yadlowsky2018bounds}
Yadlowsky, S., H.~Namkoong, S.~Basu, J.~Duchi, and L.~Tian (2018).
\newblock Bounds on the conditional and average treatment effect with unobserved confounding factors.
\newblock {\em arXiv:1808.09521\/}.

\bibitem[\protect\citeauthoryear{Zeidler}{Zeidler}{2013}]{zeidler2013nonlinear}
Zeidler, E. (2013).
\newblock {\em Nonlinear Functional Analysis and Its Applications: III: Variational Methods and Optimization}.
\newblock Springer Science \& Business Media.

\bibitem[\protect\citeauthoryear{Zhao, Small, and Ertefaie}{Zhao et~al.}{2022}]{zhao2022selective}
Zhao, Q., D.~S. Small, and A.~Ertefaie (2022).
\newblock Selective inference for effect modification via the lasso.
\newblock {\em J. R. Stat. Soc. Ser. {\rm B}\/}~{\em 84\/}(2), 382--413.

\end{thebibliography}
\newpage
\makeatletter
\renewcommand \thesection{S\@arabic\c@section}
\renewcommand\thetable{S\@arabic\c@table}
\renewcommand \thefigure{S\@arabic\c@figure}
\renewcommand \theequation{S\@arabic\c@equation}
\makeatother
\setcounter{section}{0}
\setcounter{figure}{0}
\setcounter{table}{0}
\setcounter{equation}{0}
\section*{\huge Supplementary material for ``Towards R-learner 
with Continuous Treatments''}
 Supplementary material includes formal algorithms and implementation details of the proposed $\ell_2$-regularized R-learner, general asymptotic results, generalized S- and X-learners with continuous treatments, a cross-validation-based tuning parameter selection method, details of all numerical experiments, an introduction to B-splines, and all technical proofs.
\section{$\ell_2$-regularized R-learner: Formal algorithm and more discussions}\label{details:R-learner}
 
In this section, we introduce the general $\ell_2$-regularized R-learner in details, developed based on the identification strategy described in Section~\ref{sec:id}. Let $\mathcal{H}$ denote the function class used to approximate $\tau(x,t)$, which can be implemented using various machine learning algorithms, such as linear regression, random forests, or neural networks. 

To approximate the population loss
\bee\label{L2loss:sup}
L_c(h) = E\big[Y - m(X) - h(X,T) + E_{\varpi}\{h(X,T)\mid X\}\big]^2,
\ee
we first estimate the two nuisance functions $m(x)$ and $\varpi(t\mid x)$ using generic machine learning methods. Incorporating the standard cross-fitting procedure \citep{chernozhukov2018double,schick1986asymptotically} with $\mathcal{J}$ sample splits, we summarize the general $\ell_2$-regularized R-learner in Algorithm~\ref{ago:1}.

Similar to the original R-learner, the $\ell_2$-regularized R-learner provides a general estimation framework that accommodates any off-the-shelf machine learning algorithms for nuisance function estimation, as well as any loss-minimization-based machine learning algorithms for estimating $\tau(x,t)$.  A variant of the cross-fitting procedure \citep[Definition 3.2]{chernozhukov2018double} can also be used in Algorithm~\ref{ago:1}. Specifically, in Step~3, we may compute $\mathcal{J}$ separate estimators of ${{\tau}}_\rho(x,t)$ by minimizing the empirical loss within each subsample, and then aggregate them by averaging to obtain the final estimator $\hat{{\tau}}_\rho(x,t)$.  Although the estimators produced by different cross-fitting variants may differ slightly, their asymptotic properties remain equivalent \citep[Remark 3.1]{chernozhukov2018double}.

  \par
Next, we provide further discussion on the role of $\ell_2$-regularization in our proposed identification and estimation procedures. $\ell_2$-regularization has been extensively employed in classical statistical methods, including ridge regression for linear and logistic models \citep{hoerl1970ridge}, regularized spline regression \citep{o1986statistical}, and kernel ridge regression \citep{aizerman1964theoretical}, among others. In these methods, the $\ell_2$ penalty primarily serves to control the complexity of the estimator, thereby yielding stable and theoretically well-behaved estimators.  However, these prior works assume that the target estimand is at least a locally unique minimizer of the corresponding population loss function. To the best of our knowledge, our paper is the first to utilize $\ell_2$-regularization not only to regularize  complexity of the estimator, but also to resolve a fundamentally different challenge: a population-level ill-posedness and non-identification problem, where the original population-level loss function has infinitely many minimizers. In this setting, $\ell_2$-regularization plays a dual role by both enforcing identifiability and controlling the complexity of estimation.
\par
Finally, Theorem~\ref{pro:tech} demonstrates that the population-level optimization solution $\tau_\rho$ of $L_{c,\ell_2}(h\mid \rho)$ takes the form of $(1+\rho)^{-1}\tilde{\tau}$ and thus will approach $\tilde{\tau}$ as $\rho\rightarrow 0$. We give some intuitive explanations on why minimizing the \(\ell_2\)-regularized R-loss helps to  identify this specific solution \(\tilde{\tau}\) of the original   continuous-treatment R-loss $L_c(h)$. First, the new loss $L_{c,\ell_2}(h\mid \rho)$ becomes strictly convex over $\mathcal{L}_{\mathcal{P}}^2(X,T)$ due to the addition of a strictly convex functional $\rho\|h\|_{\mathcal{L}^2_{\mathcal{P}}}^2 = \rho E\{h^2(X,T)\}$. Thus, its minima equal to a unique function $\tau_\rho\in\mathcal{L}_{\mathcal{P}}^2(X,T)$ a.s.. Second, 
intuitively, when $\rho\rightarrow 0$, the difference between two loss functions, $L_{c,\ell_2}(h\mid \rho)$ and $L_c(h)$, vanishes. Therefore, $\tau_\rho$ shall approach $L_c(h)$'s solution set $\mathcal{S}$ as $\rho\rightarrow 0$. On the other hand, among $\mathcal{S}$, $\tilde{\tau}$ has the smallest value in terms of  $L_{c,\ell_2}(h\mid \rho)$  for any $\rho > 0$. This is because any function $h$ in  $\mathcal{S}$ produces the same value of the first term on the right-hand side of  \eqref{L2loss:sup} in $ L_c(h)$, while only $\tilde{\tau}$ can minimize the second term, namely $\rho \|h\|_{\mathcal{L}_{\mathcal{P}}^2}^2$, as it has the smallest $\mathcal{L}^2_{\mathcal{P}}$ norm. Thus, $\tau_\rho$  particularly  approaches $\tilde{\tau}$ in $\mathcal{S}$ as $\rho \rightarrow 0$. 
\par

There might be other regularization terms that could also be added to the generalized R-loss and resolve the non-identification issue with continuous treatments, similar to the Tikhonov regularization. This may further motivate other regularized R-learner for continuous treatments, and we leave the exploration along this direction for future research.

\begin{algorithm}[t]
\caption{General $\ell_2$-regularized R-learner}\label{ago:1}
\SetAlgoLined
\textit{Step 1}.\ Split $\{Z_i\}_{i = 1}^{n}$ into $\mathcal{J}$ mutually exclusive ($\mathcal{J} > 1$), and equally sized or nearly  equally sized sub-sample sets $\mathcal{S}_{1},\dots,\mathcal{S}_{\mathcal{J}}$, such that $\cup_{j = 1}^{\mathcal{J}}\mathcal{S}_{j} = \{Z_i\}_{ i = 1}^n$; { let $i$th sample belongs to  $j_i$th subset}\;
\textit{Step 2}.\  For each $j\in[\mathcal{J}]$,  train the nuisance function estimator $ \{\hat{m}^{(-j)}(x),\hat{\varpi}^{(-j)}(t\mid x)\}$ with all data except $\mathcal{S}_j$ via any generic and fine-tuned machine learning method\;
\textit{Step 3}.\ Choosing $\rho > 0$, estimate $\hat{{\tau}}_\rho(x,t)$ by 
\bee
\label{emp:loss}
\qquad \hat{{\tau}}_\rho(x,t)&=\argmin_{h(\cdot,\cdot)\in\mathcal{H}}\hat{L}_{c,\ell_2}\{h(\cdot,\cdot)\mid \rho\}
\\
&= \argmin_{h(\cdot,\cdot)\in\mathcal{H}}\frac{1}{n}\sum_{i = 1}^n \Big[Y_i - \hat{m}^{(-j_i)}(X_i) -h(X_i,T_i) + E_{\hat{\varpi}^{(-j_i)}}\{h(X_i,T_i)\mid X_i\}\Big]^2 
\\
&\quad + \rho P_n\{h^2(X,T)\}.
\ee
\textit{Step 4}.\ Output $\hat{\tau}(x,t)   =  (1 +\rho)\cdot \{\hat{{\tau}}_\rho(x,t) - \hat{{\tau}}_\rho(x, 0)\}$.
\end{algorithm}
\section{$\ell_2$-regularized R-learner implemented with the method of sieve}\label{sec:realsieveRlearner}

\begin{algorithm}[t]
\caption{The general $\ell_2$-regularized R-learner with sieve approximation}\label{ago:1:sieve}
\SetAlgoLined
\textit{Step 1}.\ Split $\{Z_i\}_{i = 1}^{n}$ into $\mathcal{J}$ mutually exclusive ($\mathcal{J} > 1$), and equally sized or nearly  equally sized sub-sample sets $\mathcal{S}_{1},\dots,\mathcal{S}_{\mathcal{J}}$, such that $\cup_{j = 1}^{\mathcal{J}}\mathcal{S}_{j} = \{Z_i\}_{ i = 1}^n$; { let $i$th sample belongs to  $j_i$th subset}\;
\textit{Step 2}\  For each $j\in[\mathcal{J}]$,  obtain $\{\hat{m}^{(-j)}(x),\hat{\Gamma}^{(-j)}(x)\}$ using all data except $\mathcal{S}_j$ by the method of sieves (see $\mathsection$\ref{rk:gamma:est} for two options of $\hat{\Gamma}^{(-j)}(x)$ training)\;
\textit{Step 3}\ Obtain $\hat{\phi}$ from \eqref{def:phi} with $\rho > 0$ and obtain 
$
\hat{{\tau}}_\rho(x,t) = \hat{\phi}^\T\Psi(x,t);
$

\textit{Step 4}.\ Output $\hat{\tau}(x,t) = (1 + \rho)\cdot\{\hat{{\tau}}_\rho(x,t) - \hat{{\tau}}_\rho(x,  0)\}$.
\end{algorithm}

We consider to minimize \eqref{emp:loss} in Algorithm~\ref{ago:1} with the method of sieve. 
Then solving \eqref{emp:loss}  in our proposed algorithm  becomes solving ${\phi}$ from 
\bee\nonumber
\argmin_{\phi\in\RR^K} \frac{1}{n}\sum_{i = 1}^n \Big\{Y_i - \hat{m}^{(-j_i)}(X_i) -\phi^\T\Psi(X_i,T_i) + \phi^\T\hat{\Gamma}^{(-j_i)}(X_i)\Big\}^2 + \frac{\rho}{n}\sum_{i = 1}^n\big\{\phi^\T\Psi(X_i,T_i)\big\}^2,
\ee
where $\hat{\Gamma}^{(-j_i)}(X_i)$ is an estimator of $\Gamma(X_i) = {E}_{{\varpi}}\{\Psi(X,T)\mid X_i\}$; See $\mathsection$\ref{rk:gamma:est} for the details of its estimation, either through estimating $\varpi$ or through a coordinate-wise regression.  Then by straightforward algebra, one has
\bee\label{def:phi}
\hat{\phi} = \frac{1}{n}\big(\hat{R}_n + \rho\hat{Q}_n\big)^{-1} \sum_{i = 1}^n\big\{\Psi(X_i,T_i) - \hat{\Gamma}^{(-j_i)}(X_i)\big\}\big\{Y_i - \hat{m}^{(-j_i)}(X_i)\big\} ;
\ee
here we denote  
\bee\nonumber
\hat{R}_n &= \frac{1}{n}\sum_{i = 1}^n\Big\{\M\Psi(X_i,T_i) - \hat{\Gamma}^{(-j_i)}(X_i)\Big\}\Big\{\M\Psi(X_i,T_i) - \hat{\Gamma}^{(-j_i)}(X_i)\Big\}^\T ,
\\
\hat{Q}_n &=\frac{1}{n} \sum_{i = 1}^n\M\Psi(X_i,T_i)\M\Psi^\T(X_i,T_i).
\ee
The generalized R-learner with sieve approximation are formally provided in Algorithm~\ref{ago:1}.
\subsection{Two options for $\hat{\Gamma}$ training}\label{rk:gamma:est}
We discuss two options for obtaining the nuisance vector-valued function 
$\hat{\Gamma}^{(-j)}(x)$ for all $j\in[\mathcal{J}]$.
\begin{enumerate}
\item[(i)] $\hat{\Gamma}^{(-j)}(x)={E}_{\hat{\varpi}^{(-j)}}\{\Psi(X,T)\mid X = x\}$, where  $\hat{\varpi}^{(-j)}(t\mid x)$ is the generalized propensity score estimator  over $\{Z_i\}_{i = 1}^n\setminus \mathcal{S}_j$. One-dimensional numerical integrations can be implemented to approximate all conditional expectations in $\hat{\Gamma}^{(-j)}(x)$. 
\item[(ii)]  $\hat{\Gamma}^{(-j)}(x)$ consists of coordinate-wise nonparametric regressions. For simplicity, we consider $\Psi(x,t)$ as the tensor-product of B-splines (c.f.,~$\mathsection$\ref{sec:bspline}) and . First, observing that $\Gamma(x) = E\{\psi(T)\mid X = x\}\otimes \Psi(x)$, where $\Psi(x)$ denotes $\psi(x^{(1)})\otimes\cdots\otimes\psi(x^{(d)})$ and the dimension of $\psi(T)$ is $k_T$, we can estimate $\hat{\Gamma}^{(-j)}(x)$ by setting
$$
\hat{\Gamma}^{(-j)}(x) = \hat{E}^{(-j)}\{\psi(T)\mid X = x\}\otimes\Psi(x),
$$
where 
$$
\hat{E}^{(-j)}\{\psi(T)\mid X= x\} = \left(\hat{E}^{(-j)}\{\psi^{(1)}(T)\mid X= x\},\cdots\cdots,\hat{E}^{(-j)}\{\psi^{(k_T)}(T)\mid X= x\}\right)^\T,
$$ 
are obtained by coordinate-wise  regressions over the covariates-response pairs $\{X_i,\psi^{(k)}(T_i)\mid Z_i \in\{Z_i\}_{i = 1}^n\setminus \mathcal{S}_j\}$ for any $k \in[k_T]$. Suppose $\psi(t)\in \RR^{k_T}$ is a $k_T$-dimensional B-spline basis for the treatment variable, and let $1_{k_T} = (1,\dots,1)^\T\in\RR^{k_T}$. Inspired by the sum invariance property of B-splines (Lemma \ref{po:bspline}) such that $$1_{k_T}^\T E\{ \psi(T)\mid X = x\} =  E\{ 1_{k_T}^\T\psi(T)\mid X = x\} =  E(\sqrt{k_T}\mid X = x) = \sqrt{k_T}$$ for any $x\in\mathbb{X}$, we further impose a shape constraint for  $\hat{E}\{\psi(T)\mid X = x\}$, 
\bee\label{shape:con}
1_{k_T}^\T \hat{E}^{(-j)}\{\psi(T)\mid X = x\} = \sqrt{k_T},\quad\text{for any }x\in\mathbb{X}, j\in[\mathcal{J}].
\ee
A simple strategy can be used to address the above shape constraint. 
First obtain $\hat{E}^{(j)}\{\psi(T)\mid X = x \}$ by  coordinate-wise regression when $j\in[k_T-1]$, and then obtain
\bee\label{ek:train}
\hat{E}^{(-k_T)}\{\psi(T)\mid X = x\} = \sqrt{k_T} - \sum_{j = 1}^{k_T -1}\hat{E}^{(-j)}\{\psi(T)\mid X = x \}.
\ee
\end{enumerate}
The above approaches apply similarly to other types of basis functions. Conditional density estimation in method (i) is often difficult, especially when $X$ is high dimensional. In this case, method (ii) might be  a more flexible alternative, which replaces conditional density estimation with a series of regressions.
\subsection{Main asymptotic results}\label{sec:thm}
For theoretical analysis, we choose $\Psi(x,t)$ as the tensor-product of B-splines  
\citep{chen2015optimal} in this paper. 
We provide technicality including the theoretical properties of the B-spline basis in $\mathsection$\ref{sec:bspline} and  regularity conditions in $\mathsection$\ref{sec:prea}. 
 Classic nonparametric sieve regression \citep{newey1997convergence} often assumes a full-rank gram matrix $Q_n = \E\{\Psi(X,T)\Psi^\T(X,T)\}$ where the dimension of $\Psi$ depends on $n$. 
In stark contrast, theoretical analysis of the proposed R-learner involves a low-rank gram matrix, 
\bee\label{def:rn}
R_n = E\big[\{\Psi(X,T) - \M\Gamma(X)\}\{\Psi(X,T)- \M\Gamma(X)\}^\T\big] = \begin{pmatrix}
U & U_\perp
\end{pmatrix}
\begin{pmatrix}
\Sigma & 
\\
&  0
\end{pmatrix}
\begin{pmatrix}
U^\T
\\
U_{\perp}^\T
\end{pmatrix},
\ee
where the right-hand side of (\ref{def:rn}) is the singular value decomposition of $R_n$, with $\text{rank}(R_n) = \zeta$ and $\Sigma = \diag(\sigma_1,\dots,\sigma_{\zeta})$ such that    $\sigma_1 \geq \dots\geq\sigma_{\zeta} > 0$. Each entry of ${R}_n$ is the probability limit of the corresponding entry of $\hat{R}_n$  in \eqref{def:phi}. Intuitively, the low rank of $R_n$ is tied to the non-identification issue of the generalized R-loss in $\mathsection$\ref{sec:id} when setting $\rho = 0$. In this case,  $\hat{\phi}$ is  asymptotically unsolvable, or equivalently $\hat{R}_n$ in \eqref{def:phi} is asymptotically non-invertible; i.e., $R_n$ is low-rank.  We denote the smallest positive singular value of $R_n$ by $\beta_n = \sigma_{\zeta} > 0$, which plays an essential quantity in our theoretical results. See Lemma \ref{lm:svdR} for  some theoretical justifications of the  low-rankness and detailed spectral properties of $R_n$. 

To address the effect of nuisance function estimation, we consider the following concentration conditions for   $\hat{m}(x)$ and $\hat{\Gamma}(x)$ with $r_m,r_{\gamma},r_{\gamma}'\precsim  1$, 
\begin{eqnarray}\label{rate:m}
 &&\big\|\hat{m} - m\big\|_{\mathcal{L}^2_{\mathcal{P}}}= {o}_{{P}}(r_m),
\\
\label{rate:gamma}
 &&\big\|\hat{\Gamma} - \Gamma\big\|_{\mathbb{X}}/\sqrt{K} = o_P(r_{\gamma}'),
 \\\label{rate:gamma2}
 && \Big\|\int_{x\in\mathbb{X}}\{\hat{\M\Gamma}(x) - {\M\Gamma}(x)\}\{\hat{\M\Gamma}(x) - \M\Gamma(x)\}^{\T}d\mathcal{P}(x)\Big\|^{1/2}_2  = {o}_{{P}}(r_{\gamma}).
\end{eqnarray}
The convergence rate condition of $\hat{m}(x)$ is commonly assumed; see, e.g.,  \citet{kennedy2017non}. The following proposition further implies that, if we obtain $\hat{\Gamma}(x)$ through $\hat {\varpi}(t\mid x)$ by method (i) in $\mathsection$\ref{rk:gamma:est}, the convergence rates in \eqref{rate:gamma} and \eqref{rate:gamma2} are simultaneously attained as long as  $\hat{\varpi}(t\mid x)$ satisfies the corresponding $\mathcal{L}^{2}$-convergence rate uniformly for all $x\in\mathbb{X}$.
\begin{proposition}\label{po:nui:equal2}
Suppose regularity conditions \ref{am:bspline} and \ref{am:densX} hold, and $\hat{\Gamma}(x) = E_{\hat{\varpi}}\{\Psi(X,T)\mid X = x\}$. When $\hat{\varpi}(t\mid x)$ satisfies
$
\sup_{x\in\mathbb{X}}\|\hat{\varpi}(\cdot\mid x) - {\varpi}(\cdot\mid x)\|_{\mathcal{L}^2}= o_\p(r_{\varpi})
$, then 
 \eqref{rate:gamma} and \eqref{rate:gamma2} hold with 
$$
r_{\gamma}' = r_{\gamma} = r_{\varpi}.
$$
\end{proposition}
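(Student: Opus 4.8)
The plan is to reduce the two target bounds \eqref{rate:gamma} and \eqref{rate:gamma2} to the stated $\mathcal{L}^2$-convergence rate of $\hat\varpi$ by exploiting the boundedness and summability properties of the tensor-product B-spline basis recorded in $\mathsection$\ref{sec:bspline}. The starting point is the pointwise identity
\[
\hat\Gamma(x)-\Gamma(x)=\int_{t\in\mathbb{T}}\Psi(x,t)\bigl\{\hat\varpi(t\mid x)-\varpi(t\mid x)\bigr\}\,dt,
\]
so that by Cauchy--Schwarz, coordinate by coordinate,
\[
\bigl|\hat\Gamma_k(x)-\Gamma_k(x)\bigr|\le \|\Psi_k(x,\cdot)\|_{\mathcal{L}^2}\,\bigl\|\hat\varpi(\cdot\mid x)-\varpi(\cdot\mid x)\bigr\|_{\mathcal{L}^2}.
\]
For \eqref{rate:gamma} I would sum the squares over $k=1,\dots,K$: using the B-spline property that at any fixed $(x,t)$ only a bounded number of basis functions are nonzero and each is $O(\sqrt{K})$ in sup-norm (equivalently $\sum_k \|\Psi_k(x,\cdot)\|_{\mathcal{L}^2}^2 \precsim K$ uniformly in $x$, which follows from Assumption \ref{am:bspline} and the normalization of the basis), one gets $\|\hat\Gamma(x)-\Gamma(x)\|^2 \precsim K\,\|\hat\varpi(\cdot\mid x)-\varpi(\cdot\mid x)\|_{\mathcal{L}^2}^2$; taking the supremum over $x\in\mathbb{X}$ and dividing by $\sqrt{K}$ gives $\|\hat\Gamma-\Gamma\|_{\mathbb{X}}/\sqrt{K}=o_\p(r_\varpi)$, i.e.\ \eqref{rate:gamma} with $r_\gamma'=r_\varpi$.

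For \eqref{rate:gamma2} the quantity inside the norm is $\int_{x}\{\hat\Gamma(x)-\Gamma(x)\}\{\hat\Gamma(x)-\Gamma(x)\}^\T d\mathcal{P}(x)$, and its operator norm is bounded by its trace, which is $\int_x \|\hat\Gamma(x)-\Gamma(x)\|^2\,d\mathcal{P}(x)$. Plugging in the pointwise bound above and using $\sum_k\|\Psi_k(x,\cdot)\|_{\mathcal{L}^2}^2\precsim K$ uniformly, this is $\precsim K\int_x \|\hat\varpi(\cdot\mid x)-\varpi(\cdot\mid x)\|_{\mathcal{L}^2}^2\,d\mathcal{P}(x)\le K\sup_{x}\|\hat\varpi(\cdot\mid x)-\varpi(\cdot\mid x)\|_{\mathcal{L}^2}^2 = o_\p(K r_\varpi^2)$. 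Taking the square root yields the left side of \eqref{rate:gamma2} is $o_\p(\sqrt{K}\,r_\varpi)$, which is larger than the claimed $o_\p(r_\varpi)$, so a naive trace bound is too lossy by a factor $\sqrt{K}$. To recover the sharper rate I would instead bound the operator norm directly rather than through the trace: for any unit vector $v\in\RR^K$, $v^\T\bigl[\int_x\{\hat\Gamma-\Gamma\}\{\hat\Gamma-\Gamma\}^\T d\mathcal{P}\bigr]v=\int_x \bigl(v^\T\{\hat\Gamma(x)-\Gamma(x)\}\bigr)^2 d\mathcal{P}(x)$, and $v^\T\{\hat\Gamma(x)-\Gamma(x)\}=\int_t \bigl(v^\T\Psi(x,t)\bigr)\{\hat\varpi-\varpi\}(t\mid x)\,dt$, so Cauchy--Schwarz gives $\bigl(v^\T\{\hat\Gamma(x)-\Gamma(x)\}\bigr)^2\le \|v^\T\Psi(x,\cdot)\|_{\mathcal{L}^2}^2\,\|\hat\varpi(\cdot\mid x)-\varpi(\cdot\mid x)\|_{\mathcal{L}^2}^2$. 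The key is that $\|v^\T\Psi(x,\cdot)\|_{\mathcal{L}^2}^2 = v^\T\bigl(\int_t\Psi(x,t)\Psi^\T(x,t)\,dt\bigr)v$, and the matrix $\int_t\Psi(x,t)\Psi^\T(x,t)\,dt$ has bounded operator norm uniformly in $x$ (again from the B-spline structure and Assumption \ref{am:bspline}), hence $\|v^\T\Psi(x,\cdot)\|_{\mathcal{L}^2}^2\precsim 1$ for every unit $v$. Therefore $v^\T[\cdots]v\precsim \sup_x\|\hat\varpi(\cdot\mid x)-\varpi(\cdot\mid x)\|_{\mathcal{L}^2}^2=o_\p(r_\varpi^2)$ uniformly over unit $v$, and taking the square root gives \eqref{rate:gamma2} with $r_\gamma=r_\varpi$.

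The main obstacle is exactly this last point: getting the $\mathcal{L}^2_{\mathcal{P}}$-type bound \eqref{rate:gamma2} without the spurious $\sqrt{K}$ factor, which requires controlling the operator norm of the local mass matrix $\int_t \Psi(x,t)\Psi^\T(x,t)\,dt$ rather than its trace. This is where the precise B-spline facts from $\mathsection$\ref{sec:bspline} (boundedly many overlapping basis functions, uniform equivalence of the local Gram matrix to a constant times identity, and the lower bound on the density of $X$ from Assumption \ref{am:densX} to pass between $d\mathcal{P}(x)$ and Lebesgue measure on $\mathbb{X}$) do the real work; everything else is Cauchy--Schwarz and bookkeeping. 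I would also double-check the measure-theoretic step converting $\|\hat\varpi(\cdot\mid x)-\varpi(\cdot\mid x)\|_{\mathcal{L}^2}$ (Lebesgue in $t$) into the relevant conditional-density discrepancy, but Assumption \ref{A:CS} (positivity) makes that harmless. Finally I would note that \eqref{rate:gamma} and \eqref{rate:gamma2} indeed hold \emph{simultaneously} since both were derived from the single rate assumption $\sup_x\|\hat\varpi(\cdot\mid x)-\varpi(\cdot\mid x)\|_{\mathcal{L}^2}=o_\p(r_\varpi)$.
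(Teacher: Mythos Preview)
Your argument for \eqref{rate:gamma} is correct and essentially coincides with the paper's: both use the pointwise identity $\hat\Gamma(x)-\Gamma(x)=\int_{\mathbb{T}}\Psi(x,t)\{\hat\varpi(t\mid x)-\varpi(t\mid x)\}\,dt$ together with Cauchy--Schwarz and $\|\Psi\|_{\mathbb{X}\times\mathbb{T}}^2\precsim K$.

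Your argument for \eqref{rate:gamma2} contains a genuine gap. The claim that the matrix $\int_{\mathbb{T}}\Psi(x,t)\Psi^{\T}(x,t)\,dt$ has operator norm bounded uniformly in $x$ is false whenever $d\ge 1$. Indeed, by the tensor structure $\Psi(x,t)=\psi(t)\otimes\Psi(x)$ one has
\[
\int_{\mathbb{T}}\Psi(x,t)\Psi^{\T}(x,t)\,dt \;=\; J_T\otimes\bigl\{\Psi(x)\Psi^{\T}(x)\bigr\},
\]
whose spectral norm equals $\|J_T\|_2\cdot\|\Psi(x)\|^2$. Since $\|\psi(w)\|^2\asymp k$ for every $w$ (only $r$ basis functions are nonzero, each of size $O(\sqrt{k})$, and they sum to $\sqrt{k}$), we get $\|\Psi(x)\|^2\asymp k^{d}=K^{d/(d+1)}$, which diverges. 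So the step ``$\|v^{\T}\Psi(x,\cdot)\|_{\mathcal{L}^2}^2\precsim 1$ for every unit $v$'' fails.

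The fix is simply to pull the other factor out of the $x$-integral. After Cauchy--Schwarz you have
\[
v^{\T}\Bigl[\int_{\mathbb{X}}\{\hat\Gamma-\Gamma\}\{\hat\Gamma-\Gamma\}^{\T}d\mathcal{P}\Bigr]v
\;\le\;\sup_{x\in\mathbb{X}}\|\hat\varpi(\cdot\mid x)-\varpi(\cdot\mid x)\|_{\mathcal{L}^2}^2
\int_{\mathbb{X}}\|v^{\T}\Psi(x,\cdot)\|_{\mathcal{L}^2}^2\,d\mathcal{P}(x),
\]
and now the $x$-integration does the work: using $f(x)\le C_f$ (Assumption~\ref{am:densX}),
\[
\int_{\mathbb{X}}\|v^{\T}\Psi(x,\cdot)\|_{\mathcal{L}^2}^2\,d\mathcal{P}(x)
\;\le\;C_f\int_{\mathbb{X}\times\mathbb{T}}\bigl(v^{\T}\Psi(x,t)\bigr)^2\,dx\,dt
\;=\;C_f\,v^{\T}J_{X,T}\,v\;\precsim\;1
\]
by Lemma~\ref{am:psi}(ii). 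This is precisely the route the paper takes. In short, you identified the right obstacle and the right tool (operator norm rather than trace), but applied the bound at the wrong stage: the spectral control comes from the \emph{integrated} Gram matrix $J_{X,T}$, not from the local-in-$x$ one.
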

Theorem \ref{thm:main} summarizes the asymptotic results for the R-learner $\hat{\tau}(x,t)$ obtained by Algorithm \ref{ago:1}.
\begin{definition}[H{\"o}lder class]\label{def:holder}
For any function $h(w)$ with $w\in\RR^{d'}$, we denote its $\alpha$-derivative by $D^{\alpha} h(w)=\partial^{\alpha_1+\dots+\alpha_{d'}}h(w)/{\partial ^{\alpha_{1}}w_{1} \cdots \partial ^{\alpha_{d'}}w_{d'}}$, where $\alpha = (\alpha_1,\dots,\alpha_{d'})$ is a vector of positive integers. For asymptotic analysis, we restrict the target functional estimand to the popular $p$-smooth H{\"o}lder class \citep{newey1997convergence},
\bee\label{def:holder}
\Lambda(p, c,\mathbb{W})=\Bigg\{&h( w):\sup _{\|\alpha\|_{1} \leq\lfloor p\rfloor} \sup _{ w \in  \mathbb{W}}\left|D^{\alpha} h( w)\right| \leq c, 
\sup _{\|\alpha\|_{1}=\lfloor p\rfloor\atop w_1,w_2 \in  \mathbb{W}} \sup _{ w_1 \neq  w_2} \frac{\left|D^{\alpha} h( w_1)-D^{\alpha} h( w_2)\right|}{\| w_1- w_2\|_{2}^{p-\lfloor p\rfloor}} \leq c\Bigg\},
\ee
where $c,p>0$ are fixed, and $h(w)$ belongs to the class of all $\lfloor p\rfloor$-times differentiable functions  over $\mathbb{W}$.
\end{definition}
\begin{theorem}\label{thm:main}
Suppose Assumptions \ref{A:UNC}--\ref{A:CS} and  regularity conditions in $\mathsection$\ref{sec:prea} hold,  
and $\hat{m}(x),\hat{\Gamma}(x)$ satisfy \eqref{rate:m}--\eqref{rate:gamma2}. Suppose further the conditions hold: (i) $r_{\gamma}^2\precsim \sqrt{K\log n/n} \prec 1$; (ii) $\sqrt{K\log n/n} \prec \beta_n$;   (iii) ${ \tilde{\tau}(x,t)}\in \Lambda(p,c,\mathbb{X}\times\mathbb{T})$ for some $p,c > 0$; (iv) $\hat{\Gamma}(x)$ is trained via either one of the methods in $\mathsection$\ref{rk:gamma:est}; (v) $0<\rho\precsim 1$. Then for any $(x_0,t_0)\in\mathbb{X}\times\mathbb{T}$, we have the general upper bound,
\bee\label{thm:gb}
\big|\hat{\tau}(x_0,t_0) - \tau(x_0,t_0)\big| \leq r(n,K,\beta_n,\rho,r_m,r_\gamma,r_{\gamma}'),
\ee
as $n\rightarrow \infty$. An explicit form of $r(n,K,\beta_n,\rho,r_m,r_\gamma,r_{\gamma}')$ is given in \eqref{bigboy}. When $\beta_n\asymp 1$, $p > d +1$, and $r_{m}, r_\gamma,r_{\gamma'}\precsim n^{-1/4}$, we have the following results.
\begin{itemize}
\item\text{(Consistency). }When choosing $K\asymp n^{(d + 1)/(2p)}$ and 
$1\succsim\rho\succ n^{-1 +(d + 1)/(2p)}\log n$, the  rate in \eqref{thm:gb} can be minimized by
\bee\label{main:consist}
\big|\hat{\tau}(x_0,t_0) - \tau(x_0,t_0)\big|=\mathcal{O}_P(n^{-1/2 + (d + 1)/(4p)}).
\ee
\item\text{(Limiting distribution). }Suppose further the $(2 + c_0)$-order moment condition,
\bee\label{clt:con}
\sup_{(x,t)\in\mathbb{X}\times\mathbb{T}}E\big\{|Y - E(Y\mid X,T)|^{2 + c_0} \mid X = x, T = t\big\}< +\infty,
\ee 
holds for some fixed $c_0 > 0$. When choosing $K$ and $\rho$ such that, $K\asymp n^{\epsilon_{\text{clt}} + (d + 1)/(2p)}$ for some fixed $\epsilon_{\text{clt}} \in (0,1/2 - (d + 1)/(2p))$ and  $ n^{-1 + (d + 1)/(2p) + (1 +\delta)\epsilon_{clt}}\prec \rho \prec \sqrt{K\log n/n}$ for some small enough and fixed $\delta > 0$, we have 
\bee\label{main:clt}
{\sqrt{n}}\tilde{\sigma}^{-1}\big\{\hat{\tau}(x_0,t_0) - \tau(x_0,t_0)\big\}\leadsto \mathcal{N}(0,1),
\ee
where $\tilde{\sigma}$  is defined in \eqref{def:tsigma} in the Supplementary Material. Notably, $\rho\asymp n^{-1/2}$ always satisfies the above condition. 
\item(Confidence interval). Let $\hat{\sigma}$ be obtained by  Algorithm \ref{alg:sigma} with $\hat{\mu}(x,t)$ satisfying
$
\|\hat{\mu} - \mu\|_{\mathbb{X}\times\mathbb{T}} = o_P(1).
$
Then we have
\bee\label{main:ci}
{\sqrt{n}}\hat{\sigma}^{-1}\big\{\hat{\tau}(x_0,t_0) - \tau(x_0,t_0)\big\}\leadsto \mathcal{N}(0,1).
\ee
\end{itemize}
\end{theorem}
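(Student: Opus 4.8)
The plan is to treat Theorem~\ref{thm:main} as three nested assertions and prove them in the order listed: first the general upper bound \eqref{thm:gb}, then the consistency rate \eqref{main:consist} by optimizing that bound, then the limiting distribution \eqref{main:clt}, and finally the confidence-interval statement \eqref{main:ci} by showing $\hat\sigma/\tilde\sigma\to_P 1$. The backbone of the argument is a decomposition of $\hat\phi$ from \eqref{def:phi} against a population target. Writing $\hat G_n=\hat R_n+\rho\hat Q_n$ and $G_n=R_n+\rho Q_n$, I would first show $\hat G_n$ is invertible with high probability and control $\|\hat G_n^{-1}-G_n^{-1}\|$ using the Davis--Kahan / Weyl-type perturbation machinery referenced after \eqref{def:rn} (Lemma~\ref{lm:svdR}); here condition~(ii), $\sqrt{K\log n/n}\prec\beta_n$, guarantees the smallest relevant singular value $\beta_n$ dominates the estimation error of $\hat R_n$, and condition~(iii), $\rho\prec\sqrt{K\log n/n}$, keeps the ridge term from perturbing the spectrum too much while condition~(i), $r_\gamma^2\precsim\sqrt{K\log n/n}$, controls the contribution of the nuisance $\hat\Gamma$ to $\hat R_n$. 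Then I would split $\hat\phi-\phi_\rho$ (with $\phi_\rho$ the sieve coefficients of the best approximation to $\tilde\tau_\rho$) into (a) a \emph{bias} term from sieve truncation, handled by condition~(iv) and standard B-spline approximation bounds (Lemma~\ref{po:bspline} and the results of $\mathsection$\ref{sec:bspline}); (b) a \emph{Tikhonov} term $\tilde\tau_\rho-\tilde\tau=\mathcal O(\rho^{1/4})$ from Theorem~\ref{pro:tech}; (c) a \emph{nuisance} term driven by $\|\hat m-m\|$ and the $\hat\Gamma$ rates via \eqref{rate:m}--\eqref{rate:gamma2}, where cross-fitting (as in Algorithm~\ref{ago:1}) decouples $\hat m,\hat\Gamma$ from the fold on which the loss is evaluated so the usual product-rate/Neyman-orthogonality cancellation applies; and (d) a \emph{stochastic} term $P_n[\{\Psi-\Gamma\}\varepsilon]$ controlled by a Bernstein-type bound for B-spline-weighted sums together with the moment condition on $\varepsilon$. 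Assembling these four pieces, premultiplying by $\{\Psi(x_0,t_0)-\Psi(x_0,0)\}^\T$, and using $\|\Psi(x_0,t_0)\|\asymp\sqrt K$ yields the explicit $r(n,K,\beta_n,\rho,r_m,r_\gamma,r_\gamma')$ in \eqref{bigboy}.

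For consistency \eqref{main:consist} I would substitute $\beta_n\asymp1$ and the $n^{-1/4}$ nuisance rates, at which point the dominant terms of $r(\cdot)$ are the sieve bias $\asymp K^{-p/(d+1)}$, the variance $\asymp\sqrt{K/n}$ (up to logs), and the Tikhonov bias $\asymp\rho^{1/4}$ balanced against the $\rho$-induced inflation; choosing $K\asymp n^{(d+1)/(2p)}$ equates bias and variance to $n^{-1/2+(d+1)/(4p)}$, and $\rho\asymp n^{-1/2}$ makes $\rho^{1/4}=n^{-1/8}$ — wait, that is slower, so I would instead verify (as the theorem implicitly asserts) that under $p>d+1$ the $\rho$-term in the \emph{full} bound \eqref{bigboy} enters multiplied by a factor that renders $\rho\asymp n^{-1/2}$ subdominant; making that bookkeeping precise is part of the work. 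For the CLT \eqref{main:clt}, having isolated term (d) as the only non-negligible stochastic term, I would write $\sqrt n\{\hat\tau(x_0,t_0)-\tau(x_0,t_0)\}=\sqrt n\,w_n^\T A_n P_n[\{\Psi-\Gamma\}\varepsilon]+o_P(1)$ with $w_n=\Psi(x_0,t_0)-\Psi(x_0,0)$ and $A_n=G_n^{-1}$, recognize this as a triangular-array sum of independent mean-zero terms with variance $\tilde\sigma^2/n$ where $\tilde\sigma^2=w_n^\T A_n B_n A_n w_n$ and $B_n=E[\varepsilon^2\{\Psi-\Gamma\}\{\Psi-\Gamma\}^\T]$, and verify the Lindeberg condition using \eqref{clt:con} and the growth $K\asymp n^{\epsilon_{\mathrm{clt}}+(d+1)/(2p)}$ (the slightly faster $K$ than in the consistency part kills the sieve bias relative to $n^{-1/2}$, which is needed for an unbiased limit). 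Finally, \eqref{main:ci} follows from Slutsky once $\hat\sigma^2/\tilde\sigma^2\to_P1$: I would show $\hat A_n\to A_n$ (same perturbation lemma, now applied to the truncated SVD in Algorithm~\ref{alg:sigma}) and $\hat B_n\to B_n$, the latter using $\|\hat\mu-\mu\|_{\mathbb X\times\mathbb T}=o_P(1)$, cross-fitting, and the moment condition \eqref{clt:con}, then sandwich with $w_n$.

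The main obstacle I anticipate is the spectral/perturbation analysis in step (a)--(controlling $\hat G_n^{-1}$): because $R_n$ is genuinely rank-deficient ($\zeta<K$, rank tied to the non-identification of $L_c$), one cannot simply bound $\|\hat G_n^{-1}\|$ by $1/\lambda_{\min}$; instead the invertibility is \emph{created} by the ridge term $\rho Q_n$ on the null space of $R_n$ while the signal on the range of $R_n$ lives at scale $\beta_n$, so $\hat G_n^{-1}$ has two very different scales ($\asymp1/\beta_n$ on one subspace, $\asymp1/\rho$ on the other) and the cross terms in $w_n^\T\hat G_n^{-1}(\cdots)$ must be shown not to blow up — this is exactly where conditions~(ii)--(iii) and the structure of $w_n$ (a \emph{difference} $\Psi(x_0,t_0)-\Psi(x_0,0)$, which annihilates the problematic $s(x)$-direction) must be used carefully. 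A secondary difficulty is making the Tikhonov-bias term $\mathcal O(\rho^{1/4})$ from Theorem~\ref{pro:tech}, which is an $\mathcal L^2_{\mathcal P}$ statement, interact correctly with the \emph{pointwise} evaluation at $(x_0,t_0)$; I expect this is handled by routing it through the sieve coefficients and paying an extra $\sqrt K$ factor, but verifying that this does not destroy the stated rates requires care.
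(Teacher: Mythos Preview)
Your architecture is largely right and matches the paper: the two-scale spectral analysis of $\hat G_n^{-1}$ (scale $\beta_n^{-1}$ on $\mathrm{range}(U)$ versus $\rho^{-1}$ on $\mathrm{range}(U_\perp)$), the observation that $w_n=\Psi(x_0,t_0)-\Psi(x_0,0)$ lies in $\mathrm{range}(U)$ and therefore kills the dangerous $U_\perp$ direction, the cross-fitting decoupling for nuisances, the Lindeberg CLT on the stochastic term, and Slutsky via $\hat\sigma/\tilde\sigma\to_P 1$ are all exactly what the paper does.

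There is, however, one genuine gap, and it is precisely the point where you paused (``wait, that is slower''). The paper does \emph{not} route through $\tau_\rho$ and does \emph{not} invoke the $\mathcal O(\rho^{1/4})$ bound of Theorem~\ref{pro:tech} anywhere in the proof of Theorem~\ref{thm:main}. Instead it targets directly the population least-squares coefficient $\phi^*=Q_n^{-1}E\{\tilde\tau(X,T)\Psi(X,T)\}$ of $\tilde\tau$ itself (not $\tilde\tau_\rho$). The ridge then enters only as the deterministic remainder $\Delta_2=-\rho\hat Q_n\phi^*$ in the identity
\[
\hat\phi-\phi^*=\hat G_n^{-1}\Big[P_n\big\{(\Psi-\hat\Gamma)(Y-\hat m)\big\}-\hat R_n\phi^*\Big]-\rho\,\hat G_n^{-1}\hat Q_n\phi^*.
\]
The key algebraic fact you are missing is that $U_\perp^\T Q_n\phi^*=E\big[U_\perp^\T\Psi(X,T)\,\tilde\tau(X,T)\big]=E\big[U_\perp^\T\Psi(X,T)\,E\{\tilde\tau(X,T)\mid X\}\big]=0$, because $U_\perp^\T\Psi(x,t)$ is free of $t$ (Lemma~\ref{lm:svdR}(i)) and $E\{\tilde\tau(X,T)\mid X\}=0$ by the very definition of $\tilde\tau$. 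Hence $Q_n\phi^*\in\mathrm{range}(U)$, so when $\hat G_n^{-1}$ acts on $\rho Q_n\phi^*$ it picks up only the $\beta_n^{-1}$ scale, and the ridge contributes $\mathcal O_P(\rho\|v_n\|\beta_n^{-1})$. With $\rho\asymp n^{-1/2}$ and $\beta_n\asymp 1$ this is $\mathcal O_P(\sqrt{K/n})$, i.e.\ the same order as the stochastic term and far sharper than $\rho^{1/4}=n^{-1/8}$. Your ``bookkeeping'' hope cannot be realized along the $\tau_\rho$ route; the resolution is to abandon $\tau_\rho$ and exploit $E\{\tilde\tau\mid X\}=0$ instead.

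A minor correction for the CLT and CI: the paper's $A_n$ is the truncated inverse $\tilde U\tilde\Sigma^{-1}\tilde U^\T$ (top-$\zeta$ SVD of $G_n$), not the full $G_n^{-1}$; this is why Algorithm~\ref{alg:sigma} computes a top-$(K-K/k)$ SVD. The distinction matters because the $\tilde U_\perp$ block of $G_n^{-1}$ is of size $\rho^{-1}$, and one must check (as the paper does via the decomposition into $\ell_{1,n},\dots,\ell_{6,n}$) that its contribution is lower order rather than asserting it.
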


In Theorem \ref{thm:main}, we first state the upper bound of the $\ell_2$-regularized R-learner with sieve approximation. Condition (i) holds whenever $r_{\gamma}\precsim n^{-1/4}$. Conditions (ii)   holds when $\beta_n$  decays slowly with $n$. Condition (iii) specifies the smoothness of $\tilde{\tau}(x,t)$. 
Recall that $\tilde{\tau}(x,t) = \tau(x,t) - E\{\tau(X,T)\mid X = x\}$. Two sufficient conditions for condition (iii) to hold are (a) both $\tau(x,t)$ and $\varpi(t\mid x)$ are in $\Lambda(p,c,\mathbb{X}\times\mathbb{T})$, and (b) $\varpi(t\mid x)$ does not depend on $x$ in a completely randomized experiment. Overall, similar to the original R-learner in the binary-treatment case \citep{nie2021quasi}, our theoretical results do not rely on the smoothness of the outcome model when all nuisance functions can be estimated with $o_{P}(n^{-1/4})$ rates. Condition (iv) covers the two training strategies of $\hat{\Gamma}(\cdot)$ introduced in $\mathsection$\ref{rk:gamma:est}. Condition (v) allows a very general regime of $\rho$, and the convergence is still valid even $\rho$ is fixed and positive. This result complies with our identification result in Theorem~\ref{pro:tech} stating that our two-step identification strategy can identify $\tilde{\tau}$ and $\tau$ with any fixed $\rho > 0$.
\par
The general upper bound $r(n,K,\beta_n,\rho,r_m,r_\gamma,r_{\gamma}')$ has a  complicate form, thus we defer its explicit form to the Supplementary Material. To ease the exposition, we consider a simple but reasonable condition such that $\beta_n\asymp 1$, $p > d +1$ and $r_{m}, r_\gamma,r_{\gamma'}\precsim n^{-1/4}$ to elucidate results in \eqref{main:consist}, \eqref{main:clt}, and \eqref{main:ci}. The asymptotic behavior of $\beta_n$ relies on the joint design of $(X,T)$. One sufficient condition for  $\beta_n \asymp 1$ to hold is that  $T$ follows complete randomization \citep{splawa1990application}; see Lemma \ref{lm:svdR} (iii). The relationship $p > d + 1$ means that $\tilde{\tau}(x,t)$ is smooth enough with respect to its dimension. A similar condition is also considered in the theoretical analysis of other sieve-type estimators; see e.g., \citet{shi2021statistical}.  We emphasize that assumptions like $\beta_n\asymp 1$ and $p > d +1$ are made mainly for succinct conditions and results. 
{\color{black}With more careful bookkeeping}, 
the consistency result remains valid, yet with a more complicated form of the convergence rate, when $p$ is slightly smaller than $d + 1$, and the limiting distribution results still hold when $\beta_n$ is slowly decaying. Finally, the nuisance functions are estimated with  the $o_P(n^{-1/4})$ rate, which are relaxed from the $\mathcal{O}_P(n^{-1/2})$ rate, demonstrating the robustness of our proposed estimator to slower rates of convergence of nuisance function estimators.  Similar conditions are  assumed in \citet{nie2021quasi,yadlowsky2018bounds}, among others.
\par
Our consistency result  \eqref{main:consist} is attained after choosing $K$  to balance the bias-variance tradeoff. If a higher-order moment condition \eqref{clt:con} holds, we further have the limiting distribution result \eqref{main:clt}, as long as we slightly increase $K$ resulting in an undersmoothing estimator. Similar conditions appear when studying the limiting distribution of the classic nonparametric sieve regression \citep{newey1997convergence}. 
Finally, we propose a closed-form  variance estimator $\hat{\sigma}^2$ of $\hat\tau(x_0,t_0)$ in Algorithm \ref{alg:sigma}, for any given $(x_0,t_0)\in\mathbb{X}\times\mathbb{T}$. The variance estimator requires an additional consistent estimator for $\mu(x,t)$ under $\mathcal{L}^{\infty}$ norm. Such condition is common  \citep[e.g.,][]{kennedy2017non}. Then we construct a $(1 - c)$-confidence interval:
\bee\label{form:CI}
\Big(\hat{\tau}(x_0,t_0) - z_{c/2}\frac{\hat{\sigma}}{\sqrt{n}},\,\hat{\tau}(x_0,t_0) + z_{c/2}\frac{\hat{\sigma}}{\sqrt{n}}\Big).
\ee
We use $z_{c}$ to represent the {\color{black}$(1 - c)$}-quantile of the standard normal distribution for any $c\in(0,1)$.
\begin{algorithm}[t]
\caption{Cont'd Algorithm \ref{ago:1}: Variance estimator of  $\hat{\tau}(x_0,t_0)$}\label{alg:sigma}
\SetAlgoLined
\textit{Step 1}.\ Obtain top-$(K-K/k)$ singular value decomposition $\hat{U}\hat{\Sigma}\hat{U}^\T$ of $\hat{G}_n = \hat{R}_n + \rho\hat{Q}_n$\;
\textit{Step 2}.\ Set $\hat{A}_n = \hat{U}\hat{\Sigma}^{-1}\hat{U}^\T$\;
\textit{Step 3}.\ For each $j\in[\mathcal{J}]$,  obtain $\hat{\mu}^{(-j)}(x,t)$ with all data except $\mathcal{S}_j$, via flexible machine learning methods. Then obtain 
$$
\hat{B}_n^{(-j)} = \frac{1}{|\mathcal{S}_j|}\sum_{Z_i\in\mathcal{S}_j}\{Y_i - \hat{\mu}^{(-j)}(X_i,T_i)\}^2\big\{\Psi(X_i,T_i) - \hat{\Gamma}^{(-j)}(X_i)\big\}\big\{\Psi(X_i,T_i) - \hat{\Gamma}^{(-j)}(X_i)\big\}^\T;
$$
\textit{Step 4}.\ Set $\hat{B}_n = \mathcal{J}^{-1}\sum_{j = 1}^{\mathcal{J}}\hat{B}_n^{(-j)}$\;
\ \ \textit{Step 5}.\ Output $\hat{\sigma}^2 = \{\Psi(x_0,t_0) - \Psi(x_0,0)\}^\T\hat{A}_n\hat{B}_n\hat{A}_n\{\Psi(x_0,t_0) - \Psi(x_0,0)\}$.
\end{algorithm}
\begin{remark}[Tuning parameter selection]\label{rk:para}\color{black}
Theorem \ref{thm:main} indicates that, to attain the best convergence rate of our proposed method, one needs to carefully select the tuning parameters $K$ and $\rho$. Especially, we need to select $K$ to balance the bias-variance tradeoff, and for $\rho$, according to both our theoretical results and empirical experiments, it will not significantly affect the estimation error of our proposed estimator as long as it is not too small.  
The R-learner with sieve approximation permits a closed-form solution (\ref{def:phi}), and therefore allows a fast generalized cross validation-based algorithm for selecting $K$ and $\rho$; see $\mathsection$\ref{sec:gcv}.     
For inference, as illustrated in Theorem \ref{thm:main}, one could slightly increase the optimal $K$ selected by generalized cross validation  and slightly decrease $\rho$ before constructing the confidence interval \eqref{form:CI}; The effectiveness of such strategy is further demonstrated by our numerical experiments in $\mathsection$\ref{sec:simu}.
\end{remark}
\section{Generalized cross validation-based tuning parameter selection}\label{sec:gcv}
Parameters $K$ and $\rho$ for our sieve-type $\ell_2$-regularized R-learner (Algorithm \ref{ago:1}) control the bias-variance tradeoff. An essential problem in practice is how to select the optimal $K$ and $\rho$ based on samples and minimize the finite-sample error. For sieve-type estimators, many data-driven methods have been considered to select the optimal number of basis functions, including AIC, BIC, cross validation, Lepski's method, Mallows criterion. We refer interested readers to \citet{ichimura2007implementing,hansen2014nonparametric}.
\par
In this section, we adapt the  generalized cross validation \citep{golub1979generalized} as the parameter selection method for our sieve-type $\ell_2$-regularized R-learner. The generalized cross validation is known to be efficient and valid in many model selection problems.  In comparison to ad-hoc cross validation methods that have been popularly studied for sieve-type estimators \citep{belloni2015some, belloni2019conditional}, the generalized cross validation is preferable under our framework as it permits a closed-form solution and avoids  introducing additional computational complexity.
\par
For given samples $\{\M Z_i\}_{i = 1}^n$, we first set the candidate pool for relative parameters $K$ and $\rho$
$$
\mathcal{L} = \{(K,\rho)\mid K = K_1,\dots,K_{n_1}, \rho = \rho_1,\dots,\rho_{n_2}\}.
$$ 
Following the convention and without loss of generality, we assume that for each $K = K_a$ with $a\in[n_1]$, the corresponding basis function $\Psi(x,t)$ is determined. We then select the optimal  $(K,\rho)$ from $\mathcal{L}$ by the generalized cross validation. Recalling that in Algorithm \ref{ago:1}, we split the full data into $\mathcal{J}$ folds $\mathcal{S}_1,\dots,\mathcal{S}_{\mathcal{J}}$ with sample sizes $n_1,\dots,n_{\mathcal{J}}$, respectively. Without loss of generality, we assume $\mathcal{S}_{j} = \{\M Z_{1 + \sum_{l = 1}^{j-1} n_l},\dots,\M Z_{\sum_{l = 1}^{j} n_l}\}$ for each $j = 1,\dots,\mathcal{J}$.
\par Given specific $(K,\rho) = (K_a,\rho_b)\in\mathcal{L}$, we define the following  quantities:
\bee\nonumber
\M Y &= \big(Y_1,\dots,Y_{n}\big)^\T
\\
\hat{m} &= \big\{\hat{m}^{(-1)}(\M X_1),\dots,\hat{m}^{(-\mathcal{J})}(\M X_{n})\big\}^\T
\\
\M\Lambda &= \big\{\M \Psi(\M X_1,T_1) - \hat{\M \Gamma}^{(-1)}(\M X_1),\dots,\M \Psi(\M X_{n},T_{n}) - \hat{\M \Gamma}^{(-\mathcal{J})}(\M X_n)\big\}
\\
\hat{\bar{\tau}} &= \big\{(\rho + 1)\hat{\phi}^\T\M \Psi(\M X_1,T_1) -(\rho + 1)\hat{\phi}^\T \hat{\M \Gamma}^{(-1)}(\M X_1),\cdots,(\rho + 1)\hat{\phi}^\T\M \Psi(\M X_n,T_n) -(\rho + 1)\hat{\phi}^\T \hat{\M \Gamma}^{(-\mathcal{J})}(\M X_n)\big\}^\T,
\ee
for all $j\in[\mathcal{J}]$. 

 Based on Algorithm \ref{ago:1}, we can write
\bee\label{GCV:main}
\hat{\bar{\tau}} = \underbrace{(\rho + 1){n^{-1}\M\Lambda^\T  \hat{G}_n^{-1}\Lambda}}_{S_{(K_a,\rho_b)}}(\M Y - \hat{ m} ),
\ee   
where $S_{(K_a,\rho_b)}$ is the so-called smoothing matrix of the generalized cross-validation  for a particular parameter group $(K_a,\rho_b)$; see, e.g., \citet[Section 5.3]{wasserman2006all}. Recall~\eqref{con:decom1}, we consider $Y_i  - m(X_i)$ as the residuals. Following \citet{wasserman2006all}, the corresponding empirical error criterion for the generalized cross-validation is
\bee\label{err:gcv}
\text{Error}(K_a,\rho_b) = \frac{1}{n}\sum_{i = 1}^n\Bigg[\frac{Y_i - \hat{m}^{(-j_i)}(\M X_i) - (\rho + 1) \{\hat{\phi}^\T\Psi(\M X_i,T_i) - \hat{\phi}^\T\hat{\M\Gamma}^{(-j_i)}(\M X_i)\}}{1 - \text{tr}\{S_{(K_a,\rho_b)}\}/n}\Bigg]^2,
\ee
where $j_i$ is the splitting fold of $i$th sample. Finally, we choose the optimal $(K,\rho)$ among $\mathcal{L}$ that minimizes \eqref{err:gcv}.
\begin{algorithm}[t]
\caption{Generalized S-learner and X-learner for continuous treatments}\label{alg:sx}
\SetAlgoLined
\textit{Step 1}.\  Obtain $\hat{\mu}(x,t) = \hat{E}(Y\mid X = x,T = t)$ over all $\{Z_i\}_{i = 1}^n$ via a flexible machine learning algorithm\;
\textit{Step 2}.\ S- and X-learners proceed as follows.
\begin{itemize}
\item(S-learner). Obtain $\hat{\tau}_{SL}(x,t) = \hat{\mu}(Y\mid X = x,T = t) - \hat{\mu}(Y\mid X = x,T = 0)$;
\item (X-learner). Construct pseudo-individual treatment effect  $D_i = Y_i - \hat{\mu}(X_i,0)$, based on which fit $$\hat{\tau}_{XL}(x,t) = \hat{E}(D\mid X = x,T = t),$$ via a flexible machine learning algorithm.
\end{itemize}
\end{algorithm} 
\section{Generalized S- and X-learners}\label{sec:SX}
\citet{kunzel2019metalearners} proposed two meta-learners for estimating the conditional average treatment effect (CATE) under the binary-treatment setting, namely, the S-learner and the X-learner. These two learners adopt different strategies for CATE estimation by leveraging flexible machine learning algorithms. Building on similar ideas, our Algorithm~\ref{alg:sx} provides  natural extensions of the original S- and X-learners to the continuous-treatment setting.

\section{The details of the simple simulation in $\mathsection$\ref{sec:simple} and additional results with increasing sample sizes}\label{sec:simeg}
\begin{figure}[t]
\centering

\begin{subfigure}[b]{0.25\textwidth}\centering\includegraphics[width=1\linewidth]{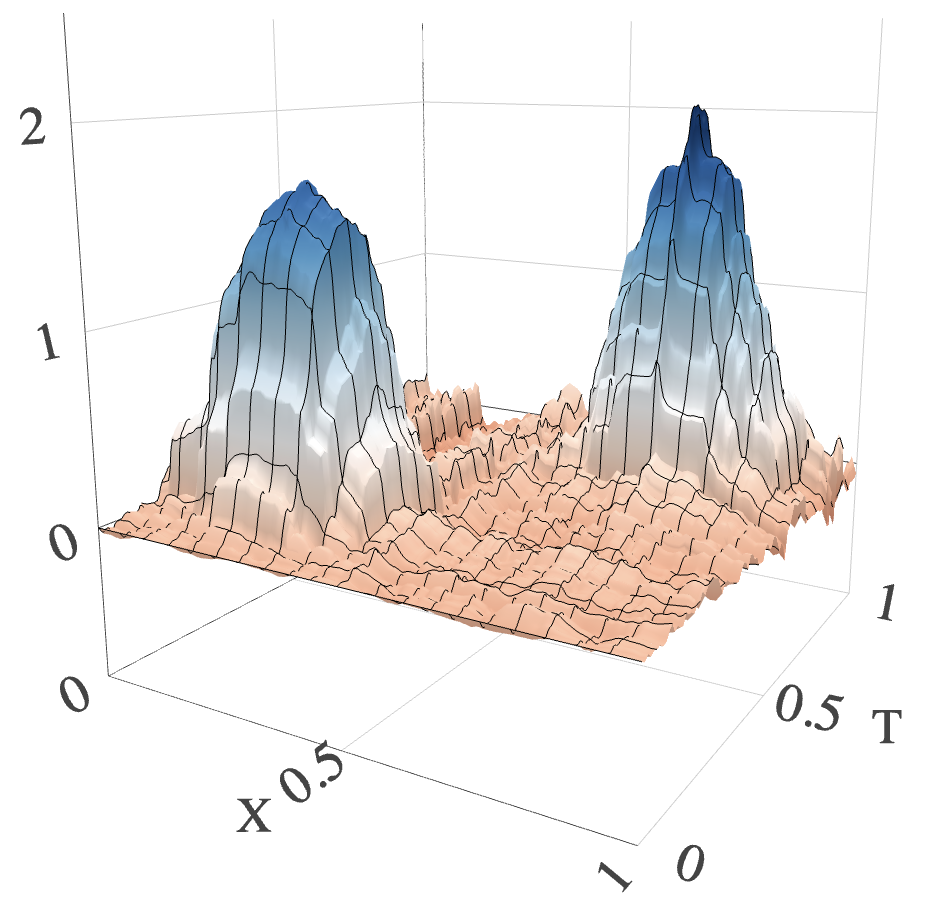}
         \caption{}
     \end{subfigure}
     \begin{subfigure}[b]{0.25\textwidth}\centering\includegraphics[width=1\linewidth]{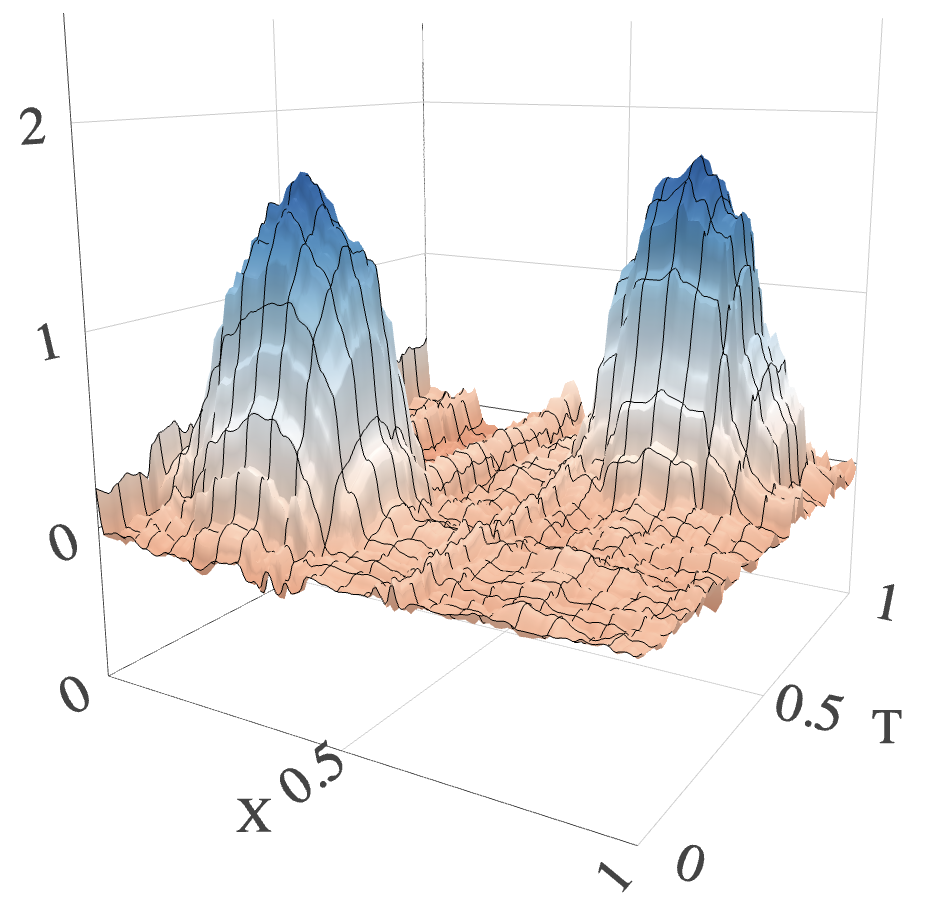}
         \caption{}
     \end{subfigure}
\caption{{When the treatment is continuous, the generalized S-learner (Panel a) and the generalized X-learner (Panel b)  roughly captures $\tau(x,t)$ (Fig. \ref{fig:example}b)} for the simple simulation   in $\mathsection$\ref{sec:simple}. However, their performances are not as good as the proposed R-learner (Fig. \ref{fig:example}e).}
\label{fig:exampless}
\end{figure}

\begin{figure}[tp]
\centering

\begin{subfigure}[b]{0.24\textwidth}\centering\includegraphics[width=1\linewidth]{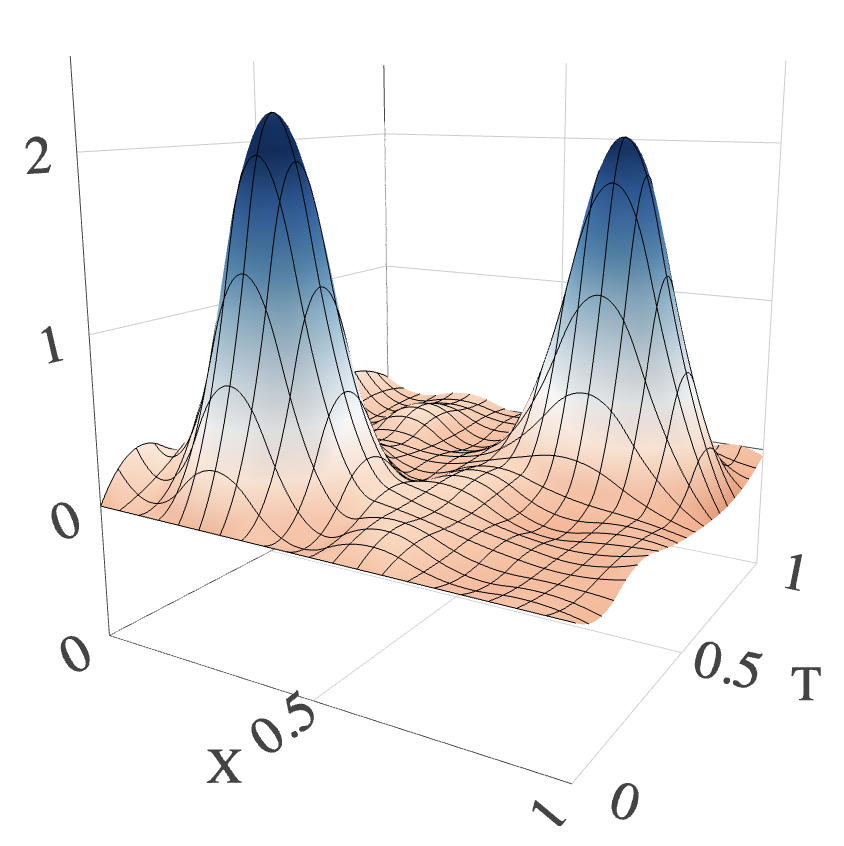}
         \caption{$n = 10000$}
     \end{subfigure}
\begin{subfigure}[b]{0.24\textwidth}\centering\includegraphics[width=1\linewidth]{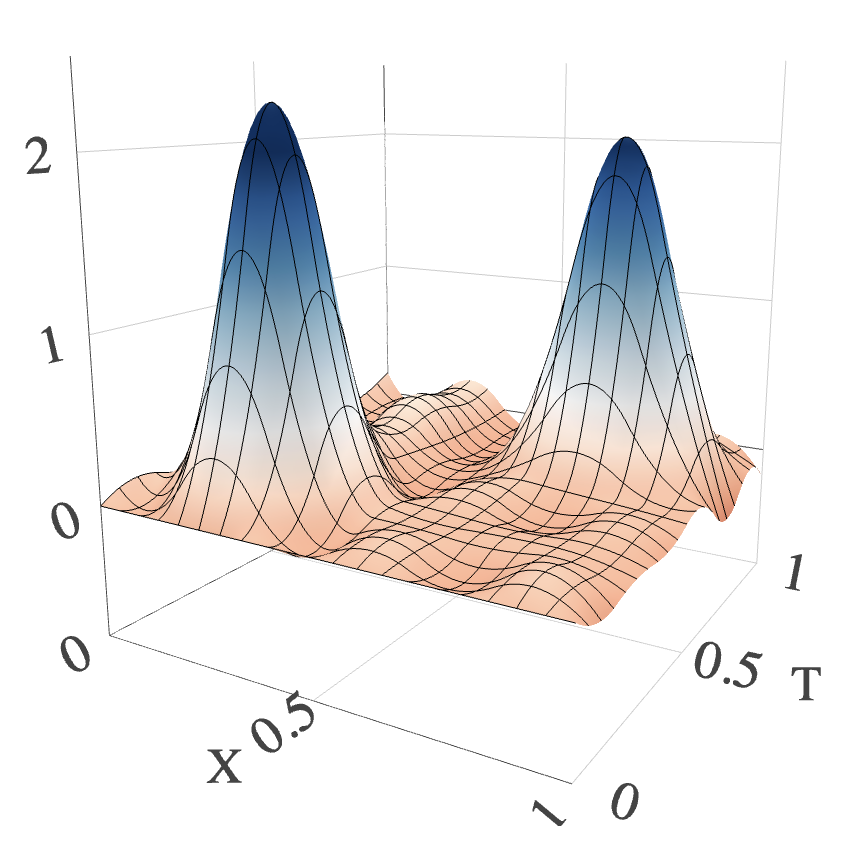}
         \caption{$n = 20000$}
     \end{subfigure}
\begin{subfigure}[b]{0.24\textwidth}\centering\includegraphics[width=1\linewidth]{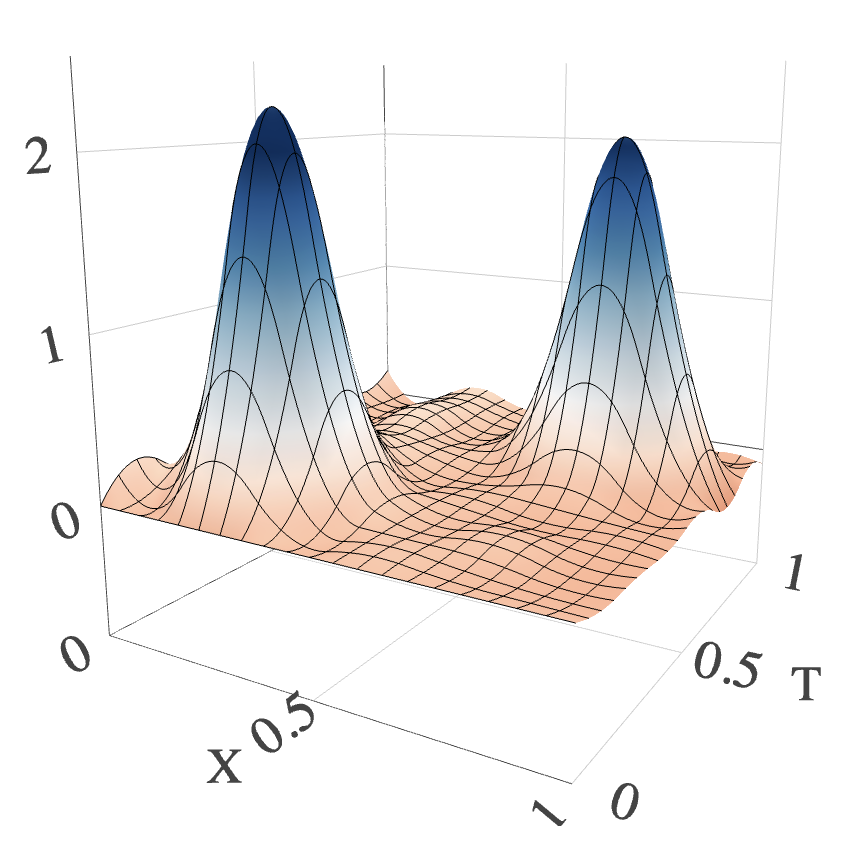}
         \caption{$n = 30000$}
     \end{subfigure}
\begin{subfigure}[b]{0.24\textwidth}\centering\includegraphics[width=1\linewidth]{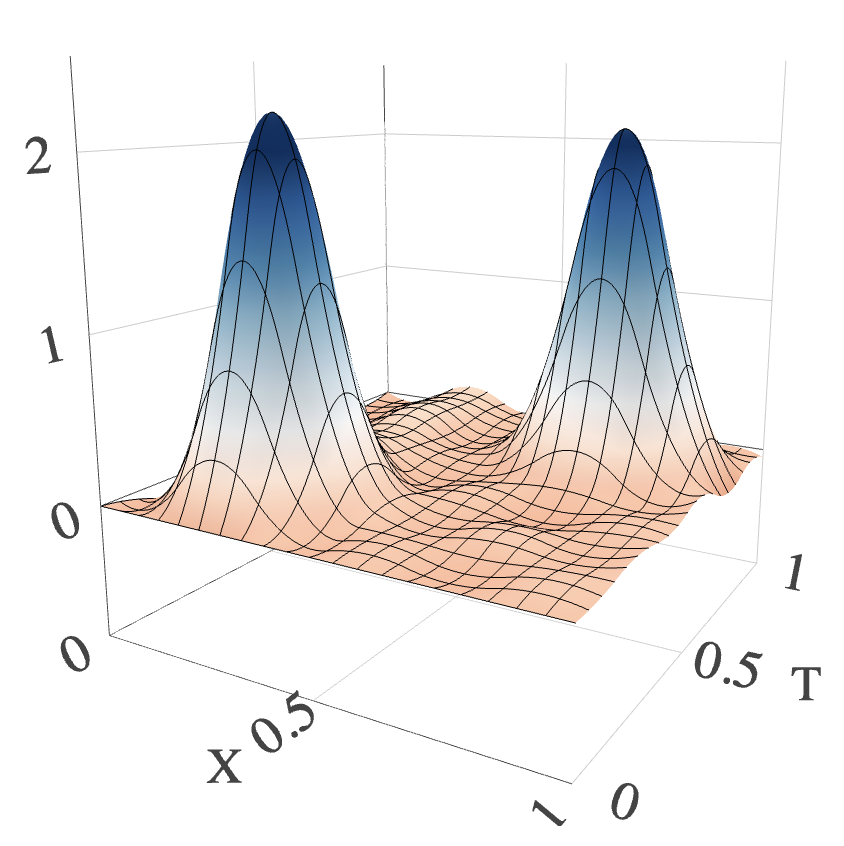}
         \caption{$n = 40000$}
     \end{subfigure}

    \begin{subfigure}[b]{0.24\textwidth}\centering\includegraphics[width=1\linewidth]{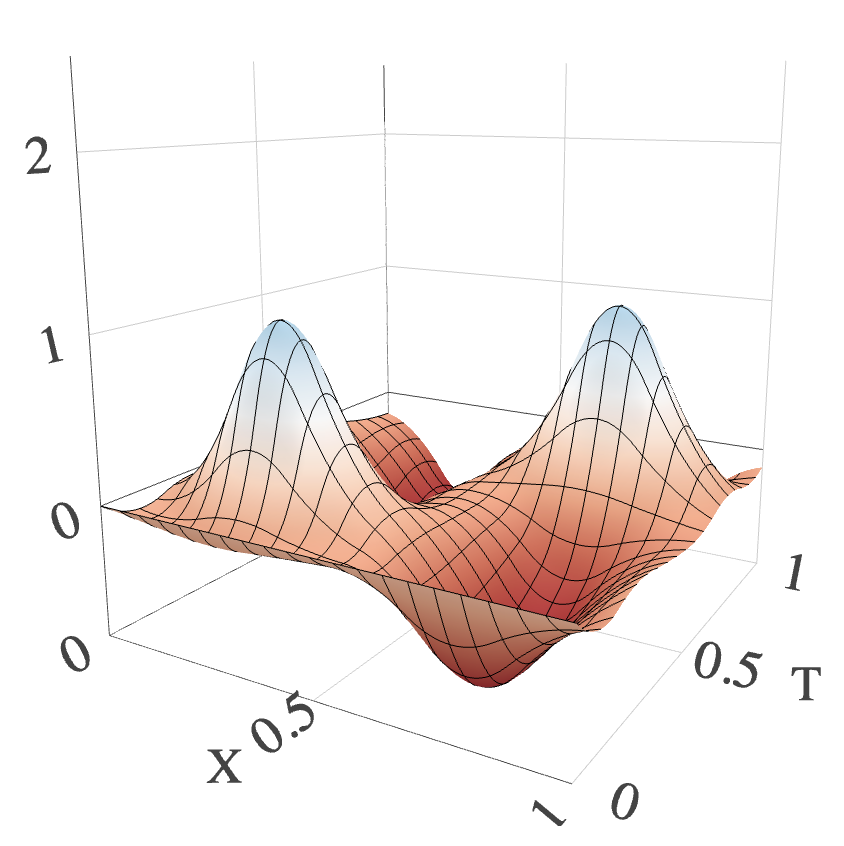}
     \end{subfigure}
\begin{subfigure}[b]{0.24\textwidth}\centering\includegraphics[width=1\linewidth]{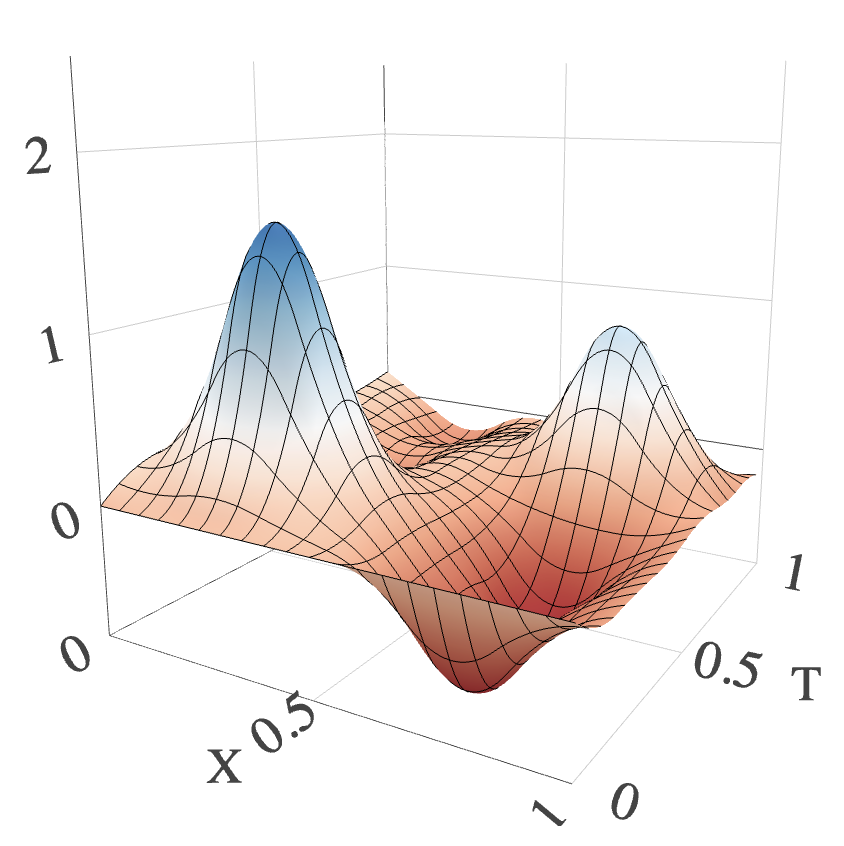}
     \end{subfigure}
\begin{subfigure}[b]{0.24\textwidth}\centering\includegraphics[width=1\linewidth]{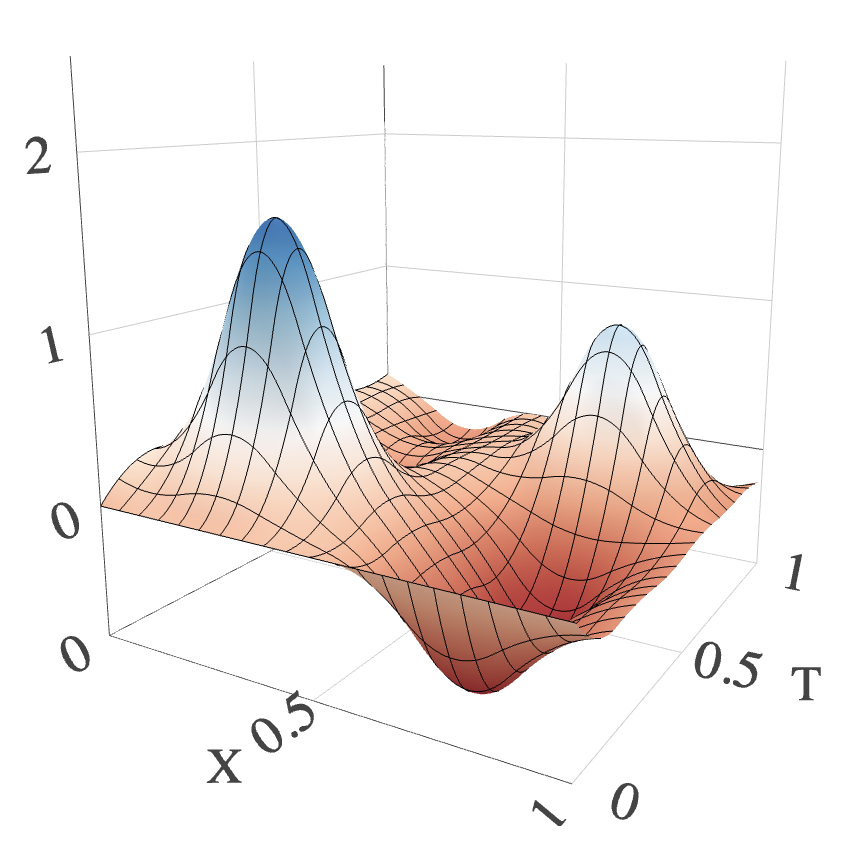}
     \end{subfigure}
\begin{subfigure}[b]{0.24\textwidth}\centering\includegraphics[width=1\linewidth]{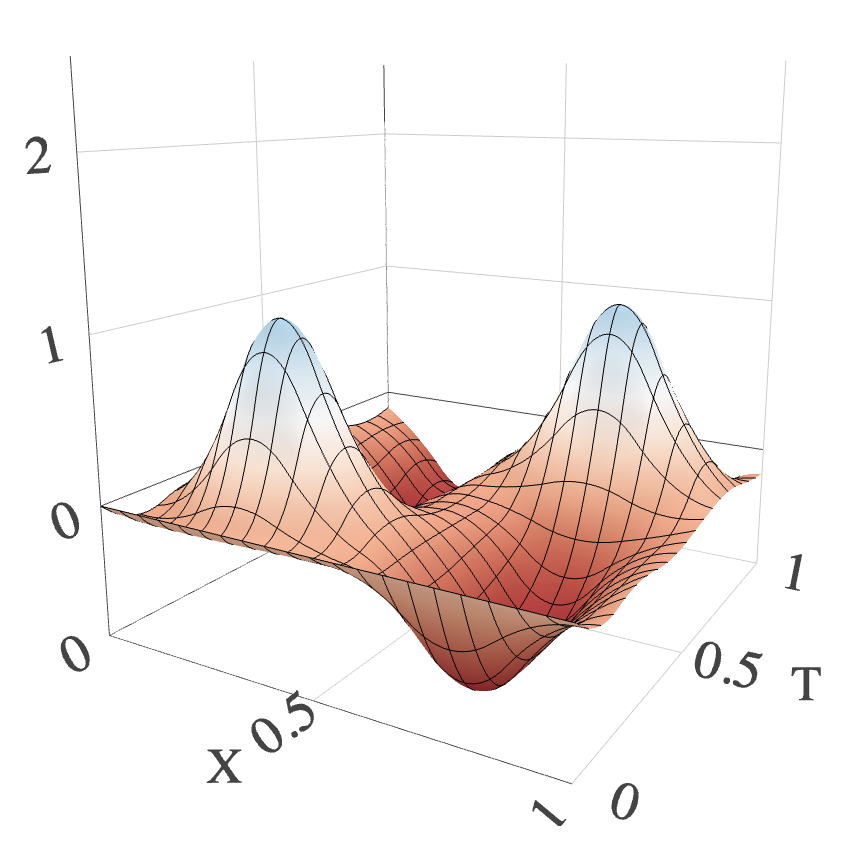}
     \end{subfigure}
\caption{The $\ell_2$-regularized R-learner (the top row) and the one-step generalized R-learner under zero condition (the bottom row) with increasing sample sizes under the simple simulation setting.}
\label{fig:exampless}
\end{figure}
We consider two data examples with one binary-treatment setting and one continuous-treatment setting, respectively, for our simple simulation in $\mathsection$\ref{sec:main}.
\begin{itemize} 
\item(A binary-treatment setting) Generate a single covariate $X_i\sim\text{Uniform}(0,1)$, a binary treatment $T_i\sim \text{Bernoulli}(0.5)$, and 
$Y_i = \sin(2X_i) + X_i^2 + \tau(X_i,T_i) + \varepsilon_i$ with  $\tau(X_i,0) = 0$, $\tau(X_i,1) = \tau(X_i)= \sin(2\pi r_i + \pi/2) + 1$,  $r_i = |X_i - 0.5|$, and $\varepsilon_{i}\mid X_i,T_i\sim \mathcal{N}(0,0.3^2)$. 
\item(A continuous-treatment setting) Generate a single covariate $X_i\sim\text{Uniform}(0,1)$, a continuous treatment $T_i\sim \text{Uniform}(0,1)$, and  $Y_i = \sin(2X_i) + X_i^2 + \tau(X_i,T_i) + \varepsilon_i$ with 
$$
\tau(X_i,T_i) = \begin{cases}
\sin\left(4\pi \sqrt{(X - 0.25) ^ 2+(T - 0.25)^2} + \pi/2\right) + 1& \sqrt{(X - 0.25) ^ 2+(T - 0.25)^2} < 0.25,
\\
\sin\left(4\pi \sqrt{(X - 0.75) ^ 2+(T - 0.75)^2} + \pi/2\right) + 1& \sqrt{(X - 0.75) ^ 2+(T - 0.75)^2} < 0.25,
\\
0&\text{otherwise},
\end{cases}
$$ 
and $\varepsilon_{i}\mid X_i,T_i\sim \mathcal{N}(0,0.3^2)$.  
\end{itemize}
Both settings can be seen as completely randomized experiments and the sample size is $n = 5000$ for the results in Fig.~\ref{fig:example}.

 For the binary-treatment setting, we approximate $\tau(x,t = 1)=\tau(x)$, with an $8$-dimensional B-spline basis $\psi(x)$ such that $\tau(x)\approx \phi^\T \psi(x)$. The $\psi(\cdot)$ has equally-spaced knots over $[0,1]$; see $\mathsection$\ref{sec:bspline} for more  details of B-splines. We construct the empirical analogy of the binary-treatment R-loss $L_{b}(h)$ in \eqref{rloss:binary}, with $h(x) = \phi^\T \psi(x)$ and the true nuisance functions. We then estimate $\hat{\tau}(x)= \hat{\phi}_b^\T \psi(x)$ as the minimum of this empirical R-loss. 
\par
For the continuous-treatment setting, firstly, we use a $64$-dimensional tensor-productive B-spline function $\Psi(x,t)$ to approximate $\tau(x,t)$, where $\Psi(x,t) = \psi(t)\otimes \psi(x)$ and $\psi(\cdot)$ is defined in the same way as before. Specifically, the empirical analogy $\hat{L}_c(h)$ of our generalized R-loss $L_c(h)$ in  \eqref{con:decom2} is constructed through the sample average and the true nuisance functions.  The directly generalized R-learner  $ \hat{\phi}_c^\T\Psi(x,t)$ is the minimum of this empirical generalized R-loss $\hat{L}_c(h)$ among all $h(x,t) = \phi^\T\Psi(x,t)$ and is presented in Fig.~\ref{fig:example}c. 
\par
Secondly, to build the one-step generalized R-learner with the functional zero constraint, we use the same empirical R-loss $\hat{L}_c(h)$ as the directly generalized R-learner  with true nuisance functions. In particular, we impose the zero constraint in \eqref{opt:1:rec} on the class of the B-spline estimators that we consider for our directly generalized R-learner, and focus on the  estimator class
$$
\mathcal{H}_0 = \{h(x,t)  = \phi^\T\Psi(x,t)\mid \phi^\T\Psi(x,0) = 0 \text{ for all }x\in\mathbb{X}\text{ and }\phi\in\RR^{64} \}.
$$
By the basic property of the B-spline function as we will show in Proposition~\ref{prop:zerospline}, we can further write  $\mathcal{H}_0$ as the following function class. First, define the $7$-dimensional B-spline $\tilde{\psi}(t)$  by dropping the intercept spline in  ${\psi}(t)$, i.e., the $\psi^{(1)}(t)$ defined in $\mathsection$\ref{sec:bspline}, and keep all other spline functions unchanged. Then, by Proposition~\ref{prop:zerospline}, we can write $\mathcal{H}_0$  as follows:
$$
\mathcal{H}_0 = \{h(x,t) = \tilde{\phi}^\T\tilde{\psi}(t)\otimes \psi(x)\mid \tilde{\phi}\in\RR^{56}\},
$$ 
 where we can check that $\tilde{\psi}(0)\otimes \psi(x)$ is always  a zero vector for any $x\in\mathbb{X}$. 
We then minimize $\hat{L}_c(h)$ over $\mathcal{H}_0$ and obtain our one-step generalized R-learner with the functional zero constraint, which is presented in Fig.~\ref{fig:example}d.
\par
Thirdly, we build our proposed estimator in Algorithm \ref{ago:1:sieve}. 
For simplicity, we do not impose sample splitting, i.e., set $\mathcal{J} = 1$. We use the same B-spline basis functions $\Psi(x,t)$ as the directly generalized R-learner.  Nuisance functions are trained via strategy (i) based on the \textsc{SuperLearner}, which is considered in the simulation study in $\mathsection$\ref{sec:simu}. The $\rho$ has been selected as $0.05$. Our proposed estimator is presented in  Fig.~\ref{fig:example}e. 
\par
Finally, we assess the generalized S- and X-learners defined in Algorithm \ref{alg:sx}. All regression procedures in Algorithm \ref{alg:sx} are carried out using \textsc{SuperLearner} \citep{van2007super}, which combines random forest, XGboost, Bayesian Additive Regression Trees, adaptive polynomial splines, adaptive regression splines, single-layer neural networks, linear regression, and recursive partitioning-based regression trees. We show the results of $\hat{\tau}_{SL}(x,t)$ and $\hat{\tau}_{XL}(x,t)$ in Fig.~\ref{fig:exampless}. Comparing Fig.~\ref{fig:example} and Fig.~\ref{fig:exampless},  the proposed R-learner $\hat{\tau}(x,t)$ is the best among all comparative approaches.

\section{Numerical experiment}\label{sec:simu}
\begin{figure}[tp]
\centering
\includegraphics[width=0.9\linewidth]{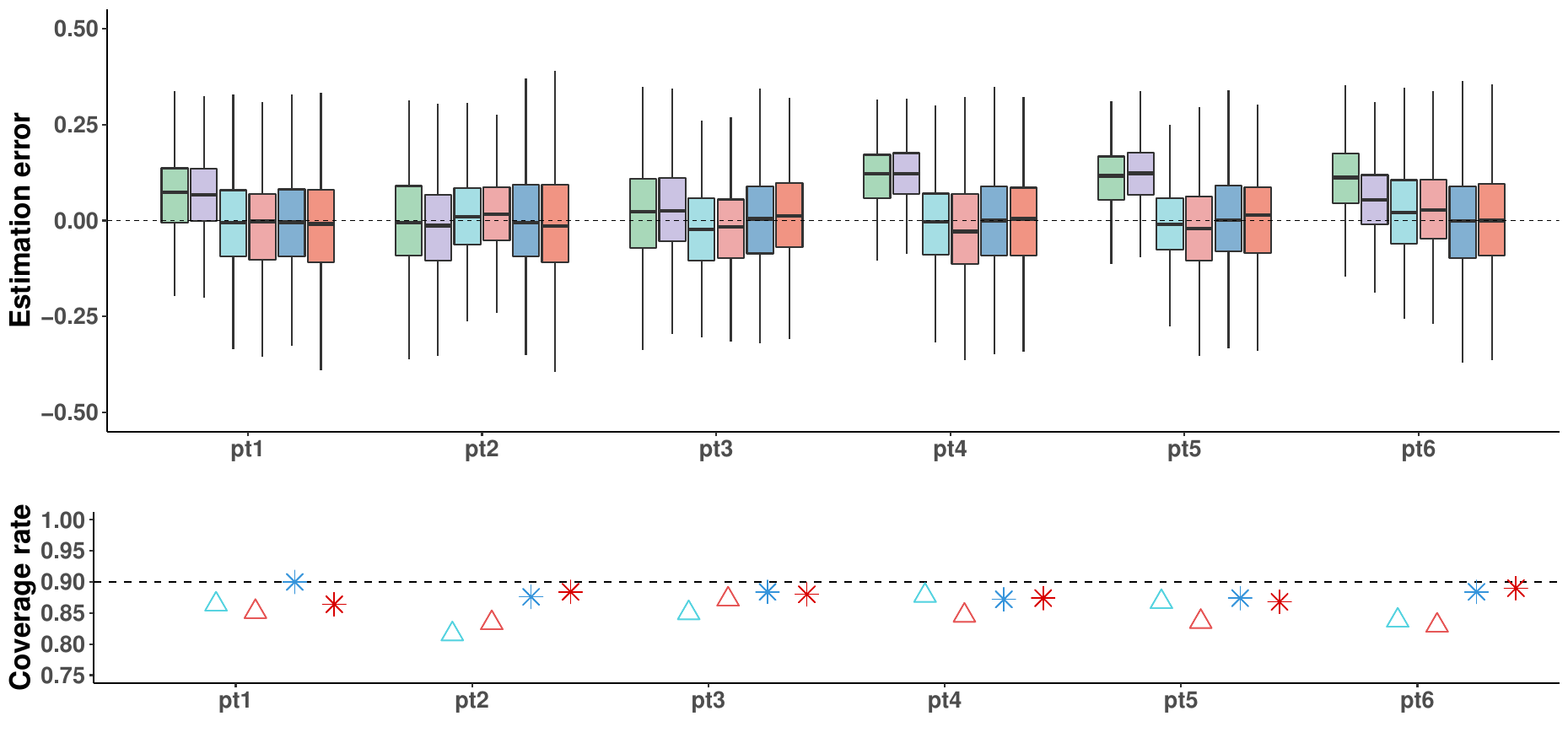}
\caption{Simulation results under the setting when $n = 2000$ and $T$ is generated from the uniform distribution  based on $500$ Monte Carlo simulated datasets. Different colors represent simulation results of different methods. The top penal shows the box-plots of pointwise errors for all comparative estimators, and the bottom panel shows the empirical coverage rates for the proposed estimators excluding S- and X-learners for which confidence intervals are not available. Different methods are labeled in different colors and/or shapes: S-learner (Green), X-learner (purple), proposed R-learner with the nuisance function estimated by Strategy (i) (light blue/triangle) or (ii) (light red/triangle) under Scenario (A) and with the nuisance function estimated by Strategy (i) (dark blue/star) or (ii) (dark red/star) under Scenario (B).}
\label{fig:simu}
\end{figure}

We run simulations to evaluate the finite-sample performance of our proposed methods. With sample size $n$, we generate random samples $\{\M Z_i\}_{i = 1}^n$, where $\M Z$ takes the form of 
$
\M Z  = \{\M X = (X^{(1)},X^{(2)}), T,Y\}.
$
The generating process of $\M Z$ is as follows:
\begin{itemize}
\item Generate  $X^{(1)}\sim\text{Bernoulli}(0.5)$ and $X^{(2)}\sim\text{Uniform}(0,1)$.
\item Generate $T$ from one of the following  generalized propensity score distributions:
\begin{itemize}
\item(Complete randomized experiment).  $\text{Uniform}(0,1)$;
\item (Beta distribution).  $\text{Beta}(\lambda_{\M X}, 1 - \lambda_{\M X}),
$ where we set $\text{logit}(\lambda_{\M X}) = (1,X^{(1)},X^{(2)})\eta$ with $\eta = (0.8 ,0.2, - 0.8)^\T$. Such distribution has been considered by \citet[Section 4]{kennedy2017non}. 
\end{itemize} 
\item Generate $
Y = \mu(\M X,0) + \tau(X,T) + \mathcal{N}\big(0,0.3^2\big),
$
where we set $ \mu(\M X,0) = X^{(1)}X^{(2)} + \sin(2X^{(2)}) + (X^{(2)})^2$, and
\bee\nonumber
\tau(X,T) = \begin{cases}
I(r_{0.5} \leq 0.5)v(r_{0.5})& X^{(1)} = 1
\\
I(r_{0.5} \leq 0.5)v(r_{0.5}) + I(r_{0.8} \leq 0.5)v(r_{0.8})/2& X^{(1)} = 0
\end{cases},
\ee 
where $r_a$ is the Euclidean distance between $(X^{(2)},T)$ and $(a,a)$, and $v(r) = \sin(2\pi r + \pi/2) + 1$ and $I(\cdot)$ is the indicator function.
\end{itemize}
\par
To fit data, we first stratify the simulated data  by $X^{(1)}$.	 We then fit each stratum by the $\ell_2$-regularized R-learner with sieve approximation stated in Algorithm \ref{ago:1}, where $\Psi(x^{(2)},t) =\psi(t)\otimes\psi(x^{(2)})$ and $\mathcal{J} = 6$; here $x^{(2)}$ corresponds $X^{(2)}$. 
Both $\psi(t)$ and $\psi(x^{(2)})$ are B-spline functions with a quadratic degree  and equally spaced knots over $[0,1]$. We select the optimal basis numbers $k_{T,opt}$ and $k_{X^{(2)},opt}$ for $\psi(t)$ and $\psi(x^{(2)})$ and regularization parameter $\rho_{opt}$ by the generalized cross validation-based algorithm in $\mathsection$\ref{sec:gcv}, among a candidate pool $\mathcal{L} = \{(k_{X^{(2)}},k_T,\rho)\mid 4\leq k_{X^{(2)}},k_T \leq 6, \rho = 0.005,0.01,\dots,0.5\}$. We choose basis numbers for our proposed R-learner under two scenarios:
\begin{itemize}
\item[(A)] The dimensions of $\psi(t)$ and  $\psi(x^{(2)})$ are $k_{T,opt}$ and $k_{X^{(2)},opt}$. \item[(B)] The dimensions of $\psi(t)$ and $\psi(x^{(2)})$ are  $k_{T,opt} + 1$ and $k_{X^{(2)},opt} + 1$.  
 \end{itemize}
 The nuisance functions are estimated via one of the following two strategies.
	\begin{itemize}
	\item[(i).] Estimate $\hat{m}^{(-j)}(x^{(2)})$ via \textsc{SuperLearner}, combining adaptive polynomial splines, adaptive regression splines, single-layer neural networks, linear regression, and recursive partitioning-based regression trees. Estimate $\hat{\Gamma}^{(-j)}(x^{(2)})$ through method (i) in $\mathsection$\ref{rk:gamma:est}. When $T$ is generated from the complete randomized experiment, $\varpi(t\mid x^{(2)})$ is assumed to be known as $1$ over $[0,1]^2$. When  $T$ is generated from the Beta distribution, we estimate $\hat{\varpi}^{(-j)}(t\mid x^{(2)})$ by estimating the parameters $\eta$ of  mean function $\lambda_X$ through logistic regression following \citet{kennedy2017non}.
	\item[(ii).]Estimate $\hat{m}^{(-j)}(x^{(2)})$  in the same way as strategy (i). Estimate $\hat{\Gamma}^{(-j)}(x^{(2)})$ through method (ii) in $\mathsection$\ref{rk:gamma:est}, where the coordinate-wise regressions are conducted by the same \textsc{SuperLearner} algorithm as for $\hat{m}^{(-j)}(x^{(2)})$.
	\end{itemize}
\begin{figure}[htbp]
\centering
 \begin{subfigure}[b]{0.7\textwidth}\centering\includegraphics[width=1\linewidth]{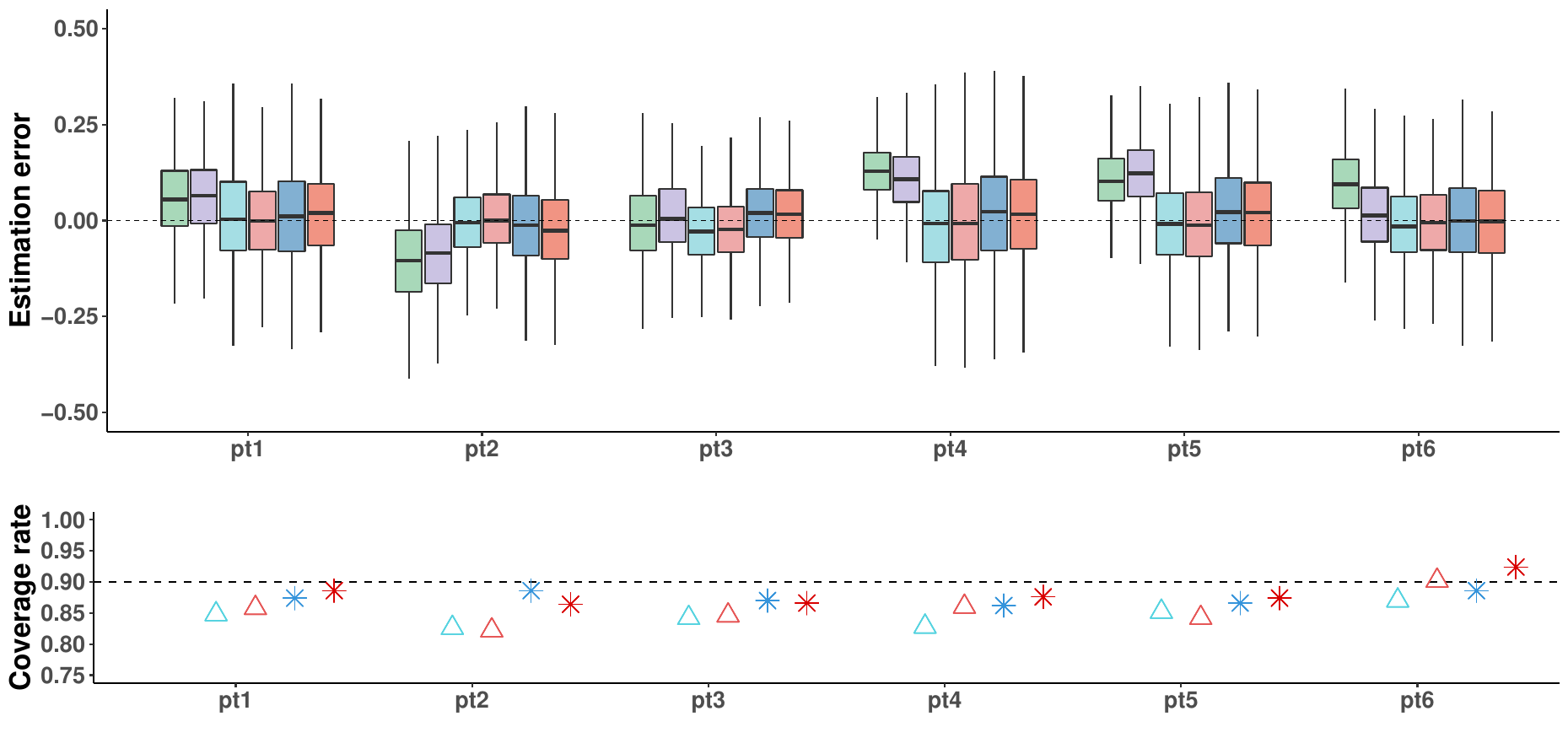}
         \caption{$n = 2000$ and $T$ is generated from the Beta distribution}
     \end{subfigure}
\par
 \begin{subfigure}[b]{0.7\textwidth}\centering\includegraphics[width=1\linewidth]{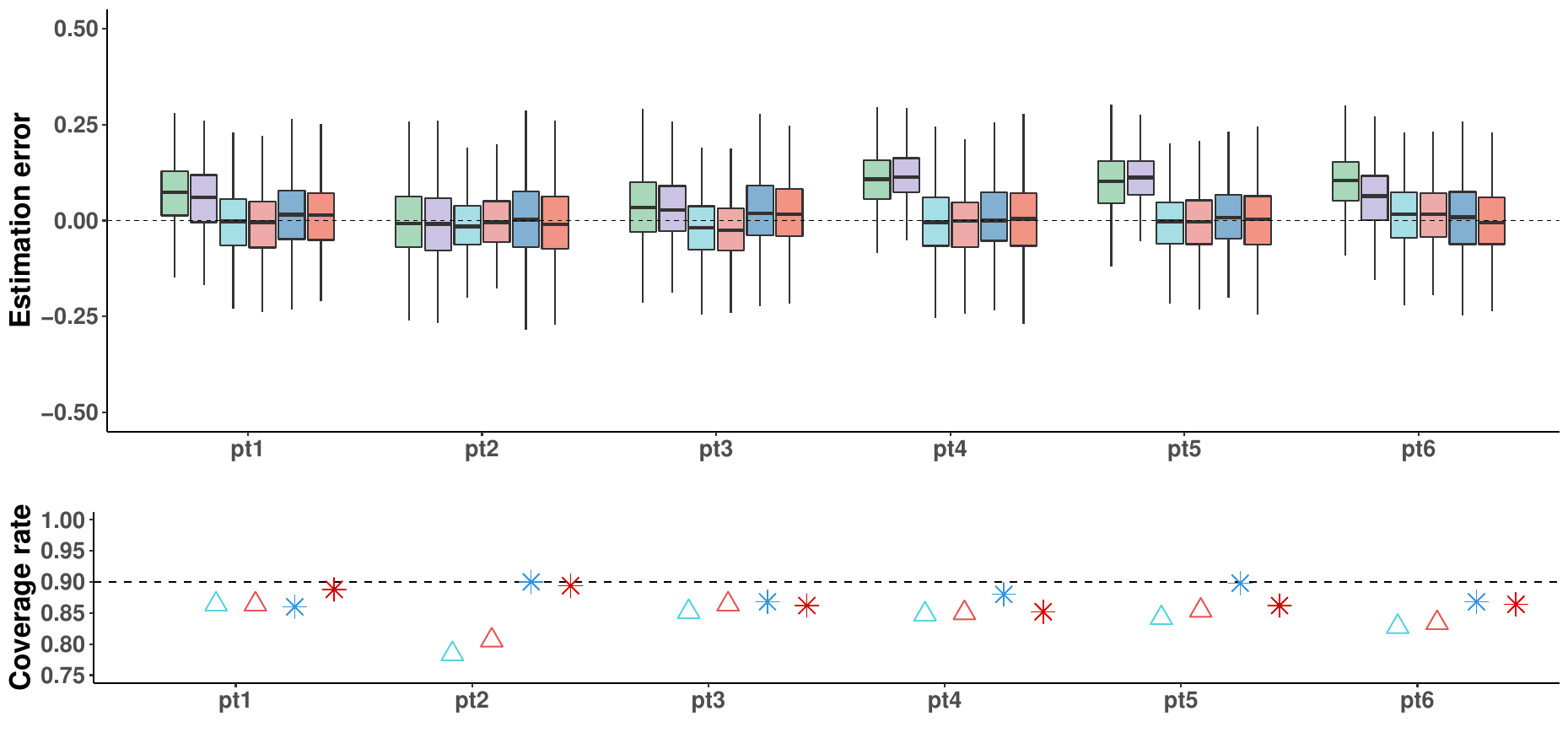}
         \caption{$n = 4000$ and $T$ is generated from the uniform distribution}
     \end{subfigure}
\par
 \begin{subfigure}[b]{0.7\textwidth}\centering\includegraphics[width=1\linewidth]{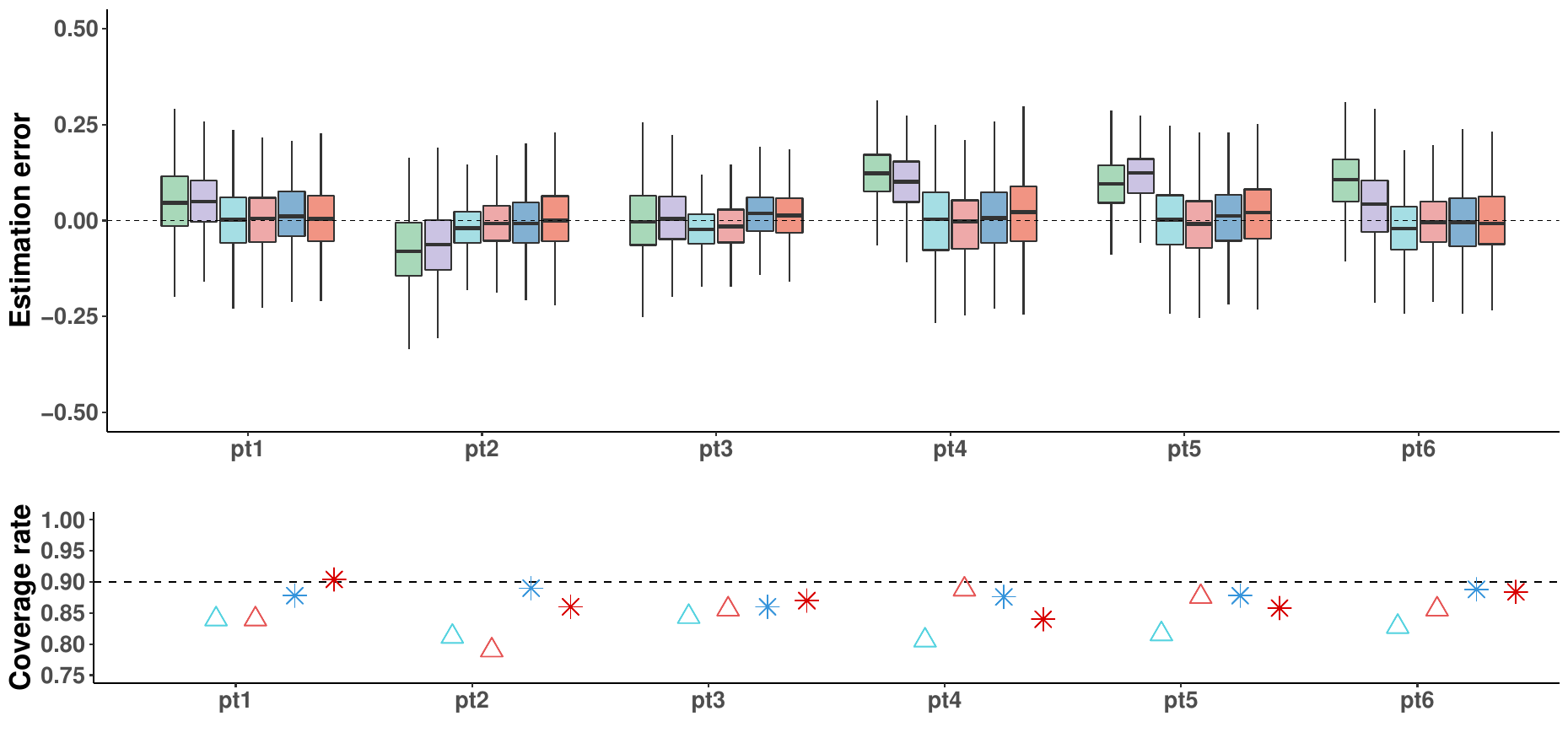}
         \caption{$n = 4000$ and $T$ is generated from the Beta distribution}
     \end{subfigure}
\caption{From top to bottom, different subfigures report simulation results under different settings. In each subfigure, different colors represent simulation results of different methods. The top penal shows the box-plots of pointwise errors for all comparative estimators, and the bottom panel shows the empirical coverage rates for the proposed estimators excluding S- and X-learners for which confidence intervals are not available. Different methods are labeled in different colors and/or shapes: S-learner (Green), X-learner (purple), proposed R-learner with the nuisance function estimated by Strategy (i) (light blue/triangle) or (ii) (light red/triangle) under Scenario (A) and with the nuisance function estimated by Strategy (i) (dark blue/star) or (ii) (dark red/star) under Scenario (B).}
\label{fig:simu:add}
\end{figure}
We test the performance of our proposed R-learners at six different points of $(x^{(1)},x^{(2)},t)$, namely, $(0,0.25,0.25),$ $(0,0.5,0.5),(0,0.75,0.75),(1,0.25,0.25),(1,0.25,0.75)$, $(1,0.5,0.25)$;
we term these points pt1--pt6 for brevity. In each round of the  simulation, we report the point-wise difference between the estimated $\tau(x,t)$ and the true $\tau(x,t)$ as the estimation error. We further construct $90\%$ confidence interval \eqref{form:CI} for each tested point, and check if it covers the truth; here $\hat{\sigma}$ is estimated following Algorithm \ref{alg:sigma} and $\hat{\mu}^{(-j)}(x^{(2)},t)$ is trained by same  \textsc{SuperLearner} algorithm as for $\hat{m}^{(-j)}(x^{(2)})$. For comparison, we additionally report estimation errors of  generalized S-learner and X-learners \citep{kunzel2019metalearners}, which are introduced in details in $\mathsection$\ref{sec:SX}. In our simulations, all regressions in S- and X-learners are performed by  \textsc{SuperLearner}  similar to $\hat{m}^{(-j)}(x^{(2)})$, yet additionally combining with random forest, XGboost, and Bayesian Additive Regression Trees.
\par
We run $500$ Monte Carlo simulations under each setting. Simulation results when $n = 2000$ and $T$ is generated under the complete randomized trial, are presented in Fig. \ref{fig:simu}. The top panel of Fig. \ref{fig:simu} shows the superior performances of our proposed method in terms of small estimation errors over all simulation settings and tested points. Both Strategies (i) and (ii) for nuisance function training produce valid results. {\color{black}The bottom panel of Fig. \ref{fig:simu} shows that when conducting proposed methods under Scenario (B)}, the empirical coverage rates are close to $90\%$ for all points, meanwhile the estimation variance is slightly larger. Such observations verify the theoretical claims in Remark \ref{rk:para}, and show the effectiveness of our proposed confidence interval after undersmoothing. Additional numerical results with $T$ generated under the Beta distribution and/or $n = 4000$, are contained  
  in Fig.~\ref{fig:simu:add}. Across all different settings, our proposed estimators present superior performance with small estimation errors, and the empirical coverage rates are closed to the nominal level $90\%$ after undersmoothing. With larger $n$, our proposed estimators have smaller estimation variances, which is consistent with our theoretical results for consistency.

\section{Proofs of propositions and theorems}
Before delving into the main proofs, we first simplify the setting for theoretical analysis, define some shorthand notation, and state some regularization conditions in $\mathsection$\ref{sec:prea}. We briefly review some technical details of  B-spline functions in $\mathsection$\ref{sec:bspline}. We list useful lemmas in $\mathsection$\ref{sec:lemma} and give the proofs for our main propositions and theorems in the remaining of the sections.
\subsection{Preliminaries}\label{sec:prea}\noindent\textbf{Simplification of the sample-splitting procedure:} Recall $\hat{\phi}$ and $\hat{\sigma}$ are constructed based on the double machine learning framework, with $K$-fold sample splitting. During the proof, we consider a simple $2$-fold training scenario such that,  $\hat{\phi}$ and $\hat{\sigma}$ are fit based on $n$ i.i.d. samples, while the nuisance functions $\hat{\Gamma}$, $\hat{m}$ and $\hat{\mu}$ are all trained via another independent $n$ i.i.d. samples. In the sense of the asymptotic theoretical analysis considered in this paper, such simplification does not lose generality, and it has been popularly employed by the splitting-based estimators \citep{nie2021quasi,kennedy2020optimal,kennedy2020sharp}.
\par
\
\par
\noindent\textbf{Notation: }For simplicity, we occasionally omit some function arguments during the proof. For any function of random variable $ h(W)$, we may use $h$ to represent it. Similarly, we denote $ h(W_i)$ by $ h_i$. For random variable $W$, we denote $P\{\hat{f}(W)\} = \int_{w\in \mathbb{W}}\hat{f}(w)d\mathcal{P}(w)$ as the expectation for $W$ treating $\hat{f}$ as a fixed function or matrix. Thus if $\hat{f}$ is sample-based, $P\{\hat{f}(W)\}$ is a random variable or random matrix. The constants in the form of $C,C_1,C_2,\dots$ can change the meanings in different proofs. We also write ``wpa1'' as a short notation of ``with probability approaching 1''. For two matrices $A$ and $B$, we write $A\succeq B$, if $A - B$ is positive definitive. We use $\|A\|_2$ to represent the spectral norm of $A$. 
\par
We now introduce some basic matrix notation  involved in our main proof,
\bee\nonumber
{Q}_n &= E\big\{\bpsi(X,T)\bpsi^\T(X,T)\big\},
\\
{R}_n &= E\big[\{\Psi(X,T) - \M\Gamma(X)\}\{\Psi(X,T) - \M\Gamma(X)\}^\T\big],
\\
G_n &= {R}_n + \rho{Q}_n.
\ee
It is easy to see both $R_n$ and $G_n$ are positive semi-definitive matrices. Let the singular value decompositions   of ${R}_n$ and $G_n$ be
\bee\label{Rdecom}
R_n &= 
\begin{pmatrix}
U & U_\perp
\end{pmatrix}
\begin{pmatrix}
\Sigma & 
\\
&  0
\end{pmatrix}
\begin{pmatrix}
U^\T
\\
U_{\perp}^\T
\end{pmatrix},
\\
G_n &= 
\begin{pmatrix}
\tilde{U} & \tilde{U}_\perp
\end{pmatrix}
\begin{pmatrix}
\tilde{\Sigma} & 
\\
& \tilde{\Sigma}_\perp
\end{pmatrix}
\begin{pmatrix}
\tilde{U}^\T
\\
\tilde{U}_{\perp}^\T
\end{pmatrix},
\ee
where $\Sigma = \diag(\sigma_1,\dots,\sigma_{\zeta})$ such that $\zeta = \text{rank}(R_n)$ and $\sigma_1 \geq \dots\geq\sigma_{\zeta}$; $\tilde{\M\Sigma} = \diag(\tilde{\sigma}_1,\dots,\tilde{\sigma}_{\zeta})$, $\tilde{\M\Sigma}_{\perp} = \diag(\tilde{\sigma}_{\zeta + 1},\dots,\tilde{\sigma}_{K})$ and $\tilde{\sigma}_1\geq \cdots\geq\tilde{\sigma}_K$. Similarly, we write the empirical versions of $Q_n$ and $R_n$ as,
\bee\nonumber
\hat{Q}_n &= \p_n\big\{\bpsi(X,T)\bpsi^\T(X,T)\big\},
\\
\hat{R}_n &=\p_n\big[\{{\Psi}(X,T) - \hat{\M\Gamma}(X)\}\{{\Psi}(X,T) - \hat{\M\Gamma}(X)\}^\T\big]
\ee
Recalling the definition of $\hat{G}_n$, we then have $\hat{G}_n = \hat{R}_n + \rho\hat{Q}_n$. We also define, 
\bee\nonumber
\bar{G}_{n} &=\p\big[\{\Psi(X,T) - \hat{\M\Gamma}(X)\}\{\Psi(X,T)  - \hat{\M\Gamma}(X) \}^\T\big] + \rho  Q_n
\\
\bar{R}_n &= \p\big[\{\Psi(X,T) - \hat{\M\Gamma}(X)\}\{\Psi(X,T)  - \hat{\M\Gamma}(X)\}^\T\big].
\ee Finally, we write the SVD form of $\hat{G}_n$,
\bee\nonumber
\hat{G}_n = \begin{pmatrix}
\hat{U} & \hat{U}_\perp
\end{pmatrix}
\begin{pmatrix}
\hat{\Sigma} & 
\\
&  \hat{\Sigma}_\perp
\end{pmatrix}
\begin{pmatrix}
\hat{U}^\T
\\
\hat{U}_{\perp}^\T
\end{pmatrix},
\ee
where $\hat{\M\Sigma} = \diag(\hat{\sigma}_1,\dots,\hat{\sigma}_{\zeta})$, $\hat{\Sigma}_{\perp} = \diag(\hat{\sigma}_{\zeta + 1},\dots,\hat{\sigma}_{K})$, and $\hat{\sigma}_1\geq\dots\geq\hat{\sigma}_K$.
\par
\
\par
\noindent\textbf{True coefficient vector $\phi^*$: }For our theoretical analysis, we need $\hat{\phi}$  converging to some target vector. We choose the least squares approximation coefficients of $\tilde{\tau}(X,T)$ as this target vector. In particular, consider the population-level least squares approximation,
\bee\nonumber
\phi^* = \argmin_{\phi\in\RR^{K}} E\big[\{\tilde{\tau}(X,T) - \phi^\T\Psi(X,T)\}^2\big].
\ee
With simple algebra, one has 
\bee\label{def:phi*}
\phi^* = Q_n^{-1}E\big[\tilde{\tau}(X,T)\Psi(X,T)\big].
\ee For the tensor product of B-spline basis, \citet{huang2003local,belloni2015some} have shown the $\mathcal{L}^{\infty}$ norm approximation power of least-square approximation $(\phi^*)^\T\Psi(X,T)$, when $\tilde{\tau}(X,T)$ is in the $p$-smooth H{\"o}lder class. The following proposition is a direct application of \citet{huang2003local,belloni2015some}'s general results; see, e.g., the Appendix of \citet{huang2003local}, and \citet[Proposition 3.1 \& Example 3.8]{belloni2015some}.
\begin{proposition}\label{lm:psieapprox}
Suppose $\tilde{\tau}\in \Lambda(p,c,\mathbb{X}\times\mathbb{T})$ for some fixed $p,c > 0$, and Assumptions \ref{am:bspline} holds. We have $\|\tilde{\tau} - (\phi^*)^\T\Psi\|_{\mathbb{X}\times\mathbb{T}} \precsim K^{-p/(d + 1)}$.
\end{proposition}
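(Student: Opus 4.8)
The plan is to reduce the claim to two standard facts about tensor-product B-spline sieves — a sup-norm approximation bound and the uniform $\mathcal{L}^\infty$-stability of the weighted $\mathcal{L}^2$-projection — both of which are established in \citet{huang2003local} and \citet{belloni2015some}, and then to verify that Assumptions \ref{am:bspline} and \ref{am:densX} supply exactly the hypotheses (quasi-uniform knots, design density bounded away from $0$ and $\infty$) under which those results hold. The first move is to recognize from \eqref{def:phi*} and $Q_n = E\{\Psi\Psi^\T\}$ that $(\phi^*)^\T\Psi$ is the $\mathcal{L}^2_{\mathcal{P}(X,T)}$-orthogonal projection of $\tilde{\tau}$ onto $\mathrm{span}\{\Psi_1,\dots,\Psi_K\}$; write $\Pi$ for this projection operator, so the target quantity is $\|\tilde{\tau} - \Pi\tilde{\tau}\|_{\mathbb{X}\times\mathbb{T}}$.

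The second move is a pure approximation estimate. Since $\tilde{\tau}\in\Lambda(p,c,\mathbb{X}\times\mathbb{T})$ and (by Assumption \ref{am:bspline}) the knot sequence underlying $\Psi$ is quasi-uniform with mesh width $\asymp K^{-1/(d+1)}$, classical spline approximation theory (de Boor--Schumaker; see the construction in the Appendix of \citet{huang2003local}) produces a coefficient vector $\gamma\in\RR^K$ with $\|\tilde{\tau} - \gamma^\T\Psi\|_{\mathbb{X}\times\mathbb{T}} \precsim K^{-p/(d+1)}$ — the usual ``$h^p$'' rate with $h\asymp K^{-1/(d+1)}$, the exponent $1/(d+1)$ coming from the tensor-product structure over the $(d+1)$-dimensional domain $\mathbb{X}\times\mathbb{T}$.

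The third move combines the two. Because $\gamma^\T\Psi$ already lies in the sieve space, $\Pi(\gamma^\T\Psi)=\gamma^\T\Psi$, so $\tilde{\tau}-\Pi\tilde{\tau} = (I-\Pi)(\tilde{\tau}-\gamma^\T\Psi)$ and hence
\[
\|\tilde{\tau} - \Pi\tilde{\tau}\|_{\mathbb{X}\times\mathbb{T}} \;\le\; \big(1 + \|\Pi\|_{\infty\to\infty}\big)\,\|\tilde{\tau}-\gamma^\T\Psi\|_{\mathbb{X}\times\mathbb{T}}.
\]
The proposition then follows once $\|\Pi\|_{\infty\to\infty}\precsim 1$ uniformly in $K$, which is precisely the Lebesgue-constant bound for the weighted-$\mathcal{L}^2$ spline projection proved in \citet[Appendix]{huang2003local} and \citet[Proposition 3.1 and Example 3.8]{belloni2015some}: quasi-uniformity of the knots together with the density of $(X,T)$ being bounded above and below on $\mathbb{X}\times\mathbb{T}$ (Assumption \ref{am:densX}) make the Gram matrix $Q_n$ spectrally comparable, after the standard diagonal rescaling, to the well-conditioned unweighted B-spline Gram matrix.

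The only genuinely nontrivial ingredient is this uniform $\mathcal{L}^\infty$-boundedness of the $\mathcal{L}^2$-projection onto the spline space — for a single coordinate it is the former de Boor conjecture resolved by Shadrin, and for the tensor product it follows by iterating the one-dimensional bound coordinatewise — but since it is already available in the cited references for exactly the present setup, I expect no obstacle beyond bookkeeping: checking that Assumptions \ref{am:bspline}--\ref{am:densX} deliver the ``quasi-uniform knots'' and ``bounded design density'' conditions assumed by \citet{huang2003local} and \citet{belloni2015some}, and that their stated rate $K^{-p/(d+1)}$ is indexed the same way as ours (number of basis functions $K$, domain dimension $d+1$, Hölder exponent $p$).
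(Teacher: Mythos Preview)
Your proposal is correct and matches the paper's approach: the paper does not give an independent argument but simply records the proposition as ``a direct application of \citet{huang2003local,belloni2015some}'s general results; see, e.g., the Appendix of \citet{huang2003local}, and \citet[Proposition 3.1 \& Example 3.8]{belloni2015some},'' and your three-step outline (identify $(\phi^*)^\T\Psi$ as the weighted $\mathcal{L}^2$-projection, invoke the sup-norm spline approximation rate, then the uniform Lebesgue-constant bound for the projection) is exactly the content of those references. One small bookkeeping point: the bound on the joint density of $(X,T)$ that you invoke actually requires both Assumption~\ref{am:densX} (marginal of $X$) and Assumption~\ref{A:CS} (generalized propensity score bounded away from $0$ and $\infty$), not Assumption~\ref{am:densX} alone; the paper's statement is equally terse on this, but it is worth being precise when you check the hypotheses of the cited results.
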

\par
\
\par
\noindent\textbf{Regularization conditions: }We summarize the regularity conditions for our asymptotic results as follows.
\begin{assumption}\label{am:bspline}
Assume the basis $\M\Psi(x,t)\in \RR^{K}$ takes the tensor-product form,
$$
\M\Psi(x,t) = \psi(t)\otimes\psi(x^{(1)})\otimes\cdots\otimes \psi(x^{(d)}).
$$
Here $ \psi(\cdot) = \begin{bmatrix}\psi^{(1)}(\cdot) ,&\dots&,\psi^{(k)}(\cdot)\end{bmatrix}^\T\in \RR^{k} $ is a vector of the B-spline functions for a single variable over $[0,1]$; thus $K = k^{d + 1}$. The splines in $ \psi(\cdot)$ are of a fixed degree  $r  \geq \lceil p\rceil$. The  generation intervals (see $\mathsection$\ref{sec:bspline}) of $\psi(\cdot)$ are $I_1 = [0,s_1), I_2 = [s_1,s_2),\dots, I_m=[s_m,1]$ with $0< s_1< \dots< s_m<  1$, $m\geq 2$ and $k = r + m$. Finally, we assume there exists a $\kappa > 0$ not dependent on $n$, such that for any  $n\in\mathbb{Z}^+$,
\bee\label{am:mesh}
\frac{\max_{0\leq m'\leq m}(s_{m' + 1} - s_{m'})}{\min_{0\leq m'\leq m}(s_{m' + 1} - s_{m'})} \leq \kappa.
\ee
\end{assumption}
\begin{assumption}\label{am:compact}
$\mathbb{X}\times \mathbb{T}\subseteq \RR^{d + 1}$ is compact. Without loss of generality and for simplicity, we let $\mathbb{X} = [0,1]^d$ and $\mathbb{T} = [0,1]$.
\end{assumption}
\begin{assumption}\label{am:Kn}
For some $0<a_1\leq a_2<1$, one has $n^{a_1}\precsim K\precsim n^{a_2}$.
\end{assumption}
\begin{assumption}\label{am:densX}
There exists some fixed constants $0<c_f<C_f<+\infty$, such that $ c_f \leq  \inf_{x\in\mathbb{X}}f(x) \leq \sup_{x\in\mathbb{X}}f(x)\leq C_f$.
\end{assumption}
\begin{assumption}\label{am:moment}
(i) $0<\inf_{(x,t)\in \mathbb{X}\times \mathbb{T}}\text{Var}\big(Y\mid X = x, T = t\big)\leq\sup_{(x,t)\in \mathbb{X}\times \mathbb{T}}\text{Var}\big(Y\mid X = x, T = t\big) <+\infty$; (ii) $\sup_{x\in\mathbb{X}}\var(Y\mid X = x) < +\infty$; (iii) $\sup_{x\in\mathbb{X}}|E(Y\mid X = x)|<+\infty$.
\end{assumption}
\begin{assumption}\label{rate:sup}
(i) $\big\|\hat{m} - m\big\|_{\mathbb{X}} = {o}_{{P}}(1)$; (ii) $\|\hat{\Gamma}^\T\phi^* - \Gamma^\T\phi^*\|_{\mathbb{X}} = o_P(1)$.
\end{assumption}
For the ease of exposition, we use the same univariate B-spline function $\psi(\cdot)$ for all variables $X^{(1)},\dots,X^{(d)}$ and $T$ in Assumption \ref{am:bspline}. We note the theoretical results remain the same when dimension $k$ and $I_1,\dots,I_m$ vary for each variable, as long as the dimensions of all univariate B-splines are on the same asymptotic order and their mesh ratios  given in \eqref{am:mesh} are bounded by an uniform constant $\kappa > 0$. Overall the Assumption \ref{am:bspline} is mild; see, e.g., \citet{douglas1975optimal, huang2003local}.  Assumption \ref{am:compact} is a common assumption for the sieve-type estimator \citep{newey1997convergence,huang2003local,belloni2015some,shi2021statistical}. Assumption \ref{am:Kn} mildly controls the growing speed of the basis number, so that we can approximate $\tau(X,T)$ nonparametrically. Assumption \ref{am:densX} is a standard condition for sieve method \citep{huang2003local, chen2015optimal}. Assumption \ref{am:moment} contains some weak moment conditions; Similar assumptions are often employed in sieve estimation and causal inference literature   \citep{chen2015optimal, cui2020semiparametric}.  Assumption \ref{rate:sup} states that the nuisance functions $\hat{m}(x)$ and $\hat{\Gamma}^\T(x)\phi^*$ are both uniformly consistent estimators for their true counterparts.   Assumption \ref{rate:sup} (i) has previously been considered by \citet{kennedy2017non} for the doubly robust average treatment effect estimation with continuous treatments. Moreover, one sufficient condition that makes Assumption \ref{rate:sup} (ii) hold is that $\hat{\Gamma}(x)$ is estimated by the generalized propensity score $\hat{\varpi}(t\mid x)$, and $\hat{\varpi}(t\mid x)$ satisfies a similar uniform consistency as $\hat{m}(x)$,
\bee\label{gps:rate2}
\|\hat{\varpi} - {\varpi}\|_{\mathbb{X}\times\mathbb{T}} = o_P(1).
\ee
Condition \eqref{gps:rate2} has previously been considered by  \citet{kennedy2017non} as a regularization condition. Proposition \ref{po:gpsrate} formalizes the justification that Condition \eqref{gps:rate2} is a sufficient condition for Assumption  \ref{rate:sup} (ii).
\par
Classic nonparametric sieve regression \citep{newey1997convergence} often assumes a full-rank gram matrix $Q_n = \E\{\Psi(X,T)\Psi^\T(X,T)\}$ where the dimension of $\Psi$ depends on $n$. 
In stark contrast, theoretical analysis of the proposed R-learner involves a low-rank gram matrix, 
\bee\label{def:rn}
 R_n = E\big[\{\Psi(X,T) - \M\Gamma(X)\}\{\Psi(X,T)- \M\Gamma(X)\}^\T\big] = \begin{pmatrix}
U & U_\perp
\end{pmatrix}
\begin{pmatrix}
\Sigma & 
\\
&  0
\end{pmatrix}
\begin{pmatrix}
U^\T
\\
U_{\perp}^\T
\end{pmatrix},
\ee
where the right-hand side of (\ref{def:rn}) is the singular value decomposition of $R_n$, with $\text{rank}(R_n) = \zeta$ and $\Sigma = \diag(\sigma_1,\dots,\sigma_{\zeta})$ such that  and $\sigma_1 \geq \dots\geq\sigma_{\zeta} > 0$. Each entry of ${R}_n$ is the probability limit of the corresponding entry of $\hat{R}_n$  in \eqref{def:phi}. Intuitively, the low rank of $R_n$ is tied to the non-identification issue of the generalized R-loss in $\mathsection$\ref{sec:id} when setting $\rho = 0$. In this case,  $\hat{\phi}$ is  asymptotically unsolvable, or equivalently $\hat{R}_n$ 
in \eqref{def:phi} is asymptotically non-invertible; i.e., $R_n$ is low-rank.  We denote  $\beta_n = \sigma_{\zeta} > 0$, which plays an essential quantity in our theoretical results. See Lemma \ref{lm:svdR} for detailed spectral properties of $R_n$.
\begin{assumption}
As $n\rightarrow \infty$, we have $\beta_n\asymp 1$.
\end{assumption}
\begin{proposition}\label{po:gpsrate}
Suppose $\tilde{\tau}\in \Lambda(p,c,\mathbb{X}\times\mathbb{T})$ for some $p,c > 0$,  and Assumptions \ref{am:bspline}, \ref{am:compact}, \ref{am:densX} hold. When $\hat{\Gamma}(x) = E_{\hat{\varpi}}\{\Psi(X,T)\mid X = x\}$, then we have Assumption \ref{rate:sup} (ii) holds, whenever \eqref{gps:rate2} holds.
\end{proposition}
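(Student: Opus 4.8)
The plan is to exploit the linearity of the conditional‑expectation operator $E_{\hat\varpi}$ in the generalized propensity score, together with the uniform boundedness of the sieve approximation $(\phi^*)^\T\Psi$, and thereby reduce the claim entirely to the assumed uniform rate \eqref{gps:rate2}. First I would write the target difference out explicitly: since $\hat\Gamma(x)=E_{\hat\varpi}\{\Psi(X,T)\mid X=x\}=\int_{\mathbb{T}}\Psi(x,t)\hat\varpi(t\mid x)\,dt$ and $\Gamma(x)=\int_{\mathbb{T}}\Psi(x,t)\varpi(t\mid x)\,dt$, subtracting and pairing with $\phi^*$ gives
\[
\hat\Gamma(x)^\T\phi^* - \Gamma(x)^\T\phi^* = \int_{\mathbb{T}} (\phi^*)^\T\Psi(x,t)\,\{\hat\varpi(t\mid x)-\varpi(t\mid x)\}\,dt ,
\]
so it suffices to bound this scalar integral uniformly over $x\in\mathbb{X}$.

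The next step is to show that $g^*(x,t):=(\phi^*)^\T\Psi(x,t)$ is uniformly bounded for all large $n$. By Proposition \ref{lm:psieapprox}, $\|\tilde\tau - g^*\|_{\mathbb{X}\times\mathbb{T}}\precsim K^{-p/(d+1)}=o(1)$, and since $\tilde\tau\in\Lambda(p,c,\mathbb{X}\times\mathbb{T})$ is bounded by $c$ on the compact domain (Assumption \ref{am:compact}), the triangle inequality yields $\|g^*\|_{\mathbb{X}\times\mathbb{T}}\le c+o(1)\le 2c$ for $n$ large enough.

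Finally I would combine the two. For every $x\in\mathbb{X}$,
\[
\big|\hat\Gamma(x)^\T\phi^* - \Gamma(x)^\T\phi^*\big| \le \|g^*\|_{\mathbb{X}\times\mathbb{T}}\int_{\mathbb{T}} \big|\hat\varpi(t\mid x)-\varpi(t\mid x)\big|\,dt \le \|g^*\|_{\mathbb{X}\times\mathbb{T}}\,|\mathbb{T}|\,\|\hat\varpi - \varpi\|_{\mathbb{X}\times\mathbb{T}},
\]
and since $\mathbb{T}=[0,1]$ has $|\mathbb{T}|=1$, taking the supremum over $x$ and using the bound $\|g^*\|_{\mathbb{X}\times\mathbb{T}}\le 2c$ gives $\|\hat\Gamma^\T\phi^* - \Gamma^\T\phi^*\|_{\mathbb{X}}\le 2c\,\|\hat\varpi-\varpi\|_{\mathbb{X}\times\mathbb{T}}=o_P(1)$ by \eqref{gps:rate2}, which is precisely Assumption \ref{rate:sup}(ii).

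The only mildly delicate point — the ``hard part,'' such as it is — is the uniform boundedness of $g^*=(\phi^*)^\T\Psi$: because both $\phi^*$ and $\Psi$ depend on $n$ through the growing dimension $K$, one cannot afford a crude bound of the form $\|\phi^*\|\,\|\Psi\|_{\mathbb{X}\times\mathbb{T}}$, and the control must instead be inherited from the $\mathcal{L}^\infty$ sieve‑approximation guarantee of Proposition \ref{lm:psieapprox} applied to the H\"older‑class function $\tilde\tau$. Everything else is elementary; in particular, no smoothness of $\tilde\tau$ beyond membership in $\Lambda(p,c,\mathbb{X}\times\mathbb{T})$ is used.
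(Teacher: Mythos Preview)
Your proof is correct and follows essentially the same route as the paper: write $\hat\Gamma(x)^\T\phi^*-\Gamma(x)^\T\phi^*$ as the integral $\int_{\mathbb{T}}(\phi^*)^\T\Psi(x,t)\{\hat\varpi(t\mid x)-\varpi(t\mid x)\}\,dt$, use Proposition~\ref{lm:psieapprox} together with $\tilde\tau\in\Lambda(p,c,\mathbb{X}\times\mathbb{T})$ to show $\|(\phi^*)^\T\Psi\|_{\mathbb{X}\times\mathbb{T}}\precsim 1$, and then bound the integral by $\|\hat\varpi-\varpi\|_{\mathbb{X}\times\mathbb{T}}$ times a constant. You also correctly flag the one nontrivial step, namely that the uniform bound on $(\phi^*)^\T\Psi$ must come from the $\mathcal{L}^\infty$ approximation guarantee rather than from a crude $\|\phi^*\|\cdot\|\Psi\|_{\mathbb{X}\times\mathbb{T}}$ estimate.
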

\begin{proof}[of Proposition \ref{po:gpsrate}]By Proposition \ref{lm:psieapprox}, we have $\|\tilde{\tau} - \{\phi^*\}^\T\Psi\|_{\mathbb{X}\times\mathbb{T}} \precsim 1$. Also, $\|\tilde{\tau}\|_{\mathbb{X}\times\mathbb{T}}\leq C$ for some $C > 0$ due to $\tilde{\tau}\in \Lambda(p,c,\mathbb{X}\times\mathbb{T})$. We thus have 
\bee\label{pos2:add}
\|\{\phi^*\}^\T\Psi\|_{\mathbb{X}\times\mathbb{T}}&\leq \|\tilde{\tau} - \{\phi^*\}^\T\Psi\|_{\mathbb{X}\times\mathbb{T}} + \|\tilde{\tau}\|_{\mathbb{X}\times\mathbb{T}}
\\
&\precsim 1.
\ee
 On the other hand, we note that
\bee\nonumber
\hat{\Gamma}^\T(x)\phi^* - \Gamma^\T(x)\phi^* &= E_{\hat{\varpi}}\{\Psi^\T(X,T)\phi^*\mid X=x\} - E_{{\varpi}}\{\Psi^\T(X,T)\phi^*\mid X = x\}
\\
&=\int_{\mathbb{T}} \Psi^\T(x,t)\phi^*\Big\{\hat{\varpi}(t\mid x)- {\varpi}(t\mid x)\Big\}dt.
\ee
Thus we have
\bee\nonumber
\|\hat{\Gamma}^\T\phi^* - \Gamma^\T\phi^* \|_{\mathbb{X}} &= \sup_{x\in\mathbb{X}}\Big|\int_{\mathbb{T}} \Psi^\T(x,t)\phi^*\Big\{\hat{\varpi}(t\mid x)- {\varpi}(t\mid x)\Big\}dt\Big|
\\
&\leq \|\hat{\varpi}- {\varpi}\|_{\mathbb{X}\times\mathbb{T}}\cdot\sup_{x\in\mathbb{X}}\int_{\mathbb{T}} \Big|\Psi^\T(x,t)\phi^*\Big|dt
\\
&\leq  \|\hat{\varpi}- {\varpi}\|_{\mathbb{X}\times\mathbb{T}} \cdot\int_{\mathbb{T}} \|\{\phi^*\}^\T\Psi\|_{\mathbb{X}\times\mathbb{T}}dt
\\
&\precsim o_{P}(1),
\ee
by \eqref{pos2:add}, whenever \eqref{gps:rate2} holds and $\mathbb{T}$ is compact. This completes the proof.
\end{proof}
\subsection{Brief review of the B-spline functions}\label{sec:bspline}
We give some precise descriptions of how to construct  $\Psi(x,t)$ and the corresponding theoretical properties. For simplicity, we focus on the case that the domains of $t$, $x^{(1)},\dots,x^{(t)}$  are all $[0,1]$.  We first construct $\psi(w) = \big[\psi^{(1)}(w),\dots,\psi^{(k)}(w)\big]^\T$ for $w\in[0,1]$. Define the knot set 
\bee
0 = s_{-(r - 1)}= \dots =s_{-1} = s_0 \leq s_1\leq\dots \leq s_m \leq s_{m + 1}=\dots = s_{m + r} = 1
\ee
 Define
\bee\nonumber
N^{(1)}_{j}(w) ={I}(w \in I_j),
\ee
for $I_1 = [s_0 = 0,s_1),\dots, I_m = [s_m,s_{m + 1} = 1)$ and $j = 0,\dots,m$. For $r \geq 2$ and $j = -(r - 1),\dots,m$, define 
\bee\label{Njr:def}
N_j^{(r)}(w) = \frac{w -s_j}{s_{j + r - 1} - s_j}N^{(r - 1)}_j(w) + \frac{s_{j + r} - w}{s_{j + r} - s_{j + 1}}N^{(r - 1)}_{j + 1}(w),
\ee
where we set ${1}/{0}= 0$ as the convention. We thus have the relationship that $k = r + m$. We define $\psi(w)$ by normalizing $N_j^{(r)}(w)$ as,
\bee\label{defpsiw}
\psi(w) &= \begin{bmatrix}\psi^{(1)}(w)&,\dots,&\psi^{(k)}(w)\end{bmatrix}^\T
\\
&= \sqrt{k}\cdot\begin{bmatrix}N_{-(r - 1)}^{(r)}(w)&,\dots,&N^{(r)}_m(w)\end{bmatrix}^\T.
\ee
This constitutes the univariate B-spline function. Finally, we form the multivariate B-spline functions $\M\Psi(x,t)$ by the tensor product of $\psi(t),\psi(x^{(1)}),\dots,\psi(x^{(d)})$ as
\bee\label{Psixt}
\M\Psi(x,t) = \psi(t)\otimes\psi(x^{(1)})\otimes\cdots\otimes \psi(x^{(d)});
\ee
here $x^{(1)},\dots,x^{(d)}$ correspond to the coordinate-wise random variables $X^{(1)},\dots,X^{(d)}$ of $X$. 
\par
The following lemmas state some standard results for the B-spline functions, which will be useful in our theoretical analysis. In particular, letting $1_k = (1,\dots,1)^\T\in\RR^k$, Lemma \ref{po:bspline} shows that $ 1_k^\T\psi(w)$ will be a fixed constant for any $w \in[0,1]$.  Lemma \ref{po:bs:other} provides  explicit properties of $N^{(r)}_j(w)$. Lemma \ref{am:psi} states  asymptotic properties of $\Psi(x,t)$.
\begin{lemma}\label{po:bspline}
Given any fixed $r \in {Z}^+$, for the $\{N^{(r)}_j(w)\}_{j = 1}^m$ defined in (\ref{Njr:def}), we have $$\sum_{j = -(r- 1)}^m  N_{j}^{(r)}(w)\equiv 1$$ for any $w\in[0,1]$. Furthermore, for any $w \in [0,1]$, 
$
1_k^\T\psi(w) \equiv \sqrt{k}.
$
\end{lemma}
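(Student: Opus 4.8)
The plan is to prove the partition-of-unity identity $\sum_{j=-(r-1)}^{m} N^{(r)}_j(w)\equiv 1$ by induction on the spline degree $r$ using the Cox--de Boor recursion \eqref{Njr:def}; the second claim $1_k^\T\psi(w)\equiv\sqrt k$ then follows at once from the normalization \eqref{defpsiw} together with the count $k=r+m$ of indices $j\in\{-(r-1),\dots,m\}$.

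For the base case $r=1$, the functions $N^{(1)}_j(w)=I(w\in I_j)$, $j=1,\dots,m$, are indicators of the intervals $I_1,\dots,I_m$, which partition $[0,1]$; hence exactly one of them equals $1$ at each $w$, so $\sum_{j=1}^m N^{(1)}_j(w)\equiv 1$. At the same time I would record, also by induction on $r$, the auxiliary fact that $N^{(r)}_j\equiv 0$ whenever $j<-(r-1)$, $j>m$, or the knots satisfy $s_j=s_{j+r}$ (under the stated convention $1/0=0$); this is what keeps the recursion well-posed and lets the degenerate boundary terms drop out in the next step.

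For the inductive step, assume $\sum_{j=-(r-2)}^{m} N^{(r-1)}_j(w)\equiv 1$. I would sum \eqref{Njr:def} over $j=-(r-1),\dots,m$ and reindex the second summand by setting $j'=j+1$, which rewrites it as $\sum_{j'}\frac{s_{j'+r-1}-w}{s_{j'+r-1}-s_{j'}}N^{(r-1)}_{j'}(w)$ with $j'$ ranging over $-(r-2),\dots,m+1$. After discarding the terms that vanish by the auxiliary fact (the index $j=-(r-1)$ in the first sum and $j'=m+1$ in the reindexed second sum), both sums run over $-(r-2),\dots,m$, and the coefficient multiplying $N^{(r-1)}_j(w)$ becomes
\[
\frac{w-s_j}{s_{j+r-1}-s_j}+\frac{s_{j+r-1}-w}{s_{j+r-1}-s_j}=1,
\]
so that $\sum_{j=-(r-1)}^{m}N^{(r)}_j(w)=\sum_{j=-(r-2)}^{m}N^{(r-1)}_j(w)\equiv 1$ by the induction hypothesis. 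Finally, by \eqref{defpsiw} we get $1_k^\T\psi(w)=\sqrt k\sum_{j=-(r-1)}^{m}N^{(r)}_j(w)=\sqrt k$, which is the second assertion.

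The argument is entirely standard; nothing here is a genuine obstacle. The only points requiring care are the index bookkeeping in the telescoping step and the treatment of coincident knots through the $1/0=0$ convention, which is precisely why I would isolate the vanishing properties of $N^{(r)}_j$ as a short preliminary induction before running the main one.
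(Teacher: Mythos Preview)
Your proposal is correct and follows essentially the same approach as the paper's proof: induction on $r$ via the Cox--de Boor recursion \eqref{Njr:def}, with the base case $r=1$ immediate from the partition of $[0,1]$ and the inductive step handled by reindexing the second summand and observing that the two coefficients of each $N^{(r-1)}_{j}$ add to $1$. The only cosmetic difference is that the paper disposes of the two boundary terms directly through the $1/0=0$ convention (using $s_0=s_{-(r-1)}=0$ and $s_{m+r}=s_{m+1}=1$), whereas you appeal to an auxiliary vanishing property of out-of-range $N^{(r-1)}_j$; the two treatments are equivalent.
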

\begin{proof}[of Lemma \ref{po:bspline}]
When $r = 1$, we have $$\sum_{ j=1}^m N_j^{(1)}(w) = \sum_{j = 1}^m {I}(w\in I_j) = {I}(w \in[0,1]) \equiv 1.$$
\par
When $r \geq 2$, by \eqref{Njr:def}, one has
\bee\label{deg:p5}
\sum_{j= -(r - 1)}^m N_j^{(r)}(w) = &\frac{w -s_{-(r - 1)}}{s_{0} - s_{-(r - 1)}}N^{(r - 1)}_{-(r - 1)}(w) + \left\{\sum_{j= -(r - 1)}^{m-1} \frac{s_{j + r} - w}{s_{j + r} - s_{j + 1}}N^{(r - 1)}_{j + 1}(w) + \frac{w -s_{j + 1}}{s_{j + r} - s_{j + 1}}N^{(r - 1)}_{j + 1}(w) \right\}
\\
& + \frac{s_{m + r} - w}{s_{m + r} - s_{m + 1}}N^{(r - 1)}_{m + 1}(w)
\\
=&  \sum_{j= -(r - 1)}^{m-1} \frac{s_{j + r} - w}{s_{j + r} - s_{j + 1}}N^{(r - 1)}_{j + 1}(w) + \frac{w -s_{j + 1}}{s_{j + r} - s_{j + 1}}N^{(r - 1)}_{j + 1}(w) 
\\
= &\sum_{j= -\{(r - 1) - 1\}}^{m} N^{(r - 1)}_{j}(w), 
\ee
where the first equality follows by $s_0 - s_{-(r- 1)} = s_{m+1} - s_{m + 1} = 0$ and  ${1}/{0} = 0$. Thus we have $\sum_{j= -1}^m N_j^{(2)}(w) = \sum_{j = 0}^m N_j^{(1)}(w)= 1$ for any $w\in[0,1]$. Finally, by \eqref{deg:p5} and mathematical induction, one has
$
\sum_{j= -(r - 1)}^m N_j^{(r)}(w)  \equiv 1
$
for all $w\in[0,1]$ and $r \geq 2$. By \eqref{defpsiw}, we conclude that for any $w\in[0,1]$,
\bee\nonumber
(1,\dots,1)\cdot\psi(w) = \sqrt{k}\sum_{j = -(r - 1)}^{m} N^{(r)}_{j}(w) = \sqrt{k},
\ee
which completes the proof.
\end{proof}
\begin{lemma}\label{po:bs:other}
Given any fixed $r \in {Z}^+$, for the $\{N^{(r)}_j(w)\}_{j = 1}^m$ defined in (\ref{Njr:def}), we have
\begin{enumerate}
\item[(i)*] For any given $j = -(r-1),\dots,m$, we have $N_{j}^{(r)}(w) =0$ for all $w \notin[s_j,s_{j + r}]$.
\item[(ii)*] For any given $j = -(r-1),\dots,m$, we have $N_{j}^{(r)}(w)  >0$ for all $w \in (s_j,s_{j + r})$.
\item[(iii)*] For any given $\ell = 0,\dots,m$, when $w \in [s_\ell,s_{\ell + 1})$, we have
$$
\sum_{j = -(r- 1)}^m \tilde{\beta}_j N_j^{(r)}(w) = \sum_{j = -(r - 1) + \ell}^{\ell}\tilde{\beta}_jN_j^{(r)}(w),
$$
for any $(\tilde{\beta}_{-(r - 1)},\dots,\tilde{\beta}_m)^\T\in \RR^{m + r}$.
\item[(iv)*] Under Assumption \ref{am:bspline}, we have $\sup_{j = -(r - 1),\dots,m\atop w\in[0,1]}|N^{(r)}_j (w)| = \mathcal{O}(1)$ when $m \rightarrow +\infty$.
\end{enumerate}
By \eqref{defpsiw}, we then have that $ \psi(w)$ satisfies the corresponding properties:
\begin{enumerate}
\item[(i)] For any given $\tilde{k} = 1,\dots,k$, we have $\psi^{(\tilde{k})}(w) =0$ for all $w \notin[s_{\tilde{k} - r},s_{\tilde{k}}]$.
\item[(ii)] For any given $\tilde{k} = 1,\dots,k$, we have $\psi^{(\tilde{k})}(w)  >0$ for all $w \in (s_{\tilde{k} - r},s_{\tilde{k}})$.
\item[(iii)] For any given $\ell = 0,\dots,m$, when $w \in [s_\ell,s_{\ell + 1})$, we have
$$
\sum_{\tilde{k} = 1}^{k} {\beta}_{\tilde{k}} \psi^{(\tilde{k})}(w) = \sum_{\tilde{k} = \ell + 1}^{\ell + r}{\beta}_{\tilde{k}} \psi^{(\tilde{k})}(w),
$$
for any $({\beta}_{1},\dots,\beta_k)^\T\in \RR^k$.
\item[(iv)] Under Assumption \ref{am:bspline}, we have $\sup_{\tilde{k} = 1,\dots,k\atop w\in[0,1]}\big|\psi^{(\tilde{k})}(w)\big| = \mathcal{O}(\sqrt{k})$ when $m \rightarrow +\infty$.
\end{enumerate} 
\end{lemma}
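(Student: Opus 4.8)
The plan is to prove the four starred assertions about the un‑normalized B-spline functions $N^{(r)}_j$ directly by induction on the degree $r$, using the Cox--de Boor recursion \eqref{Njr:def}, and then to transfer everything to $\psi$ through the normalization \eqref{defpsiw}. Along the way I would also record the (standard) fact that $N^{(r)}_j(w)\ge 0$ for every $w$, which falls out of the same induction: for $r=1$ this is clear since $N^{(1)}_j(w)=I(w\in I_j)$, and for $r\ge 2$ the two multipliers $\tfrac{w-s_j}{s_{j+r-1}-s_j}$ and $\tfrac{s_{j+r}-w}{s_{j+r}-s_{j+1}}$ in \eqref{Njr:def} are nonnegative on $[s_j,s_{j+r-1}]$ and $[s_{j+1},s_{j+r}]$ respectively (with the conventions $1/0=0$ and the repeated boundary knots $s_{-(r-1)}=\dots=s_{-1}=0$, $s_{m+1}=\dots=s_{m+r}=1$), so the recursion maps nonnegative functions to a nonnegative function.

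Given nonnegativity, statements (i)*, (iii)* and (iv)* are short. For (i)* I would induct on $r$: the base case is the indicator, and for the step the inductive hypothesis localizes $N^{(r-1)}_j$ to $[s_j,s_{j+r-1}]$ and $N^{(r-1)}_{j+1}$ to $[s_{j+1},s_{j+r}]$, so the right-hand side of \eqref{Njr:def} vanishes off $[s_j,s_{j+r}]$. Statement (iii)* is then a counting consequence of (i)*: if $w\in[s_\ell,s_{\ell+1})$ then $N^{(r)}_j(w)\ne0$ forces $s_j\le w<s_{j+r}$, hence $\ell-r+1\le j\le\ell$, and the remaining terms of the sum drop out. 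For (iv)* I would simply combine nonnegativity with Lemma \ref{po:bspline}, which gives $\sum_{j}N^{(r)}_j(w)\equiv1$; hence $0\le N^{(r)}_j(w)\le1$ uniformly in $j$, $w$ and $m$, which is (iv)* with constant $1$, so the mesh-ratio condition \eqref{am:mesh} is not even needed here.

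The one genuinely delicate point is (ii)*, strict positivity of $N^{(r)}_j$ on the open interval $(s_j,s_{j+r})$, and this is where I expect the main work. Again by induction on $r$, I would split $(s_j,s_{j+r})$ according to which summand of \eqref{Njr:def} is active: on the interior of $[s_j,s_{j+r-1}]$ the first multiplier is strictly positive and, by the inductive hypothesis, $N^{(r-1)}_j$ is strictly positive on $(s_j,s_{j+r-1})$; symmetrically, on the interior of $[s_{j+1},s_{j+r}]$ the second term is strictly positive via $N^{(r-1)}_{j+1}$. Since these two closed intervals cover $[s_j,s_{j+r}]$, every point of $(s_j,s_{j+r})$ lies in the relative interior of at least one of them, so at least one of the two nonnegative summands of \eqref{Njr:def} is strictly positive there. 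The only subtlety is the treatment of coincident knots, where one of the two intervals could degenerate; but under Assumption \ref{am:bspline}, and in particular \eqref{am:mesh}, the interior knots $s_1,\dots,s_m$ are strictly increasing, so the relevant open subintervals are nonempty and the case analysis goes through.

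Finally I would transfer to $\psi$. By \eqref{defpsiw}, the $\tilde k$-th coordinate of $\psi(w)$ is $\psi^{(\tilde k)}(w)=\sqrt{k}\,N^{(r)}_{\tilde k-r}(w)$ for $\tilde k=1,\dots,k$. Substituting $j=\tilde k-r$ into (i)*--(iii)* gives (i)--(iii) verbatim (noting $[s_j,s_{j+r}]=[s_{\tilde k-r},s_{\tilde k}]$, and that the active index range $\{\ell-r+1,\dots,\ell\}$ for $j$ becomes $\{\ell+1,\dots,\ell+r\}$ for $\tilde k$), and (iv)* yields $|\psi^{(\tilde k)}(w)|=\sqrt{k}\,|N^{(r)}_{\tilde k-r}(w)|\le\sqrt{k}$, i.e.\ $\sup_{\tilde k,w}|\psi^{(\tilde k)}(w)|=\mathcal{O}(\sqrt{k})$, which is (iv).
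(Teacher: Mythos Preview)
Your approach is genuinely different from the paper's. The paper does not actually prove (i)*--(iv)*: it simply cites Kunoth et al.\ for (i)*--(iii)* and Newey (1997) / Belloni et al.\ (2015) for (iv)*, and then notes that (i)--(iv) follow from (i)*--(iv)* via the reindexing $j=\tilde{k}-r$ in \eqref{defpsiw}. You instead give a self-contained inductive proof from the Cox--de Boor recursion \eqref{Njr:def}. Your (iv)* argument is in fact sharper and more elementary than the cited references suggest: combining nonnegativity with Lemma~\ref{po:bspline} gives $0\le N^{(r)}_j(w)\le 1$ uniformly, with no appeal to the mesh-ratio condition \eqref{am:mesh} at all. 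This is a nice observation.

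There is one small gap in your (ii)* argument. You claim that every point of $(s_j,s_{j+r})$ lies in the relative interior of at least one of the closed intervals $[s_j,s_{j+r-1}]$ and $[s_{j+1},s_{j+r}]$. For $r\ge 3$ with distinct interior knots this is fine, since $s_{j+1}<s_{j+r-1}$ and the two open interiors overlap. But for $r=2$ the two intervals are $[s_j,s_{j+1}]$ and $[s_{j+1},s_{j+2}]$, and the point $w=s_{j+1}$ is an endpoint of both, lying in neither relative interior. The conclusion still holds there---one checks directly from \eqref{Njr:def} that $N^{(2)}_j(s_{j+1})=N^{(1)}_{j+1}(s_{j+1})>0$ since $s_{j+1}$ is the left endpoint of the support of $N^{(1)}_{j+1}$---but your covering claim as stated does not handle it. You should either treat $r=2$ as a separate base case or note explicitly that at the shared endpoint the second summand is evaluated at the left boundary of its support, where the indicator is $1$.
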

\begin{proof}[of Lemma \ref{po:bs:other}] 
The results in (i)--(iv) follow from (i)*--(iv)*  by recalling \eqref{defpsiw} and the relationship $j = k -r$. So it suffices to show (i)*--(iv)*. The results in (i)*, (ii)* and (iii)* are adapted from (1.6), (1.7) and (1.36) in \cite{kunoth2018splines}, respectively, with their $B_{j + r,r-1, \xi}(w)$ being our notation $N_j^{(r)}(w)$ and $ \xi = (s_{-(r - 1)},\dots, s_{m + r})$. For the B-spline functions with regularization conditions assumed in Assumption \ref{am:bspline}, the (iv) and (iv)* are standard results; see, e.g., \cite{newey1997convergence, belloni2015some}.
\end{proof}
\begin{lemma}\label{am:psi}
Suppose $\M\Psi(x,t)$ satisfies Assumption \ref{am:bspline}. When $n\rightarrow +\infty$, we have
\begin{enumerate}
\item[(i)](Uniform boundedness): 
$
\|\Psi\|_{\mathbb{X}\times\mathbb{T}}\precsim\sqrt{K}.
$
\item[(ii)](Bounded spectrums): Let $J_{X,T} = \int_{\mathbb{X}\times \mathbb{T}}\Psi(x,t)\Psi^\T(x,t) dxdt$. One has 
$
\lambda_{\min}(J_{X,T})$  and $\lambda_{\max}(J_{X,T}) 
$ are bounded away from both $0$ and $+\infty$. Similar upper and lower bounds also hold for $J_{T} = \int_{\mathbb{T}}\psi(t)\psi^\T(t) dt$ and $\int_{\mathbb{X}}\Psi(x)\Psi^\T(x)dx$.
\item[(iii)](Belonging to $\lp(X,T)$ space): $\phi^\T\Psi$ is in $\lp(X,T)$ for any given $n$ and $\phi\in\RR^{K}$.
\end{enumerate} 
\end{lemma}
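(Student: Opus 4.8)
The three parts are really bookkeeping built on standard univariate B-spline facts, so the plan is to reduce everything to the one–dimensional basis $\psi(\cdot)$ and then push the estimates through the tensor product $\M\Psi(x,t)=\psi(t)\otimes\psi(x^{(1)})\otimes\cdots\otimes\psi(x^{(d)})$.

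For part (i), I would use that the Euclidean norm is multiplicative under Kronecker products, so $\|\M\Psi(x,t)\|^2=\|\psi(t)\|^2\prod_{j=1}^d\|\psi(x^{(j)})\|^2$, and then bound each univariate factor. By \eqref{defpsiw}, $\|\psi(w)\|^2=k\sum_{j=-(r-1)}^m\{N_j^{(r)}(w)\}^2$; since the $N_j^{(r)}(w)$ are nonnegative and sum to one over $j$ (Lemma \ref{po:bspline}), each lies in $[0,1]$, hence $\sum_j\{N_j^{(r)}(w)\}^2\le\sum_j N_j^{(r)}(w)=1$, giving $\|\psi(w)\|^2\le k$ for every $w\in[0,1]$. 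Multiplying the $d+1$ factors yields $\|\M\Psi(x,t)\|^2\le k^{d+1}=K$ uniformly in $(x,t)$, i.e. $\|\M\Psi\|_{\mathbb{X}\times\mathbb{T}}\le\sqrt{K}$. (One could alternatively combine the locality property (iii) of Lemma \ref{po:bs:other} with the pointwise bound (iv), since at most $r$ coordinates of $\psi(w)$ are nonzero and each is $\mathcal{O}(\sqrt k)$.)

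For part (ii), I would factorize, using $\mathbb{X}\times\mathbb{T}=[0,1]^{d+1}$ (Assumption \ref{am:compact}) and Fubini, as $J_{X,T}=J_T\otimes J_{X^{(1)}}\otimes\cdots\otimes J_{X^{(d)}}$ with each univariate Gram matrix $J:=\int_0^1\psi(w)\psi^\T(w)\,dw$. Because the eigenvalues of a Kronecker product are products of the eigenvalues of the factors, it suffices to bound the spectrum of $J$ uniformly in $k$. Writing $\psi=\sqrt{k}\,N^{(r)}$ gives $J=k\int_0^1 N^{(r)}(w)N^{(r)}(w)^\T dw$, and the classical $L^2$-stability (Riesz basis) property of B-splines on quasi-uniform meshes—guaranteed here by the mesh-ratio bound \eqref{am:mesh}, with constants depending only on $r$ and $\kappa$—says the Gram matrix of the unnormalized B-splines has spectrum of exact order $k^{-1}$; hence $\lambda_{\min}(J)$ and $\lambda_{\max}(J)$ are bounded away from $0$ and $+\infty$. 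This is precisely the fact invoked in \citet{huang2003local,belloni2015some}, which I would cite. Since $d$ is fixed, $\lambda_{\min}(J_{X,T})\ge\lambda_{\min}(J)^{d+1}>0$ and $\lambda_{\max}(J_{X,T})\le\lambda_{\max}(J)^{d+1}<+\infty$, and the single-factor case gives the same conclusion for $J_T$.

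For part (iii), the quickest route is: by part (i), $|\phi^\T\M\Psi(x,t)|\le\|\phi\|\,\|\M\Psi(x,t)\|\le\|\phi\|\sqrt{K}$, so $\phi^\T\M\Psi$ is bounded on $\mathbb{X}\times\mathbb{T}$ for each fixed $n$ and $\phi\in\RR^K$; since $\mathcal{P}$ is a probability measure it follows that $\|\phi^\T\M\Psi\|_{\mathcal{L}^2_{\mathcal{P}}}\le\|\phi\|\sqrt{K}<+\infty$, i.e. $\phi^\T\M\Psi\in\lp(X,T)$. (Equivalently, $\|\phi^\T\M\Psi\|_{\mathcal{L}^2_{\mathcal{P}}}^2\le C_f\,\phi^\T J_{X,T}\phi\le C_f\,\lambda_{\max}(J_{X,T})\|\phi\|^2$ by Assumption \ref{am:densX} and part (ii).) The only genuinely nontrivial ingredient is the uniform spectral bound for the univariate Gram matrix in part (ii)—the place where the quasi-uniform mesh condition \eqref{am:mesh} is indispensable—but since this is a standard, citable result, the overall proof is short.
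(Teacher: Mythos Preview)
Your proposal is correct and, in fact, more detailed than the paper's own treatment: the paper simply states that these are standard B-spline results under Assumption \ref{am:bspline} and cites \citet{belloni2015some,chen2015optimal} without further argument. Your reduction to the univariate case via the multiplicativity of the Euclidean norm and the Kronecker factorization of the Gram matrix is exactly the content behind those citations, and your use of Lemma \ref{po:bspline} (partition of unity) for part (i) and the $L^2$-stability of B-splines on quasi-uniform meshes for part (ii) are the standard routes.
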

\begin{proof}[of Lemma \ref{am:psi}] These are standard results for B-splines under Assumption \ref{am:bspline}; see, e.g., \citet{belloni2015some,chen2015optimal}.
\end{proof}
\par
Our next proposition demonstrates two facts about the B-spline approximation function $\phi^\T\Psi(x,t)$ satisfying the zero condition:
\bee\label{bspline:zeroconstrain}
\phi^\T\Psi(x,t = 0) = 0,\text{ for any }x\in\mathbb{X}.
\ee
\begin{itemize}\item[(i)] B-spline approximation $\phi^\T\Psi(x,t)$ satisfies  \eqref{bspline:zeroconstrain}
if and only if all coefficients in $\phi$ associated with the ``intercept'' function, namely $\psi^{(1)}(t)$ in $\psi(t) = [\psi^{(1)}(t),\dots,\psi^{(k)}(t)]^\T$, are zeros.
\item[(ii)] B-spline approximation $\phi^\T\Psi(x,t)$ satisfying  \eqref{bspline:zeroconstrain} can still approximate any function $\check{\tau}(x,t)$ in \eqref{checktau} with a smooth $s(x)$ arbitrarily well under the $\mathcal{L}_{\mathbb{X}\times \mathbb{T}}^2$ norm, as $K$ grows. 
\end{itemize} 
The $\psi^{(1)}(t)$ is referred as the intercept function, because  it is the only one among $\psi^{(1)}(t),\dots,\psi^{(k)}(t)$ having nonzero value at $t = 0$, which can be observed by the construction in \eqref{Njr:def}. Recalling \eqref{Psixt},  we can write 
\bee\label{psi:newdecom}
\M\Psi(x,t) =( \psi^{(1)}(t)\cdot\Psi^\T(x), \dots,\psi^{(k)}(t)\cdot\Psi^\T(x))^\T,
\ee where $\Psi(x) = \psi(x^{(1)})\otimes\cdots\otimes \psi(x^{(d)})$, and thus the first $K/k$ coefficients of $\phi$ are associated with the intercept function $\psi^{(1)}(t)$. 
\par
We formalize the Properties~(i) and (ii) above into the Proposition~\ref{prop:zerospline}. Property~(i) figures out the concrete form of the B-spline approximation functions satisfying the zero condition~\eqref{bspline:zeroconstrain}, with the tensor-product B-spline basis introduced at the beginning of this section. Property~(ii) implies the ill-posedness of using B-spline approximation $\phi^\T\Psi(x,t)$ under zero condition to approximate $\tau$, when minimizing the generalized R-loss $L_c(h)$. In particular, Proposition~\ref{po:nonunique} shows that $L_c(h)$ has infinite many minima in the form of \eqref{checktau} from the population level, and our Property~(ii) further shows that, with sufficiently large $K$, $\phi^\T\Psi(x,t)$ with zero condition can approximate any such minima with some smooth $s$, arbitrarily well under the $\mathcal{L}_{\mathcal{P}}^2$ norm as $K\rightarrow \infty$. Thus with sufficient large $n$ and $K$ for nonparametric estimation, minimizing the generalized R-loss through $\phi^\T\Psi(x,t)$ under zero condition, will fail to help us identify and   approximate the minima of the true CATE $\tau$ in $L_c(h)$, as there are infinitely many minima of $L_c(h)$ can be well approximated by  $\phi^\T\Psi(x,t)$ under zero condition.

\begin{proposition}\label{prop:zerospline} 
Suppose Assumption~\ref{am:bspline} holds. We have: 
\begin{itemize}
\item[(i)] Spline function $\phi^\T\Psi(x,t)$ satisfies  \eqref{bspline:zeroconstrain} if and only if $\phi_1 = \dots = \phi_{K/k} = 0$ for $\phi = (\phi_1,\dots,\phi_K)$; 
\item[(ii)] Suppose ${\tau}\in \Lambda(p,c,\mathbb{X}\times\mathbb{T})$ and $s\in\Lambda(p',c',\mathbb{X})$ for some positive and fixed constants $p,c,p',c'$, and Assumptions~\ref{A:CS} and \ref{am:densX} hold. Then we   have
\bee\nonumber
\inf_{\text{All }\phi^\T\Psi\text{ satisfy \eqref{bspline:zeroconstrain}}}\|\phi^\T\Psi - \check{\tau}(\cdot\mid s)\|_{\mathcal{L}_{\mathcal{P}}^2} \rightarrow 0,\text{ as $K\rightarrow \infty$}.
\ee
\end{itemize}
\end{proposition}\begin{proof}[of Proposition~\ref{prop:zerospline}]We first show (i). By the construction of \eqref{Njr:def}, one can verify that when $s_0 = 0<s_1<\cdots<s_m < s_{m + 1} = 1$ under Assumption~\ref{am:bspline}, we have $\psi^{(1)}(0)  >  0$ and 
\bee\label{psi2k0}
\psi^{(2)}(0) = \cdots = \psi^{(k)}(0) = 0.
\ee Thus when $\phi_1 = \cdots = \phi_{K/k} = 0$, recalling \eqref{psi:newdecom} we have
\bee\nonumber
\phi^\T\Psi(x,t = 0) &= (0,\dots,0)\cdot \psi^{(1)}(0)\cdot\Psi(x) + (\phi_{K/k + 1},\dots,\phi_K)\cdot\underbrace{( \psi^{(2)}(0)\cdot\Psi^\T(x), \dots,\psi^{(k)}(0)\cdot\Psi^\T(x))^\T}_{ = (0,\dots,0)^\T}
\\
&= 0 ,
\ee 
for any $x\in\mathbb{X}$. 
\par
On the other hand, we want to prove  that \eqref{bspline:zeroconstrain} holds only if $\phi_1 = \dots=\phi_{K/k} = 0$. We prove this through the contradiction. Supposing $\phi_1,\cdots, \phi_{K/k}$ are not all zero, i.e., $(\phi_1,\dots,\phi_{K/k})^\T = \tilde{\phi}  \in\RR^{K/k}$ for some $\|\tilde\phi\| > 0$, we attempts to show that $\phi^\T\Psi(x,0)\neq 0$ for some $x\in\mathbb{X}$. Now by \eqref{psi2k0}, we have
\bee\nonumber
\phi^\T\Psi(x,t = 0) = \psi^{(1)}(0)\cdot \tilde{\phi}^\T\Psi(x), 
\ee 
and $\psi^{(1)}(0) > 0$. It is left to show that for any given $\tilde{\phi}$ with $\|\tilde{\phi}\| > 0$, there always exists some $x\in[0,1]^d$ such that
$
\tilde{\phi}^\T\Psi(x)\neq 0
$. By Lemma~\ref{am:psi}(ii), we have 
$$
\int_{x\in\mathbb{X}}\{\tilde{\phi}^\T\Psi(x)\}^2dx = \int_{x\in\mathbb{X}}\tilde{\phi}^\T\Psi(x)\Psi^\T(x)\tilde{\phi}dx > 0,
$$ for any  non-zero $\tilde{\phi}$, which directly implies that $\tilde{\phi}^\T\Psi(x)\neq 0$ for some $x\in[0,1]^d$.
\par
Next, we prove (ii). Consider 
\bee\nonumber
\check{\tau}^{\natural}(x,t\mid s) = \tau(x,t) + s(x),
\ee
for all $(x,t)\in[0,1]^{d + 1}$. Recall the definition of the H{\"o}lder class in \eqref{def:holder}. We can check that  $\check{\tau}^{\natural}(x,t\mid s) \in \Lambda(\min\{p,p'\},c+ c')$. Proposition~\ref{lm:psieapprox} guarantees that 
$$
\phi^\natural = Q_n^{-1}E\left\{\check{\tau}^\natural(X,T\mid s)\Psi(X,T)\right\}\in\RR^K
$$ satisfying
\bee\label{supnorm:cov:zerocondition}
\|(\phi^\natural)^\T\Psi - \check{\tau}^\natural\|_{\mathbb{X}\times \mathbb{T}} \precsim K^{-\min\{p',p\}/(d + 1)}\rightarrow 0,
\ee
as $K\rightarrow \infty$. By Lemma~\ref{am:svb}, we further have $\|\phi^\natural\|\precsim 1$. Next denote $\phi^\natural = (\phi_1^{\natural},\dots,\phi_K^{\natural})$, 
$$
\phi^{\natural}_0 = (\underbrace{0,\dots,0}_{K/k\text{ coordinates}},\phi^\natural_{K/k + 1},\dots,\phi^\natural_{K}),\text{ and }I_{0} = \mathrm{diag}((\underbrace{0,\dots,0}_{K/k\text{ coordinates}},\underbrace{1,\dots,1}_{K - K/k\text{ coordinates}}).
$$
By (i), it is clear that $(\phi^
\natural_0)^
\T\Psi $ satisfies \eqref{bspline:zeroconstrain}. By Lemma~\ref{po:bs:other}, we know $\psi^{(1)}(t) = 0$ when $t\in[s_1,1]$, and thus when $(t,x)\in[s_1,1]\times [0,1]$, we have
$$
(\phi^\natural)^\T\Psi(x,t) =(\phi^\natural_0)^\T\Psi(x,t).
$$ 
On the other hand, by \eqref{psi2k0} we have
\bee\nonumber
(\phi^\natural)^\T\Psi(x,0)& = (\phi^\natural)^\T\Psi(x,0)  - (\phi_0^\natural)^\T\Psi(x,0) 
\\
&= \sum_{j = 1}^{K/k} \phi_j^\natural\cdot\psi^{(1)}(0) \cdot\Psi_j(x)
\\
& = \sqrt{k}\cdot \sum_{j = 1}^{K/k} \phi_j^\natural \Psi_j(x),
\ee
where we recall $\Psi(x)$ can be written as $\Psi(x) = (\Psi_1(x),\dots,\Psi_{K/k}(x))^\T$, and the last equality is by Lemma~\ref{po:bs:other} such that $\sum_{j\in[k]}\psi^{(j)}(0) = \psi^{(1)}(0) = \sqrt{k}$. By \eqref{supnorm:cov:zerocondition}, we know 
\bee\label{boundatpoint0:illplemma}
\sqrt{k}\cdot\sup_{x\in[0,1]^d}\left| \sum_{j = 1}^{K/k} \phi_j^\natural\cdot \Psi_j(x)\right| = \sup_{x\in[0,1]^d}|(\phi^\natural)^\T\Psi(x,0)| \leq \|\check{\tau}^\natural\|_{\mathbb{X}\times\mathbb{T}} + o(1) \precsim 1,
\ee
as $K\rightarrow \infty$. Now uniformly for all $(t,x)\in[0,s_1]\times [0,1]$, we have $0\leq \psi^{(1)}(t)\leq \sqrt{k} - \sum_{j = 2}^k\psi^{(j)}(t)\leq \sqrt{k}$, and thus
\bee\nonumber
\left|(\phi^\natural)^\T\Psi(x,t)  - (\phi_0^\natural)^\T\Psi(x,t)\right| &= \left|\sum_{j = 1}^{K/k} \phi_j^\natural\psi^{(1)}(t) \Psi_j(x)\right|
\\
&=\psi^{(1)}(t)
\cdot \left|\sum_{j = 1}^{K/k} \phi_j^\natural\cdot \Psi_j(x)\right|
\\
&\leq  \sqrt{k}\cdot\left|\sum_{j = 1}^{K/k}\phi_j^\natural\cdot \Psi_j(x)\right|
\\
&\precsim 1,
\ee
as $K\rightarrow \infty$, where the last inequality is by \eqref{boundatpoint0:illplemma}.
\par
 Summarizing all results above and noting that the density functions of $X$ and $T\mid X$ are uniformly bounded, we have 
\begin{align}\nonumber
\|(\phi^\natural_0)^\T\Psi - \check{\tau}(\cdot\mid s)\|^2_{\mathcal{L}_{\mathcal{P}}^2} &= \|(\phi^\natural_0)^\T\Psi - \check{\tau}^\natural(\cdot\mid s)\|^2_{\mathcal{L}_{\mathcal{P}}^2}
\\\nonumber
& =  \int_{(t,x)\in[0,1]^{d + 1}}\left\{(\phi^\natural_0)^\T\Psi(x,t) - \check{\tau}^\natural(x,t\mid s)\right\}^2 f(x)\varpi(t\mid x)dxdt
\\\nonumber
&=\int_{(x,t)\in[s_1,1]\times[0,1]^{d }}\left\{(\phi_0^\natural)^\T\Psi(x,t) - \check{\tau}^\natural(x,t\mid s)\right\}^2 f(x)\varpi(t\mid x)dxdt
\\\nonumber
&\quad +\int_{(x,t)\in[0,s_1]\times[0,1]^{d }}\left\{(\phi_0^\natural)^\T\Psi(x,t) - \check{\tau}^\natural(x,t\mid s)\right\}^2 f(x)\varpi(t\mid x)dxdt
\\\nonumber
&\leq\int_{(x,t)\in[s_1,1]\times[0,1]^{d }}\left\{(\phi^\natural)^\T\Psi(x,t) - \check{\tau}^\natural(x,t\mid s)\right\}^2 f(x)\varpi(t\mid x)dxdt
\\\nonumber
&\quad +2\int_{(x,t)\in[0,s_1]\times[0,1]^{d }}\left\{(\phi_0^\natural)^\T\Psi(x,t) - (\phi^\natural)^\T\Psi(x,t) \right\}^2 f(x)\varpi(t\mid x)dxdt
\\\nonumber
&\quad+2\int_{(x,t)\in[0,s_1]\times[0,1]^{d }}\left\{\check{\tau}^\natural(x,t\mid s) - (\phi^\natural)^\T\Psi(x,t) \right\}^2 f(x)\varpi(t\mid x)dxdt
\\\nonumber
&\precsim K^{-\min\{p',p\}/(d + 1)} + k^{-1} + k^{-1}\cdot  K^{-\min\{p',p\}/(d + 1)} \rightarrow 0,
\end{align}
as $K\rightarrow \infty$, where the last inequality is by \eqref{supnorm:cov:zerocondition}, \eqref{supnorm:cov:zerocondition} and the fact that $|s_1|\precsim k^{-1}$ by \eqref{am:mesh}.
\end{proof}

\subsection{Technical lemmas}\label{sec:lemma}
In this section, we present all the technical lemmas for the proofs of our main propositions and theorems.
\begin{lemma}\label{lm:svdR}Suppose Assumptions \ref{A:CS}, \ref{am:bspline}, \ref{am:compact} and \ref{am:densX} hold. Let $\text{span}(U_{\perp})$ be the linear subspace in $\RR^{K}$ which is spanned by the column vectors in $U_{\perp}$, and $\text{span}(U)$ is defined correspondingly. We  have the following spectral properties of $R_n$.
\begin{enumerate}
\item[(i)] We have $\text{span}(U_\perp) = \{ u\mid  u^\T\Psi(x,t)\text{ is free of }t\}$. Specifically, if $ u\in\text{span}(U_\perp)$, we have $ u^\T\Psi(x,t) = u^\T \Gamma(x)$ for any $(x,t)\in\mathbb{X}\times \mathbb{T}$.
\item[(ii)] Let $ f_j = (1, 0_{j - 1}^\T,-1, 0_{k-j-1}^\T)^\T\in \RR^{k}$ for $j = 1,\dots, k - 1$, and $F = [f_1,\dots, f_{k-1}]$. The $\text{span}(U)$ and $\text{span}(U_\perp)$ can be represented as follows:
\begin{enumerate}
\item $\text{span}(U) = \text{span}\big\{ v_T \otimes  v\big\}$ where \text{(i) }$ v_T = \tilde{ \beta}^\T F;$\, \text{(ii) }$\tilde{ \beta} \text{ and }  v$ can be any vectors in  $\RR^{k - 1}$ and $\RR^{K/k }$, respectively. Here $\text{span}\big\{ v_T \otimes  v\big\}$ represents  the linear subspace spanned by all vectors taking the form of  $v_T \otimes  v$.
\item $\text{span}(U_\perp) = \big\{ 1_{k}\otimes  v\mid  v_{}\text{ can be any vector in }\RR^{K/k}\big\}$.
\end{enumerate}
With the dimensions of $\text{span}(U)$ and $\text{span}(U_\perp)$ being specified, we can further conclude that $$\zeta = K - K/k.$$
\item[(iii)] When $\varpi(t\mid x)$ is free of $x$, i.e., $T$ is completely random, one has $$\sigma_{\zeta} = \sigma_{K - K/k}\succsim 1.$$
\item[(iv)] Suppose $\hat{\Gamma}(x)$ is trained via the methods according to $\mathsection$\ref{rk:gamma:est}. We have $U_{\perp}\{\Gamma(x) - \hat{\Gamma}(x)\} = 0$ for any $x\in\mathbb{X}$.
\end{enumerate}
\end{lemma}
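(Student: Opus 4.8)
\emph{Proof proposal.} Everything follows from two structural facts about the tensor-product B-spline sieve: the factorization $\Psi(x,t)=\psi(t)\otimes\Psi(x)$ with $\Psi(x)=\psi(x^{(1)})\otimes\cdots\otimes\psi(x^{(d)})\in\RR^{K/k}$, and the partition-of-unity identity $1_k^\T\psi(t)\equiv\sqrt{k}$ of Lemma \ref{po:bspline}, whose consequence (together with linear independence of the $\psi^{(j)}$) is that $a^\T\psi(\cdot)$ is a constant function iff $a\in\text{span}(1_k)$. For part (i), since $R_n\succeq 0$ one has $u\in\text{span}(U_\perp)=\ker(R_n)$ iff $u^\T R_n u=E[(u^\T\{\Psi(X,T)-\Gamma(X)\})^2]=0$, i.e.\ $u^\T\Psi(X,T)=u^\T\Gamma(X)$ a.s.; since $u^\T\Gamma(x)$ does not involve $t$ and, by Assumptions \ref{A:CS} and \ref{am:densX}, $(X,T)$ has full support $\mathbb{X}\times\mathbb{T}$ while $u^\T\Psi(\cdot,\cdot)$ is continuous, this is equivalent to $u^\T\Psi(x,t)$ being free of $t$ on all of $\mathbb{X}\times\mathbb{T}$, and then $u^\T\Psi(x,t)=E_\varpi\{u^\T\Psi(X,T)\mid X=x\}=u^\T\Gamma(x)$, giving both assertions. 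For part (ii), identify $u\in\RR^K$ with a matrix $M\in\RR^{k\times(K/k)}$ through the tensor ordering, so that $u^\T\Psi(x,t)=\psi(t)^\T M\Psi(x)$; since $\{\Psi(x):x\in\mathbb{X}\}$ spans $\RR^{K/k}$ (positive-definiteness of $\int_{\mathbb{X}}\Psi\Psi^\T$, by the argument behind Lemma \ref{am:psi}(ii)), part (i) says $u\in\text{span}(U_\perp)$ iff $\psi(t)^\T M$ is constant in $t$, iff every column of $M$ lies in $\text{span}(1_k)$, iff $M=1_k v^\T$, iff $u=1_k\otimes v$; hence $\text{span}(U_\perp)=\{1_k\otimes v:v\in\RR^{K/k}\}$ has dimension $K/k$, while its orthogonal complement consists of the $M$ with $1_k^\T M=0$, i.e.\ with all columns in $\ker(1_k^\T)=\text{span}(F)$ (the $f_j$ being independent and each orthogonal to $1_k$), which is exactly $\text{span}\{v_T\otimes v:v_T\in\text{span}(F),\,v\in\RR^{K/k}\}$; comparing the dimensions $K/k$ and $(k-1)(K/k)$ yields $\zeta=\text{rank}(R_n)=K-K/k$.

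For part (iii), if $\varpi(t\mid x)=\varpi(t)$ then $T\ind X$ and $\Gamma(x)=\bar\psi\otimes\Psi(x)$ with $\bar\psi=E\{\psi(T)\}$, so $\Psi(X,T)-\Gamma(X)=\{\psi(T)-\bar\psi\}\otimes\Psi(X)$ and $R_n=\var\{\psi(T)\}\otimes E\{\Psi(X)\Psi(X)^\T\}$, whose smallest nonzero singular value equals the product of the smallest nonzero singular values of the two Kronecker factors. By Assumption \ref{am:densX} and Lemma \ref{am:psi}(ii), $\lambda_{\min}\big(E\{\Psi(X)\Psi(X)^\T\}\big)\ge c_f\,\lambda_{\min}\big(\int_{\mathbb{X}}\Psi\Psi^\T\big)\succsim 1$. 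For the treatment factor, its null space is exactly $\text{span}(1_k)$, and for any $a\perp 1_k$ I would write $a^\T\var\{\psi(T)\}a=\min_c\|a^\T\psi-c\|^2_{\mathcal{L}^2_{\mathcal{P}}}\ge\epsilon\min_c\|a^\T\psi-c\|^2_{\mathcal{L}^2}=\epsilon\min_{b\in a+\text{span}(1_k)}b^\T J_T b\ge\epsilon\,\lambda_{\min}(J_T)\|a\|^2\succsim\|a\|^2$, where the middle equality rewrites the constant function as $k^{-1/2}1_k^\T\psi$ so the minimization becomes the $J_T$-distance from $a$ to $\text{span}(1_k)$, and the last inequality uses $\lambda_{\min}(J_T)\succsim 1$ together with the fact that $a$ is itself the $\text{span}(1_k)^\perp$-component of every $b\in a+\text{span}(1_k)$. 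Hence $\sigma_\zeta=\sigma_{K-K/k}\succsim 1$.

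For part (iv): under method (i) of Remark \ref{rk:gamma:est}, $\Gamma(x)-\hat\Gamma(x)=\big(\int_{\mathbb{T}}\psi(t)\{\varpi(t\mid x)-\hat\varpi(t\mid x)\}dt\big)\otimes\Psi(x)$, and $1_k^\T$ of the first factor is $\sqrt{k}\int_{\mathbb{T}}\{\varpi(t\mid x)-\hat\varpi(t\mid x)\}dt=0$ since both are densities; under method (ii), the shape constraint \eqref{shape:con} and the sum invariance of Lemma \ref{po:bspline} give $1_{k_T}^\T\hat E\{\psi(T)\mid X=x\}=1_{k_T}^\T E\{\psi(T)\mid X=x\}=\sqrt{k_T}$, so the first factor of $\Gamma(x)-\hat\Gamma(x)$ again lies in $\ker(1_k^\T)=\text{span}(F)$; in either case $\Gamma(x)-\hat\Gamma(x)\in\text{span}(U)$ by part (ii), i.e.\ $U_\perp^\T\{\Gamma(x)-\hat\Gamma(x)\}=0$. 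The main obstacle is the uniform-in-$n$ bound $\sigma_\zeta\succsim 1$ of part (iii): it is the one place where the regularity conditions (positivity of $\varpi$, bounded mesh ratio, bounded density of $X$) are genuinely needed, and it amounts to showing that $\var\{\psi(T)\}$ restricted to $\text{span}(1_k)^\perp$ stays bounded below by a constant free of $k$ as the number of knots grows; parts (i), (ii), and (iv) are then essentially linear algebra built on the tensor and partition-of-unity structure.
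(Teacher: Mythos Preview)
Your proof is correct, and parts (i), (ii), and (iv) track the paper closely, with (ii) streamlined by your matrix identification $u\leftrightarrow M\in\RR^{k\times(K/k)}$ and direct use of linear independence of $\psi^{(1)},\dots,\psi^{(k)}$; the paper instead argues via the explicit local-support properties of B-splines (Lemma~\ref{po:bs:other}) to show each $f_{j_1}\otimes e_{j_2}$ lies outside $\text{span}(U_\perp)$ and then counts dimensions. Your treatment of (iv) is in fact more complete: the paper's written proof covers only method~(i) of Remark~\ref{rk:gamma:est} (via $U_\perp^\T\Psi(x,t)$ being free of $t$, so any conditional average over $t$ returns the same value), whereas you also handle method~(ii) through the shape constraint~\eqref{shape:con}.

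The genuine divergence is part (iii). You Kronecker-factorize $R_n=\var\{\psi(T)\}\otimes E\{\Psi(X)\Psi(X)^\T\}$ under $T\ind X$ and bound each factor separately, which is constructive and yields the exact spectral structure of $R_n$; the cost is the extra work lower-bounding $\var\{\psi(T)\}$ on $\mathrm{span}(1_k)^\perp$ uniformly in $k$. The paper instead avoids any spectral analysis of $\var\{\psi(T)\}$: it writes $Q_n=R_n+E(\Gamma\Gamma^\T)$, observes that under complete randomization $E(\Gamma\Gamma^\T)=\gamma\gamma^\T\otimes E\{\Psi(X)\Psi(X)^\T\}$ has rank at most $K/k$, and applies Weyl's inequality (Proposition~\ref{po:gweyl} with $i=K-K/k$, $j=K/k+1$) to get $\sigma_{K-K/k}(R_n)\ge\sigma_K(Q_n)-0\succsim 1$ directly from the already-established lower bound on $\lambda_{\min}(Q_n)$. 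The paper's route is shorter and reuses Lemma~\ref{am:svb}; yours gives more structural information and is self-contained.
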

\begin{proof}[of Lemma \ref{lm:svdR}]
 We prove the four parts of Lemma \ref{lm:svdR} in order.
\par
\noindent\fbox{Proof of Lemma \ref{lm:svdR} (i)}
By the basic property of singular value decomposition, we have $ u\in\text{span}(U_\perp)$ if and only if $ u^\T R_n  u = 0$. Then, if $ u\in\text{span}(U_{\perp})$, one has,
\bee\label{lm:svd:equiv}
E\Big[\big\{{u}^\T\bpsi(X,T) - \E[{ u}^\T\bpsi(X,T)\mid X]\big\}^2\Big] =  u^\T R_n  u = 0,
\ee
which is equivalent to that  
\bee\label{core:lemma1}
 u^\T\Psi(X,T) = \E\{u^\T\Psi(X,T)\mid X\}
\ee a.s.. This implies that, $$ u^\T\Psi(x,t) = \E\{u^\T\Psi(X,T)\mid X = x\}$$ holds almost everywhere on the Lebesgue measure over $[0,1]^{d + 1}$, since $f(x,t) = f(x)f(x\mid t)$ is an upper and lower bounded density function over $\mathbb{X}\times\mathbb{T} = [0,1]^{d + 1}$, under Assumptions \ref{A:CS}, \ref{am:compact} and \ref{am:densX}. By the continuity of the B-spline function ($\mathsection$\ref{sec:bspline}), we further have that $$ u^\T\Psi(x,t) = \E\{ u^\T\Psi(X,T)\mid X= x\}$$ holds for any $(x,t)\in[0,1]^{d + 1}$.  In addition, since $\E\{u^\T\Psi(X,T)\mid X = x\}$ does not contain $t$, we conclude from \eqref{core:lemma1} that $ u^\T\Psi(x,t)$ now is a function dependent only on $x$, over $(x,t)\in[0,1]^{d + 1}$. 
\par
On the other hand, suppose $ u$ satisfies that $ u^\T\bpsi(x,t)$ is a function only of $x$ over $[0,1]^{d + 1}$. We then have, $\E\{u^\T\M\Psi(X,T)\mid X = x\} =  u^\T\M\Psi(x,t)$ and thus
\bee\nonumber
E\Big[{u}^\T\bpsi(X,T) - \E\{{u}^\T\bpsi(X,T)\mid X\}\Big]^2  = \E\Big[\{{ u}^\T\bpsi(X,T) -{u}^\T\bpsi(X,T)\}^2\Big] = 0,
\ee
which implies $u^\T R_n  u = 0$ and thus $ u\in\text{span}(U_{\perp})$. Summarizing the two sides of the equivalence shown above, the result (i) is hence proved.
\par
\noindent\fbox{Proof of Lemma \ref{lm:svdR} (ii)} Let $S_{U_{\perp}} = \big\{ 1_{k} \otimes  v\mid  v\text{ can be any vector in }\RR^{K/k}\big\}$ and $S_{U} = \text{span}\big\{v_T \otimes  v\big\}$ where $\text{(i) } v_T = \tilde{ \beta}^\T F;\,\text{(ii) }\tilde{ \beta} \text{ and }  v\text{ can be any vectors in } \RR^{k - 1}\text{ and } \RR^{K/k} \text{, respectively}$.  It is easy to verify $S_{U_{\perp}}$ and $S_{U}$ are both linear subspaces in $\RR^{K}$. Specifically, let $1_{k} \otimes  v_1$ and $1_{k} \otimes  v_2$ be two arbitrary vectors in $S_{U_{\perp}}$. We note their linear combination takes the form of 
\bee\nonumber
c_1 (1_{k} \otimes  v_1) + c_2 (1_{k} \otimes  v_2) =  1_{k}\otimes (c_1v_1 + c_2v_2),
\ee
which is also in $S_{U_\perp}$. Similar arguments hold for $S_U$ as well. In addition, by checking the definition, one can also see that $\{ f_{j_1}\otimes  e_{j_2}, j_1 = 1,\dots,k - 1, j_2 = 1,\dots,K/k\}$ and $\{ 1_{k}\otimes  e_{j_2}, j_2 = 1,\dots,K/k\}$ form basis of $S_{U}$ and  $S_{U_\perp}$, respectively. Here $\{e_{j_2}\}_{j_2 = 1}^{K/k}$ are the standard basis of $\RR^{K/k}$. Therefore, $\text{dim}(S_{U}) = K - K/k$, $\text{dim}(S_{U_\perp}) = K/k$, and $\text{dim}(S_{U}) + \text{dim}(S_{U_\perp}) =K$; here $\text{dim}(\cdot)$ is the dimension of the corresponding linear subspace.
\par
We first show $S_{U_\perp} = \text{span}(U_{\perp})$. Denote $\Psi(x) = \psi(x^{(1)})\otimes \cdots\otimes \psi(x^{(d)})$. By Lemma \ref{po:bspline} and the basic property of Kronecker product multiplication \citep[e.g.,][]{horn2012matrix}, we observe for any $ v\in \RR^{K/k}$,
\bee\nonumber
\{ 1_{k} \otimes  v\}^\T \M\Psi(x,t)
&=\{ 1_{k} \otimes  v\}^\T\{\psi(t)\otimes\Psi(x)\}
\\
& =  1^\T_{k}\psi(t)\cdot v^\T\Psi(x) 
\\
&= \sqrt{k} v^\T\Psi(x),
\ee
which is free of $t$, thus $ 1_{k} \otimes  v \in \text{span}(U_\perp)$ and $$S_{U_{\perp}}\subseteq \text{span}(U_{\perp}).$$ Since both $S_{U_{\perp}}$ and $\text{span}(U_\perp)$ are linear subspaces, to show $S_{U_\perp} = \text{span}(U_{\perp})$, it is left to show 
\bee\label{lm1:etarget}
\text{dim}(S_{U_\perp}) = K/k \geq\text{dim}\{\text{span}(U_{\perp})\}.\ee
\par
Now for any $ f_{j_1}\otimes  e_{j_2}$ as one basis function of $S_{U}$, we have for $j_1 = 1,\dots, k - 1$,
\bee\label{fee}
\{ f_{j_1}\otimes  e_{j_2} \}^\T\M\Psi(x,t) = \big\{\psi^{(1)}(t) - \psi^{(j_1 + 1)}(t)\big\}\cdot e_{j_2}^\T\Psi(x).
\ee
The above function depends on $t$ and thus is not in $\text{span}(U_\perp)$ by the result in (i). This is because by Lemma \ref{po:bs:other}, $\psi^{(1)}(t) = 0$ when $t \in (s_{j_1}, s_{j_1 + 1})$, while $\psi^{(1)}(t)  > 0$ when $t \in (s_{j_1},s_{j_1 + 1})$, and thus 
$$
\psi^{(1)}(t) - \psi^{(j_1 + 1)}(t) < 0
$$ when $t\in(s_{j_1},s_{j_1 + 1})$. On the other hand, if $j_1 < k - 1$, we have 
$$
\psi^{(1)}(t) = \psi^{(j_1 + 1)}(t) = 0
$$ when $t \in(s_{j_1 + 1},1)$. If $j_1 = k - 1$, we have $\psi^{(1)}(t) = \psi^{(j_1 + 1)}(t) = 0$ when $t\in(s_1,s_{m})$; note here $m \geq 2$ and $s_m \geq s_1$ under Assumption \ref{am:bspline}. Therefore for any $j_1 = 1,\dots,k - 1$, there exists some $a,b \in[0,1]$ such that $\psi^{(1)}(t = a) - \psi^{j_1 + 1}(t = a) < 0$ while $\psi^{(1)}(t = b) - \psi^{j_1 + 1}(t = b) = 0$, which directly implies from \eqref{fee} that $\{ f_{j_1}\otimes  e_{j_2} \}^\T\M\Psi(x,t)$ is a function that can change with $T$, for any given basis function $ f_{j_1}\otimes  e_{j_2}$ of $S_{U}$.
By results in (i), we know all $K - K/k$ linearly independent vectors in $\{ f_{j_1}\otimes  e_{j_2} \mid j_1= 1,\dots,k - 1, j_2 = 1,\dots,K/k \}$ are not in \text{span}$(U_\perp)$. By the basic property of the linear space, we conclude $\text{dim}\{\text{span}(U_{\perp})\} \leq K - (K - K/k) = \text{dim}(S_{U_{\perp}})$, which verifies \eqref{lm1:etarget} and thus shows $$S_{U_{\perp}} = \text{span}(U_{\perp})$$
\par
Finally, observing that for any basis function $ f_{j_1}\otimes  e_{j_2}$ of $S_{U}$ and any vector $ 1_{k}\otimes  v\in S_{U_{\perp}} = \text{span}(U_{\perp})$, we have
\bee\nonumber
( f_{j_1}\otimes  e_{j_2})^\T \cdot  1_{k}\otimes  v = ( f_{j_1}^\T\cdot  1_{k})\cdot( e_{j_2}^\T\cdot v) = 0,
\ee
as $ f_{j_1} \cdot  1_{k} = 1 - 1 = 0$ by definition. We thus have 
\bee\label{sperp2}
S_{U}\perp\text{span}(U_{\perp})
\ee and $S_{U}\oplus \text{span}(U_{\perp}) = \RR^K$, since $\text{dim}(S_{U}) + \text{dim}\{\text{span}(U_\perp)\} = K - K/k + K/k = K$; here $\oplus$ denotes the direct sum of two linear spaces. With same argument, we can also show that 
\bee\label{sperp1}
\text{span}(U_{})\perp\text{span}(U_{\perp})
\ee and $\text{span}(U_{})\oplus \text{span}(U_{\perp}) = \RR^K$. Since the orthogonal complement of $\text{span}(U_{\perp})$ in $\RR^K$ is unique, by \eqref{sperp2}--\eqref{sperp1}, we conclude that,
$$
\text{span}(U_{}) = S_{U},
$$
which completes the proof. Since $\text{dim}\{\text{span}(U)\} = K - \text{dim}\{\text{span}(U_\perp)\} = K - K/k$, we know the number of column vectors in $U$ is $K - K/k$, and thus $\zeta = K - K/k$.
\par
\noindent\fbox{Proof of Lemma \ref{lm:svdR} (iii)} We first present the general Weyl's inequality of matrix eigenvalue perturbation, which will be frequently used in the paper. The proof can be found in \citet[Theorem 3.3.16]{horn2012matrix}.
\begin{proposition}[General Weyl's inequality]\label{po:gweyl}
Let ${\mathcal{R}}$ be any matrix in $\RR^{d_1\times d_2}$ and $\hat{\mathcal{R}}$ be its perturbed version such that
$
\hat{\mathcal{R}} = {\mathcal{R}} + \mathcal{E}.
$ For any $i,j$ such that $1\leq i,j\leq \min\{d_1,d_2\}$ and $i + j \leq \min\{d_1,d_2\} + 1$, we have
\bee\label{wy:res:1}
\sigma_{i+j - 1}(\hat{\mathcal{R}}) - \sigma_{i}({\mathcal{R}}) \leq \sigma_j(\mathcal{E}).
\ee
Specifically, for any $i\leq \min\{d_1,d_2\}$, \bee\label{wy:res:2}
|\sigma_i(\hat{\mathcal{R}}) - \sigma_i({\mathcal{R}})| \leq \|\mathcal{E}\|_2.
\ee
\end{proposition} 
We now get into our main proof. Note by the law of total expectation, we have
\bee\nonumber
Q_n = R_n + \E(\M\Gamma\M\Gamma^\T) ;
\ee 
see \eqref{rew:Rn} in what follows. For simplicity, we denote $\gamma = E\{\psi(T)\mid X = x\}$. Note $\gamma$ is free of $x$ as $\varpi(t\mid x)$ is free of $x$. We thus also write $\varpi(t\mid x) = \varpi(t)$ for abbreviation.  By the basic property of Kronecker product \citep{schacke2004kronecker},  we then have
\bee\label{ggkron}
\E(\M\Gamma\M\Gamma^\T) &= E\Big[\big[E\{\psi(T)\mid X\}\otimes \Psi(X)\big]\big[E\{\psi(T)\mid X\}\otimes \Psi(X)\big]^\T\Big]
\\
&=E\big\{\gamma\gamma^\T \otimes \Psi(X)\Psi^\T(X)\big\}
\\
&= \gamma\gamma^\T  \otimes E\big\{ \Psi(X)\Psi^\T(X)\big\}.
\ee
Since $\gamma\gamma^\T$ is a rank-one matrix, by e.g. \citet[Theorem 4.2.12]{horn2012matrix} and \eqref{ggkron}, we have 
\bee\nonumber
\text{rank}\big\{\E(\M\Gamma\M\Gamma^\T)\big\} &\leq  1\cdot \text{rank}\big[E\big\{ \Psi(X)\Psi^\T(X)\big\}\big] 
\\
& \leq K/k,
\ee
which implies $\sigma_{K/k + 1}\{\E(\M\Gamma\M\Gamma^\T)\} = 0$. Taking $\hat{\mathcal{R}} = Q_n$, ${\mathcal{R}} = R_n$, $\mathcal{E} = E(\Gamma\Gamma^\T)$, $j = K/k + 1$, and $i = K - K/k$ in Proposition \ref{po:gweyl}, we have that,
\bee\nonumber
\sigma_{K - K/k}(R_n) &\geq \sigma_{K}(Q_n)-\sigma_{K/k + 1}\{E(\Gamma\Gamma^\T)\}
\\
&= \sigma_{K}(Q_n)
\\
&\succsim 1,
\ee
where the last inequality is because the smallest singular value of $Q_n$ is bounded away from $0$; see Lemma~\ref{am:svb}.
\par
\noindent\fbox{Proof of Lemma \ref{lm:svdR} (iv)} By the result in (i), we have $U_\perp\Psi(x,t)$ is free of $t$. We then have
\bee\nonumber
U_{\perp}\{\Gamma(x) - \hat{\Gamma}(x)\} &= E\{U_{\perp}\Psi(X,T)\mid X = x\} - E_{\hat{\varpi}}\{U_{\perp}\Psi(X,T)\mid X = x\}
\\
&=U_{\perp}\Psi(x,t) - U_{\perp}\Psi(x,t)
\\
&= 0.
\ee
\end{proof}
\begin{lemma}\label{am:svb}
 Assumptions \ref{A:CS}, \ref{am:bspline}, \ref{am:densX}  hold. When $n\rightarrow +\infty$, we have following bounds.
\begin{enumerate}
\item[(i)] The eigenvalues of ${Q}_n$ are bounded away from $0$ and $+\infty$, and $\|{R}_n\|_2,\|\E(\M\Gamma\M\Gamma^\T)\|_2,\|E[E\{\psi(T)\mid X\}E\{\psi^\T(T)\mid X\}]\|_2\precsim 1$.
\item[(ii)] For any $ h\in \Lambda(p,c,\mathbb{X}\times\mathbb{T})$ for some fixed $p,c > 0$, let
$$
\phi_h = Q_n^{-1}E\{h\cdot\Psi(X,T)\}.
$$
 We have $\|\phi_h\|\precsim 1$.
\end{enumerate}   
\end{lemma}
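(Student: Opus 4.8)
The plan is to reduce everything to the spectral bounds on the B-spline Gram matrices $J_{X,T}=\int_{\mathbb{X}\times\mathbb{T}}\Psi\Psi^\T$ and $J_T=\int_{\mathbb{T}}\psi\psi^\T$ supplied by Lemma~\ref{am:psi}(ii), combined with the two-sided density bounds from Assumptions~\ref{A:CS} and~\ref{am:densX}. For the eigenvalues of $Q_n$ I would write $Q_n=\int_{\mathbb{X}\times\mathbb{T}}\Psi(x,t)\Psi^\T(x,t)\,f(x)\varpi(t\mid x)\,dx\,dt$; since the density $f(x)\varpi(t\mid x)$ is bounded between $c_f\epsilon$ and $C_f/\epsilon$ on the unit cube, this yields $c_f\epsilon\,J_{X,T}\preceq Q_n\preceq (C_f/\epsilon)\,J_{X,T}$ in the positive semidefinite order, and Lemma~\ref{am:psi}(ii) then forces $\lambda_{\min}(Q_n)$ and $\lambda_{\max}(Q_n)$ to be bounded away from $0$ and $+\infty$.

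For the three spectral-norm bounds, I would first record the decomposition $Q_n=R_n+\E(\Gamma\Gamma^\T)$, which follows from the tower property: expanding $\E\{\Psi\Psi^\T\}$ around $\Gamma(X)=\E\{\Psi(X,T)\mid X\}$, the cross term $\E\{(\Psi-\Gamma)\Gamma^\T\}=\E\{(\E[\Psi\mid X]-\Gamma)\Gamma^\T\}$ vanishes. Both $R_n$ and $\E(\Gamma\Gamma^\T)$ are positive semidefinite, so $0\preceq R_n\preceq Q_n$ and $0\preceq \E(\Gamma\Gamma^\T)\preceq Q_n$, whence $\|R_n\|_2,\|\E(\Gamma\Gamma^\T)\|_2\le\|Q_n\|_2\precsim 1$. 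For $\E[\E\{\psi(T)\mid X\}\E\{\psi^\T(T)\mid X\}]$, writing $\gamma(x)=\E\{\psi(T)\mid X=x\}$ and applying conditional Jensen, for any unit vector $v$ one gets $v^\T\E\{\gamma\gamma^\T\}v=\E[(v^\T\gamma(X))^2]\le\E[(v^\T\psi(T))^2]=v^\T\E\{\psi(T)\psi^\T(T)\}v$; and $\E\{\psi(T)\psi^\T(T)\}=\int_{\mathbb{T}}\psi(t)\psi^\T(t)f_T(t)\,dt\preceq (C_f/\epsilon)J_T$ because the marginal density $f_T(t)=\int_{\mathbb{X}}f(x)\varpi(t\mid x)\,dx\le C_f/\epsilon$, so Lemma~\ref{am:psi}(ii) again caps the spectral norm at $\mathcal{O}(1)$.

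For part (ii), I would use $Q_n\phi^*=\E\{\tilde\tau(X,T)\Psi(X,T)\}$ from~\eqref{def:phi*} to obtain $(\phi^*)^\T Q_n\phi^*=\E\{\tilde\tau(X,T)(\phi^*)^\T\Psi(X,T)\}\le\|\tilde\tau\|_{\mathcal{L}^2_{\mathcal{P}}}\{(\phi^*)^\T Q_n\phi^*\}^{1/2}$ by Cauchy--Schwarz, hence $\{(\phi^*)^\T Q_n\phi^*\}^{1/2}\le\|\tilde\tau\|_{\mathcal{L}^2_{\mathcal{P}}}$, which is $\mathcal{O}(1)$ since $\tilde\tau\in\Lambda(p,c,\mathbb{X}\times\mathbb{T})$ is uniformly bounded on a compact set; combining with the lower eigenvalue bound from part (i) gives $\|\phi^*\|^2\le (\phi^*)^\T Q_n\phi^*/\lambda_{\min}(Q_n)\precsim 1$. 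There is no genuinely hard step here; the only points needing care are the vanishing of the cross term in $Q_n=R_n+\E(\Gamma\Gamma^\T)$ and the conditional-Jensen reduction of $\E\{\gamma\gamma^\T\}$ to $\E\{\psi(T)\psi^\T(T)\}$, after which everything is immediate from the density bounds and Lemma~\ref{am:psi}(ii).
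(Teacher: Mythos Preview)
Your proposal is correct and follows essentially the same approach as the paper: both arguments sandwich $Q_n$ between constant multiples of $J_{X,T}$ via the density bounds from Assumptions~\ref{A:CS} and~\ref{am:densX}, invoke Lemma~\ref{am:psi}(ii), establish the decomposition $Q_n=R_n+\E(\Gamma\Gamma^\T)$ through the tower property, and use Jensen/Cauchy--Schwarz to dominate the conditional-mean Gram matrices by the unconditional one. Your treatment of $\|R_n\|_2$ and $\|\E(\Gamma\Gamma^\T)\|_2$ via the PSD ordering $0\preceq R_n,\,\E(\Gamma\Gamma^\T)\preceq Q_n$ is marginally cleaner than the paper's triangle-inequality route, and your quadratic-form bound $(\phi^*)^\T Q_n\phi^*\le\|\tilde\tau\|_{\mathcal{L}^2_{\mathcal{P}}}^2$ for part~(ii) is a standard equivalent to the paper's direct bound $\|\phi^*\|\le\|Q_n^{-1}\|_2\|\E\{\tilde\tau\Psi\}\|$; neither difference is substantive.
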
 
\begin{proof}[of Lemma \ref{am:svb}] We prove the two parts of Lemma \ref{am:svb} in order.
\par
\noindent\fbox{Proof of Lemma \ref{am:svb} (i)} 
 First we note that by the forms of $Q_n$, $R_n$, $\E(\M\Gamma\M\Gamma^\T)$, and $E\big[E\{\psi(T)\mid X\}E\{\psi^\T(T)\mid X\}\big]$, it is clear to see they are all symmetric and positive semi-definitive. 
 \par
 Let $ v\in \RR^{K}$ be any vector with $\| v\| = 1$.  One has
\bee\label{uppperbound:Q}
&\lambda_{\max}(Q_n) 
\\
&= \sup_{\| v\| = 1}\big| v^\T E\big[\bpsi(T,X)\bpsi(T,X)^\T\big] v\big|
\\
&=\sup_{\| v\| = 1}\int_{\mathbb{X}\times \mathbb{T}} \big\{{ v}^{\T}\bpsi(x,t)\big\}^2 f(t\mid x)f(x)dxdt
\\
&\leq C_f/\epsilon\cdot\sup_{\| v\| = 1} \int_{\mathbb{X}\times \mathbb{T}} \big\{{v}^{\T}\bpsi(x,t)\big\}^2 dxdt \quad\text{(Assumptions \ref{am:bspline} and \ref{am:densX})}
\\
&\precsim 1,
\ee
where the last inequality follows by Lemma \ref{am:psi}. Similarly, we have $\lambda_{\min}(Q_n)  \geq c_f/\epsilon\cdot \inf_{\| v\| = 1} \int_{\mathbb{X}\times \mathbb{T}} \big\{{ v}^{\T}\bpsi(x,t)\big\}^2 dxdt \succsim 1$ also by Lemma \ref{am:psi} and the corresponding assumptions.
\par
 By the property of spectral norm  \citep[e.g.,][]{golub2013matrix}, one has
\bee\nonumber
\|\E(\M\Gamma\M\Gamma^\T)\|_2 &= \sup_{\|{ u}\| = 1}\big|{ u}^\T\E(\M\Gamma\M\Gamma^\T){ u}\big|
\\
&=\sup_{\|{ u}\| = 1}\Big|E\Big[\big[E\{{ u}^\T\M\Psi(X,T)\mid X\}\big]^2\Big]\Big|
\\
&\leq \sup_{\|{ u}\| = 1}\Big|E\Big[E\big[\{{ u}^\T\M\Psi(X,T)\}^2\mid X\big]\Big]\Big|
\\
& = \sup_{\|{ u}\| = 1}\Big|E\big[\{{ u}^\T\M\Psi(X,T)\}^2\big]\Big|
\\
&=\|Q_n\|_2 \precsim 1,
\ee
where the first inequality follows by Cauchy-Schwarz inequality and the last inequality follows by \eqref{uppperbound:Q}. The $\big\|E\big[E\{\psi(T)\mid X\}E\{\psi^\T(T)\mid X\}\big]\big\|_2$ can be bounded by similar arguments. This is because $\|E(\M\Gamma\M\Gamma^\T)\|$ is actually the same type of matrix as $\E(\Gamma\Gamma^\T)$, which only replaces the $\M\Gamma(x) = \E\{\M\Psi(X,T)\mid X=x\}$ with $\E\{\psi(T)\mid X=x\}$. Finally, rewrite $R_n$ as
\bee\label{rew:Rn}
R_n &= E\big\{(\Psi - \M\Gamma)\{\Psi - \M\Gamma\}^\T\big\} 
\\
&=Q_n  - \E\{\M\Psi(X,T)\M\Gamma^\T(X)\} - \E\{\M\Gamma(X)\M\Psi^\T(X,T)\} +  \E[\M\Gamma\M\Gamma^\T]
\\
&= Q_n - \E[\M\Gamma\M\Gamma^\T],
\ee 
where the last equality follows by $\E\{\M\Psi(X,T)\M\Gamma^\T(X)\} = E\big[\E\{\M\Psi(X,T)\mid X\}\Gamma^\T(X)\big] = \E(\M\Gamma\M\Gamma^\T)$ due to the law of total expectation, and similarly $\E\{\M\Gamma(X)\M\Psi^\T(X,T)\} = \E(\M\Gamma\M\Gamma^\T)$. Summarizing the above upper bounds, one has $\|R_n\|_2 \leq \|Q_n\|_2+ \|\E(\M\Gamma\M\Gamma^\T)\|_2 \precsim 1$.
\par
\noindent\fbox{Proof of Lemma \ref{am:svb} (ii)}  Recalling \eqref{def:phi*}, if $ h\in \Lambda(p,c,\mathbb{X}\times\mathbb{T})$ for some $p,c > 0$, we have $\| h \|_{\mathbb{X}\times \mathbb{T}} \precsim 1$ and thus
\bee\nonumber
\|\phi^*\|&\leq \big\|Q_n^{-1}\big\|_2\big\|E\big[h(X,T)\Psi(X,T)\big]\big\|
\\
&=\big\|Q_n^{-1}\big\|_2\sup_{\|u\| = 1}\big|E\big[h(X,T)u^\T\Psi(X,T)\big]\big|
\\
&\leq \|h\|_{\mathbb{X}\times\mathbb{T}}\|Q_n^{-1}\|_2\sup_{\|u\| = 1}\sqrt{E\big[u^\T\Psi(X,T)\big]^2}
\\
&\precsim 1,
\ee
where the second inequality follows by Cauchy-Schwarz inequality, and the last inequality follows from the lower and upper bound of $Q_n$'s eigenvalues.
 \end{proof}

\begin{lemma}\label{lm:approx}
Suppose Assumptions \ref{am:bspline} and \ref{am:moment} hold. We have $\|m(x)\|_{\mathbb{X}}$, $\|\hat{m}(x)\|_{\mathbb{X}}$, $\|\hat{\Gamma}(x)\|_{\mathbb{X}}/\sqrt{K}$, $\|\M\Gamma(x)\|_{\mathbb{X}}/\sqrt{K}$ are all bounded away from $+\infty$ when $n$ grows, wpa1. 
\end{lemma}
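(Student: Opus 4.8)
The plan is to bound the four quantities separately, splitting them into the two population objects $\|m\|_{\mathbb{X}}$ and $\|\Gamma\|_{\mathbb{X}}/\sqrt{K}$ and the two estimated objects $\|\hat{m}\|_{\mathbb{X}}$ and $\|\hat{\Gamma}\|_{\mathbb{X}}/\sqrt{K}$, where the latter pair will be controlled by a triangle inequality against the former pair together with the uniform consistency of the nuisance estimators that is in force throughout the proofs. First, $\|m\|_{\mathbb{X}} = \sup_{x\in\mathbb{X}}|E(Y\mid X=x)|$ is finite and does not depend on $n$, directly by Assumption \ref{am:moment}(iii). Second, for $\|\Gamma\|_{\mathbb{X}}/\sqrt{K}$ I would write $\Gamma(x) = \int_{\mathbb{T}}\Psi(x,t)\varpi(t\mid x)\,dt$ and, since $\varpi(\cdot\mid x)$ is a probability density on $\mathbb{T}$, apply Minkowski's integral inequality to get $\|\Gamma(x)\| \leq \int_{\mathbb{T}}\|\Psi(x,t)\|\varpi(t\mid x)\,dt \leq \|\Psi\|_{\mathbb{X}\times\mathbb{T}}$; Lemma \ref{am:psi}(i) then gives $\|\Psi\|_{\mathbb{X}\times\mathbb{T}}\precsim\sqrt{K}$, hence $\|\Gamma\|_{\mathbb{X}}/\sqrt{K}\precsim 1$. (Alternatively one may exploit the Kronecker form $\Gamma(x) = E\{\psi(T)\mid X=x\}\otimes\Psi(x)$ and apply Lemma \ref{am:psi}(i) to each factor, but the direct bound is enough.)

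Third, for $\|\hat{m}\|_{\mathbb{X}}$ I would use $\|\hat{m}\|_{\mathbb{X}} \leq \|\hat{m}-m\|_{\mathbb{X}} + \|m\|_{\mathbb{X}}$, where the first term is $o_P(1)$ by the uniform consistency hypothesis on $\hat m$ (Assumption \ref{rate:sup}(i)) and the second is $\mathcal{O}(1)$ by the first step, so $\|\hat m\|_{\mathbb{X}} = \mathcal{O}_P(1)$ and hence is bounded by a fixed constant wpa1. Fourth, for $\|\hat{\Gamma}\|_{\mathbb{X}}/\sqrt{K}$ the same idea applies: $\|\hat{\Gamma}\|_{\mathbb{X}}/\sqrt{K} \leq \|\hat{\Gamma}-\Gamma\|_{\mathbb{X}}/\sqrt{K} + \|\Gamma\|_{\mathbb{X}}/\sqrt{K}$, where the first term is $o_P(r_\gamma')=o_P(1)$ by \eqref{rate:gamma} (recall $r_\gamma'\precsim 1$) and the second is $\precsim 1$ by the second step, so $\|\hat{\Gamma}\|_{\mathbb{X}}/\sqrt{K} = \mathcal{O}_P(1)$, bounded wpa1. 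If one prefers a route that avoids invoking the rate condition, then for $\hat\Gamma$ built via method (i) of Remark \ref{rk:gamma:est} with $\hat\varpi(\cdot\mid x)$ a genuine density the Minkowski argument from the second step yields the deterministic bound $\|\hat\Gamma\|_{\mathbb{X}}/\sqrt{K}\leq \|\Psi\|_{\mathbb{X}\times\mathbb{T}}/\sqrt{K}\precsim 1$, and for method (ii) one combines $\hat\Gamma(x)=\hat E\{\psi(T)\mid X=x\}\otimes\Psi(x)$ with the sum-invariance shape constraint \eqref{shape:con} and Lemma \ref{am:psi}(i).

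There is no serious obstacle here: the argument is a sequence of triangle inequalities plus the B-spline sup-norm bound $\|\Psi\|_{\mathbb{X}\times\mathbb{T}}\precsim\sqrt{K}$. The only point that needs care is that the uniform boundedness of the two \emph{estimated} objects genuinely requires either the sup-norm consistency/concentration hypotheses on $\hat m$ and $\hat\Gamma$ (Assumption \ref{rate:sup}(i) and \eqref{rate:gamma}, which are standing conditions in the proof section) or else the explicit constructions of $\hat\Gamma$ in Remark \ref{rk:gamma:est}; the write-up should state explicitly which of these it invokes so that the claim "when $n$ grows, wpa1" is unambiguous.
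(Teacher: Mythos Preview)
Your proposal is correct and follows essentially the same approach as the paper: bound $\|m\|_{\mathbb{X}}$ directly from Assumption \ref{am:moment}(iii), bound $\|\Gamma\|_{\mathbb{X}}$ by pulling the norm inside the conditional expectation and invoking $\|\Psi\|_{\mathbb{X}\times\mathbb{T}}\precsim\sqrt{K}$, and then handle the hatted versions by the triangle inequality together with the sup-norm consistency conditions. Your observation that the bounds on $\|\hat m\|_{\mathbb{X}}$ and $\|\hat\Gamma\|_{\mathbb{X}}/\sqrt{K}$ tacitly require Assumption \ref{rate:sup}(i) and \eqref{rate:gamma} (not just Assumptions \ref{am:bspline} and \ref{am:moment} as the lemma statement lists) is well taken; the paper's proof invokes these as well.
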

\begin{proof}[of Lemma \ref{lm:approx}]
By Assumption \ref{am:moment}, we have
$
\|m\|_{\mathbb{X}} = \sup_{x\in \mathbb{X}}\big|\E(Y\mid X = x)\big| \precsim 1
$
as $n\rightarrow +\infty$. By Lemma \ref{am:psi} and the fact $\|\cdot\|$ in convex,
\bee\nonumber
\|\M\Gamma\|_{\mathbb{X}} &= \sup_{x\in\mathbb{X}}\big\|\E\{\Psi(X,T)\mid X=x\}\big\|
\\
&\leq \sup_{x\in\mathbb{X}}E\big\{\big\|\Psi(X,T)\big\|\mid X=x\big\}
\\
&\leq \sup_{(x,t)\in\mathbb{X}\times\mathbb{T}}\big\|\Psi(x,t)\big\|
\\
&\precsim \sqrt{K}.
\ee
\par
Finally, recalling \eqref{rate:m} and 
\eqref{rate:gamma}, by the triangle inequality we have $$\|\hat{m}\|_{\mathbb{X}} \leq \|m\|_{\mathbb{X}} + \|\hat{m} - m\|_{\mathbb{X}}\precsim 1 + o_\p(1),$$ which implies $\|\hat{m}\|_{\mathbb{X}}\precsim 1$, wpa1. Similar argument also yields $\|\hat{\M\Gamma}\|_{\mathbb{X}}/\sqrt{K} \precsim 1$ wpa1.
\end{proof}

\begin{lemma}\label{lm:QGB}
Suppose Assumptions \ref{A:CS}, \ref{am:bspline}, \ref{am:densX}  hold, and also \eqref{rate:gamma} holds. When $n\rightarrow +\infty$, we have  $\|\bar{G}_{n}-G_n\|_2 = o_\p(r_{\gamma}^2)$, and $\|\bar{R}_n\|_2 \precsim 1$ wpa1.
\end{lemma}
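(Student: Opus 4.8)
The plan is to first observe that the penalty terms cancel. Since $G_n = R_n + \rho Q_n$ and $\bar{G}_n = \bar{R}_n + \rho Q_n$ with the \emph{same} matrix $Q_n$, we have $\bar{G}_n - G_n = \bar{R}_n - R_n$, so it suffices to bound $\|\bar{R}_n - R_n\|_2$. Using the two-fold sample-splitting simplification of $\mathsection$\ref{sec:prea}, once we condition on the independent sample used to fit the nuisance functions, $\hat{\Gamma}$ is a fixed measurable function of $x$, so $\bar{R}_n = P[\{\Psi(X,T) - \hat{\Gamma}(X)\}\{\Psi(X,T) - \hat{\Gamma}(X)\}^\T]$ is an honest (conditional) expectation over $(X,T)$. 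Setting $\Delta(x) = \hat{\Gamma}(x) - \Gamma(x)$ and expanding $\{\Psi - \hat{\Gamma}\}\{\Psi - \hat{\Gamma}\}^\T = \{\Psi - \Gamma\}\{\Psi - \Gamma\}^\T - \{\Psi - \Gamma\}\Delta^\T - \Delta\{\Psi - \Gamma\}^\T + \Delta\Delta^\T$, taking expectations gives $\bar{R}_n - R_n = -P[\{\Psi - \Gamma\}\Delta^\T] - P[\Delta\{\Psi - \Gamma\}^\T] + P[\Delta\Delta^\T]$.

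The key step is that the two cross terms vanish exactly. Since $\Delta(X)$ depends only on $X$ (and the held-out sample) while $\Gamma(x) = E_\varpi\{\Psi(X,T)\mid X = x\}$ is the exact conditional mean, the tower property gives $P[\{\Psi(X,T) - \Gamma(X)\}\Delta(X)^\T] = P[\,E\{\Psi(X,T) - \Gamma(X)\mid X\}\,\Delta(X)^\T\,] = 0$, and similarly for its transpose. Hence $\bar{R}_n - R_n = P[\Delta(X)\Delta(X)^\T] = \int_{\mathbb{X}}\{\hat{\Gamma}(x) - \Gamma(x)\}\{\hat{\Gamma}(x) - \Gamma(x)\}^\T d\mathcal{P}(x)$, which is positive semidefinite.

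Now I would invoke the nuisance rate \eqref{rate:gamma2}, which states precisely that the square root of the spectral norm of this matrix is $o_P(r_\gamma)$; squaring yields $\|\bar{G}_n - G_n\|_2 = \|\bar{R}_n - R_n\|_2 = o_P(r_\gamma^2)$, the first assertion. For the second, write $\bar{R}_n = R_n + P[\Delta\Delta^\T]$ with both summands positive semidefinite, so $\|\bar{R}_n\|_2 \le \|R_n\|_2 + \|P[\Delta\Delta^\T]\|_2$; by Lemma \ref{am:svb}(i), $\|R_n\|_2 \precsim 1$, while $\|P[\Delta\Delta^\T]\|_2 = o_P(r_\gamma^2) = o_P(1)$ since $r_\gamma \precsim 1$, and therefore $\|\bar{R}_n\|_2 \precsim 1$ with probability approaching $1$.

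The computation is short, and the only step needing genuine care is the cross-term cancellation: it relies on $\Gamma$ being \emph{exactly} $E_\varpi\{\Psi\mid X\}$ together with the independence afforded by sample splitting, which is what licenses pulling $\Delta(X)$ out of the inner conditional expectation. Without exploiting this exact orthogonality one would be forced into a cruder Cauchy--Schwarz bound on the cross terms that would not deliver the sharp $o_P(r_\gamma^2)$ rate required downstream.
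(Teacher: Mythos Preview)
Your proof is correct and follows essentially the same route as the paper: cancel the $\rho Q_n$ terms so that $\bar G_n - G_n = \bar R_n - R_n$, expand with $\Delta = \hat\Gamma - \Gamma$, kill the cross terms via the tower property (since $\Gamma(x) = E\{\Psi(X,T)\mid X=x\}$ exactly and $\Delta$ depends only on $X$ under sample splitting), and identify the remainder as $P[\Delta\Delta^\T]$, whose spectral norm is $o_P(r_\gamma^2)$ by \eqref{rate:gamma2}; the bound on $\|\bar R_n\|_2$ then follows from Lemma~\ref{am:svb}(i) and the triangle inequality. You are also right that the operative rate hypothesis is \eqref{rate:gamma2} rather than \eqref{rate:gamma} as written in the lemma statement---the paper's own proof uses exactly the quantity in \eqref{rate:gamma2}.
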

\begin{proof}[of Lemma \ref{lm:QGB}]
 First we decompose, 
\bee\label{GGdiff}
\bar{G}_{n}-G_n &= \p\big\{(\hat{\M\Gamma} - \Psi)(\hat{\M\Gamma} - \Psi)^\T\big\} + \rho  Q_n - E\big\{({\M\Gamma} - \Psi)({\M\Gamma} - \Psi)^\T\big\} - \rho Q_n
\\
& = \bar{R}_n - R_n
\\
&=\p\big\{(\hat{\M\Gamma} - \Psi )(\hat{\M\Gamma} - \Psi\}^\T\big\} - \p\big\{({\M\Gamma} - \Psi )({\M\Gamma} - \Psi)^\T\big\}
\\
&=\p\big\{(\hat{\M\Gamma} - \M\Gamma)(\hat{\M\Gamma} - {\M\Gamma}\}^\T\big\} + \p\big\{(\hat{\M\Gamma} - \M\Gamma )({\M\Gamma} - \Psi\}^\T\big\} + \p\big\{(\M\Gamma - \M\Psi)(\hat{\M\Gamma} -  \M\Gamma\}^\T\big\} 
\\
&+ E\big\{(\M\Gamma - \M\Psi)({\M\Gamma} -  \Psi)^\T\big\}-E\big\{({\M\Gamma} - \Psi )({\M\Gamma} - \Psi)^\T\big\}
\\
&= \p\big\{(\hat{\M\Gamma} - \M\Gamma)(\hat{\M\Gamma} - {\M\Gamma})^\T\big\}.
\ee
Note here $P\big\{(\hat{\M\Gamma} - \M\Gamma )({\M\Gamma} - \Psi)^\T\big\} = P\big\{(\M\Gamma - \M\Psi )(\hat{\M\Gamma} -  \M\Gamma)^\T\big\} =  0$ as
\bee\nonumber
&\p\big[\{\hat{\M\Gamma}(X) - \M\Gamma(X) \}\{{\M\Gamma}(X) - \Psi(X,T)\}^\T\big]
\\
&=\p\Big[\{\hat{\M\Gamma}(X) - \M\Gamma(X) \}\cdot E\big[\{{\M\Gamma}(X) - \Psi(X,T)\}^\T\mid X\big]\Big]
\\
&=\p\big[\{\hat{\M\Gamma}(X) - \M\Gamma(X) \}\{{\M\Gamma}(X) - {\M\Gamma}(X)\}^\T\big]
\\
&= 0,
\ee
where the first equality follows by the law of total expectation. Thus $\|\bar{G}_{n}-G_n\|_2 = \big\|\p\big\{(\hat{\M\Gamma} - \M\Gamma )(\hat{\M\Gamma} - {\M\Gamma})^\T\big\}\big\|_2 = o_{{P}}(r_{\gamma}^2)$ follows  from \eqref{rate:gamma}. In addition, by Lemma \ref{am:svb} and \eqref{GGdiff}, one has
$
\|\bar{R}_n\|_2 \leq \|R_n\|_2 + \|\bar{G}_{n}-G_n\|_2 \precsim 1 + o_{{P}}(1)\precsim 1
$ wpa1 since $r_{\gamma}\precsim 0$.
\end{proof}
\begin{lemma}\label{lm:iG2iG}
Suppose the general settings of Theorem \ref{thm:main}  hold. We have 
\begin{itemize}
\item[(i)] $\|\hat{Q}_n - Q_n\|_2 = \mathcal{O}_{{P}}\big(\sqrt{{K\log n}/{n}}\big)$, $\|\hat{G}_n - G_n\|_2 = \mathcal{O}_{{P}}(\sqrt{K\log n/n})$, and we also have 
$\|\hat{Q}_n\|_2\precsim 1$ and $\|P_n(\Gamma\Gamma^\T)\|_2\precsim 1$, wpa1;
\item[(ii)] $\|\hat{G}^{-1}_n - G_n^{-1}\|_2 = \mathcal{O}_{{P}}(\rho^{-2}\sqrt{K\log n/n})$; 
\item[(iii)]  We have $\|\tilde{\Sigma}^{-1}\|_2 \precsim \beta_n^{-1}$, $\|\tilde{\Sigma}_{\perp}^{-1}\|_2\precsim \rho^{-1}$, $\|\tilde{U}_{\perp}^\T U\|_2  \precsim \rho \beta_n^{-1}$. Additionally assume $\rho \prec \sqrt{K\log n/n}$, we have $\sigma_{\min}(\tilde{U}^\T U) \rightarrow 1$;
\item[(iv)] 
We have $\|\M\Sigma^{-1}\|_2 \precsim \beta_n^{-1}$, $\|\hat{\M\Sigma}^{-1}\|_2 \precsim \beta_n^{-1}$,  $\|\hat{\M\Sigma}_{\perp}^{-1}\|_2\precsim  \rho^{-1}$ wpa1. Additionally assume $\rho \prec \sqrt{K\log n/n}$, we have $\|\hat{U}^\T_{\perp}{U} \|_2 =\mathcal{O}_{{P}}\big(\beta_n^{-1}\sqrt{K\log n/n}\big)$. 
\item[(v)] {\color{black}Recall $\hat{A}_n$, $\hat{B}_n$ in Algorithm \ref{alg:sigma}, and let $A_n = \tilde{U}\tilde{\Sigma}^{-1}\tilde{U}^\T$, $B_n = E\big[\{\Psi(X,T) - \Gamma(X)\}\{\Psi(X,T) - \Gamma(X)\}^\T\{Y - \mu(X,T)\}^2\big]$ be their population counterparts. Further assuming the conditions in the confidence interval part of  Theorem \ref{thm:main}  hold, we have $\|\hat{A}_n\|_2$, $\|\hat{B}_n\|_2$, $\|A_n\|_2$, $\|B_n\|_2$ are all constantly bounded wpa1. In addition, we have, 
\bee\nonumber
\|\hat{A}_n - A_n\|_2 = o_P(1),\quad \|\hat{B}_n - B_n\|_2 = o_P(1).
\ee}
\end{itemize}
\end{lemma}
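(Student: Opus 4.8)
The lemma is a package of matrix-perturbation estimates, and every item reduces either to (a) a Bernstein-type matrix concentration inequality applied to a centered empirical average of $\Psi$-valued outer products, exploiting the uniform bound $\|\Psi\|_{\mathbb{X}\times\mathbb{T}}\precsim\sqrt{K}$ (Lemma~\ref{am:psi}) together with Assumption~\ref{am:Kn} (which forces $K\log n/n\to 0$), or (b) deterministic identities built around the exact low-rank structure of $R_n$, namely $R_nU_\perp=0$ with $\mathrm{rank}(R_n)=\zeta$, and the PSD ordering $G_n\succeq\rho Q_n\succsim\rho I$. For part (i), I would write $\hat Q_n-Q_n=(P_n-P)\{\Psi\Psi^\T\}$ and apply matrix Bernstein conditionally on the nuisance-training fold (summands have spectral norm $\precsim K$), giving the $\mathcal{O}_P(\sqrt{K\log n/n})$ rate; then $\|\hat Q_n\|_2\le\|Q_n\|_2+o_P(1)\precsim 1$ wpa1 by Lemma~\ref{am:svb}(i), and the same for $\|P_n(\Gamma\Gamma^\T)\|_2$. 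For $\|\hat G_n-G_n\|_2$, decompose $\hat G_n-G_n=(\hat R_n-\bar R_n)+(\bar R_n-R_n)+\rho(\hat Q_n-Q_n)$: the middle term is $o_P(r_\gamma^2)=o_P(\sqrt{K\log n/n})$ by Lemma~\ref{lm:QGB} and condition~(i) of Theorem~\ref{thm:main}; the first is $(P_n-P)\{(\Psi-\hat\Gamma)(\Psi-\hat\Gamma)^\T\}$, again a conditional average with summand norm $\precsim K$ (by Lemma~\ref{lm:approx}), hence $\mathcal{O}_P(\sqrt{K\log n/n})$; the last is lower order since $\rho\prec\sqrt{K\log n/n}$ by condition~(iii).

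\textbf{Parts (ii)--(iv).} For (ii), use the resolvent identity $\hat G_n^{-1}-G_n^{-1}=-\hat G_n^{-1}(\hat G_n-G_n)G_n^{-1}$, bound $\|G_n^{-1}\|_2\precsim\rho^{-1}$ from $G_n\succeq\rho Q_n\succsim\rho I$, and $\|\hat G_n^{-1}\|_2\precsim\rho^{-1}$ wpa1 from $\hat G_n\succeq\rho\hat Q_n\succsim\rho I$ (since $\hat Q_n\succsim I$ wpa1 by (i) and $\lambda_{\min}(Q_n)\succsim 1$), then combine with $\|\hat G_n-G_n\|_2=\mathcal{O}_P(\sqrt{K\log n/n})$. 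Parts (iii)--(iv) are spectral-gap arguments. From $R_n\preceq G_n$ we get $\sigma_\zeta(G_n)\ge\sigma_\zeta(R_n)=\beta_n$ so $\|\tilde\Sigma^{-1}\|_2\le\beta_n^{-1}$, and $\sigma_K(G_n)=\lambda_{\min}(G_n)\succsim\rho$ so $\|\tilde\Sigma_\perp^{-1}\|_2\precsim\rho^{-1}$; also $\|\Sigma^{-1}\|_2=\beta_n^{-1}$ by definition of $\beta_n$, and Weyl with condition~(ii) gives $\sigma_\zeta(\hat G_n)\ge\sigma_\zeta(G_n)-\|\hat G_n-G_n\|_2\succsim\beta_n$ wpa1, hence $\|\hat\Sigma^{-1}\|_2\precsim\beta_n^{-1}$, while $\sigma_K(\hat G_n)\succsim\rho$ from (ii) gives $\|\hat\Sigma_\perp^{-1}\|_2\precsim\rho^{-1}$. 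For the subspace angle $\tilde U_\perp^\T U$: from $R_nU_\perp=0$, $G_nU_\perp=\rho Q_nU_\perp$; projecting onto $\tilde U$ gives $\tilde\Sigma\tilde U^\T U_\perp=\rho\,\tilde U^\T Q_nU_\perp$, so $\|\tilde U_\perp^\T U\|_2=\|\tilde U^\T U_\perp\|_2\le\rho\|\tilde\Sigma^{-1}\|_2\|Q_n\|_2\precsim\rho\beta_n^{-1}$ (the two norms are equal because $\tilde U^\T U$ is square), whence $\sigma_{\min}(\tilde U^\T U)^2=1-\|\tilde U_\perp^\T U\|_2^2\to 1$ since $\rho\prec\beta_n$. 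Finally $\|\hat U_\perp^\T U\|_2\le\|\hat U_\perp^\T\tilde U\|_2+\|\tilde U_\perp^\T U\|_2$: the first term is a Davis--Kahan bound with eigengap $\sigma_\zeta(G_n)-\sigma_{\zeta+1}(G_n)\succsim\beta_n-\rho\|Q_n\|_2\succsim\beta_n$ (using $\sigma_{\zeta+1}(G_n)\le\sigma_{\zeta+1}(R_n)+\rho\|Q_n\|_2\precsim\rho$), giving $\|\hat U_\perp^\T\tilde U\|_2\precsim\beta_n^{-1}\sqrt{K\log n/n}$; the second is $\precsim\rho\beta_n^{-1}\prec\beta_n^{-1}\sqrt{K\log n/n}$, yielding $\mathcal{O}_P(\beta_n^{-1}\sqrt{K\log n/n})$.

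\textbf{Part (v).} Now $\beta_n\asymp 1$ and $\rho\asymp n^{-1/2}$, so $\|A_n\|_2=\|\tilde\Sigma^{-1}\|_2\precsim 1$ and $\|\hat A_n\|_2=\|\hat\Sigma^{-1}\|_2\precsim 1$ wpa1 by (iii)--(iv), while $\|B_n\|_2\precsim\|R_n\|_2\precsim 1$ because $B_n\preceq\big(\sup_{x,t}\mathrm{Var}(Y\mid X=x,T=t)\big)R_n$ by Assumption~\ref{am:moment}, and $\|\hat B_n\|_2\le\|B_n\|_2+\|\hat B_n-B_n\|_2$. To prove $\|\hat B_n-B_n\|_2=o_P(1)$ I would peel off layers by cross-fitting (so $\hat\mu^{(j)},\hat\Gamma^{(j)}$ are independent of $\mathcal{S}_j$): replace $\hat\mu^{(j)}$ by $\mu$ via $(Y-\hat\mu)^2-(Y-\mu)^2=(\mu-\hat\mu)(2Y-\mu-\hat\mu)$ using $\|\hat\mu-\mu\|_{\mathbb{X}\times\mathbb{T}}=o_P(1)$ and uniform boundedness; replace $\hat\Gamma^{(j)}$ by $\Gamma$ by expanding $(\Psi-\hat\Gamma)(\Psi-\hat\Gamma)^\T-(\Psi-\Gamma)(\Psi-\Gamma)^\T$ into cross terms and applying the matrix Cauchy--Schwarz inequality $\|\sum_i a_ib_i^\T\|_2\le\|\sum_i a_ia_i^\T\|_2^{1/2}\|\sum_i b_ib_i^\T\|_2^{1/2}$ together with the $\mathcal{L}^2$-rate \eqref{rate:gamma2} (the uniform rate \eqref{rate:gamma} alone is insufficient, since $\|\hat\Gamma-\Gamma\|_{\mathbb{X}}^2$ carries an extra factor $K$); and finally the remaining centered average $(P_n-P)\big[(\Psi-\Gamma)(\Psi-\Gamma)^\T(Y-\mu)^2\big]$ is $\mathcal{O}_P(\sqrt{K\log n/n})=o_P(1)$ by matrix Bernstein. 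For $\|\hat A_n-A_n\|_2=o_P(1)$, I would \emph{not} route through (ii) (whose $\rho^{-2}$ factor is fatal when $\rho\asymp n^{-1/2}$), but instead write $A_n=(G_n^{(\zeta)})^{+}$ and $\hat A_n=(\hat G_n^{(\zeta)})^{+}$ for the rank-$\zeta$ truncations $G_n^{(\zeta)}=\tilde U\tilde\Sigma\tilde U^\T$, $\hat G_n^{(\zeta)}=\hat U\hat\Sigma\hat U^\T$, invoke the equal-rank pseudoinverse perturbation bound $\|\hat A_n-A_n\|_2\precsim\max(\|\hat A_n\|_2^2,\|A_n\|_2^2)\,\|\hat G_n^{(\zeta)}-G_n^{(\zeta)}\|_2\precsim\beta_n^{-2}\|\hat G_n^{(\zeta)}-G_n^{(\zeta)}\|_2$, and bound $\|\hat G_n^{(\zeta)}-G_n^{(\zeta)}\|_2\le\|\hat G_n-G_n\|_2+\|\hat\Sigma_\perp\|_2+\|\tilde\Sigma_\perp\|_2\precsim\sqrt{K\log n/n}+\rho=o_P(1)$, using $\|\tilde\Sigma_\perp\|_2=\sigma_{\zeta+1}(G_n)\precsim\rho$ and $\|\hat\Sigma_\perp\|_2\le\sigma_{\zeta+1}(G_n)+\|\hat G_n-G_n\|_2$.

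\textbf{Main obstacle.} The delicate point is part (v), and it is twofold. First, $\|\hat A_n-A_n\|_2=o_P(1)$ cannot be obtained from the resolvent bound (ii) because $\rho\to 0$; one must notice that $\hat A_n$ only inverts the well-separated top-$\zeta$ block, and therefore use an equal-rank pseudoinverse perturbation inequality, which in turn hinges on the Weyl-type control $\sigma_{\zeta+1}(G_n)\precsim\rho$ showing that the rank-$\zeta$ truncation barely perturbs $G_n$. Second, substituting the estimated $\hat\Gamma$ into $\hat B_n$ forces the use of the $\mathcal{L}^2$ rate \eqref{rate:gamma2} and a matrix Cauchy--Schwarz split rather than a crude uniform bound, because $\|\hat\Gamma-\Gamma\|_{\mathbb{X}}^2$ scales with an additional factor of $K$. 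Everything else is bookkeeping with Weyl's inequality, matrix Bernstein concentration, Davis--Kahan, and the PSD ordering.
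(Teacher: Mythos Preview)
Your proposal is correct and follows essentially the same architecture as the paper: Rudelson/Bernstein-type matrix concentration for part~(i), the resolvent identity with the PSD lower bound $G_n\succeq\rho Q_n$ for part~(ii), Weyl plus Davis--Kahan for the spectral-gap statements in (iii)--(iv), and a layer-by-layer replacement of $(\hat\mu,\hat\Gamma)$ by $(\mu,\Gamma)$ followed by concentration for $\hat B_n$ in part~(v).

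There are two places where your route differs slightly from the paper's, both harmless. For $\|\tilde U_\perp^\T U\|_2$ in (iii) you exploit the exact identity $G_nU_\perp=\rho Q_nU_\perp$ and invert $\tilde\Sigma$ directly, whereas the paper simply invokes Davis--Kahan with perturbation $\rho Q_n$; your argument is a clean special-case computation that yields the identical bound. For $\|\hat A_n-A_n\|_2$ in (v), the paper writes a resolvent-style identity $\hat A_n-A_n=\hat A_n(\tilde U\tilde\Sigma\tilde U^\T-\hat U\hat\Sigma\hat U^\T)A_n$ and then bounds $\|\tilde U\tilde\Sigma\tilde U^\T-\hat U\hat\Sigma\hat U^\T\|_2$ via Eckart--Young--Mirsky exactly as you do for $\|\hat G_n^{(\zeta)}-G_n^{(\zeta)}\|_2$; your use of the equal-rank Wedin/Stewart pseudoinverse perturbation bound is the textbook justification for that step (the naive resolvent identity does not literally hold for pseudoinverses whose ranges differ), so your formulation is if anything more careful. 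The two-step bound $\|\hat U_\perp^\T U\|_2\le\|\hat U_\perp^\T\tilde U\|_2+\|\tilde U_\perp^\T U\|_2$ in (iv) is also fine (triangle inequality for the $\sin\Theta$ metric), while the paper applies Davis--Kahan once between $\hat G_n$ and $R_n$; both deliver the same rate.
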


\begin{proof}[of Lemma \ref{lm:iG2iG}]
During the proofs, we will frequently use several classic matrix concentration and perturbation results. For the completeness, we first present these results and then get into  the main proof.
\begin{proposition}[Rudelson's matrix LLN \citep{rudelson1999random}]\label{rlnn}Let $\mathcal{R}_1,\dots,\mathcal{R}_n \in \RR^{d\times d}$ be i.i.d. random matrices with $d\geq 2$. Suppose $\mathcal{R} = E(\mathcal{R}_i)$ and $\|\mathcal{R}_i\|_2 \leq C$ a.s., for any $i\in[n]$, then
\bee\nonumber
\E\big\|P_n(\mathcal{R}) - \mathcal{R}^*\big\|_2 \precsim \frac{C\log d}{n} + \sqrt{\frac{C\|\mathcal{R}\|_2 \log d}{n}}.
\ee
\end{proposition} 
\begin{proposition}[Weyl's in equality \citep{weyl1912asymptotische}]\label{po:weyl}Let $\hat{\mathcal{R}}$ and $\mathcal{R}$ be $d \times d$ symmetric matrices. We have for any $i \in [d]$,
\bee\nonumber
\lambda_i(\mathcal{R})  +  \lambda_d(\hat{\mathcal{R}} - \mathcal{R}) \leq \lambda_{i}(\hat{\mathcal{R}}) \leq \lambda_i(\mathcal{R}) + \lambda_1(\hat{\mathcal{R}} - \mathcal{R}).
\ee
Therefore, if $\hat{\mathcal{R}} - \mathcal{R}$ is positive semi-definitive or $\hat{\mathcal{R}}\succeq \mathcal{R}$, one has $\lambda_i(\mathcal{R})\leq \lambda_{i}(\hat{\mathcal{R}})$ for any $i \in [n]$.

\end{proposition} 
\begin{proposition}[Davis-Kahan theorem \citep{davis1970rotation}]\label{dk}Let  symmetric matrix $\hat{\mathcal{R}}\in\RR^{d\times d}$ be the perturbed version of a symmetric matrix $\mathcal{R}\in\RR^{d\times d}$ such that 
\bee\nonumber
\hat{\mathcal{R}} = \mathcal{R} + \mathcal{E}.
\ee
Define their singular value decompositions
\bee\nonumber
\hat{\mathcal{R}} &= \hat{\mathcal{U}} \hat{\mathcal{S}}\hat{\mathcal{U}}^\T +  \hat{\mathcal{U}}_\perp \hat{\mathcal{S}}_\perp\hat{\mathcal{U}}_\perp^\T,
\\
{\mathcal{R}} &= {\mathcal{U}} {\mathcal{S}}{\mathcal{U}}^\T +  \hat{\mathcal{U}}_\perp {\mathcal{S}}_\perp{\mathcal{U}}_\perp^\T,
\ee
where $\hat{\mathcal{U}}$ and $\hat{\mathcal{S}}$ correspond to the top-$r$ singular vectors and  top-$r$ singular values of $\hat{\mathcal{R}}$, respectively; similar notation also holds for ${\mathcal{R}}$. We then have
\bee\nonumber
\|\hat{\mathcal{U}}_\perp^\T\mathcal{U}\|_2 \leq \frac{\|\hat{\mathcal{U}}^\T\mathcal{E}\|_2}{\sigma_r(\hat{\mathcal{R}}) - \sigma_{r + 1}({\mathcal{R}})}.
\ee
\end{proposition} 

\noindent\fbox{Proof of Lemma \ref{lm:iG2iG} (i)} By definition one has
\bee\label{dG:main}
\hat{G}_n - \bar{G}_n &= \Big[\p_n\big\{(\Psi - \hat{\M\Gamma})(\Psi - \hat{\M\Gamma})^\T\big\} - \p\big\{(\Psi - \hat{\M\Gamma})(\Psi - \hat{\M\Gamma})^\T\big\}\Big] +\rho\big(\hat{Q}_n - Q_n\big)
\ee
Recalling that by Lemma \ref{lm:approx} and Lemma \ref{lm:QGB}, we have wpa1,
\bee\label{con:nui}
&\|\M\Psi - \hat{\Gamma}\|_{\mathbb{X}\times \mathbb{T}}^2\leq (\|\M\Psi\|_{\mathbb{X}\times \mathbb{T}} + \|\hat{\Gamma}\|_{\mathbb{X}})^2
\\
&\quad\quad\quad\quad\quad\,\precsim K,
\\
&\text{and }\big\|\p\big\{(\Psi - \hat{\M\Gamma})(\Psi - \hat{\M\Gamma})^\T\big\}\big\|_2 \precsim 1.
\ee
Now we first condition on given $\hat{\M\Gamma}(\cdot)$ which satisfies \eqref{con:nui} as $n\rightarrow +\infty$. Since $\hat{\Gamma}(\cdot)$ is trained separately, one has $\{(\Psi_{i} - \hat{\M\Gamma}_{i})(\Psi_{i} - \hat{\M\Gamma}_{i})^\T\}_{i = 1}^n$ are i.i.d. now random matrices and 
\bee\nonumber
&\p\Big[(\Psi_{i} - \hat{\M\Gamma}_{i})(\Psi_{i} - \hat{\M\Gamma}_{i})^\T - \p\big\{(\Psi - \hat{\M\Gamma})(\Psi - \hat{\M\Gamma})^\T\big\}\Big] = 0.
\ee 
By \eqref{con:nui}, we also have
\bee\nonumber
\big\|(\M\Psi_i - \hat{\Gamma}_i)(\M\Psi_i - \hat{\Gamma}_i)^\T\big\|_2  &\leq \sup_{x\in\mathbb{X}}\big\|\M\Psi(x) - \hat{\Gamma}(x)\big\|^2
\\
&= \|\M\Psi - \hat{\Gamma}\|_{\mathbb{X}\times \mathbb{T}}^2 
\\
&\precsim K.
\ee 
By taking $\mathcal{R}_i = (\M\Psi_i - \hat{\Gamma}_i)(\M\Psi_i - \hat{\Gamma}_i)^\T$ in Proposition \ref{rlnn} and given $\hat{\Gamma}(\cdot)$, one has
\bee\label{PPPsiGamma}
\big\|\p_n\big\{(\Psi - \hat{\M\Gamma})(\Psi - \hat{\M\Gamma})^\T\big\} - \p\big\{(\Psi - \hat{\M\Gamma})(\Psi - \hat{\M\Gamma})^\T\big\}\big\|_2 &\precsim \frac{K\log K}{n} + \sqrt{\frac{K\log K}{n}}
\\
&\precsim \sqrt{\frac{K\log n}{n}},
\ee
as $(K\log K)/n \rightarrow 0$ under Assumption \ref{am:Kn}. Since the conditioned event \eqref{con:nui} happens wpa1, we can directly uncondition it and \eqref{PPPsiGamma} implies
\bee\nonumber
\big\|\p_n\big\{(\Psi - \hat{\M\Gamma})(\Psi - \hat{\M\Gamma})^\T\big\} - \p\big\{(\Psi - \hat{\M\Gamma})(\Psi - \hat{\M\Gamma})^\T\big\}\big\|_2 = \mathcal{O}_P\Big(\sqrt{\frac{K\log n}{n}}\Big).
\ee Similarly, we can show $$\|\hat{Q}_n - Q_n\|_2 = \mathcal{O}_{{P}}\big(\sqrt{{K\log n}/{n}}\big).$$ Note this also implies $\|\hat{Q}_n - Q_n\|_2 \rightarrow 0$ wpa1, since $\sqrt{K\log n/n}$ vanishes under Assumption \ref{am:Kn}. Thus we further have $$\|\hat{Q}_n\|_2 \leq  \|Q_n\|_2 + \|\hat{Q}_n - Q_n\|_2 \precsim \|Q_n\|_2 \precsim 1,$$ wpa$1$, by Lemma \ref{am:svb}. Similarly, we can also show $\|P_n(\Gamma\Gamma^\T)\|_2 \precsim 1$.
\par In summary, by \eqref{dG:main} and the bounds derived above, one has
 \bee\nonumber
 \|\hat{G}_n - \bar{G}_n\|_2 &\leq \Big\|\p_n\big\{(\Psi - \hat{\M\Gamma})(\Psi - \hat{\M\Gamma})^\T\big\} - \p\big\{(\Psi - \hat{\M\Gamma})(\Psi - \hat{\M\Gamma})^\T\big\}\Big\|_2 +\rho\big\|\hat{Q}_n - Q_n\big\|_2
 \\
 &= \mathcal{O}_{{P}}\big(\sqrt{K\log n/n}\big)
 \ee whenever $\rho\precsim 1$. Finally, by the triangle inequality, one has
\bee\label{concen:GG}
\|\hat{G}_n - G_n\|_2\leq \|\hat{G}_n - \bar{G}_n\|_2 + \|\bar{G}_n - {G}_n\|_2 =\mathcal{O}_{{P}}(\sqrt{K\log n/n}),
\ee 
follows from Lemma \ref{lm:QGB}, whenever $r_{\gamma}^2\precsim \sqrt{K\log n/n}$.
\par
\noindent\fbox{Proof of Lemma \ref{lm:iG2iG} (ii)} We first show $\hat{G}_n$ is invertible wpa1. By definition, we write 
\bee\label{GnQ}
\hat{G}_n = \p_n\big\{(\hat{\Psi} - \hat{\M\Gamma})(\hat{\Psi} - \hat{\M\Gamma})^\T\big\} + \rho\hat{Q}_n \succeq \rho\hat{Q}_n,
\ee
as both $\p_n\big\{(\hat{\Psi} - \hat{\M\Gamma})(\hat{\Psi} - \hat{\M\Gamma})^\T\big\}$ and $\hat{Q}_n$ are positive semi-definitive matrices. Then \eqref{GnQ} and Proposition \ref{po:weyl} imply that
\bee\label{GrQbound}
{\color{black}\lambda_{\min}(\hat{G}_n) \geq  \lambda_{\min}(\rho\hat{Q}_n) = \rho\lambda_{\min}(\hat{Q}_n).}
\ee
 Recall $\|\hat{Q}_n - Q_n\|_2 \rightarrow 0$ wpa1. By \eqref{wy:res:2} in Proposition \ref{po:gweyl} with $\hat{\mathcal{R}} = \hat{Q}_n$ and $\mathcal{R} = Q_n$,  one has $\lambda_{\min}(\hat{Q}_n)\geq \lambda_{\min}(Q_n) - \|\hat{Q}_n - Q_n\|_2 \succsim 1$ wpa1. Thus by $\lambda_{\min}(\hat{Q}_n) \succsim 1$ and  \eqref{GrQbound} we conclude that,
\bee\label{GGlambda}
\lambda_{\min}(\hat{G}_n)\succsim \rho,
\ee 
wpa1,  thus $\hat{G}_n$ is invertible wpa1. On the other hand, recalling \eqref{Rdecom}, one has $\lambda_{\min}(G_n)\geq \rho\lambda_{\min}({Q}_n)\succsim  \rho$ by Lemma \ref{am:psi}. Now wpa1, we can decompose,
\bee\label{inv:decom}
\hat{G}^{-1}_n - G_n^{-1} = G_n^{-1}(G_n - \hat{G}_n)\hat{G}^{-1}_n,
\ee
which combining with \eqref{GGlambda} implies, wpa1,
\bee\label{decom}
\|\hat{G}^{-1}_n - G_n^{-1}\|_2 &\leq \|G_n^{-1}\|_2\|G_n - \hat{G}_n\|_2\|\hat{G}^{-1}_n\|_2
\\
&=\lambda_{\min}^{-1}(\hat{G}_n)\lambda_{\min}^{-1}(G_n)\|G_n - \hat{G}_n\|_2
\\
&\precsim \rho^{-2}\|\hat{G}_n - G\|_2.
\ee
Finally by \eqref{concen:GG} and \eqref{decom}, we conclude  $\|\hat{G}^{-1}_n - G_n^{-1}\|_2 = \mathcal{O}_{{P}}(\rho^{-2}\sqrt{K\log n/n})$.
\par
\noindent\fbox{Proof of Lemma \ref{lm:iG2iG} (iii)} Recall $G_n = R_n + \rho Q_n$. By definition, it is easy to see both $R_n$ and $\rho Q_n$ are positive semi-definitive matrices. Thus we have $G_n \succeq R_n$, and  by Proposition \ref{po:weyl},
\bee\label{Gnb1}
\sigma_{\zeta}(G_n) \geq \sigma_{\zeta}(R_n) = \beta_n,
\ee 
which implies that, 
\bee\label{tsigmab}
\big\|\tilde{\Sigma}^{-1}\big\|_2 =  \sigma^{-1}_{\zeta}(G_n) \leq \beta_n^{-1}.
\ee On the other hand, since $G_n \succeq \rho Q_n$, we have
\bee\label{Gnb2}
\sigma_{\min}(G_n) \geq \rho \sigma_{\min}(Q_n) \succsim \rho,
\ee
also by Lemma \ref{am:svb}. Therefore, $G_n$ is invertible and $\|\tilde{\Sigma}_{\perp}^{-1}\|_2 = \sigma_{\min}^{-1}(G_n)\precsim \rho^{-1}$. Finally, by taking $\hat{\mathcal{R}} = G_n$, $\mathcal{R}_n = R_n$, and $\mathcal{E} = \rho Q_n$ in Proposition \ref{dk}, we have
\bee\label{dvkbound}
\|\tilde{U}_{\perp}^\T U\|_2  &\leq \frac{\rho\| Q_n\|_2}{\sigma_{\zeta}(G_n) - \sigma_{\zeta + 1}(R_n)} 
\\
&\precsim \rho \beta_n^{-1},
\ee
recalling that $\|Q_n\|_2\precsim 1$, $\sigma_{\zeta}(G_n)\succsim \beta_n$, and $\sigma_{\zeta + 1}(R_n) = 0$ due to Lemma \ref{lm:svdR} and the fact that $R_n$ is rank-$\zeta$. Finally \citet[Lemma 1]{cai2018rate} implies 
\bee\label{dvkbound2}
\sigma_{\min}^2(\tilde{U}^\T U) = 1 - \|\tilde{U}^\T U_{\perp}\|_2. \ee
By $\rho \prec \sqrt{K\log n/n}\prec \beta_n$  and \eqref{dvkbound}-\eqref{dvkbound2}, we have $\sigma_{\min}^2(\tilde{U}^\T U) \rightarrow 1$ as $n\rightarrow +\infty$.
\par
\noindent\fbox{Proof of Lemma \ref{lm:iG2iG} (iv)} Recalling definition \eqref{def:rn}, one has
$
\|\M\Sigma^{-1}\|_2 = \sigma_{\zeta}^{-1} = \beta_n^{-1}.
$
By taking $\hat{\mathcal{R}} = \hat{G}_n$ and $\mathcal{R} = G_n$ in Proposition \ref{po:gweyl}, one has 
\bee\label{sigmasigmaG}
\big|\hat{\sigma}_{\zeta} - \sigma_{\zeta}(G_n)\big| &\leq \|\hat{G}_n - G_n\|_2
\\
& = \mathcal{O}_{{P}}(\sqrt{K\log n/n}).
\ee
On the other hand, recalling \eqref{Gnb1} and \eqref{Gnb2}, we have
\bee\nonumber
\sigma_{\zeta}(G_n)\succsim \beta_n + \rho,
\ee
which combining with \eqref{sigmasigmaG}, implies $$\hat{\sigma}_{\zeta} \asymp \beta_n + \rho,$$ wpa1, under the assumed condition $\beta_n + \rho \succ \sqrt{K\log n/n}$. This impies 
\bee\label{tsigmab2}
\|\hat{\Sigma}^{-1}\|_2 = \hat{\sigma}_{\zeta}^{-1}\precsim (\beta_n + \rho)^{-1}
\ee wpa1. By \eqref{GGlambda}, we  have $\|\hat{\M\Sigma}_\perp^{-1}\|_2=  \lambda_{\min}^{-1}(\hat{G}_n)\precsim \rho^{-1}$ wpa1.
\par
By taking $\hat{\mathcal{R}} = \hat{G}_n$ and $\mathcal{R} = R_n$ in  Proposition \ref{dk}, we have
\bee\nonumber
\|\hat{U}_{\perp}^\T {U}\|_2  &\leq \frac{\|\hat{G}_n - R_n\|_2}{\lambda_{\zeta}(\hat{G}_n) - \lambda_{\zeta + 1}(R_n)} 
\\
&\leq \frac{\|\hat{G}_n - G_n\|_2 + \rho\|{Q}_n\|_2}{\hat{\sigma}_{\zeta}}
\\
&=\mathcal{O}_{{P}}\big\{(\beta_n + \rho)^{-1}(\sqrt{K\log n/n} + \rho)\big\},
\ee
recalling that $\hat{\sigma}_{\zeta}\asymp  \beta_n + \rho$, $\|Q_n\|_2 \precsim 1$, and $\|\hat{G}_n - G_n\|_2 = \mathcal{O}_{{P}}(\sqrt{K\log n / n})$. The final results follow, after taking the assumed condition $\rho\prec\sqrt{K\log n/n}\prec \beta_n$ into account.
\par
\noindent\fbox{Proof of Lemma \ref{lm:iG2iG} (v)} {\color{black}First when $\beta_n\asymp 1$, clearly we have
\bee\label{Anbound}
\|A_n\|_2 \leq \|\tilde{U}\|_2^2\|\tilde{\Sigma}^{-1}\|_2 \precsim  \beta_n^{-1}\precsim 1,
\ee
by \eqref{tsigmab}. On the other hand, we have
\bee\nonumber
\hat{A}_n - A_n &= (\hat{U}\hat{\Sigma}\hat{U}^\T)^{-1} - (\tilde{U}\tilde{\Sigma}\tilde{U}^\T)^{-1}
\\
&= (\hat{U}\hat{\Sigma}\hat{U}^\T)^{-1}(\tilde{U}\tilde{\Sigma}\tilde{U}^\T - \hat{U}\hat{\Sigma}\hat{U}^\T )(\tilde{U}\tilde{\Sigma}\tilde{U}^\T)^{-1},
\ee
and thus
\bee\nonumber
\|\hat{A}_n - A_n\|_2 &\leq \|\hat{\Sigma}^{-1}\|_2\|\tilde{\Sigma}^{-1}\|_2\|\tilde{U}\tilde{\Sigma}\tilde{U}^\T - \hat{U}\hat{\Sigma}\hat{U}^\T\|_2
\\
&\precsim \|\tilde{U}\tilde{\Sigma}\tilde{U}^\T - \hat{U}\hat{\Sigma}\hat{U}^\T\|_2,
\ee
by \eqref{tsigmab} and \eqref{tsigmab2}. In addition, one has
\bee\label{diffvvv}
\|\tilde{U}\tilde{\Sigma}\tilde{U}^\T - \hat{U}\hat{\Sigma}\hat{U}^\T\|_2 &\leq 2\|\hat{G}_n - G_n  + \tilde{U}_{\perp}\tilde{\Sigma}_{\perp}\tilde{U}_{\perp}^\T \|_2
\\
&\leq 2\|\hat{G}_n - G_n \|_2 + 2\|\tilde{U}_{\perp}\tilde{\Sigma}_{\perp}\tilde{U}_{\perp}^\T\|_2 
\\
&= \mathcal{O}_{{P}}(\sqrt{K\log n / n} + \rho)
\\
&=o_P(1),
\ee
where the last two equalities are by the previous derived bounds, and the conditions that $K\precsim n^{a_2}$ and $\rho\rightarrow 0$ for some $a_2 < 1$. The first inequality of \eqref{diffvvv} follows by that $\hat{G}_n$ can be seen as a perturbed version of $\tilde{U}\tilde{\Sigma}\tilde{U}^\T$ such that
\bee\nonumber
\hat{G}_n = \tilde{U}\tilde{\Sigma}\tilde{U}^\T + \big(\hat{G}_n - G_n  + \tilde{U}_{\perp}\tilde{\Sigma}_{\perp}\tilde{U}_{\perp}^\T \big).
\ee
Thus as the best rank-$\zeta$ approximation of $\hat{G}_n$, the $\hat{U}\hat{\Sigma}\hat{U}^\T$ satisfies the first inequality of \eqref{diffvvv} by Eckart--Young--Mirsky theorem \citep{eckart1936approximation} such that,
\bee\nonumber
\|\hat{G}_n - \hat{U}\hat{\Sigma}\hat{U}^\T\|_2 &\leq \|\hat{G}_n - \tilde{U}\tilde{\Sigma}\tilde{U}^\T\|_2 
\\
&= \|\hat{G}_n - G_n  + \tilde{U}_{\perp}\tilde{\Sigma}_{\perp}\tilde{U}_{\perp}^\T\|_2,
\ee
and thus $\|\tilde{U}\tilde{\Sigma}\tilde{U}^\T - \hat{U}\hat{\Sigma}\hat{U}^\T\|_2 \leq \|\tilde{U}\tilde{\Sigma}\tilde{U}^\T - \hat{G}_n\|_2+ \|\hat{G}_n - \hat{U}\hat{\Sigma}\hat{U}^\T\|_2\leq 2\|\hat{G}_n - G_n  + \tilde{U}_{\perp}\tilde{\Sigma}_{\perp}\tilde{U}_{\perp}^\T\|_2$. Summarizing the results above we have
\bee\nonumber
\|\hat{A}_n - A_n\|_2 = o_P(1), 
\ee
and thus by \eqref{Anbound}, $\|\hat{A}_n\|_2 \leq \|A_n\|_2 + \|\hat{A}_n - A_n\|_2 \precsim 1$ wpa1.
\par
We note $\|\mu\|_{\mathbb{X}\times \mathbb{T}} < +\infty$ under Assumption \ref{am:moment} (iii). By the new condition that
\bee\label{mumucon}
\|\hat{\mu} - {\mu}\|_{\mathbb{X}\times\mathbb{T}} = o_P(1),
\ee
we have wpa1,
\bee\label{hmuc}
\|\hat{\mu}\|_{\mathbb{X}\times\mathbb{T}} &\leq \|{\mu}\|_{\mathbb{X}\times\mathbb{T}} + \|\hat{\mu} - {\mu}\|_{\mathbb{X}\times\mathbb{T}}
\\
&\precsim 1.
\ee
Under our simplified two-fold training setting such that one fold trains nuisance functions and one fold trains the proposed estimator and $\hat{\sigma}$, we can write $\hat{B}_n$ as 
\bee\nonumber
\hat{B}_n = \frac{1}{n}\sum_{i = 1}^n\{Y_i - \hat{\mu}(X_i,T_i)\}^2\{\Psi(X_i,T_i) - \hat{\Gamma}(X_i)\}\{\Psi(X_i,T_i) - \hat{\Gamma}(X_i)\}^\T.
\ee
We also define 
$$
\bar{B}_n = P\Big[\{Y - \hat{\mu}(X,T)\}^2\{\Psi(X,T) - \hat{\Gamma}(X)\}\{\Psi(X,T) - \hat{\Gamma}(X)\}^\T\Big].
$$
Similar to \eqref{con:nui} and \eqref{PPPsiGamma}, we show the convergence of $\hat{B}_n$ by matrix concentration. It is easy to see $P(\hat{B}_n) = \bar{B}_n$, with $\hat{\Gamma}(X)$ trained separately. Based on Assumption \ref{am:moment}, Lemma \ref{lm:QGB} and \eqref{mumucon}, we have, wpa1,
\begin{align}\nonumber
\|\bar{B}_n\|_2 &= \sup_{\|\ell\| = 1}P\Big[\{Y - \hat{\mu}(X,T)\}^2 \big[\ell^\T\{\Psi(X,T) - \hat{\Gamma}(X)\}\big]^2\Big]
\\\nonumber
&\leq \sup_{\|\ell\| = 1}P\Big[\big[2\{Y - {\mu}(X,T)\}^2 + 2\{{\mu}(X,T) - \hat{\mu}(X,T)\}^2\big]\big[\ell^\T\{\Psi(X,T) - \hat{\Gamma}(X)\}\big]^2\Big]
\\\nonumber
&=\sup_{\|\ell\| = 1}P\Big[\big[2\text{Var}(Y\mid X,T) + 2\{{\mu}(X,T) - \hat{\mu}(X,T)\}^2\big]\big[\ell^\T\{\Psi(X,T) - \hat{\Gamma}(X)\}\big]^2\Big]
\\\nonumber
&\leq \Big\{\sup_{(x,t)\in\mathbb{X}\times\mathbb{T}}2\text{Var}(Y\mid X = x, T = t) + 2\|\hat{\mu}- {\mu}\|_{\mathbb{X}\times\mathbb{T}}\Big\}\sup_{\|\ell\| = 1}P\Big[\big[\ell^\T\{\Psi(X,T) - \hat{\Gamma}(X)\}\big]^2\Big]
\\\nonumber
&\precsim \sup_{\|\ell\| = 1}P\Big[\big[\ell^\T\{\Psi(X,T) - \hat{\Gamma}(X)\}\big]^2\Big]
\\\nonumber
&=\big\|\p\big\{(\Psi - \hat{\M\Gamma})(\Psi - \hat{\M\Gamma})^\T\big\}\big\|_2 
\\\nonumber
&\precsim 1,
\end{align}
where the last inequality follows by \eqref{con:nui}. By Proposition \ref{rlnn}, one has
\bee\nonumber
\|\hat{B}_n - \bar{B}_n\|_2 = \mathcal{O}_P\Big(\sqrt{\frac{K\log n}{n}}\Big) = o_P(1).
\ee
In addition, we have
\bee\nonumber
&\bar{B}_n - B_n
\\
& = P\Big[\big[\{Y - \hat{\mu}(X,T)\}^2 - \{Y - {\mu}(X,T)\}^2\big]\{\Psi(X,T) - \hat{\Gamma}(X)\}\{\Psi(X,T) - \hat{\Gamma}(X)\}^\T\Big]
\\
& + P\Big[ \{Y - {\mu}(X,T)\}^2\{\Psi(X,T) - {\Gamma}(X)\}\{{\Gamma}(X) - \hat{\Gamma}(X)\}^\T\Big]
\\
& + P\Big[ \{Y - {\mu}(X,T)\}^2\{{\Gamma}(X) - \hat{\Gamma}(X)\}\{\Psi(X,T) - \hat{\Gamma}(X)\}^\T\Big]
\\
& = \Delta_{B,1} + \Delta_{B,2} + \Delta_{B,3}.
\ee
We bound the spectral norms of three terms on the right-hand side of above display, respectively. We have
\bee\nonumber
\Delta_{B,1} = P\Big[\{2Y - \hat{\mu}(X,T)-{\mu}(X,T)\}\{{\mu}(X,T) - \hat{\mu}(X,T) \}\{\Psi(X,T) - \hat{\Gamma}(X)\}\{\Psi(X,T) - \hat{\Gamma}(X)\}^\T\Big],
\ee
and thus wpa1,
\bee\nonumber
&\|\Delta_{B,1}\|_2
\\
&=\sup_{\|\ell\| = 1}P\Big[\{2Y - \hat{\mu}(X,T)-{\mu}(X,T)\}\{{\mu}(X,T) - \hat{\mu}(X,T) \}[\ell^\T\{\Psi(X,T) - \hat{\Gamma}(X)\}]^2\Big]
\\
&=\sup_{\|\ell\| = 1}P\Big[\{2E(Y\mid X,T) - \hat{\mu}(X,T)-{\mu}(X,T)\}\{{\mu}(X,T) - \hat{\mu}(X,T) \}[\ell^\T\{\Psi(X,T) - \hat{\Gamma}(X)\}]^2\Big]
\\
&\leq \Big\{\|{\mu}\|_{\mathbb{X}\times\mathbb{T}} +  \|\hat{\mu}\|_{\mathbb{X}\times\mathbb{T}}\Big\}\|{\mu} - \hat{\mu}\|_{\mathbb{X}\times\mathbb{T}}\cdot\sup_{\|\ell\| = 1}P\Big[[\ell^\T\{\Psi(X,T) - \hat{\Gamma}(X)\}]^2\Big]
\\
&=o_P(1),
\ee
where the last inequality follows by the moment conditions in Assumption \ref{am:moment}, \eqref{hmuc}, $\|{\mu} - \hat{\mu}\|_{\mathbb{X}\times\mathbb{T}} = o_P(1)$, and \eqref{con:nui}. With similar arguments, we can further show
\bee\nonumber
\|\Delta_{B,2}\|_2 &= \sup_{\|\ell\| = 1}\Big|P\Big[\{Y - {\mu}(X,T)\}^2 [\ell^\T\{\Psi(X,T) - {\Gamma}(X)\}][\ell^\T\{{\Gamma}(X) - \hat{\Gamma}(X)\}]\Big]\Big|
\\
&\leq\sup_{\|\ell\| = 1}P\Big[\{Y - {\mu}(X,T)\}^2 \Big|[\ell^\T\{\Psi(X,T) - {\Gamma}(X)\}][\ell^\T\{{\Gamma}(X) - \hat{\Gamma}(X)\}]\Big|\Big]
\\
&\leq\sup_{(x,t)\in\mathbb{X}\times\mathbb{T}}\text{Var}(Y\mid X = x, T = t)\sup_{\|\ell_1\| = 1}P\Big[\ell_1^\T\{\Psi(X,T) - {\Gamma}(X)\}\Big]^2\sup_{\|\ell_2\| = 1}P\Big[\ell_2^\T\{{\Gamma}(X) - \hat{\Gamma}(X)\}\Big]^2
\\
&\leq\sup_{(x,t)\in\mathbb{X}\times\mathbb{T}}\text{Var}(Y\mid X = x, T = t)\|R_n\|_2\|P[\{{\Gamma}(X) - \hat{\Gamma}(X)\}\{{\Gamma}(X) - \hat{\Gamma}(X)\}^\T]\|_2
\\
&=o_P(1),
\ee
where the second inequality follows by Cauchy--Schwarz inequality, and the last equality follows by \eqref{rate:gamma2}. Similarly, we can also show $\|\Delta_{B,3}\|_2 = o_P(1)$ and $\|B_n\|_2 \precsim 1$. Summarizing all results above, we conclude wpa1,
\bee\nonumber
\|\hat{B}_n - {B}_n\|_2 &\leq \|\hat{B}_n - \bar{B}_n\|_2 + \|\Delta_{B,1}\|_2 + \|\Delta_{B,2}\|_2 + \|\Delta_{B,3}\|_2
\\
&=o_P(1),
\ee
and furthermore $\|\hat{B}_n\|_2 \leq \|B_n\|_2 + \|\hat{B}_n - {B}_n\|_2 \precsim 1$ wpa1.}
\end{proof}
\begin{lemma}\label{l3}
Suppose the general settings of Theorem \ref{thm:main}  hold. Define $\Delta_{1,1}$--$\Delta_{1,5}$ in \eqref{delta:decom1} through \eqref{delta:decom2}. 
\begin{itemize}
\item[(i)] Suppose $\ell_n\in\RR^K$ is a  vector that {can depend on} $\{(X_i,T_i)\}_{i = 1}^n$ when $n$ grows, and $\|\ell_n\| = 1$ for any $n > 0$. We have  $\|\Delta_{1,1}\| = \mathcal{O}_\p(\sqrt{K/n})$ and  $|\ell_n^\T \Delta_{1,1}| = \mathcal{O}_P(1/\sqrt{n})$;
\item[(ii)] Suppose $\ell_n\in\RR^K$ is a vector that  depends only on $n$, and $\|\ell_n\| = 1$ for any $n > 0$. We have  $\|\Delta_{1,2}\| = \mathcal{O}_P(K^{-p/(d + 1)} )$ and $|\ell_n^\T \Delta_{1,2}| = \mathcal{O}_P(K^{-p/(d + 1)})$;
\item[(iii)] Suppose $\ell_n\in\RR^K$ is a vector that  depends only on $n$, and $\|\ell_n\| = 1$ for any $n > 0$. We have  $\|\Delta_{1,3}\| = o_{{P}}(r_{\gamma}'\sqrt{K/n})$ and  $|\ell_n^\T \Delta_{1,3}| = o_P(r_\gamma /\sqrt{n})$;
\item[(iv)] Suppose $\ell_n\in\RR^K$ is a  vector that  depends  only on $n$, and $\|\ell_n\| = 1$ for any $n > 0$. We have  $\|\Delta_{1,4}\| = o_{{P}}(r_m\sqrt{K/n} + r_\gamma\sqrt{K/n})$ and  $|\ell_n^\T\Delta_{1,4}| = o_P(1/\sqrt{n})$;
\item[(v)] Suppose $\ell_n\in\RR^K$ is a vector that  depends  only on $n$, and $\|\ell_n\| = 1$ for any $n > 0$. We have  $\|\Delta_{1,5}\| = o_P(r_{m}r_{\gamma} + r^2_{\gamma} + {r_{\gamma}'} r_m\sqrt{K/n} + {r_{\gamma}'} r_\gamma\sqrt{K/n} )$ and  $|\ell_n^\T\Delta_{1,5}| = o_P(r_{\gamma}/\sqrt{n} + r_{m}r_{\gamma} + r^2_{\gamma})$.
\end{itemize}
\end{lemma}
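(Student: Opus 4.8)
The five quantities $\Delta_{1,1}$--$\Delta_{1,5}$ arise from expanding the empirical normal-equation residual $P_n[\{\Psi(X,T)-\hat{\Gamma}(X)\}\{Y-\hat{m}(X)-(\Psi-\hat{\Gamma})^{\top}\phi^*\}]$: substituting $Y=m+\tilde\tau+\varepsilon$ and $\tilde\tau=(\phi^*)^{\top}\Psi+\delta$ with $\delta$ the sieve error, and using $(\phi^*)^{\top}\Gamma=-\bar\delta$ where $\bar\delta(x)=E\{\delta(X,T)\mid X=x\}$ (a consequence of $E_{\varpi}\{\tilde\tau\mid X\}=0$ together with Proposition \ref{lm:psieapprox}), the integrand becomes $\varepsilon+(\delta-\bar\delta)+(m-\hat{m})+(\hat{\Gamma}-\Gamma)^{\top}\phi^*$; pairing it against $\Psi-\hat{\Gamma}=(\Psi-\Gamma)-(\hat{\Gamma}-\Gamma)$ produces the pure noise term $\Delta_{1,1}=P_n\{(\Psi-\Gamma)\varepsilon\}$, the approximation term $\Delta_{1,2}=P_n\{(\Psi-\Gamma)(\delta-\bar\delta)\}$, the noise-times-nuisance term $\Delta_{1,3}=-P_n\{(\hat{\Gamma}-\Gamma)\varepsilon\}$, the first-order nuisance term $\Delta_{1,4}=P_n[(\Psi-\Gamma)\{(m-\hat{m})+(\hat{\Gamma}-\Gamma)^{\top}\phi^*\}]$, and the product-of-errors term $\Delta_{1,5}$ that pairs $\hat{\Gamma}-\Gamma$ with $(\delta-\bar\delta)$, $(m-\hat{m})$ and $(\hat{\Gamma}-\Gamma)^{\top}\phi^*$. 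I will bound each via a mean and a variance, using the cross-fitting simplification to freeze $\hat{m},\hat{\Gamma}$ when conditioning on the hold-out fold for (iii)--(v), and conditioning on $\{(X_i,T_i)\}_{i=1}^n$ for (i). Throughout I use $\|\Psi-\Gamma\|_{\mathbb{X}\times\mathbb{T}}\precsim\sqrt{K}$, $\text{tr}(R_n)\le\zeta\|R_n\|_2\precsim K$, $\|R_n\|_2,\|\hat{R}_n\|_2,\|E(\Gamma\Gamma^{\top})\|_2\precsim1$ (Lemmas \ref{am:psi}, \ref{am:svb}, \ref{lm:QGB}, \ref{lm:iG2iG}(i)), $\|\phi^*\|\precsim1$, and $\|\delta\|_{\mathbb{X}\times\mathbb{T}}\precsim K^{-p/(d+1)}$.

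\textbf{Parts (i)--(iv).} For $\Delta_{1,1}$, $E(\varepsilon\mid X,T)=0$ gives $E\|\Delta_{1,1}\|^2=n^{-1}E\{\|\Psi-\Gamma\|^2\var(Y\mid X,T)\}\precsim n^{-1}\text{tr}(R_n)\precsim K/n$, so Markov yields the $\mathcal{O}_P(\sqrt{K/n})$ norm bound; conditioning on $\{(X_i,T_i)\}$ the conditional variance of $\ell_n^{\top}\Delta_{1,1}$ is $n^{-2}\sum_i\{\ell_n^{\top}(\Psi_i-\Gamma_i)\}^2\var(\varepsilon_i\mid X_i,T_i)\precsim n^{-1}\ell_n^{\top}\hat{R}_n\ell_n\precsim n^{-1}$ wpa1, and conditional Chebyshev gives $\mathcal{O}_P(1/\sqrt{n})$ --- this is the only part where $\ell_n$ is allowed to depend on the data, which is exactly why I condition on the covariates rather than on a fold. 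For $\Delta_{1,2}$, projection orthogonality $E(\Psi\delta)=0$ gives $E\Delta_{1,2}=-E(\Psi\bar\delta)$, and routing through $\Gamma$ rather than bounding a full $K$-vector yields $\|E\Delta_{1,2}\|=\sup_{\|\ell\|=1}|E\{(\ell^{\top}\Gamma)\bar\delta\}|\le\|\bar\delta\|_{\mathbb{X}}\|E(\Gamma\Gamma^{\top})\|_2^{1/2}\precsim K^{-p/(d+1)}$, with an empirical fluctuation of order $\sqrt{K/n}\,K^{-p/(d+1)}=o_P(K^{-p/(d+1)})$ under condition (ii) ($K\log n/n\to0$); the projected statement is identical. $\Delta_{1,3}$ and $\Delta_{1,4}$ are mean-zero given the hold-out fold because $\hat{\Gamma}-\Gamma$, $m-\hat{m}$ and $(\hat{\Gamma}-\Gamma)^{\top}\phi^*$ are functions of $X$ while $E(\Psi-\Gamma\mid X)=0$; the norm bounds follow from $E\|\Delta_{1,4}\|^2\precsim n^{-1}\|\Psi-\Gamma\|_{\mathbb{X}\times\mathbb{T}}^2\,\{P(m-\hat{m})^2+\|\phi^*\|^2\|P\{(\hat{\Gamma}-\Gamma)(\hat{\Gamma}-\Gamma)^{\top}\}\|_2\}$ and the analogous estimate with $n^{-2}\sum_i\|\hat{\Gamma}_i-\Gamma_i\|^2=o_P(K(r_{\gamma}')^2/n)$ for $\Delta_{1,3}$, invoking \eqref{rate:m}--\eqref{rate:gamma2}. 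The sharp $o_P(1/\sqrt{n})$ for $\ell_n^{\top}\Delta_{1,4}$ is where $\mathcal{L}^{\infty}$ control of nuisances enters: $E[\{\ell_n^{\top}(\Psi-\Gamma)\}^2(m-\hat{m})^2]\le\|\hat{m}-m\|_{\mathbb{X}}^2\,\ell_n^{\top}R_n\ell_n=o_P(1)$ by Assumption \ref{rate:sup}(i), and $\|(\hat{\Gamma}-\Gamma)^{\top}\phi^*\|_{\mathbb{X}}=o_P(1)$ by Assumption \ref{rate:sup}(ii) handles the second piece; for $\ell_n^{\top}\Delta_{1,3}$ the conditional variance is $\precsim n^{-1}\ell_n^{\top}P\{(\hat{\Gamma}-\Gamma)(\hat{\Gamma}-\Gamma)^{\top}\}\ell_n\le n^{-1}\|P\{(\hat{\Gamma}-\Gamma)(\hat{\Gamma}-\Gamma)^{\top}\}\|_2=o_P(r_{\gamma}^2/n)$ by \eqref{rate:gamma2}.

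\textbf{Part (v) and the main obstacle.} $\Delta_{1,5}$ is a sum of averages of products of two ``small'' factors, each bounded by Cauchy--Schwarz against the population object: $\|P\{(\hat{\Gamma}-\Gamma)(\hat{\Gamma}-\Gamma)^{\top}\}\|_2^{1/2}=o_P(r_{\gamma})$ paired with $\|m-\hat{m}\|_{\mathcal{L}^2_{\mathcal{P}}}=o_P(r_m)$, with $\|\delta-\bar\delta\|_{\mathbb{X}\times\mathbb{T}}\precsim K^{-p/(d+1)}$, and with $\|\phi^*\|\precsim1$, giving the $r_mr_{\gamma}$, (dominated) $r_{\gamma}K^{-p/(d+1)}$, and $r_{\gamma}^2$ contributions, plus a Rudelson matrix law-of-large-numbers fluctuation (Proposition \ref{rlnn}) applied conditionally on the fold that produces the $r_{\gamma}'r_m\sqrt{K/n}$ and $r_{\gamma}'r_{\gamma}\sqrt{K/n}$ terms; the projected version retains only $r_{\gamma}/\sqrt{n}+r_mr_{\gamma}+r_{\gamma}^2$ because the $\ell_n$-pairing suppresses the ambient dimension in the fluctuation but not in the bias-type products. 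The crux throughout --- the step I expect to need the most care --- is the gap between the crude $\sqrt{K/n}$-type full-norm rates and the much sharper $1/\sqrt{n}$ (resp.\ $r_{\gamma}/\sqrt{n}$) projected rates: obtaining the latter forbids ever estimating a whole $K$-vector by its Euclidean length and instead demands always pairing $\Psi-\Gamma$ (or $\hat{\Gamma}-\Gamma$, or $\Gamma$) with a scalar so that only the bounded-spectral-norm matrices $R_n,\hat{R}_n,E(\Gamma\Gamma^{\top}),P\{(\hat{\Gamma}-\Gamma)(\hat{\Gamma}-\Gamma)^{\top}\}$ enter; handling the data-dependent $\ell_n$ in (i) through $(X,T)$-conditioning with $\|\hat{R}_n\|_2\precsim1$ wpa1, and trading $\mathcal{L}^2$ for $\mathcal{L}^{\infty}$ nuisance control in (iv), are the two places where this is least automatic.
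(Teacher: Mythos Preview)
Your overall strategy --- compute the conditional mean and conditional second moment of each summand, then apply Markov/Chebyshev, conditioning on $\{(X_i,T_i)\}_{i=1}^n$ for (i) and on the nuisance fold for (iii)--(v) --- is exactly what the paper does, and your identification of where $\mathcal{L}^\infty$ control (Assumption \ref{rate:sup}) versus $\mathcal{L}^2$ control (rates \eqref{rate:m}--\eqref{rate:gamma2}) enters is correct.

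There is one concrete bookkeeping error. In the paper's decomposition \eqref{delta:decom1}, the scalar factor in $\Delta_{1,3}$ is $Y-m(X)-\{\Psi(X,T)-\Gamma(X)\}^{\top}\phi^*$, which in your notation equals $\varepsilon+(\delta-\bar\delta)$, \emph{not} just $\varepsilon$. Correspondingly, $\Delta_{1,5}$ in the paper contains only the $(m-\hat m)$ and $(\hat\Gamma-\Gamma)^{\top}\phi^*$ pieces paired against $\Gamma-\hat\Gamma$; there is no $(\delta-\bar\delta)$ piece there. You have moved the cross term $P_n\{(\Gamma-\hat\Gamma)(\delta-\bar\delta)\}$ from $\Delta_{1,3}$ into $\Delta_{1,5}$. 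Since the lemma is about the specific quantities defined in \eqref{delta:decom1}, your proof of (iii) as written is incomplete and your proof of (v) carries an extraneous term whose ``domination'' you assert but which is not actually part of the claimed bound. The fix is painless: the missing piece in $\Delta_{1,3}$ still has conditional mean zero given $X$ (because $E\{\delta-\bar\delta\mid X\}=0$), and its second moment is $O(K^{-2p/(d+1)})$ times the $\varepsilon$-piece's, so it is absorbed into the stated rates $o_P(r_\gamma'\sqrt{K/n})$ and $o_P(r_\gamma/\sqrt{n})$ without any new ideas --- the paper handles it by bounding $P[\{Y-m-(\Psi-\Gamma)^{\top}\phi^*\}^2]=O(1)$ in one stroke rather than splitting off $\varepsilon$.

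A second, smaller slip: in (i) the matrix that appears in the conditional variance of $\ell_n^{\top}\Delta_{1,1}$ is $P_n\{(\Psi-\Gamma)(\Psi-\Gamma)^{\top}\}$ with the \emph{true} $\Gamma$, not $\hat R_n$ (which uses $\hat\Gamma$). The paper bounds this via $\|\hat Q_n\|_2+\|P_n(\Gamma\Gamma^{\top})\|_2\precsim 1$ from Lemma \ref{lm:iG2iG}(i); your $\hat R_n$ reference should be replaced accordingly.
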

\begin{proof}[of Lemma \ref{l3}]
We first present the following concentration result which will be repeatedly used during the proof. This proposition can be simply proved by high-order Markov inequality.
\begin{proposition}\label{clm1}
Given $n$, let $ v_1,\dots, v_n$ are i.i.d. copies of random vector $ v\in \RR^{d}$. Suppose $\E(\| v\|^2)\leq A_n$, where $A_n > 0$ is allowed to diverge. We have for any $J > 0$,
\bee\nonumber
\Pr\Big\{\big\|{P}_n( v) - {E}(v)\big\| > J \sqrt{d/n}\Big\} \leq \frac{4A_n}{J^2 d}.
\ee
\end{proposition}
\begin{proof}[of Proposition \ref{clm1}]
By the second-order Markov's inequaltiy, we have for any $J > 0$,
\bee\label{clm1:1}
\Pr\Big\{\big\|{P}_n(v) - {E}(v)\big\|>J \sqrt{d/n}\Big\} \leq \frac{n}{J^2 d}E\big\{\|{P}_n(v) - {E}(v)\|^2\big\}.
\ee 
Now we bound $E\big\{\|{P}_n(v) - {E}(v)\|^2\big\}$. We can write
\bee\label{clm1:2}
E\big\{\|{P}_n(v) - {E}( v)\|^2\big\} &= E\Big[\Big\|\frac{1}{n}\sum_{i = 1}^n\{v_i - {E}( v)\}\Big\|^2\Big]
\\
&=\frac{1}{n^2}E\Big[\Big\|\sum_{i = 1}^n\{v_i - {E}(v)\}\Big\|^2\Big]
\\
&=\frac{1}{n}E\big\{\big\| v_i - {E}(v)\big\|^2\big\}
\\
&\leq \frac{2}{n}E\big(\big\| v\big\|^2\big) + \big\|{E}(v)\big\|^2
\\
&\leq \frac{4}{n}E\big(\big\| v\big\|^2\big) 
\\
&\leq \frac{4A_n}{n}.
\ee

The third equality in the above display follows because
\bee\nonumber
\frac{1}{n^2}E\Bigg[\Big\|\sum_{i = 1}^n\big\{ v_i  - E\big(v\big)\big\}\Big\|^2\Bigg] &=\sum_{k = 1}^{d}\frac{1}{n^2}E\Bigg[\sum_{i = 1}^n\big\{ v_i^{(k)}  - E\big(v^{(k)}\big)\big\}\Bigg]^2
\\
&=\sum_{k = 1}^{d}\frac{1}{n^2}\text{Var}\Big(\sum_{i = 1}^n v_i^{(k)}\Big) 
\\
&=\sum_{k = 1}^{d}\frac{1}{n}\text{Var}\big(v^{(k)}\big) 
\\
&=\frac{1}{n}E\big\{\big\| v_i - {E}(v)\big\|^2\big\},
\ee
where we define $ v_i^{(k)}, v^{(k)}$ as the $k$th coordinates of $ v_i, v$, respectively; the second equality follows by $E\big(\sum_{i = 1}^n v_i^{(k)}\big) = nE\big(v^{(k)}\big)$; the third equality follows by $ v_1^{(k)},\dots, v_n^{(k)}$ are independent and identically distributed. Combining \eqref{clm1:1} and \eqref{clm1:2} yields the desired result.
\end{proof}
\par
Gven $n$ samples, during the following proof we will frequently condition on the following event $\mathcal{E}_n$: the nuisance functions $ (\hat{m},\hat{\M\Gamma})$ are already obtained from the separate data set for nuisance function training (see $\mathsection$\ref{sec:prea}), and  $ (\hat{m},\hat{\M\Gamma})$ satisfies the following conditions:
 \begin{itemize}
 \item $ \|\hat{m}\|_{\mathbb{X}}$  and $\|\hat{\M\Gamma}\|_\mathbb{X}/\sqrt{K}$ are bounded by some fixed constant $C > 0$;
 \item All of the following quantities: (i) ${r_{\gamma}'}^{-1}\|\hat{\M\Gamma} - \Gamma\|_{\mathbb{X}}/\sqrt{K}$; (ii) $r_m^{-1}\|\hat{m} -m\|_{\mathcal{L}_{\mathcal{P}}^2} $; (iii) $r_{\gamma}^{-1}\big\|{P}\big[\{\hat{\M\Gamma}(X) - \M\Gamma(X)\}\{\hat{\M\Gamma}(X) - \M\Gamma(X)\}^{\T}\big]\big\|^{1/2}_2$; (iv) $\|\hat{m} -m\|_{\mathbb{X}} $; (v) $\|\hat{\Gamma}^\T\phi^* - \Gamma^\T\phi^*\|_{\mathbb{X}}$  are bounded by $e_n$, where $e_n$ is a sequence vanishing to zero, when $n\rightarrow +\infty$.
 \end{itemize}
 \par
By Lemma \ref{lm:approx}, Assumption \ref{rate:sup}, and rates \eqref{rate:m}--\eqref{rate:gamma}, we can choose some fixed $C_1 > 0$ and a deterministic positive sequence $e_n \rightarrow 0$, such that $\mathcal{E}_n$ happens wpa1 when $n\rightarrow +\infty$. For simplicity, during the following proof, we use ${\Pr}_n(\cdot)$ to represent the probability that condition on $\mathcal{E}_n$. Note that since nuisance functions are trained separately from the samples involved in this proof, the expectation in this proof conditional on obtained nuisance functions can be simply represented by ${P}(\cdot)$. 
\par
Now we prove (i)--(iv) sequentially.
\par
\noindent{\fbox{Proof of Lemma \ref{l3} (i)}} For simplicity, let $ m_{i,1}= \big\{Y_i - \mu(T_i,X_i)\big\}\big\{\Psi(T_i,X_i) - \Gamma(X_i)\big\}$. Thus $\Delta_{1,1} = \p_n(m_{i,1})$. We have
\bee\label{mi1:def}
E\big(m_{i,1}\big) &=  E\big[\big\{Y  - \mu(X,T) \big\}\big\{\Psi(X,T) - \Gamma(X)\big\}\big]
\\
& = E\big[\big\{E\big(Y\mid X,T\big)  - \mu(X,T) \big\}\big\{\Psi(X,T) - \Gamma(X)\big\}\big]
\\
&=  0,
\ee
where the second equality follows by the law of total expectation. On the other hand, 
\bee\nonumber
E\big(\| m_{i,1}\|^2\big) &= E\Big[\big\{Y - \mu(X,T)\big\}^2\big\|\Psi(X,T) - \Gamma(X)\big\|^2\Big]
\\
&=E\Big\{\text{Var}(Y\mid X,T)\big\|\Psi(X,T) - \Gamma(X)\big\|^2\Big\}
\\
&\leq \sup_{(x,t)\in \mathbb{X}\times \mathbb{T}}\text{Var}(Y\mid X = x,T = t) \cdot E\Big\{\big\|\Psi(X,T) - \Gamma(X)\big\|^2\Big\}.
\ee
Under Assumption \ref{am:moment}, one has  $\sup_{(x,t)\in \mathbb{X}\times \mathbb{T}}\text{Var}(Y\mid X = x,T=t) \precsim 1$. On the other hand,
\bee\nonumber
E\Big\{\big\|\Psi(X,T) - \Gamma(X)\big\|^2 \Big\}\leq 2\|\Psi\|_{\mathbb{X}\times\mathbb{T}}^2 + 2\|\Gamma\|_{\mathbb{X}}^2 \precsim K
\ee
by Lemma \ref{lm:approx}. Thus $E(\|m_{i,1}\|^2) \precsim K$. Let $ v_i =  m_{i,1}$ in Claim \ref{clm1}. With results above, we have for any fixed $J = J_0 > 0$
\bee\nonumber
\Pr\Big(\big\|\Delta_{1,1}\big\| > J_0\sqrt{K/n}\Big) &= \Pr\Big(\big\|{P}_n( m_{i,1}) \big\| > J_0 \sqrt{K/n}\Big)
\\
&\leq \frac{C_1K}{J_0^2K} = \frac{C_1}{J_0^2}
\ee
where $C_1$ is some fixed constant independent with $n$ and $J_0$. Therefore, when $J_0$ is sufficiently large, the right-hand side of the above display can by arbitrarily small, and thus $\|\Delta_{1,1}\| = \mathcal{O}_\p(\sqrt{K/n})$.
\par
Now we bound $|\ell_n^\T \Delta_{1,1}|$. We first condition on $\{(X_i,T_i)\}_{i = 1}^n$. We note that given $\{(X_i,T_i)\}_{i = 1}^n$, $\{\ell_n^\T m_{i,1}\}_{i = 1}^n$ are independent random variables since $\{Y_{i}\}_{i = 1}^n$ are independent and $\ell_n$  depends  only on $n$ and $\{(X_i,T_i)\}_{i = 1}^n$. Furthermore, they are mean-zero,
\bee\nonumber
E\big[\ell_n^\T m_{i,1}\mid \{(X_i,T_i)\}_{i = 1}^n\big] &= E\big(\tilde{\ell}_{n}^\T m_{i,1}\mid X_i,T_i\big)
\\
&=\big\{E\big(Y\mid X_i,T_i\big)  - \mu(X_i,T_i) \big\}\big\{\tilde{\ell}_{n}^\T\Psi(X_i,T_i) - \tilde{\ell}_{n}^\T\Gamma(X_i)\big\}
\\
& = 0.
\ee
Here we use $\tilde{\ell}_{n}$ to denote $\ell_n$ given specific $\{(X_i,T_i)\}_{i = 1}^n$. Then the conditional variance for each $\ell_n^\T m_{i,n}$ is
\bee\nonumber
\text{Var}\Big[\ell_n^\T m_{i,n}\mid \{(X_i,T_i)\}_{i = 1}^n\Big] &= E\Big\{(\tilde{\ell}_n^\T m_{i,n})^2 \mid X_i,T_i\Big\}
\\
& = E\Big[\big\{Y_i - \mu(T_i,X_i)\big\}^2\big\{\tilde{\ell}_n^\T\Psi(T_i,X_i) - \tilde{\ell}_n^\T\Gamma(X_i)\big\}^2\mid X_i,T_i\Big]
\\
&\leq \text{Var}(Y\mid X_i,T_i)\cdot\big\{\tilde{\ell}_n^\T\Psi(T_i,X_i) - \tilde{\ell}_n^\T\Gamma(X_i)\big\}^2
\\
&\leq\Big[2\big\{\tilde{\ell}_n^\T\Psi(T_i,X_i)\big\}^2 + 2\big\{\tilde{\ell}_n^\T\Gamma(X_i)\big\}^2\Big]\sup_{(x,t)\in\mathbb{X}\times\mathbb{T}}\text{Var}(Y\mid X = x, T = t) 
\\
&\leq C_2\Big[\big\{\tilde{\ell}_n^\T\Psi(T_i,X_i)\big\}^2 + \big\{\tilde{\ell}_n^\T\Gamma(X_i)\big\}^2\Big]\quad (\text{Assumption \ref{am:moment}})
\ee
for some fixed $C_2$ independent of $n$  and $\{(X_i,T_i)\}_{i = 1}^n$. Therefore, given $\{(X_i,T_i)\}_{i = 1}^n$, we have that $P_n(\ell_n^\T m_{i,n})$ is mean zero and has variance,
\bee\nonumber
\text{Var}\Big[P_n(\ell_n^\T m_{i,n})\mid \{(X_i,T_i)\}_{i = 1}^n\Big] & = \text{Var}\Big[\frac{1}{n}\sum_{i = 1}^n{\ell}_n^\T m_{i,n}\mid \{(X_i,T_i)\}_{i = 1}^n\Big] 
\\
&= \frac{1}{n^2}\sum_{i = 1}^n\text{Var}\Big[\ell_n^\T m_{i,n}\mid \{(X_i,T_i)\}_{i = 1}^n\Big]
\\
&\leq \frac{C_2}{n}\cdot\frac{1}{n}\sum_{i = 1}^n \Big[\big\{\tilde{\ell}_n^\T\Psi(T_i,X_i)\big\}^2 + \big\{\tilde{\ell}_n^\T\Gamma(X_i)\big\}^2\Big].
\ee
By Chebyshev's inequality,  given any $\{(X_i,T_i)\}_{i = 1}^n$,
\bee\label{delta11:main}
|P_n(\ell_n^\T m_{i,n})| = \mathcal{O}_{P}\Bigg[n^{-1/2}\sqrt{\frac{1}{n}\sum_{i = 1}^n \Big[\big\{\tilde{\ell}_n^\T\Psi(T_i,X_i)\big\}^2 + \big\{\tilde{\ell}_n^\T\Gamma(X_i)\big\}^2\Big]}\Bigg]
\ee
\par
Now we do not condition on specific $\{(X_i,T_i)\}_{i = 1}^n$, and  consider the following positive random variable, which has a constant upper bound wpa1,
\bee\label{delta11:main2}
\xi_1 &= \sqrt{\frac{1}{n}\sum_{i = 1}^n \Big[\big\{{\ell}_n^\T\Psi(T_i,X_i)\big\}^2 + \big\{{\ell}_n^\T\Gamma(X_i)\big\}^2\Big]}
\\
&= \sqrt{P_n\Big[\big\{{\ell}_n^\T\Psi(T,X)\big\}^2\Big] + P_n\Big[\big\{{\ell}_n^\T\Gamma(X)\big\}^2\Big]}
\\
&\leq \sqrt{ \sup_{\|\ell\| = 1} P_n\Big[\big\{{\ell}^\T\Psi(T,X)\big\}^2\Big] + \sup_{\|\ell'\| = 1}P_n\Big[\big\{{\ell'}^\T\Gamma(X)\big\}^2\Big]}
\\
& = \sqrt{\big\|\hat{Q}_n\big\|_2 + \big\|P_n(\Gamma\Gamma^\T)\big\|_2}
\\
&\precsim 1,
\ee
where the last equality follows by Lemma \ref{lm:iG2iG}. Combining \eqref{delta11:main} and \eqref{delta11:main2}, we directly uncondition the given $\{(X_i,T_i)\}_{i = 1}^n$, and conclude $|P_n(\ell_n^\T m_{i,n})| = \mathcal{O}_P(1/\sqrt{n})$.
\par
\
\par
\noindent{\fbox{Proof of Lemma \ref{l3} (ii)}}  First, by definition, we observe
\bee\nonumber
&\tau(X,T) - \E\{\tau(X,T)\mid X\}  - \{\Psi(X,T) - \Gamma(X)\}^\T{\phi^*}
\\
&=\tau(X,T) - \E\{\tau(X,T)\mid X\} - E\big[\tau(X,T) - \E\{\tau(X,T)\mid X\}\mid X\big]  - \{\Psi(X,T) - \Gamma(X)\}^\T{\phi^*}
\\
&=\tilde{\tau}(X,T) - \E\{\tilde{\tau}(X,T)\mid X\}- \{\Psi(X,T) - \Gamma(X)\}^\T{\phi^*}.
\ee
We now let 
\bee\nonumber
 m_{i,2}& = \big[ \tilde{\tau}(X_i,T_i) - \E\{\tilde{\tau}(X,T)\mid X_i\}  - \{\Psi(X_i,T_i) - \Gamma(X_i)\}^\T{\phi^*}\big]\big\{\Psi(X_i,T_i) - \Gamma(X_i)\big\}
 \\
 & =  \big[ \tau(X_i,T_i) - \E\{\tau(X,T)\mid X_i\}  - \{\Psi(X_i,T_i) - \Gamma(X_i)\}^\T{\phi^*}\big]\big\{\Psi(X_i,T_i) - \Gamma(X_i)\big\};
\ee 
thus $\Delta_{1,2} = P_n(m_{i,2})$. We then have 
\bee\label{m2:e}
E( m_{i,2})&= E\Big[\big[\tilde{\tau}(X,T) - \E\{\tilde{\tau}(X,T)\mid X\}- \{\Psi(X,T) - \Gamma(X)\}^\T{\phi^*}\big]\Psi(X,T)\Big]
\\
&-E\Big[\big[\tilde{\tau}(X,T) - \E\{\tilde{\tau}(X,T)\mid X\}- \{\Psi(X,T) - \Gamma(X)\}^\T{\phi^*}\big]  \Gamma(X)\Big]
\\
&= E\Big[\big[\tilde{\tau}(X,T) - \E\{\tilde{\tau}(X,T)\mid X\}- \{\Psi(X,T) - \Gamma(X)\}^\T{\phi^*}\big]\Psi(X,T)\Big]
\\
&-E\Big[\underbrace{E\Big[\big[\tilde{\tau}(X,T) - \E\{\tilde{\tau}(X,T)\mid X\}- \{\Psi(X,T) - \Gamma(X)\}^\T{\phi^*}\big]  \mid X\Big]}_{ = 0}\Gamma(X)\Big]
\\
&= E\Big[E\big\{ \Gamma^\T(X){\phi^*}- \tilde{\tau}(X,T)\mid X\big\}\Psi(X,T)\Big],
\ee
by the law of total expectation. Therefore, by Lemma \ref{am:svb} and Proposition \ref{lm:psieapprox}, we have
\bee\label{delta222}
\big\|{E}(\Delta_{1,2})\big\| &= \big\|\E(m_{i,2})\big\| 
\\
&= \sup_{\|\ell\| = 1}\Big|E\Big[E\big\{ \Gamma^\T(X){\phi^*}- \tilde{\tau}(X,T)\mid X\big\}\ell^\T\Psi(X,T)\Big]\Big|
\\
&\leq \sqrt{E\Big[E\big\{ \Gamma^\T(X){\phi^*}- \tilde{\tau}(X,T)\mid X\big\}\Big]^2} \sup_{\|\ell\| = 1}\sqrt{E\Big[\ell^\T\Psi(X,T)\Big]^2}\quad\text{(Cauchy--Schwarz inequality)}
\\
&= \sqrt{E\Big[E\big\{ \Psi^\T(X,T){\phi^*}- \tilde{\tau}(X,T)\mid X\big\}\Big]^2} \big\|Q_n\big\|^{1/2}_2
\\
&\leq\sup_{x\in\mathbb{X}}\big|E\big\{\Psi^\T(X,T){\phi^*}- \tilde{\tau}(X,T)\mid X = x\big\}\big|\big\|Q_n\big\|^{1/2}_2
\\
&\leq \sup_{x\in\mathbb{X}}\big|E\big\{\Psi^\T(X,T){\phi^*}- \tilde{\tau}(X,T)\mid X = x\big\}\big|\big\|Q_n\big\|^{1/2}_2
\\
&\leq \|\tilde{\tau} - \{\phi^*\}^\T\Psi\|_{\mathbb{X}\times\mathbb{T}}\big\|Q_n\big\|^{1/2}_2
\\
&= \mathcal{O}(K^{-p/(d + 1)}).
\ee
\par
Recalling $\Psi^\T_n(T_i,X_i) \phi^*$ is a sieve approximation to $\tilde{\tau}(T_i,X_i)$, we then have
\bee\label{mi2square}
&E\big(\big\| m_{i,2} \big\|^2\big)
\\
&\leq \sup_{(x,t)\in\mathbb{X}\times\mathbb{T}}\Big|\big[ \tilde{\tau}(X = x,T = t) - \E\{\tilde{\tau}(X,T)\mid X = x\}  - \{\Psi(X = x,T = t) - \Gamma(X = x)\}^\T{\phi^*}\big]\Big|^2
\\
&\cdot E\Big(\big\|\Psi(X,T) - \Gamma(X)\big\|^2\Big)
\\
&\leq \Big[2\big\| \tilde{\tau} - \Psi^\T_n {\phi^*}\big\|_{ \mathbb{X}\times \mathbb{T}}^2 + \sup_{x\in\mathbb{X}}2\Big|\E\{\tilde{\tau}(X,T) - \Psi^\T_n(X,T) {\phi^*}\mid X = x\}\Big|^2\Big]
\cdot\Big(2\|\M\Psi \|_{\mathbb{X}\times \mathbb{T}}^2 + 2\|\Gamma\|_{\mathbb{X}}^2\Big)
\\
&\leq \Big(2\big\| \tilde{\tau} - \Psi^\T_n {\phi^*}\big\|_{ \mathbb{X}\times \mathbb{T}}^2 + \sup_{x\in\mathbb{X}}2\Big|\E\big(\big\| \tilde{\tau} - \Psi^\T_n {\phi^*}\big\|_{\mathbb{X}\times \mathbb{T}}^2\mid X = x\big)\Big|\Big)
\cdot\Big(2\|\M\Psi \|_{\mathbb{X}\times \mathbb{T}}^2 + 2\|\Gamma\|_{\mathbb{X}}^2\Big)
\\
&\leq \Big(4\big\| \tilde{\tau} - \Psi^\T_n {\phi^*}\big\|_{ \mathbb{X}\times \mathbb{T}}^2\Big)
\cdot\Big(2\|\M\Psi \|_{\mathbb{X}\times \mathbb{T}}^2 + 2\|\Gamma\|_{\mathbb{X}}^2\Big)
\\
&=\mathcal{O}(K^{1-2p/(d+1)}),
\ee
by Lemma \ref{am:psi}, Proposition \ref{lm:psieapprox} and Lemma \ref{lm:approx}. By taking $v_i = m_{i,2}$, $A_n = C_3K^{1-2p/(d+1)}$, $d = K$, and $J = C_3K^{-p/(d+ 1)}$ in Claim \ref{clm1}, then for any $c > 0$, there exists large enough yet fixed $C_3> 0$ such that
\bee\nonumber
\Pr\Big\{\big\|\Delta_{1,2} - {E}(\Delta_{1,2})\big\| > C_3K^{-p/(d + 1)} \sqrt{K/n}\Big\} \leq c.
\ee
Therefore $\big\|\Delta_{1,2} - {E}(\Delta_{1,2})\big\| = \mathcal{O}_P\big(K^{-p/(d + 1)} \sqrt{K/n}\big) = o_{P}(K^{-p/(d + 1)} )$. Then by \eqref{delta222}, one has
\bee\nonumber
\big\|\Delta_{1,2}\big\| \leq \big\|\Delta_{1,2} - {E}(\Delta_{1,2})\big\| + \|E(\Delta_{1,2})\| = \mathcal{O}_P(K^{-p/(d + 1)} )
\ee
\par
When $\ell_n$  depends only on $n$ and $\|\ell_n\| = 1$, then similar to \eqref{delta222}, we have
\bee\label{d12:mean-bias}
|E(\ell_n^\T m_{i,2})| &= \Big|E\Big[E\big\{ \Gamma^\T(X){\phi^*}- \tilde{\tau}(X,T)\mid X\big\}\ell_n^\T\Psi(X,T)\Big]\Big|
\\
&=\mathcal{O}(K^{-p/(d + 1)}).
\ee
Also, similar to \eqref{mi2square} and by Lemma \ref{am:svb}, we have
\bee\nonumber
 E(\ell_n^\T m_{i,2})^2 &\leq \sup_{(x,t)\in\mathbb{X}\times\mathbb{T}}\Big|\big[ \tilde{\tau}( x, t) - \E\{\tilde{\tau}(X,T)\mid X = x\}  - \{\Psi(x,t) - \Gamma( x)\}^\T{\phi^*}\big]\Big|^2
\\
&\quad \cdot E\Big[\ell_n^\T\Big\{\Psi(X,T) - \Gamma(X)\Big\}\Big]^2
\\
&\leq \Big(4\big\| \tilde{\tau} - \Psi^\T_n {\phi^*}\big\|_{ \mathbb{X}\times \mathbb{T}}^2\Big)
\cdot\Big(\|Q_n\|_2 + \|E(\Gamma\Gamma^\T)\|_2\Big)
\\
&=\mathcal{O}(K^{-2p/(d+1)}).
\ee
Take $v_i = \ell_n^\T m_{i,2}$, $d = 1$, $A_n = C_4K^{-2p/(d + 1)}$, $J = 2\sqrt{C_4}K^{-p/(d + 1)}/c$ in Proposition \ref{clm1}. We then have for any fixed $c > 0$ and sufficiently large $C_4 > 0$,
\bee\nonumber
\Pr\Big\{\big|P_n(\ell_n^\T m_{i,2}) - E(\ell_n^\T m_{i,2})\big| > 2\sqrt{C_4}/c K^{-p/(d + 1)}n^{-1/2}\Big\} \leq c,
\ee
which combining with \eqref{d12:mean-bias} implies that $|\ell_n^\T\Delta_{1,2}| =|P_n(\ell_n^\T m_{i,2})| \leq  |P_n(\ell_n^\T m_{i,2}) - E(\ell_n^\T m_{i,2})| + |E(\ell_n^\T m_{i,2})| = \mathcal{O}_P(K^{-p/(d + 1)} n ^{-1/2})+ \mathcal{O}_P(K^{-p/(d + 1)}) = \mathcal{O}_P(K^{-p/(d + 1)})$.

\par
\
\par
\noindent{\fbox{Proof of Lemma \ref{l3} (iii)}} We now condition on $\mathcal{E}_n$ with  well-conditioned $(\hat{m},\hat{\Gamma})$. We let $ m_{i,3} = [Y_i - m(X_i) - \{\Psi(T_i,X_i) - \Gamma(X_i)\}^\T{\phi^*}]\{\Gamma(X_i) - \hat{\Gamma}(X_i)\}$ and thus $\Delta_{1,3} = P_n(m_{i,3})$. Recall the nuisance functions and proposed estimators are trained independently. The conditional expectations then can be bounded as follows
\bee\label{mi31}
&\p(m_{i,3})
\\
& =\p\Big[\big[Y - m(X) - \{\Psi(X,T) - \Gamma(X)\}^\T{\phi^*}\big]\big\{\Gamma(X) - \hat{\Gamma}(X)\big\}\Big]
\\
&=\p\Big[E\big[Y - m(X) - \{\Psi(X,T) - \Gamma(X)\}^\T{\phi^*}\mid X\big]\big\{\Gamma(X) - \hat{\Gamma}(X)\big\}\Big]
\\
& = 0,
\ee
where the last equality follows by $E(Y\mid X) = m (X)$ and $E\{\Psi^\T(X,T)\phi^*\mid X\} = \Gamma^\T(X)\phi^*$; thus $E(\Delta_{1,3}) = 0$.
We also have
\bee\label{mi322}
&\p\big(\| m_{i,3}\|^2\big)
\\
&=\p\Big[\big[Y - m(X) - \{\Psi(X,T) - \Gamma(X)\}^\T{\phi^*}\big]^2\big\|\Gamma(X) - \hat{\Gamma}(X)\big\|^2\Big]
\\
&\leq \p\Big[\big[Y - m(X) - \{\Psi(X,T) - \Gamma(X)\}^\T{\phi^*}\big]^2\Big] \big\|\Gamma - \hat{\Gamma}\big\|_{\mathbb{X}}^2
\\
&=\mathcal{O}(1) \cdot \mathcal{O}(K{r_{\gamma}'}^2e_n) 
\\
&= \mathcal{O}(K{r_{\gamma}'}^2e_n),
\ee
where the inequality follows by
\bee\label{m13o1}
\p\Big[\big[Y - m(X) - \{\Psi(X,T) - \Gamma(X)\}^\T{\phi^*}\big]^2\Big] &\leq 2\E\Big[\big\{Y - m(X)\big\}^2\Big] + 2E\Big[ \big[\{\Psi(X,T) - \Gamma(X)\}^\T{\phi^*}\big]^2\Big]
\\
& =  2E\Big\{\var(Y\mid X)\Big\} + 2(\phi^*)^\T R_n\phi^*
\\
&\leq 2E\Big\{\var(Y\mid X)\Big\} + 2\|\phi^*\|^2 \|R_n\|_2
\\
& = \mathcal{O}(1),\quad\text{(Assumptions \ref{am:moment} and Lemma \ref{am:svb})}
\ee
 and last two equalities are by the event defined in $\mathcal{E}_n$. Let $v_i =  m_{i,3}$, $d=K$, $A_n = C_5K{r_{\gamma}'}^2e_n$, and $J = c_5r_{\gamma}'$ in Proposition \ref{clm1}. We have when $n\rightarrow +\infty$,
\bee\label{pmm3}
{\Pr}_n\Big(\big\|\Delta_{1,3}\big\| >  c_5r_{\gamma}'\sqrt{K/n}\Big) &\leq \frac{4C_5K{r_{\gamma}'}^2e_n}{{ c_5^2r_{\gamma}'}^2K} 
\\
&= 4C_5e_n/ c_5^2
\\
&\rightarrow 0,
\ee
for some sufficiently large yet fixed $C_5 > 0$ and any fixed $c_5 > 0$. We thus have, conditional on $\mathcal{E}_n$, $\|\Delta_{1,3}\|  = o_{{P}}(r_{\gamma}'\sqrt{K/n})$. Recall $\mathcal{E}_n$ is a event happens wpa1. We thus can directly uncondition $\mathcal{E}_n$ and have $\|\Delta_{1,3}\| = o_{{P}}(r_{\gamma}'\sqrt{K/n})$. To be more specific, for any fixed $C_6 > 0$, we have
\bee\label{temp:uncon}
&{\Pr}\Big(\big\|\Delta_{1,3}\big\| > c_5r_{\gamma}'\sqrt{K/n}\Big)
\\
&={\Pr}\Big(\Big\{\big\|\Delta_{1,3}\big\| > c_5r_{\gamma}'\sqrt{K/n}\Big\}\cap\mathcal{E}_n\Big)  + {\Pr}\Big(\Big\{\big\|\Delta_{1,3}\big\| > c_5r_{\gamma}'\sqrt{K/n}\Big\}\cap\mathcal{E}^c_n\Big)
\\
&\leq {\Pr}\Big(\Big\{\big\|\Delta_{1,3}\big\| > c_5r_{\gamma}'\sqrt{K/n}\Big\}\cap\mathcal{E}_n\Big)  + {\Pr}(\mathcal{E}^c_n)
\\
&={\Pr}_n\Big(\big\|\Delta_{1,3}\big\| > c_5r_{\gamma}'\sqrt{K/n}\Big) \cdot{\Pr}(\mathcal{E}_n) + {\Pr}(\mathcal{E}^c_n)
\\
&\rightarrow 0,
\ee
by ${\Pr}(\mathcal{E}^c_n) \rightarrow 0$, ${\Pr}(\mathcal{E}_n) \leq 1$, and \eqref{pmm3}.
\par
Now we bound $|\ell_n^\T \Delta_{1,3}|$. Similar to \eqref{mi31}, we have $\p(\ell_n^\T m_{i,3}) = 0$. Similar to \eqref{mi322}, we have
\bee\label{lmi3:decom}
&\p(\ell_n^\T m_{i,3})^2
\\
&=\p\Big[\big[Y - m(X) - \{\Psi(X,T) - \Gamma(X)\}^\T{\phi^*}\big]^2\big[\ell_n^\T\big\{\Gamma(X) - \hat{\Gamma}(X)\big\}\big]^2\Big]
\\
&\leq 3\p\Big[\big\{Y - m(X)\big\}^2\big[\ell_n^\T\big\{\Gamma(X) - \hat{\Gamma}(X)\big\}\big]^2\Big]
\\
&+3\p\Big[\big[ \{\Psi(X,T) - \Gamma(X)\}^\T{\phi^*} - [\tilde{\tau}(X,T) - E\{\tilde{\tau}(X,T)\mid X\}]\big]^2\big[\ell_n^\T\big\{\Gamma(X) - \hat{\Gamma}(X)\big]^2\Big]
\\
&+3\p\Big[\big[ \tilde{\tau}(X,T) - E\{\tilde{\tau}(X,T)\mid X\}\big]^2\big[\ell_n^\T\big\{\Gamma(X) - \hat{\Gamma}(X)\big]^2\Big],
\ee 
by the triangle inequality. The first term in \eqref{lmi3:decom} can be bounded by the law of total expectation,
\begin{align}\nonumber
&\p\Big[\big\{Y - m(X)\big\}^2\big[\ell_n^\T\big\{\Gamma(X) - \hat{\Gamma}(X)\big\}\big]^2\Big]
\\\nonumber
&= \p\Big[\var(Y\mid X)\big[\ell_n^\T\big\{\Gamma(X) - \hat{\Gamma}(X)\big\}\big]^2\Big]
\\\nonumber
&\leq \sup_{x \in\mathbb{X}}\var(Y\mid X = x) \cdot\p\Big[\ell_n^\T\big\{\Gamma(X) - \hat{\Gamma}(X)\big\}\Big]^2
\\\nonumber
&\precsim \p\Big[\ell_n^\T\big\{\Gamma(X) - \hat{\Gamma}(X)\big\}\Big]^2
\\\nonumber
&\leq \sup_{\|\ell\| = 1}\p\Big[\ell^\T\big\{\Gamma(X) - \hat{\Gamma}(X)\big\}\Big]^2
\\\nonumber
&= \big\|P\big[\big\{\Gamma(X) - \hat{\Gamma}(X)\big\}\big\{\Gamma(X) - \hat{\Gamma}(X)\big\}^\T\big]\big\|_2
\\\label{mi13sss}
& = \mathcal{O}(r_{\gamma}^2e_n^2),
\end{align}
where the second inequality follows by Assumption \ref{am:moment} and $\mathcal{E}_n$. For the second term on the right-hand side of \eqref{lmi3:decom}, we first note
\bee\nonumber
&\sup_{(x,t)\in\mathbb{X}\times\mathbb{T}}\Big|\big[ \tilde{\tau}(x,t) - \E\{\tilde{\tau}(X,T)\mid X = x\}  - \{\Psi( x,t) - \Gamma(x)\}^\T{\phi^*}\big]\Big|^2
\\
& = \mathcal{O}(K^{-2p/(d + 1)}),
\ee
by Proposition \ref{lm:psieapprox} similar to \eqref{mi2square}. Then we have
\bee\nonumber
&\p\Big[\big[ \{\Psi(X,T) - \Gamma(X)\}^\T{\phi^*} - [\tilde{\tau}(X,T) - E\{\tilde{\tau}(X,T)\mid X\}]\big]^2\big[\ell_n^\T\big\{\Gamma(X) - \hat{\Gamma}(X)\big]^2\Big]
\\
&\leq \sup_{(x,t)\in\mathbb{X}\times\mathbb{T}}\Big|\big[ \tilde{\tau}(x,t) - \E\{\tilde{\tau}(X,T)\mid x\}  - \{\Psi(x,t) - \Gamma( x)\}^\T{\phi^*}\big]\Big|^2\cdot\p\Big[\big[\ell_n^\T\big\{\Gamma(X) - \hat{\Gamma}(X)\big]^2\Big]
\\
&=\mathcal{O}(K^{-2p/(d + 1)}r_{\gamma}^2e_n^2).
\ee
For the third term on the right-hand side of \eqref{lmi3:decom}, we note $\|\tilde{\tau}\|_{\mathbb{X}\times\mathbb{T}} \precsim 1$ as $\tilde{\tau}$ is in the H\"older class, and thus $\|E(\tilde{\tau}\mid \cdot)\|_{\mathbb{X}}\leq \sup_{x \in\mathbb{X}}E(\|\tilde{\tau}\|_{\mathbb{X}\times\mathbb{T}}\mid X = x) = \|\tilde{\tau}\|_{\mathbb{X}\times\mathbb{T}}\precsim 1$. Then we have
\bee\nonumber
&\p\Big[\big[ \tilde{\tau}(X,T) - E\{\tilde{\tau}(X,T)\mid X\}\big]^2\big[\ell_n^\T\big\{\Gamma(X) - \hat{\Gamma}(X)\big\}\big]^2\Big]
\\
&\leq \big\| \tilde{\tau} - E\{\tilde{\tau}(X,T)\mid X = \cdot\}\big\|_{\mathbb{X}\times\mathbb{T}}^2\cdot\p\big[\ell_n^\T\big\{\Gamma(X) - \hat{\Gamma}(X)\big\}\big]^2
\\
&\leq \Big[2\big\| \tilde{\tau}\big\|_{\mathbb{X}\times\mathbb{T}}^2 + 2\big\|E\{\tilde{\tau}(X,T)\mid X = \cdot\}\big\|_{\mathbb{X}}^2\Big]\cdot\p\Big[\big[\ell_n^\T\big\{\Gamma(X) - \hat{\Gamma}(X)\big]^2\Big]
\\
&\precsim \p\Big[\ell_n^\T\big\{\Gamma(X) - \hat{\Gamma}(X)\Big]^2
\\
&=\mathcal{O}(r_{\gamma}^2e_n^2),
\ee  
similar to \eqref{mi13sss}. In summary, we have
\bee\nonumber
\p(\ell_n^\T m_{i,3})^2 = \mathcal{O}(r_{\gamma}^2e_n^2).
\ee
By taking $v_i = \ell_n^\T m_{i,3}$, $d = 1$, $A_n = C_6r_{\gamma}^2e_n^2$, $J = c_6r_\gamma$ in Proposition \ref{clm1}, we have
\bee\nonumber
{\Pr}_n\Big(|\ell_n^\T\Delta_{1,3}| > c_6r_\gamma /\sqrt{n}\Big) \leq \frac{4C_6 r_{\gamma}^2e_n^2}{r_{\gamma}^2c_6^2} = 4C_6 e_n^2/c_6^2 \rightarrow 0.
\ee
Thus similar to \eqref{temp:uncon}, we can directly uncondition $\mathcal{E}_n$ and conclude  $|\ell_n^\T\Delta_{1,3}| =o_P(r_\gamma /\sqrt{n})$.
\par
\
\par
\noindent\fbox{Proof of Lemma \ref{l3} (iv)} We still condition on $\mathcal{E}_n$. We let $ m_{i,4} = \big[m(X_i) - \hat{m}(X_i) - \{\Gamma(X_i) - \hat{\Gamma}(X_i)\}^\T{\phi^*}\big]\big\{\Psi(X_i,T_i)-\Gamma(X_i)\big\}$, thus $\Delta_{1,4} = P_n(m_{i,4})$. By the law of total expectation we have, 
\bee\label{mi41}
&\p[ m_{i,4}]
\\
&=\p\Big[\big[m(X) - \hat{m}(X) - \{\Gamma(X) - \hat{\Gamma}(X)\}^\T{\phi^*}\big]\big\{\Psi(X,T)-\Gamma(X)\big\}\Big] 
\\
&=\p\Big[\big[m(X) - \hat{m}(X) - \{\Gamma(X) - \hat{\Gamma}(X)\}^\T{\phi^*}\big]\p\big\{\Psi(X,T)-\Gamma(X)\mid X\big\}\Big]
\\
&= 0.
\ee
By the triangle inequality, we have
\bee\nonumber
&\p\| m_{i,4}\|^2
\\
&\leq 2\p \Big[\{m(X) - \hat{m}(X)\}^2\big\|\Psi(X,T)-\Gamma(X)\big\|^2\Big] + 2\p \Big[\big[\{\Gamma(X) - \hat{\Gamma}(X)\}^\T{\phi^*}\big]^2\big\|\Psi(X,T)-\Gamma(X)\big\|^2\Big]
\\
&\leq 2\big\|\Psi-\Gamma\big\|^2_{\mathbb{X}\times \mathbb{T}}\Big[\p \big\{m(X) - \hat{m}(X)\big\}^2 + \| \phi^*\|^2\p\big[\{\Gamma(X) - \hat{\Gamma}(X)\}^\T\big(\phi^*/\|\phi^*\|\big)\big]^2 \Big]
\\
&\leq 2\big\|\Psi-\Gamma\big\|^2_{\mathbb{X}\times \mathbb{T}}\Big\{\|\hat{m} - m\|_{\mathcal{L}_{\mathcal{P}}^2}^2 + \| \phi^*\|^2\big\|{P}\big[\{\hat{\M\Gamma}(X) - {\M\Gamma}(X)\}\{\hat{\M\Gamma}(X) - \M\Gamma(X)\}^{\T}\big]\big\|_2  \Big\}
\\
&= \mathcal{O}(Kr_m^2e_n^2 + Kr_{\gamma}^2e_n^2),
\ee
where  the last equality follows by $\mathcal{E}_n$. By taking $v_i = \ell_n^\T m_{i,4}$, $d = K$, $A_n = C_7(Kr_m^2e_n^2 + Kr_{\gamma}^2e_n^2)$, $J = c_7r_m + c_7r_\gamma$ in Proposition \ref{clm1}, for some sufficiently large yet fixed $C_7 > 0$ and any given $c_7 > 0$, 
\bee\nonumber
&{\Pr}_n\Big\{\|\Delta_{1,4}\| > c_7(r_m + r_\gamma)\sqrt{K/n}\Big\} 
\\
&\leq \frac{4C_7 (Kr_m^2e_n^2 + Kr_{\gamma}^2e_n^2)}{c_7^2(r_m + r_\gamma)^2K}
\\
&\precsim e_n^2
\\
& \rightarrow 0.
\ee
which  implies $\|\Delta_{1,4}\| = o_{{P}}(r_m\sqrt{K/n} + r_\gamma\sqrt{K/n})$ by directly unconditioning $\mathcal{E}_n$, similar to \eqref{temp:uncon}.
\par
On the other hand, similar to \eqref{mi41} one has $P(\ell_n^\T m_{i,4}) = 0$. Also we have
\bee\nonumber
& P(\ell_n^\T m_{i,4})^2
\\
& = P\Big[\big[m(X) - \hat{m}(X) - \{\Gamma(X) - \hat{\Gamma}(X)\}^\T{\phi^*}\big]^2\big\{\ell_n^\T\Psi(X,T)-\ell_n^\T\Gamma(X)\big\}^2\Big]
\\
&\leq  2\Big[\|m - \hat{m}\|_{\mathbb{X}}^2 + \|\{\Gamma - \hat{\Gamma}\}^\T{\phi^*}\|_{\mathbb{X}}^2 \Big]\cdot P\Big[\ell_n^\T\Psi(X,T)-\ell_n^\T\Gamma(X)\Big]^2
\\
&\leq  4e_n^2 \|R_n\|_2
\\
& = \mathcal{O}(e_n^2),
\ee
where the second inequality is due to $\mathcal{E}_n$, and the last equality follows by Lemma \ref{am:svb} such that $\|R_n\|_2\precsim 1$. By taking $v_i = \ell_n^\T m_{i,4}$, $d = 1$, $A_n = C_8e_n^2$, and $J = c_8$ for some sufficiently large yet fixed $C_8 > 0$ and any given $c_8 > 0$, we then have
\bee\nonumber
{\Pr}_n\Big(|\ell_n^\T\Delta_{1,4}| > c_8 1/\sqrt{n}\Big) \leq \frac{4C_8e_n^2}{c_8^2}\rightarrow 0,
\ee
which, after unconditioning $\mathcal{E}_n$ similar to \eqref{temp:uncon}, we have $|\ell_n^\T\Delta_{1,4}| = o_P(1/\sqrt{n})$.
\par
\
\par
\noindent\fbox{Proof of Lemma \ref{l3} (v)}  We still condition on $\mathcal{E}_n$. Let $ m_{i,5} = [m(X_i) - \hat{m}(X_i) - \{\Gamma(X_i) - \hat{\Gamma}(X_i)\}^\T{\phi^*}]\big\{\Gamma(X_i)-\hat{\Gamma}(X_i)\big\}$, and thus $\Delta_{1,5} = P_n(m_{i,5})$. First we have
\begin{align}\nonumber
&\|\p(m_{i,5})\|
\\\nonumber
& = \sup_{\|\ell\| = 1}\Big|P\Big[[m(X) - \hat{m}(X) - \{\Gamma(X) - \hat{\Gamma}(X)\}^\T{\phi^*}]\big\{\ell^\T\Gamma(X)-\ell^\T\hat{\Gamma}(X)\big\}\Big]\Big|
\\\nonumber
&\leq\sup_{\|\ell\| = 1}\Big|P\Big[\{m(X) - \hat{m}(X)\}\big\{\ell^\T\Gamma(X)-\ell^\T\hat{\Gamma}(X)\big\}\Big]\Big|
\\\nonumber
&+\|\phi^*\|\sup_{\|\ell\| = 1}\Big|P\Big[\big[ \{\Gamma(X) - \hat{\Gamma}(X)\}^\T{\phi^*}/\|\phi^*\|\big]\big\{\ell^\T\Gamma(X)-\ell^\T\hat{\Gamma}(X)\big\}\Big]\Big|
\\\nonumber
&\leq \|\hat{m} - m\|_{\mathcal{L}_{\mathcal{P}}^2} \big\|P[\{\Gamma(X) - \hat{\Gamma}(X)\}\{\Gamma(X) - \hat{\Gamma}(X)\}^\T]\big\|_{2}^{1/2}
\\\nonumber
&+\|\phi^*\|\sup_{\|\ell\| = 1}\Big|P\Big[({\phi^*}/\|\phi^*\|)^\T\{\Gamma(X) - \hat{\Gamma}(X)\}\big\{\Gamma(X)-\hat{\Gamma}(X)\big\}^\T\ell\Big]\Big|
\\\nonumber
&\leq \|\hat{m} - m\|_{\mathcal{L}_{\mathcal{P}}^2} \big\|P[\{\Gamma(X) - \hat{\Gamma}(X)\}\{\Gamma(X) - \hat{\Gamma}(X)\}^\T]\big\|_{2}^{1/2}
+\|\phi^*\|\big\|P[\{\Gamma(X) - \hat{\Gamma}(X)\}\{\Gamma(X) - \hat{\Gamma}(X)\}^\T]\big\|_{2}
\\\nonumber
&=\mathcal{O}(e_n^2r_{m}r_{\gamma} +e_n^2 r^2_{\gamma})
\\\label{mi51}
&=o(r_{m}r_{\gamma} + r^2_{\gamma}),
\end{align}
where the second inequality follows by Cauchy--Schwarz inequality; the last equality follows by $\mathcal{E}_n$, and Proposition \ref{lm:psieapprox} such that $\|\phi^*\|\precsim 1$. On the other hand,
\bee\nonumber
&\p(\| m_{i,5}\|^2)
\\
&\leq 2\p\Big[\{m(X) - \hat{m}(X)\}^2\|\Gamma(X)-\hat{\Gamma}(X)\|^2\Big] + 2\p \Big[\big[\{\Gamma(X) - \hat{\Gamma}(X)\}^\T{\phi^*}\big]^2\|\Gamma(X)-\hat{\Gamma}(X)\|^2\Big]
\\
&\leq 2\|\Gamma-\hat{\Gamma}\|^2_{\mathbb{X}}\cdot\Big[\|m - \hat{m}\|_{\mathcal{L}^2_{\mathcal{P}}}^2 + \|\phi^*\|^2\big\|{P}\big[\{\hat{\M\Gamma}_n(X) - {\M\Gamma}_n(X)\}\{\hat{\M\Gamma}_n(X) - \M\Gamma(X)\}^{\T}\big]\big\|_2\Big]
\\
&=\mathcal{O}(e_n^4K {r_{\gamma}'}^2 r_m^2 + e_n^4K {r_{\gamma}'}^2 r_\gamma^2). 
\ee
Then by taking $v_i = m_{i,5}$, $d = K$, $A_n = C_9e_n^4K {r_{\gamma}'}^2 r_m^2 + C_9e_n^4K {r_{\gamma}'}^2 r_\gamma^2$, and $J = c_9({r_{\gamma}'} r_m + {r_{\gamma}'} r_\gamma)$ in Proposition \ref{clm1}, we have, for any fixed $c_9 > 0$ and some sufficiently large yet fixed $C_9 > 0$,
\bee\nonumber
{\Pr}_n\Big(\big\|\Delta_{1,5} - \p(m_{i,5})\big\| > c_9({r_{\gamma}'} r_m + {r_{\gamma}'} r_\gamma) \sqrt{K/n}\Big) &\leq \frac{4C_9e_n^4K {r_{\gamma}'}^2 r_m^2 + 4C_9e_n^4K {r_{\gamma}'}^2 r_\gamma^2}{c^2_9({r_{\gamma}'} r_m + {r_{\gamma}'} r_\gamma)^2K}
\\
&\precsim e_n^4
\\
&\rightarrow 0.
\ee
Thus under $\mathcal{E}_n$, we have 
\bee\nonumber
\|\Delta_{1,5}\| &\leq \|P(m_{i,5})\| + \|\Delta_{1,5} - \p(m_{i,5})\| 
\\
&= o_P(r_{m}r_{\gamma} + r^2_{\gamma} + {r_{\gamma}'} r_m\sqrt{K/n} + {r_{\gamma}'} r_\gamma\sqrt{K/n} ).
\ee Similar to \eqref{temp:uncon}, we can  uncondition $\mathcal{E}_n$, and the same bound of $\|\Delta_{1,5}\|$ still holds.
\par
On the other hand, by \eqref{mi51} we further have
\bee\nonumber
|P(\ell_n^\T m_{i,5})| &\leq \sup_{\|\ell\| = 1}|P(\ell^\T m_{i,5})|
\\
& = \|P( m_{i,5})\|
\\
&=\mathcal{O}(e_n^2r_{m}r_{\gamma} +e_n^2 r^2_{\gamma})
\\
&=o(r_{m}r_{\gamma} + r^2_{\gamma}).
\ee
In addition, under $\mathcal{E}_n$, we have
\bee\nonumber
P(\ell_n^\T m_{i,5})^2 &= P\Big[[m(X) - \hat{m}(X) - \{\Gamma(X) - \hat{\Gamma}(X)\}^\T{\phi^*}]^2\big\{\ell_n^\T\Gamma(X)-\ell_n^\T\hat{\Gamma}(X)\big\}^2\Big]
\\
&\leq 2\Big\{\|m - \hat{m}\|_{\mathbb{X}}^2 + \|(\Gamma - \hat{\Gamma})^\T{\phi^*}\|_{\mathbb{X}}^2 \Big\} \cdot P\Big\{\ell_n^\T\Gamma(X)-\ell_n^\T\hat{\Gamma}(X)\Big\}^2
\\
&\precsim e_n^2\big\|P[\{\hat{\Gamma} (X)- \Gamma(X)\}\{\hat{\Gamma}(X) - \Gamma(X)\}^\T]\big\|_2
\\
& = \mathcal{O}(e_n^2 r_{\gamma}^2).
\ee
Then by taking $v_i = \ell_n^\T m_{i,5}$, $d = 1$, $A_n = C_{10}e_n^2 r_{\gamma}^2$, and $J = c_{10}$ in Proposition \ref{clm1}, we have, for any fixed $c_{10} > 0$ and some sufficiently large yet fixed $C_{10} > 0$,
\bee\nonumber
{\Pr}_n\Big\{\big|\ell_n^\T\Delta_{1,5} - \p(\ell_n^\T m_{i,5})\big| > c_{10}r_{\gamma}/\sqrt{n}\Big\} \leq \frac{4C_{10}e_n^2 r_{\gamma}^2}{c_{10}^2r_{\gamma}^2}\rightarrow 0.
\ee
Thus under $\mathcal{E}_n$, $|\ell_n^\T\Delta_{1,5}| \leq |\ell_n^\T\Delta_{1,5} - \p(\ell_n^\T m_{i,5})| + |\p(\ell_n^\T m_{i,5})| = o_P(r_{\gamma}/\sqrt{n} + r_{m}r_{\gamma} + r^2_{\gamma})$. Similar to \eqref{temp:uncon}, we can then uncondition $\mathcal{E}_n$, and the same bound of $|\ell_n^\T\Delta_{1,5}|$ still holds.
\par
\QED
\end{proof}
\subsection{Proof of Proposition \ref{thm:id}}\label{sec:partIIint}
\noindent\fbox{Proof of Proposition \ref{thm:id} (i)} We first consider  a more general minimization problem,
\bee\label{thm:id:1}
q &= \argmin_{h\in \mathcal{L}^2_{\mathcal{P}}( X,T)} E\Big\{Y - m(X) - h( X,T)\Big\}^2
\\
&=\argmin_{h\in \mathcal{L}^2_{\mathcal{P}}( X,T)} E\Big\{Y - E(Y\mid X) - h( X,T)\Big\}^2.
\ee
Treat $Y - E(Y\mid X)$  as a  random variable. Then based on the least-square form of \eqref{thm:id:1}, we know \eqref{thm:id:1} is minimized, if and only if $q(X,T)$ is the conditional mean of $Y - E(Y\mid X)$ given $X$ and $T$ a.s., or equivalently, 
\bee\label{qet}
q(X,T) = E\{Y - E(Y\mid X)\mid X,T\}\text{ a.s..}
\ee  We rigorously show the above statement by contradiction. Suppose $q(X,T)$ is not a.s. $E\{Y - E(Y\mid X)\mid X,T\}$. Then we have 
\bee
q (X,T) &= E\{Y - E(Y\mid X)\mid X,T\} + u(X,T) \text{ a.s.},
\ee
for some $u(X,T) \neq 0$ with probability larger than $0$. We then have
\bee\label{poYXmain}
 &E\Big\{Y - E(Y\mid X) - q( X,T)\Big\}^2
 \\
 &=E\Big[Y - E(Y\mid X)-E\{Y - E(Y\mid X)\mid X,T\} - u(X,T)\Big]^2
 \\
 &=E\Big[Y - E(Y\mid X) - E\{Y - E(Y\mid X)\mid X,T\}\Big]^2
 \\
 & - 2E\Big[\Big[Y - E(Y\mid X) - E\{Y - E(Y\mid X)\mid X,T\}\Big]u(X,T)\Big] + E\big\{u^2(X,T)\big\}
 \\
 & = E\Big[Y - E(Y\mid X) - E\{Y - E(Y\mid X)\mid X,T\}\Big]^2 + E\big\{u^2(X,T)\big\}
 \\
 & > E\Big[Y - E(Y\mid X) - E\{Y - E(Y\mid X)\mid X,T\}\Big]^2,
\ee
where by the law of total expectation,
\bee\nonumber
&E\Big[\Big[Y - E(Y\mid X) - E\{Y - E(Y\mid X)\mid X,T\}\Big]u(X,T)\Big]
\\
&=E\Big[E\Big[Y - E(Y\mid X) - E\{Y - E(Y\mid X)\mid X,T\}\mid X,T\Big]u(X,T)\Big]
\\
& = E\{0\cdot u(X,T)\} 
\\
& = 0.
\ee
Thus \eqref{poYXmain}  implies that $q(X,T)$ satisfies \eqref{thm:id:1} if and only if \eqref{qet} holds. We then have
\bee\label{qequiv}
q(X,T) &= E\big\{Y - E(Y\mid  X)\mid  X,T\big\}  
\\
&= E(Y \mid  X,T) - E(Y\mid  X)
\\
&=  E(Y \mid X,T) - E\big\{E(Y\mid X,T)\mid X\big\}
\\
& = \tau(X,T) - E\{\tau(X,T)\mid X\},\text{ a.s..}
\ee
 The last equality follows by $\tau(X,T) = E(Y\mid X,T) - E(Y\mid X,T = 0)$ under Assumptions \ref{A:UNC} and \ref{A:NI}. Since $\tau\in \mathcal{L}_{\mathcal{P}}^2(X,T)$, it is easy to verify that $q$ is also in $ \mathcal{L}_{\mathcal{P}}^2(X,T)$ based on  \eqref{qequiv} and the Cauchy-Schwarz inequality,
 \bee\nonumber
 E[\tau(X,T) - E\{\tau(X,T)\mid X\}]^2 &\leq 2E\{\tau^2(X,T)\} + 2E[E\{\tau(X,T)\mid X\}]^2
 \\
 &\leq 2E\{\tau^2(X,T)\} + 2E[E\{\tau^2(X,T)\mid X\}]
 \\
 & = 4E\{\tau^2(X,T)\}
 \\
 & < +\infty.
 \ee
 \par Comparing \eqref{thm:id:1} and \eqref{opt:1},  if  $h\in \mathcal{L}_{\mathcal{P}}^2(X,T)$ is a minimum of $L_c(h)$, $h(X,T) - \E\{h(X,T)\mid X\}$ must minimize the general problem in \eqref{thm:id:1}, i.e.,
\bee\nonumber
h(X,T) - \E\{h(X,T)\mid X\} &=q(X,T) 
\\
&=  \tau(X,T) - \E\{\tau(X,T)\mid X\}\quad \text{a.s.},
\ee
which is equivalent to 
\bee\label{thm:id:2}
h(X,T) = \tau(X,T) + \big[\E\{h(X,T)\mid X\} - \E\{\tau(X,T)\mid X\}\big]\quad \text{a.s.}
\ee 
by \eqref{qequiv}.
Thus any $h\in \mathcal{L}_{\mathcal{P}}^2(X,T)$ minimizing \eqref{opt:1} must satisfy 
\bee\nonumber
h(X,T) = \tau(X,T) + s(X)\quad \text{a.s.},
\ee
for some $s\in\mathcal{L}_{\mathcal{P}}^2(X)$ such that $s(x) = E\{h(X,T)\mid X = x\} - E\{\tau(X,T)\mid X = x\}$.
\par
On the other hand, for arbitrary $s\in\mathcal{L}_{\mathcal{P}}^2(X)$, if $h(X,T) = \tau(X,T) + s(X)$ a.s, $h$ must satisfy
\bee\nonumber
h(X,T) - \E\{h(X,T)\mid X\} &= \tau(X,T) - \E\{\tau(X,T)\mid X\}
\\
&= q(X,T),\,\, \text{a.s.},
\ee
recalling $q$ is the solution of the general minimization problem \eqref{thm:id:1}. It is also easy to see that $h\in\mathcal{L}_{\mathcal{P}}^2(X,T)$ as both $s$ and $\tau$ have bounded $\mathcal{L}_{\mathcal{P}}^2$ norms. Therefore, $h$ is a minimum of $L_c(h)$.
\par
Summarizing the above results, we conclude that the solution set in $\mathcal{L}_{\mathcal{P}}^2(X,T)$ that minimizes $L_c(h)$ is exactly 
$\mathcal{S} = \{h \mid h(X,T) = \tau(X,T) + s(X)\,\text{a.s., for any }  s(x)\in \mathcal{L}^2_{\mathcal{P}}(X)\}$. 
\par
\noindent\fbox{Heuristic proof of Proposition \ref{thm:id} (ii)}\label{sec:exp}
Before heading to the formal proof of  Proposition \ref{thm:id} (ii), we first give an explanation about how \eqref{eq:shape} narrows $\mathcal{S}$ to $\mathcal{S}^{\natural}$. This explanation also gives some intuitions for the formal proof. 
\par
Notice that any other solution $h$  in $\mathcal{S}$ but not in $\mathcal{S}^{\natural}$, satisfies that  $h(X,T) = \tau(X,T) + s(X)$ a.s., with some $s(X)$ that is not a.s. zero, i.e.,  $\Pr\big[\Omega = \{X\text{ satisfies } s(X)\neq 0\}\big] > 0$. Therefore, under the positivity assumption such that $\Pr(T = 0\mid X = x) > \epsilon'$ for some fixed $\epsilon' > 0$ and any $x\in\mathbb{X}$, we have 
\bee
\Pr\big\{ h(X,0) = \tau(X,0) + s(X) = s(X)\neq 0\big\} &\geq \Pr\big[\{T = 0\}\cap\Omega\big] = \Pr(T = 0\mid \Omega)\Pr(\Omega)
\\
&\geq \epsilon' \cdot \Pr(\Omega) > 0,
 \ee
 which is in conflict with  \eqref{eq:shape}  such that $h(X,0)  = 0$ with probability $1$.
 \par
\noindent\fbox{Formal proof of Proposition \ref{thm:id} (ii)}  Write the optimization problem   for the bianry treatment as
\bee\label{Lbh11}
\argmin_{h\in\{h\mid  h(\cdot,1)\in\mathcal{L}_{\mathcal{P}}^2(X) \text{ and }h(X,0) = 0 \text{ a.s.}\}}E\big[Y - m(X)-\{T - e(X)\}h(X,1)\big]^2.
\ee 
Since the objective function above only involves $h(\cdot,1)$, we consider a simplified problem
\bee\label{Lbh22}
\argmin_{h'\in\mathcal{L}_{\mathcal{P}}^2(X)}E\big[Y - m(X)-\{T - e(X)\}h'(X)\big]^2.
\ee 
Let  $\mathcal{S}'$ be the solution set of \eqref{Lbh22}, and let $\tilde{\mathcal{S}}^{\natural}$ be the solution set of $
\argmin_{h\in\mathcal{L}_b}L_b(h)
$, and thus is also the solution set of \eqref{Lbh11}. In the following, we show 
\bee\label{final:target:po1}
\tilde{\mathcal{S}}^{\natural} = \{h\mid h(X,T) = \tau(X,T)\ \text{a.s.}\} = {\mathcal{S}}^{\natural}
\ee and thus finish the proof.
\par Comparing \eqref{Lbh11} and \eqref{Lbh22}, we have that any $h$ being a solution of \eqref{Lbh11} must satisfy 
\bee\nonumber
h(X,1) = h'(X)\text{ a.s..}
\ee
On the other hand, by the constraint of \eqref{Lbh11}, we also have $h(X,0) = 0$ a.s., when $h\in \tilde{\mathcal{S}}^{\natural}$. Therefore we have 
\bee\label{SSrelation}
\tilde{\mathcal{S}}^{\natural} \subseteq \mathcal{S}^{\natural\natural}  = \{h\mid h(X,1) = h'(X)\text{ a.s. for any $h'\in\mathcal{S}'$, and }h(X,0) = 0\text{ a.s.}\}.
\ee It is also easy to check that if $h\in \mathcal{S}^{\natural\natural}$, $h(\cdot,1)$ must minimize the objective function in \eqref{Lbh11}. Meanwhile, if $h\in \mathcal{S}^{\natural\natural}$, we also have $h(\cdot,1)\in\mathcal{L}_{\mathcal{P}}^2(X)$ as $h'\in\mathcal{L}_{\mathcal{P}}^2(X)$ and $h(X,0) = 0$ a.s., thus $h$ satisfies all the constraints and is a solution of \eqref{Lbh11}. So $ \mathcal{S}^{\natural\natural}\subseteq \tilde{\mathcal{S}}^{\natural}$, and by \eqref{SSrelation}, we have 
\bee\label{Snncon}
\tilde{\mathcal{S}}^{\natural} = \mathcal{S}^{\natural\natural} = \{h\mid h(X,1) = h'(X)\text{ a.s. for any $h'\in\mathcal{S}'$, and }h(X,0) = 0\text{ a.s.}\}.
\ee
\par
For simplicity, we denote $\tau(x) = \tau(x,1)$. Therefore, we have $\tau(\cdot)\in\mathcal{L}_{\mathcal{P}}^2(X)$ as $\tau(\cdot,1)\in\mathcal{L}_{\mathcal{P}}^2(X)$. We first prove that, 
\bee\label{S'con}
\mathcal{S}' = \{h'\mid h'(X) = \tau(X)\text{ a.s.}\}. 
\ee
Any $h'\in\mathcal{L}_{\mathcal{P}}^2(X)$ can be written as $h'(x) = \tau(x) + s'(x)$, where $s'(x) = \tau(x) - h'(x)\in\mathcal{L}_{\mathcal{P}}^2(X)$ since both $h'(x)$ and $\tau(x)$ are in $\mathcal{L}_{\mathcal{P}}^2(X)$. Then solving \eqref{Lbh22} is equivalent to solving 
\bee\label{binary:target}
\argmin_{s'\in\mathcal{L}_{\mathcal{P}}^2(X)}E\big[Y - m(X)-\{T - e(X)\}\{s'(X) +\tau(X)\}\big]^2.
\ee
The above square loss function can be decomposed into
\bee\label{po:1:1}
&E\big[Y - m(X)-\{T - e(X)\}\{s'(X) +\tau(X)\}\big]^2
\\
&=E\Big[\big[Y - m(X)-\{T - e(X)\}\tau(X)\big] - \big[\{T - e(X)\}s'(X)\big]\Big]^2
\\
&=E\big[Y - m(X)-\{T - e(X)\}\tau(X)\big]^2
\\
&-2E\Big[\big[Y - m(X)-\{T - e(X)\}\tau(X)\big]\big[\{T - e(X)\}s'(X)\big]\Big]
\\
&+E\big[\{T - e(X)\}s'(X)\big]^2.
\ee
For the second term on the right-hand side of \eqref{po:1:1},
\bee\nonumber
&E\Big[\big[Y - m(X)-\{T - e(X)\}\tau(X)\big]\big[\{T - e(X)\}s'(X)\big]\Big]
\\
&=E\Big[E\big[Y - m(X)-\{T - e(X)\}\tau(X)\mid X,T\big]\big[\{T - e(X)\}s'(X)\big]\Big]
\\
&=E\Big[0\cdot\big[\{T - e(X)\}s'(X)\big]\Big]
\\
& = 0,
\ee
where the first equality follows by the law of total expectation, and the second inequality follows by
\bee\nonumber
&E\big[Y - m(X)-\{T - e(X)\}\tau(X)\mid X,T\big]
\\
&=\mu(X,T) - m(X) - \{T - e(X)\}\tau(X)
\\
&=\E(Y\mid X,T) - E(Y\mid X) - \{T - \Pr(T = 1\mid X)\}\big\{E(Y\mid X,T = 1)-E(Y\mid X,T = 0)\big\}
\\
&=\E(Y\mid X,T) - \{\Pr(T = 1\mid X)E(Y\mid X,T = 1) +\Pr(T = 0\mid X)E(Y\mid X,T = 0)\}
\\
&- \{T - \Pr(T = 1\mid X)\}\big\{E(Y\mid X,T = 1)-E(Y\mid X,T = 0)\big\}
\\
&=T\E(Y\mid X,T = 1) + (1 - T)\E(Y\mid X,T = 0)
\\
& -\Pr(T = 1\mid X)E(Y\mid X,T = 1)  - \{1-\Pr(T = 1\mid X)\}E(Y\mid X,T = 0)
\\
&- \{T - \Pr(T = 1\mid X)\}\big\{E(Y\mid X,T = 1)-E(Y\mid X,T = 0)\big\}
\\
&=T\E(Y\mid X,T = 1)  - T\E(Y\mid X,T = 0)
\\
& -\Pr(T = 1\mid X)E(Y\mid X,T = 1)  +\Pr(T = 1\mid X)E(Y\mid X,T = 0)
\\
&- \{T - \Pr(T = 1\mid X)\}\big\{E(Y\mid X,T = 1)-E(Y\mid X,T = 0)\big\}
\\
&=0,
\ee
where the first equality follows by the definition of $\mu(x,t)$, the third equality follows by the law of total expectation, and the fourth equality follows by $\Pr(T = 0\mid X = x) = 1 - \Pr(T = 1\mid X = x)$ and 
\bee\nonumber
\E(Y\mid X,T) &= I(T = 1)\E(Y\mid X,T = 1) + I(T = 0)\E(Y\mid X,T = 0)
\\
&= T\E(Y\mid X,T = 1) + (1 - T)\E(Y\mid X,T = 0).
\ee
For the third term on the right-hand side of \eqref{po:1:1}, if $s(X)$ is not $0$ a.s.,
\bee\nonumber
&E\big[\{T - e(X)\}^2\{s'(X)\}^2\big]
\\
& = E\big[E\big[\{T - e(X)\}^2\mid X\big]\{s'(X)\}^2\big]
\\
&>(\epsilon')^3 E\{s'(X)\}^2
\\
& >0,
\ee
where the first equality follows by the law of total expectation, the first inequality follows by that  $E\big[\{T - e(X)\}^2\mid X = x\big] \geq \Pr(T = 1\mid X = x)E\big[\{1 - e(X)\}^2\mid X = x,T = 1\big] \geq (\epsilon')^3$ for any $x\in\mathbb{X}$ as $e(x)\in(\epsilon',1 - \epsilon')$, and the last inequality is because  $E\{s'(X)\}^2 > 0$ when $s'(X)$ is not $0$ a.s..
\par Summarizing the above results,  when $s'(X)$ is not $0$ a.s., we have
\bee\nonumber
&E\big[Y - m(X)-\{T - e(X)\}\{s'(X) +\tau(X)\}\big]^2
\\
&=E\big[Y - m(X)-\{T - e(X)\}\tau(X)\big]^2
\\
&+E\big[\{T - e(X)\}s'(X)\big]^2
\\
&>E\big[Y - m(X)-\{T - e(X)\}\tau(X)\big]^2.
\ee
That is, \eqref{binary:target} is solved if and only if $s'(X) = 0$ a.s., which is equivalent to that \eqref{Lbh22} is solved if and only if $h'(x)$ satisfies that $h'(X) = \tau(X)$ a.s.. Thus \eqref{S'con} is verified as desired. Combining \eqref{Snncon} and \eqref{S'con}, we have
\bee\nonumber
\tilde{\mathcal{S}}^{\natural}  = \{h\mid h(X,1) = \tau(X,1)\text{ a.s., and }h(X,0) = 0\text{ a.s..}\}.
\ee

Now we  aim at showing that, under the positivity assumption,
\bee\label{po1:target}
\{h\mid h(X,1) = \tau(X,1)\text{ a.s., and }h(X,0) = 0\text{ a.s.}\} = \{h\mid h(X,T) = \tau(X,T)\ \text{a.s.}\},
\ee
and thus show \eqref{final:target:po1} and finish the proof. First suppose $h \in \{h\mid h(X,1) = \tau(X,1)\text{ a.s., and }h(X,0) = 0\text{ a.s.}\}$. Then we have two subsets of $\mathbb{X}$, $\mathcal{X}_1,\mathcal{X}_2\subseteq \mathbb{X}$ such that $h(x,1) = \tau(x,1)$ when $x \in \mathcal{X}_1$, $h(x,0) = 0$ when $x \in \mathcal{X}_0$, and $\Pr(X\in\mathcal{X}_1) = \Pr(X\in\mathcal{X}_2) = 1$. Thus we have when $(x,t)\in \mathcal{X}_1\cap \mathcal{X}_2\times \{0,1\}$,
\bee\nonumber
h(x,t) = \tau(x,t).
\ee
On the other hand, since  $\Pr[(X,T)\in \mathcal{X}_1\cap \mathcal{X}_2\times \{0,1\}] = \Pr(X\in\mathcal{X}_1\cap \mathcal{X}_2) = 1$, we have $h(X,T) = \tau(X,T)$ a.s., which implies
\bee\label{po:final1}
\{h\mid h(X,1) = \tau(X,1)\text{ a.s., and }h(X,0) = 0\text{ a.s.}\} \subseteq \{h\mid h(X,T) = \tau(X,T)\ \text{a.s.}\}.
\ee
Second, suppose $h \in \{h\mid h(X,T) = \tau(X,T)\ \text{a.s.}\}$. Then there exists some $\Omega_1\in\mathbb{X}\times \mathbb{T}$ such that $\Pr\{(X,T)\in \Omega_1\} = 1$, and  when $(x,t)\in \Omega_1$ one has $h(x,t) = \tau(x,t)$ and $h(x,t) \neq \tau(x,t)$ otherwise. We now define two marginal sets,
\bee\nonumber
\mathcal{X}_3^{(0)} = \{x\mid (x,0) \in \Omega_1\},\text{ and }\mathcal{X}_3^{(1)} = \{x\mid (x,1) \in \Omega_1\}.
\ee
Now we prove $\Pr(X\in \mathcal{X}_3^{(0)}) = \Pr(X\in \mathcal{X}_3^{(1)}) = 1$ by contradiction. Assume $\Pr(X\in \mathcal{X}_3^{(0)}) < 1$. We have
\bee\label{keypo1}
\Pr\big[(X,T)\in (\mathbb{X}\setminus \mathcal{X}_3^{(0)})\times \{0\}\big] &= \Pr(X\in \mathbb{X}\setminus \mathcal{X}_3^{(0)})\Pr(T = 0\mid \text{given }X\in \mathbb{X}\setminus \mathcal{X}_3^{(0)})
\\
&\geq\big\{1 - \Pr(X\in \mathcal{X}_3^{(0)})\big\}\cdot \epsilon'
\\
& > 0.
\ee
By definition, we know  $(\mathbb{X}\setminus \mathcal{X}_3^{(0)})\times \{0\} \cap \Omega_1 = \varnothing$. Thus \eqref{keypo1} implies that with probability larger than $0$, we have $(X,T)\in (\mathbb{X}\times\mathbb{T})\setminus\Omega_1$, which is in conflict with $\Pr\{(X,T)\in \Omega_1\} = 1$. So we conclude $\Pr(X\in \mathcal{X}_3^{(0)}) = 1$. Recall that when $x\in\mathcal{X}_3^{(0)}$, we have $(x,0)\in \Omega_1$ and thus
$
h(x,0) = \tau(x,0) = 0.
$ We then have
\bee\nonumber
h(X,0) = 0\text{ a.s..}
\ee
With the same argument, we can show  $\Pr(X\in \mathcal{X}_3^{(1)}) = 1$ and thus $h(X,1) = \tau(X,1)\text{ a.s..}$ So we have $h \in\{h\mid h(X,1) = \tau(X,1)\text{ a.s., and }h(X,0) = 0\text{ a.s.}\}$ and thus
\bee\label{po:final2}
\{h\mid h(X,T) = \tau(X,T)\ \text{a.s.}\}\subseteq \{h\mid h(X,1) = \tau(X,1)\text{ a.s., and }h(X,0) = 0\text{ a.s.}\}.
\ee
Combining \eqref{po:final1} and \eqref{po:final2}, we thus show \eqref{po1:target} and \eqref{final:target:po1} and thereby complete the proof.
\QED

\subsection{Proof of Proposition~\ref{po:nonunique}}
Since the density function of $(X,T)$ is uniformly upper bounded, we have that a.s., 
$$
\check{\tau}(X,T\mid s) = \tau(X,T) + s(X),
$$ 
for any $s\in\mathcal{L}_{\mathcal{P}}^2(X)$, which implies that $\check{\tau}(x,t\mid s)\in \mathcal{S}$. By Proposition~\ref{thm:id}(i), we  have that $\check{\tau}(x,t\mid s)\in \mathcal{L}_{\mathcal{P}}^2(X,T)\cap\{h\mid h(x,0) = 0\}\subseteq \mathcal{L}_{\mathcal{P}}^2(X,T)$, solves \eqref{opt:1} and thus solves \eqref{opt:1:rec}.
\QED

\subsection{Proof of Theorem \ref{pro:tech}}\label{SSthm1pf}
With a basic decomposition, one has 
\bee\label{pf:po1:1}
&L_c(h) 
\\
&=  E\big[\tau(X,T) - E\{\tau(X,T)\mid X\}-[h(X,T) - E\{h(X,T)\mid X\}]\big]^2
\\
&+ E\big[Y - m(X)-[\tau(X,T) - E\{\tau(X,T)\mid X\}]\big]^2
\\
&+ 2 E\Big[\big[\tau(X,T) - E\{\tau(X,T)\mid X\}-[h(X,T) - E\{h(X,T)\mid X\}]\big]\big[Y - m(X)-[\tau(X,T) - E\{\tau(X,T)\mid X\}]\big]\Big].
\ee
Under Assumptions \ref{A:UNC} and \ref{A:NI} and by the law of total expectation, one has
\bee\nonumber
\E\big\{Y - m(X)\mid X,T\big\} &= \E\big(Y\mid X,T\big) - \E\big(Y\mid X,T = 0\big) - \big\{E(Y\mid X) - \E\big(Y\mid X,T = 0\big)\big\}
\\
&=\tau(X,T) - \big[E\{E(Y\mid X,T)\mid X\} - \E\big(Y\mid X,T = 0\big)\big]
\\
& = \tau(X,T) - E\big\{\E\big(Y\mid X,T\big) - \E\big(Y\mid X,T = 0\big)\mid X\big\}
\\
& = \tau(X,T) - E\big\{\tau(X,T)\mid X\big\},
\ee
and therefore,
\bee\nonumber
&E\Big[\big[\tau(X,T) - E\{\tau(X,T)\mid X\}-[h(X,T) - E\{h(X,T)\mid X\}]\big]\big[Y - m(X)-[\tau(X,T) - E\{\tau(X,T)\mid X\}]\big]\Big]
\\
&=E\Big[\big[\tau(X,T) - E\{\tau(X,T)\mid X\}-[h(X,T) - E\{h(X,T)\mid X\}]\big]
\\
&\cdot E\big[Y - m(X)-[\tau(X,T) - E\{\tau(X,T)\mid X\}]\mid X,T\big]\Big]
\\
&=E\Big[\big[\tau(X,T) - E\{\tau(X,T)\mid X\}-[h(X,T) - E\{h(X,T)\mid X\}]\big]
\\
&\cdot \underbrace{\big[\E\big\{Y - m(X)\mid X,T\big\}-[\tau(X,T) - E\{\tau(X,T)\mid X\}]\big]}_{ = 0}\Big]
\\
& = 0.
\ee
Combining \eqref{pf:po1:1} with the above display, one has
\bee\nonumber
L_c(h) &= E\big[\tau(X,T) - E\{\tau(X,T)\mid X\}-[h(X,T) - E\{h(X,T)\mid X\}]\big]^2
\\
&+ E\big[Y - m(X)-[\tau(X,T) - E\{\tau(X,T)\mid X\}]\big]^2.
\ee
For simplicity, we define an operator $\Pi(\cdot)$ for any $h\in\mathcal{L}^2_{\mathcal{P}}(X,T)$ such that $$\Pi(h)(x,t) = h(x,t) - E\{h(X,T)\mid X = x\}.$$ Therefore, we have,
\bee\nonumber
L_{c,\ell_2}(h\mid \rho) &= L_c(h) +\rho \|h\|_{\mathcal{L}^2_{\mathcal{P}}}^2
\\
&=E\big[\tau(X,T) - E\{\tau(X,T)\mid X\}-[h(X,T) - E\{h(X,T)\mid X\}]\big]^2
\\
&+ E\big[Y - m(X)-[\tau(X,T) - E\{\tau(X,T)\mid X\}]\big]^2 
\\
& +\rho \|h\|_{\mathcal{L}^2_{\mathcal{P}}}^2
\\
&=E\big[\Pi(h)(X,T) - [\tau(X,T) - E\{\tau(X,T)\mid X\}]\big]^2 
\\
&+ \rho\|h\|^2_{\mathcal{L}_{\mathcal{P}}^2}
\\
&+E\big[Y - m(X)-[\tau(X,T) - E\{\tau(X,T)\mid X\}]\big]^2,
\ee
which  implies that     $\argmin_{h\in\mathcal{L}_{\mathcal{P}}^2(X,T)}L_{c,\ell_2}(h\mid \rho)$ is equivalent to
\bee\label{thm:trans}
&\argmin_{h\in\mathcal{L}_{\mathcal{P}}^2(X,T)}E\big[\Pi(h)(X,T) - [\tau(X,T) - E\{\tau(X,T)\mid X\}]\big]^2 + \rho\|h\|^2_{\mathcal{L}_{\mathcal{P}}^2}
\\
&=\argmin_{h\in\mathcal{L}_{\mathcal{P}}^2(X,T)}\mathcal{F}(h),
\ee
where we define $\mathcal{F}(h)= E\big[\Pi(h)(X,T) - \tilde{\tau}(X,T)\big]^2 + \rho E\big\{h(X,T)\big\}^2$ and $\tilde{\tau}(x,t) = \tau(x,t) - E\{\tau(X,T)\mid X =  x\}$.
\par
Next, we prove $L_{c,\ell_2}(h\mid \rho)$ has a unique minimum among $\mathcal{L}^2_{\mathcal{P}}(X,T)$. By above derivations, we only need to show the minimum of \eqref{thm:trans} is unique. An argument similar to the proof of Lemma 1.1 in \citet{tikhonov1995numerical} can shows that $\mathcal{F}(h)$ has a unique solution. In particular, we first show $\Pi(\cdot)$ is a self-adjoint operator; see, e.g., \citet[$\mathsection$12.11]{rudin} for the definition of a self-adjoint operator. For any $h_1, h_2\in \mathcal{L}^2_{\mathcal{P}}(X,T)$, we have
\bee\nonumber
\big< \Pi(h_1),h_2\big>_{\mathcal{L}^2_{\mathcal{P}}(X,T)} &= E\Big[\Pi(h_1)(X,T)h_2(X,T)\Big]
\\
&=E\Big[\big[h_1(X,T) - E\{h_1(X,T)\mid X\}\big]h_2(X,T)\Big]
\\
&=E\Big[\big[h_1(X,T)h_2(X,T)\Big] - E\Big[E\{h_1(X,T)\mid X\}h_2(X,T)\Big]
\\
&=E\Big[\big[h_1(X,T)h_2(X,T)\Big] - E\Big[E\{h_1(X,T)\mid X\}E\{h_2(X,T)\mid X\}\Big],
\ee
where $\big<\cdot,\cdot\big>_{\mathcal{L}^2_{\mathcal{P}}(X,T)}$ denotes the inner product of the Hilbert space $\mathcal{L}^2_{\mathcal{P}}(X,T)$, and the last equality follows by the law of total expectation. By symmetry, we can also show
\bee\nonumber
\big<h_1,\Pi(h_2)\big>_{\mathcal{L}^2_{\mathcal{P}}(X,T)} &= E\Big[h_1(X,T)h_2(X,T)\Big] - E\Big[E\{h_1(X,T)\mid X\}E\{h_2(X,T)\mid X\}\Big]
\\
&=\big< \Pi(h_1),h_2\big>_{\mathcal{L}^2_{\mathcal{P}}(X,T)},
\ee
which by definition \citep[ e.g.,][$\mathsection$12.11]{rudin}, implies that $\Pi(\cdot)$ is a self-adjoint operator from  $\mathcal{L}^2_{\mathcal{P}}(X,T)$ to $\mathcal{L}^2_{\mathcal{P}}(X,T)$, i.e. the adjoint operator of $\Pi(\cdot)$ is still  $\Pi(\cdot)$. Then, similar to the proof of Lemma 1.1 in \citet{tikhonov1995numerical}, we can denote the second-order Fr\'echet derivate of $\mathcal{F}(h)$ by $D^2 \mathcal{F}(h)$ and have that 
\bee\nonumber
D^2 \mathcal{F}(h) &= 2\Pi\big\{\Pi(h)\big\} + 2\rho h
\\
&=2\Pi(h) + 2\rho h
\\
&=2h- 2E\{h(X,T)\mid X = \cdot\} + 2\rho h,
\ee
where the second equality follows by $\Pi\{\Pi(\cdot)\} = \Pi(\cdot)$ after checking the definition. We now have
\bee\label{hahagaoding}
&\big<D^2 \mathcal{F}(h),h\big>_{\mathcal{L}^2_{\mathcal{P}}(X,T)}
\\
& = 2\Big[E\{h(X,T)\}^2 -E\big[h(X,T)E\{h(X,T)\mid X \}\big]\Big] + 2\rho E\{h(X,T)\}^2
\\
& = 2\Big[E\{h(X,T)\}^2 -E\big[E\{h(X,T)\mid X\}\big]^2\Big] + 2\rho E\{h(X,T)\}^2
\\
&\geq 2\rho E\{h(X,T)\}^2
\\
&>0,
\ee
where the first equality follows by the law of expectation, the first inequality follows by $E\big[E\{h(X,T)\mid X\}\big]^2 \leq E\{h(X,T)\}^2$ due to Cauchy-Schwarz inequality.  \eqref{hahagaoding} implies that $\mathcal{F}(h)$ is a strictly convex functional, and thus $\mathcal{F}(h)$ has a unique minimum among $\mathcal{L}^2_{\mathcal{P}}(X,T)$; see, e.g.,  \citet{zeidler2013nonlinear}. Thus \eqref{id:eq2} has a unique solution, namely, $\tau_\rho(X,T)$ in $\mathcal{L}^2_{\mathcal{P}}(X,T)$.
\par
Next we derive the concrete form of $\tau_\rho$. Since we already know $\tau_\rho(X,T)$ is unique and is the minimum of $\mathcal{F}(h)$, we know that
\bee\nonumber
\mathcal{F}(\tau_\rho + a\cdot\alpha) \geq \mathcal{F}(\tau_\rho)
\ee
for any $a\in \RR$ and $\alpha\in\mathcal{L}_{\mathcal{P}}^2(X,T)$. This implies 
\bee\nonumber
 0 & = \frac{d}{d a}\mathcal{F}(\tau_\rho+a\cdot\alpha) \,\Big|_{a = 0}
\\
&= \E\left[\{Y - m(X) - \tau_\rho(X,T) + \E\{\tau_\rho(X,T)\mid X\}\}\cdot\{-\alpha(X,T) + \E\{\alpha(X,T)\mid X\}\}\right] + \rho\E\{\tau_\rho(X,T)\alpha(X,T)\}
\\
&= \E\Big[ [-\mu(X,T) + m(X) + (\rho+1)\tau_\rho(X,T) - \E\{\tau_\rho(X,T)\mid X\}]\{\alpha(X,T) \}\Big]. 
\ee
Let $\alpha(X,T) =-\mu(X,T) + m(X) + (\rho+1)\tau_\rho(X,T) - \E\{\tau_\rho(X,T)\mid X\} $, then above display implies that
\bee\label{S113}
 & (\rho+1)\tau_\rho(X,T) - \E\{\tau_\rho(X,T)\mid X\} = \mu(X,T) - m(X)
\\
\Rightarrow & \E\left[ (\rho+1)\tau_\rho(X,T) - \E\{\tau_\rho(X,T)\mid X\}\mid X\right] = 0
\\
\Rightarrow & \E[\tau_\rho(X,T)\mid X] = 0,
\ee
with $\rho > 0$. Then the first equation and third equation of \eqref{S113} together imply that
\bee\nonumber
\tau_\rho(X,T) &= \frac{\mu(X,T)-m(X)}{\rho + 1} 
\\
&= \frac{\E(Y^{(T)} - Y^{(0)}\mid X) - \E_{T'\sim\varpi(\cdot\mid X)}(Y^{(T')} - Y^{(0)}\mid X)}{\rho + 1}
\\
&=\frac{\tau(X,T)- \E_{T'\sim\varpi(\cdot\mid X)}\{\tau(X,T')\mid X\}}{1+\rho} 
\\
&= \frac{\tilde{\tau}(X,T)}{1+\rho}.
\ee
Since we have proved $\tau_{\rho}(X,T)$ is unique, the above display gives the concrete form of $\tau_\rho(X,T)$.

\QED
\subsection{Proof of Theorem~\ref{thm:onetwoid}}
We have a.s.,
\bee\label{has,eq}
h(X,T) = \tau(X,T)  + {s} (X),
\ee
for some $ {s}(x)  \in\mathcal{L}^2_{\mathcal{P}}(X)$. 
 We first prove a claim such that, with probability $1$,
\bee\label{hXteq}
h(X,t) =  \tau(X,t) +  {s} (X),
\ee
holds almost everywhere over $t\in\mathbb{T}$. We  prove this   claim by contradiction. Suppose this claim does not hold. Then there exists $\Theta\subseteq \mathbb{X}$ such that $\Pr(X\in\Theta) = p_X > 0$, and for any $x\in \Theta_X$ there exists $\mathbb{T}_x\subseteq \mathbb{T}$ such that $\lambda(\mathbb{T}_x)  > 0$ and 
$$
h(x,t) \neq  \tau(x,t) +  {s}(x)\text{ for all }t\in\mathbb{T}_x,
$$
where $\lambda(\cdot)$ is the Lebesgue measure over $\mathbb{R}$. Now we denote $\Theta({\bar\epsilon}) = \{x\mid x\in\Theta,\lambda(\mathbb{T}_x) > {\bar\epsilon}\}$, and  events: $\mathcal{E} = \{X\in\Theta\}$ and $\mathcal{E}_{\bar\epsilon} = \{X\in\Theta({\bar\epsilon})\}$. By definition, we have $\lim_{{\bar\epsilon} \rightarrow 0}\mathcal{E}_{\bar\epsilon} = \mathcal{E}$, and by the continuity of probability measure \citep[Theorem 1.8]{wasserman2004all},
$$
\lim_{{\bar\epsilon} \rightarrow 0}\Pr\{X\in\Theta({{\bar\epsilon}})\} = \lim_{{\bar\epsilon} \rightarrow 0}\Pr(\mathcal{E}_{{\bar\epsilon}}) = \lim_{{\bar\epsilon} \rightarrow 0}\Pr(\mathcal{E}) = \Pr(X\in\Theta) = p_X.
$$ 
This implies that there exists some ${\bar\epsilon}' > 0$ such that $\Pr\{X\in\Theta({{\bar\epsilon}'})\} \geq p_X/2$. Then we have 
\bee\label{eq:contract}
\Pr\big\{ h(X,T) \neq  \tau(X,T) +  {s}(X)\big\} &\geq \Pr\big\{X\in\Theta({{\bar\epsilon}')}\text{ and }T\in \mathbb{T}_{X} \big\} 
\\
&=\Pr\big\{X\in\Theta({{\bar\epsilon}'})\big\}\Pr\big\{T\in \mathbb{T}_{X} \mid X\in\Theta({{\bar\epsilon}'}) \big\}
\\
&\geq \epsilon{\bar\epsilon}' p_X/2 > 0,
\ee
under Assumption~\ref{A:CS}. Equation \eqref{eq:contract} is in contradiction with \eqref{has,eq}, and thus we prove \eqref{hXteq}. Since \eqref{hXteq} holds a.e. over $\mathbb{T}$ and $\tau$ is continuous at $t=0$, we have that a.s.,
\bee\label{has,eq2}
h(X,0) = \tau(X,0) + s(X) = s(X). 
\ee
Thus we have by \eqref{has,eq} and \eqref{has,eq2}, a.s.,
\bee\nonumber
\mathscr{C} (h)(X,T) = (1+\rho)\cdot (1+\rho)^{-1}\big\{h(X,T) - h(X,0)\big\}
 =  {\tau}(X,T).
\ee
\subsection{Proof of Proposition \ref{po:nui:equal2}}\label{sec:po4}
First we note that when estimating   
 $
\hat{\Gamma}(x)$ through  $\hat{\Gamma}(x) = \E_{\hat{\varpi}}\{ \Psi(X,T)\mid X = x\}$, we have
\bee\nonumber
\hat{\Gamma}(x) &= \E_{\hat{\varpi}}\{ \psi(T)\otimes \Psi(X)\mid X = x\}
\\
&=\E_{\hat{\varpi}}\{ \psi(T)\mid X = x\}\otimes \Psi(x),
\ee
where we denote $\Psi(x) = \psi(x^{(1)})\otimes \cdots\otimes\psi(x^{(d)})$. Correspondingly, $\Gamma(x) = E_{{\varpi}}\{\psi(T)\mid X = x\}\otimes \Psi(x)$ and thus 
\bee\nonumber
\hat{\Gamma}(x) - \Gamma(x) &= [E_{\hat{\varpi}}\{\psi(T)\mid X = x\} - E_{\varpi}\{\psi(T)\mid X = x\}]\otimes \Psi(x)
\\
&= \Big[\int_{\mathbb{T}} \psi(t)\{\hat{\varpi}( t\mid  x) - {\varpi}( t\mid x)\}dt\Big]\otimes \Psi(x)
\\
&= \int_{\mathbb{T}} \big[\psi(t)\otimes \Psi(x)\big]\{\hat{\varpi}(t\mid x) - {\varpi}( t\mid x)\}dt
\\
&= \int_{\mathbb{T}}\Psi(x,t)\{\hat{\varpi}(t\mid x) - {\varpi}(t\mid x)\}dt.
\ee 
On the other hand,
\bee\nonumber
\sup_{x\in\mathbb{X}}\int_{\mathbb{T}}\{\hat\varpi(t\mid x) - \varpi(t\mid x)\}^2dt &=\sup_{x\in\mathbb{X}}\|\hat{\varpi}(\cdot\mid x) - {\varpi}(\cdot\mid x)\|^2_{\mathcal{L}^2_{}} 
\\
&= o_{{P}}(r_{\varpi}^2).
\ee 
First we  have,
\begin{align}\nonumber
&\big\|{P}\big[\{\hat{\M\Gamma}(X) - {\M\Gamma}(X)\}\{\hat{\M\Gamma}(X) - \M\Gamma(X)\}^{\T}\big]\big\|_2 
\\\nonumber
&= \sup_{\|\ell\| = 1}P\big[\ell^\T\{\hat{\M\Gamma}(X) - {\M\Gamma}(X)\}\big]^2
\\\nonumber
&=\sup_{\|\ell\| = 1}P\Big[\int_{\mathbb{T}}\ell^\T\Psi(X,t)\{\hat{\varpi}(t\mid X) - {\varpi}(t\mid X)\}dt\Big]^2
\\\nonumber
&\leq \sup_{\|\ell\| = 1}P\Big[\int_{\mathbb{T}}\{\ell^\T\Psi(X,t)\}^2dt\int_{\mathbb{T}}\{\hat{\varpi}(t\mid X) - {\varpi}(t\mid X)\}^2dt\Big]
\\\nonumber
&\leq \sup_{x\in\mathbb{X}}\big\|\hat{\varpi}(\cdot \mid x) - {\varpi}(\cdot \mid x)\big\|_{\mathcal{L}^2}^2 \cdot \sup_{\|\ell\| = 1}P\Big[\int_{\mathbb{T}}\{\ell^\T\Psi(X,t)\}^2dt\Big]
\\\nonumber
&\leq \sup_{x\in\mathbb{X}}\big\|\hat{\varpi}(\cdot \mid x) - {\varpi}(\cdot \mid x)\big\|_{\mathcal{L}^2}^2 \cdot \sup_{\|\ell\| = 1}\Big[\int_{\mathbb{X}\times\mathbb{T}}\{\ell^\T\Psi(x,t)\}^2dxdt\Big]\cdot C_f \quad(\text{Assumption \ref{am:densX}})
\\\nonumber
&\precsim \sup_{x\in\mathbb{X}}\big\|\hat{\varpi}(\cdot \mid x) - {\varpi}(\cdot \mid x)\big\|_{\mathcal{L}^2}^2 \quad(\text{Lemma \ref{am:psi}})
\\\nonumber
&=o_P(r_{\varpi}^2),
\end{align}
which directly yields $\big\|{P}\big[\{\hat{\M\Gamma}(X) - {\M\Gamma}(X)\}\{\hat{\M\Gamma}(X) - \M\Gamma(X)\}^{\T}\big]\big\|^{1/2}_2  = o_{P}(r_{\varpi})$. 
Second, by the property of Kronecker product \citep{schacke2004kronecker}, we have 
\bee\nonumber
\big\|\hat{\M\Gamma} - \M\Gamma\big\|_{\mathbb{X}}^2 
&= \sup_{x\in\mathbb{X}}\Big\|\int_{\mathbb{T}} \Psi(x,t)\{\hat\varpi(t\mid x) - \varpi( t\mid x)\}dt\Big\|^2
\\
&\leq \|\Psi\|_{\mathbb{X}\times \mathbb{T}}^2 \cdot\sup_{x\in\mathbb{X}}\int_{\mathbb{T}}\big\{\hat\varpi(t\mid  x) - \varpi(t\mid x)\big\}^2dt
\\
&=o_\p(K r_{\varpi}^2),
\ee
where the last equality is because that by Lemma \ref{am:psi}, we have $\|\Psi\|_{\mathbb{X}\times\mathbb{T}}^2 \precsim K$. We thus show $\|\hat{\M\Gamma} - \M\Gamma\|_{\mathbb{X}} /\sqrt{K}= o_{P}(r_{\varpi})$. 
\QED
\subsection{Proof of Theorem \ref{thm:main}}
 
For fixed $(x_0,t_0)\in\mathbb{X}\times\mathbb{T}$, we decompose
\bee\label{thm2P1}
&\big|\hat{\tau}(x_0,t_0) - \tau(x_0,t_0)\big|
\\
&=\Big|\big\{\Psi(x_0,t_0) - \Psi(x_0, 0)\big\}^\T\big\{( 1+ \rho)\hat{\phi} - \phi^*\big\}   + \big\{\Psi(x_0,t_0) - \Psi(x_0,0)\big\}^\T\phi^* - \tau(x_0,t_0)\Big|
\\
&\leq \Big|\big\{\Psi(x_0,t_0) - \Psi(x_0,0)\big\}^\T\big\{( 1+ \rho)\hat{\phi} - \phi^*\big\}\Big|+ \Big|\big\{\Psi(x_0,t_0) - \Psi(x_0,0)\big\}^\T\phi^* - \tau(x_0,t_0)\Big|
\\
&=T_1 + T_2.
\ee
The second term is the bias term, which can be bounded by Proposition \ref{lm:psieapprox}. In particular, recalling Theorem \ref{pro:tech} that $\tau(x_0,t_0) = \tilde{\tau}(x_0,t_0) - \tilde{\tau}(x_0,0)$, we have
\bee\label{bias:l2}
T_2 &\leq \Big|\Psi^\T(x_0,t_0)\phi^* - \tilde{\tau}(x_0,t_0)\Big| + \Big|\Psi^\T(x_0,0)\phi^* - \tilde{\tau}(x_0,0)\Big|
\\
&\leq 2\|(\phi^*)^\T\Psi - \tilde{\tau}\|_{\mathbb{X}\times\mathbb{T}} 
\\
&\precsim K^{-p/(d + 1)}
\ee
by Proposition \ref{lm:psieapprox}. 
\par
Next we bound the term $T_1$. Since $
 U U^\T + U_\perp U_\perp^\T = I$, we observe that
\bee\label{l2:simplify}
T_1&=\big\{\Psi(x_0,t_0) - \Psi(x_0,0)\big\}^\T\big\{( 1+ \rho)\hat{\phi} - \phi^*\big\} 
\\
&=\big\{\Psi(x_0,t_0) - \Psi(x_0,0)\big\}^\T(UU^\T + U_\perp U_\perp^\T)\big\{( 1+ \rho)\hat{\phi} - \phi^*\big\}
\\
&=\Big\{\M\Psi^\T(x_0,t_0)U - \M\Psi^\T(x_0,0)U\Big\}U^\T\big\{( 1+ \rho)\hat{\phi} - \phi^*\big\}  
+ \big\{\Psi^\T(x_0,t_0)U_\perp - \Psi^\T(x_0,0)U_\perp\big\}U_\perp^\T\big\{( 1+ \rho)\hat{\phi} - \phi^*\big\} 
\\
&= \Big\{\M\Psi^\T(x_0,t_0)U - \M\Psi^\T(x_0,0)U\Big\}U^\T\big\{( 1+ \rho)\hat{\phi} - \phi^*\big\}
\\
& = v_n^\T U^\T\big\{( 1+ \rho)\hat{\phi} - \phi^*\big\} ,
\ee 
where we denote $v_n = U^\T\M\Psi(x_0,t_0) - U^\T\M\Psi(x_0,0)$ for simplicity, and thus
\bee\label{vn:bound}
\|v_n\| \leq \|U\|_2\big \{\|\Psi(x_0,t_0)\| + \|\Psi(x_0,0)\|\big\} \precsim \sqrt{K},
\ee
by Lemma \ref{am:psi}. The second equality of \eqref{l2:simplify} is because that by Lemma \ref{lm:svdR}, $\Psi^\T(x,t)U_\perp$ are functions free of $t$ and thus $\Psi^\T(x_0,t)U_\perp$ is the same for all $t\in\mathbb{T}$, which implies that
\bee\nonumber
\Psi^\T(x_0,t_0)U_\perp - \Psi^\T(x_0,0)U_\perp &=  \Psi^\T(x_0,0)U_\perp- \Psi^\T(x_0,0)U_\perp 
\\
&=  0.
\ee
\par
Now we focus on  bounding $|T_1|$. 
Recall the form of $\hat{\phi}$ in \eqref{def:phi} and the simplifying setting of training nuisance functions by a single separate dataset ($\mathsection$\ref{sec:prea}). We then have
\bee\label{decom:main}
T_1& = v_n^\T U^\T\big\{( 1+ \rho)\hat{\phi} - \phi^*\big\}
\\
&= v_n^\T U^\T\hat{G}_n^{-1}\Big[(1+\rho)\p_n\big[\big\{Y - \hat{m}(X)\}\{\Psi(X,T) - \hat{\Gamma}(X)\big\} \big] - \hat{G}_n\phi^*\Big]
\\
&=v_n^\T U^\T\hat{G}_n^{-1}\p_n\Big[\big[Y - \hat{m}(X) - \{\Psi(X,T) - \hat{\Gamma}(X)\}^\T{\phi^*}\big]\big\{\Psi(X,T) - \hat{\Gamma}(X)\big\}\Big]
\\
&\quad  + \rho \cdot v_n^\T U^\T\hat{G}_n^{-1}\p_n\Big[\big[Y - \hat{m}(X)- \{\Psi(X,T) - \hat{\Gamma}(X)\}^\T{\phi^*}  \big]\big\{\Psi(X,T) - \hat{\Gamma}(X)\big\}\Big]  
\\
&\quad  + \rho \cdot v_n^\T U^\T\hat{G}_n^{-1}\p_n\Big[ \{\Psi(X,T) - \hat{\Gamma}(X)\}\big\{\Psi(X,T) - \hat{\Gamma}(X)\big\}^\T{\phi^*}  - \Psi(X,T)   \Psi^\T(X,T) {\phi^*}\Big]  
 \\
&=v_n^\T U^\T\hat{G}_n^{-1}\Delta_1 + \rho\cdot v_n^\T U^\T\hat{G}_n^{-1}\Delta_1  + v_n^\T U^\T\hat{G}_n^{-1}\Delta_2.
\ee
In the following, we bound $|v_n^\T  U^\T\hat{G}_n^{-1}\Delta_1|$ and $|v_n^\T  U^\T\hat{G}_n^{-1}\Delta_2|$, respectively. Note that as $\rho\rightarrow 0$, the second term above, $\rho\cdot v_n^\T U^\T\hat{G}_n^{-1}\Delta_1$, is negligible compared to the first term, $v_n^\T U^\T\hat{G}_n^{-1}\Delta_1$. By Lemma \ref{lm:iG2iG}, we know wpa1, $\hat{G}_n$ and ${G}_n$ are full-rank and their inverse have the singular value decomposition,
\bee\label{inv:svd}
\hat{G}^{-1}_n &= \begin{pmatrix}
\hat{U} & \hat{U}_\perp
\end{pmatrix}
\begin{pmatrix}
\hat{\Sigma}^{-1} & 
\\
& \hat{\Sigma}^{-1}_{\perp}
\end{pmatrix}
\begin{pmatrix}
\hat{U}^\T
\\
\hat{U}_{\perp}^\T
\end{pmatrix},
\\
G_n^{-1} &= 
\begin{pmatrix}
\tilde{U} & \tilde{U}_\perp
\end{pmatrix}
\begin{pmatrix}
\tilde{\Sigma}^{-1} & 
\\
& \tilde{\Sigma}^{-1}_\perp
\end{pmatrix}
\begin{pmatrix}
\tilde{U}^\T
\\
\tilde{U}_{\perp}^\T
\end{pmatrix}.
\ee
\noindent{\fbox{Bound of $\big|v_n^\T  U^\T\hat{G}_n^{-1}\Delta_1\big|$}} We first bound $|v_n^\T  U^\T\hat{G}_n^{-1}\Delta_1|$. With straightforward algebra, we further write the following decomposition of $\Delta_1$,
\bee\label{delta:decom1}
\Delta_1&=\p_n\Big[\big[Y - m(X) - \tau(X,T) + \E\{\tau(X,T)\mid X\}\big]\big\{\Psi(X,T) - \Gamma(X)\big\}\Big]
\\
&+\p_n\Big[\big[\tau(X,T) - \E\{\tau(X,T)\mid X\}  - \{\Psi(X,T) - \Gamma(X)\}^\T{\phi^*}\big]\big\{\Psi(X,T) - \Gamma(X)\big\}\Big]
\\
&+\p_n\Big[\big[Y - m(X) - \{\Psi(X,T) - \Gamma(X)\}^\T{\phi^*}\big]\big\{\Gamma(X) - \hat{\Gamma}(X)\big\}\Big]
\\
&+\p_n\Big[\big[m(X) - \hat{m}(X) - \{\Gamma(X) - \hat{\Gamma}(X)\}^\T{\phi^*}\big]\big\{\Psi(X,T)-\Gamma(X)\big\}\Big]
\\
&+\p_n\Big[\big[m(X) - \hat{m}(X) - \{\Gamma(X) - \hat{\Gamma}(X)\}^\T{\phi^*}\big]\big\{\Gamma(X)-\hat{\Gamma}(X)\big\}\Big]
\\
&= \Delta_{1,1} + \Delta_{1,2} + \Delta_{1,3} + \Delta_{1,4} + \Delta_{1,5},
\ee
which further yields the decomposition,
\bee\nonumber
v_n^\T U^\T\hat{G}_n^{-1}\Delta_1 = \sum_{j = 1}^5 v_n^\T U^\T\hat{G}_n^{-1}\Delta_{1,j}.
\ee
We bound $|v_n^\T U^\T\hat{G}_n^{-1}\Delta_1|$ by deriving the bounds of  $|v_n^\T U^\T\hat{G}_n^{-1}\Delta_{1,1}|$ through $|v_n^\T U^\T\hat{G}_n^{-1}\Delta_{1,5}|$. 
\begin{itemize}
\item \textit{Bounding $|v_n^\T U^\T\hat{G}_n^{-1}\Delta_{1,1}|$}: Recall $\mu(x,t) = \E(Y\mid X=x,T=t)$. We first note $\Delta_{1,1}$ can be further simplified to, 
\bee\label{delta:decom2}
\Delta_{1,1}  = \p_n\big[\big\{Y -\mu(X,T)\big\}\big\{\Psi(X,T) - \Gamma(X)\big\}\big],
\ee
as, by definition for any $i = 1,\dots,n$,
\begin{align}\nonumber
&Y_i - m(X_i) - \tau(T_i,X_i) + \E\{\tau(X,T)\mid X = X_i\} 
\\\nonumber
&= Y_i - \E(Y\mid X_i) - \mu(X_i,T_i) + \mu(X_i,0) + E\big\{\mu(X,T) - \mu(X, 0)\mid X = X_i\big\} 
\\\nonumber
&= Y_i - \E(Y\mid X_i) - \mu(X_i,T_i) + \mu(X_i,0) + \E(Y\mid X_i) - \mu(X_i,0)
\\\nonumber
& = Y_i - \mu(X_i,T_i).
\end{align}
Again by $\begin{pmatrix}U & U_\perp\end{pmatrix}\begin{pmatrix}U & U_\perp\end{pmatrix}^\T = I$, 
we have
\bee\label{Gndelta112}
\Delta_{1,1} = \begin{pmatrix}U & U_\perp\end{pmatrix}\begin{pmatrix}U^\T \\ U_\perp^\T\end{pmatrix}\Delta_{1,1} = UU^\T\Delta_{1,1},
\ee
because,
\bee\label{Udelta11}
U_\perp^\T \Delta_{1,1}  &= \p_n\Big[\big[Y - m(X) - \{\Psi(X,T) - \Gamma(X)\}^\T{\phi^*}\big]\cdot\big\{U_\perp^\T\Psi(X,T) -U_\perp^\T \Gamma(X)\big\}\Big]
\\
&=\p_n\Big[\big[Y - m(X) - \{\Psi(X,T) - \Gamma(X)\}^\T{\phi^*}\big]\cdot 0\Big]
\\
&=  0.
\ee
The second equality above is due to that, by Lemma \ref{lm:svdR}, $U_\perp^\T\Psi(X,T) $ is a vector of functions free of $T$ and thus, 
\bee\label{Gndelta112f}
U_\perp^\T \Gamma(x) = \E[U_\perp^\T \Psi(X,T)\mid X = x] = U_\perp^\T \Psi(x,t),
\ee
for any $(x,t)\in \mathbb{X}\times\mathbb{T}$. Summarizing the results above, we have wpa1,
\bee\label{decom:D11}
&v_n^\T U^\T\hat{G}_n^{-1}\Delta_{1,1} 
\\
&=   v_n^\T U^\T{G}_n^{-1}\Delta_{1,1} + v_n^\T U^\T(\hat{G}_n^{-1} - {G}_n^{-1})\Delta_{1,1}
\\
&= v_n^\T U^\T{G}_n^{-1}\Delta_{1,1} + v_n^\T U^\T{G}_n^{-1}({G}_n - \hat{G}_n)\hat{G}_n^{-1}\Delta_{1,1}
\\
&= v_n^\T U^\T{G}_n^{-1}UU^\T\Delta_{1,1} + v_n^\T U^\T{G}_n^{-1}({G}_n - \hat{G}_n)\hat{G}_n^{-1}UU^\T\Delta_{1,1}
\\
& = v_n^\T U^\T\tilde{U}\tilde{\Sigma}^{-1}\tilde{U}^\T UU^\T\Delta_{1,1}  + v_n^\T U^\T\tilde{U}_{\perp}\tilde{\Sigma}_{\perp}^{-1}\tilde{U}_{\perp}^\T UU^\T\Delta_{1,1}  
\\
&+ v_n^\T U^\T\tilde{U}\tilde{\Sigma}^{-1}\tilde{U}^\T({G}_n - \hat{G}_n)\hat{U}\hat{\Sigma}^{-1}\hat{U}^\T UU^\T\Delta_{1,1}  + v_n^\T U^\T\tilde{U}\tilde{\Sigma}^{-1}\tilde{U}^\T({G}_n - \hat{G}_n)\hat{U}_\perp\hat{\Sigma}^{-1}_{\perp}\hat{U}_{\perp}^\T UU^\T\Delta_{1,1}
\\
& + v_n^\T U^\T\tilde{U}_{\perp}\tilde{\Sigma}_{\perp}^{-1}\tilde{U}_{\perp}^\T({G}_n - \hat{G}_n)\hat{U}\hat{\Sigma}^{-1}\hat{U}^\T UU^\T\Delta_{1,1}  + v_n^\T U^\T\tilde{U}_{\perp}\tilde{\Sigma}_{\perp}^{-1}\tilde{U}_{\perp}^\T({G}_n - \hat{G}_n)\hat{U}_\perp\hat{\Sigma}^{-1}_{\perp}\hat{U}_{\perp}^\T UU^\T\Delta_{1,1}
\\
& = \sum_{m = 1}^6 \ell_{m,n}^\T \Delta_{1,1},
\ee
where the third equality follows by  \eqref{Gndelta112}; the fourth equality follows by \eqref{inv:svd}; and we define 
\bee\nonumber
&\ell_{1,n} = (v_n^\T U^\T\tilde{U}\tilde{\Sigma}^{-1}\tilde{U}^\T UU^\T)^\T,
\\
&\ell_{2,n} = (v_n^\T U^\T\tilde{U}_{\perp}\tilde{\Sigma}_{\perp}^{-1}\tilde{U}_{\perp}^\T UU^\T)^\T,
\\
&\ell_{3,n} = \{v_n^\T U^\T\tilde{U}\tilde{\Sigma}^{-1}\tilde{U}^\T({G}_n - \hat{G}_n)\hat{U}\hat{\Sigma}^{-1}\hat{U}^\T UU^\T\}^\T,
\\
&\ell_{4,n} = \{v_n^\T U^\T\tilde{U}\tilde{\Sigma}^{-1}\tilde{U}^\T({G}_n - \hat{G}_n)\hat{U}_\perp\hat{\Sigma}^{-1}_{\perp}\hat{U}_{\perp}^\T UU^\T\}^\T,
\\
& \ell_{5,n} = \{v_n^\T U^\T\tilde{U}_{\perp}\tilde{\Sigma}_{\perp}^{-1}\tilde{U}_{\perp}^\T({G}_n - \hat{G}_n)\hat{U}\hat{\Sigma}^{-1}\hat{U}^\T UU^\T\}^\T,
\\
&\ell_{6,n} = \{v_n^\T U^\T\tilde{U}_{\perp}\tilde{\Sigma}_{\perp}^{-1}\tilde{U}_{\perp}^\T({G}_n - \hat{G}_n)\hat{U}_\perp\hat{\Sigma}^{-1}_{\perp}\hat{U}_{\perp}^\T UU^\T\}^\T.
\ee
We note $\ell_{1,n}$ and $\ell_{2,n}$  depend only on $n$, and $\ell_{3,n}$--$\ell_{6,n}$ depend on both $n$ and $(X_1,T_1),\dots,(X_n,T_n)$ since the $\hat{G}_n$ is involved. For each $m = 1,\dots,6$, by taking $\ell_n = \ell_{m,n}/\|\ell_{m,n}\|$ in Lemma \ref{l3} (i), we have
\bee\label{decom:D112}
|\ell_{m,n}^\T \Delta_{1,1}| = \mathcal{O}_P(\|\ell_{m,n}\|/\sqrt{n}).
\ee
We now bound $\|\ell_{m,n}\|$ for each $m = 1,\dots,6$ as $n\rightarrow +\infty$, by Lemma \ref{lm:iG2iG},
\bee\label{l16}
&\|\ell_{1,n}\| \leq \|v_n\|\| U^\T\tilde{U}\|_2\|\tilde{\Sigma}^{-1}\|_2\|\tilde{U}^\T UU^\T\|_2 = \mathcal{O}(\|v_n\|\beta_n^{-1}),
\\
&\|\ell_{2,n}\| \leq \|v_n\|\| U^\T\tilde{U}_{\perp}\|_2\|\tilde{\Sigma}_{\perp}^{-1}\|_2\|\tilde{U}_{\perp}^\T U\|_2\|U^\T\|_2 = \mathcal{O}(\|v_n\|\rho\beta_n^{-2}),
\\
&\|\ell_{3,n}\| \leq \|v_n\|\| U^\T\tilde{U}\|_2\|\tilde{\Sigma}^{-1}\|_2\|\tilde{U}^\T\|_2\|{G}_n - \hat{G}_n\|_2\|\hat{U}\|_2\|\hat{\Sigma}^{-1}\|_2\|\hat{U}^\T UU^\T\|_2 = \mathcal{O}_P(\|v_n\|\beta_n^{-2}\sqrt{K\log n/n}),
\\
&\|\ell_{4,n}\| \leq \|v_n\|\|U^\T\tilde{U}\|_2\|\tilde{\Sigma}^{-1}\|_2\|\tilde{U}^\T\|_2\|{G}_n - \hat{G}_n\|_2\|\hat{U}_\perp\|_2\|\hat{\Sigma}^{-1}_{\perp}\|_2\|\hat{U}_{\perp}^\T U\|_2\|U^\T\|_2 = \mathcal{O}_P(\|v_n\|\beta_n^{-2}\rho^{-1}{K\log n/n}),
\\
& \|\ell_{5,n}\| \leq \|v_n\| \|U^\T\tilde{U}_{\perp}\|_2\|\tilde{\Sigma}_{\perp}^{-1}\|_2\|\tilde{U}_{\perp}^\T\|_2\|{G}_n - \hat{G}_n\|_2\|\hat{U}\|_2\|\hat{\Sigma}^{-1}\|_2\|\hat{U}^\T UU^\T\|_2 = \mathcal{O}_P(\|v_n\|\beta_n^{-2}\sqrt{K\log n/n}),
\\
&\|\ell_{6,n}\| \leq \|v_n\| \|U^\T\tilde{U}_{\perp}\|_2\|\tilde{\Sigma}_{\perp}^{-1}\|_2\|\tilde{U}_{\perp}^\T\|_2\|{G}_n - \hat{G}_n\|_2\|\hat{U}_\perp\|_2\|\hat{\Sigma}^{-1}_{\perp}\|_2\|\hat{U}_{\perp}^\T U\|_2\|U^\T\|_2=\mathcal{O}_P(\|v_n\|\beta_n^{-2}\rho^{-1}{K\log n/n}).
\ee
We denote the following rate induced by $\ell_{1,n}$--$\ell_{6,n}$:
\bee\label{def:zeta}
\zeta_n &= \|v_n\|\beta_n^{-1} +\|v_n\|\rho\beta_n^{-2}  
\\
\zeta'_n &=    \|v_n\|\beta_n^{-2}\sqrt{K\log n/n} + \|v_n\|\beta_n^{-2}\rho^{-1}{K\log n/n}.
\ee
Combining \eqref{decom:D11}, \eqref{decom:D112}, and \eqref{l16}, we conclude
\bee\label{bound:delta11}
|v_n^\T U^\T\hat{G}_n^{-1}\Delta_{1,1} | &= \mathcal{O}_P\left((\zeta_n + \zeta_n')n^{-1/2}\right).
\ee
\item \textit{Bounding $|v_n^\T U^\T\hat{G}_n^{-1}\Delta_{1,2}|$}: With the same derivations as \eqref{Gndelta112}--\eqref{Gndelta112f}, we have $\Delta_{1,2} = UU^\T\Delta_{1,2}$. and thus similar to \eqref{decom:D11},
\bee\nonumber
v_n^\T U^\T\hat{G}_n^{-1}\Delta_{1,2} = \sum_{m = 1}^6 \ell_{m,n}^\T \Delta_{1,2}.
\ee 
For $m = 1,2$, we can apply Lemma \ref{l3} (ii) and bound
\bee\nonumber
\big|\ell_{m,n}^\T \Delta_{1,2}\big| = \|\ell_{m,n}\|\big|(\ell_{m,n}/\|\ell_{m,n}\|)^\T \Delta_{1,2}\big| = \mathcal{O}_P(\|\ell_{m,n}\|K^{-p/(d + 1)}),
\ee
since  $\ell_{1,n}/\|\ell_{1,n}\|$ and $\ell_{2,n}/\|\ell_{2,n}\|$  depend  only on $n$. For $m = 3,\dots,6$, we have
\bee\nonumber
\big|\ell_{m,n}^\T \Delta_{1,2}\big| \leq \|\ell_{m,n}\|\|\Delta_{1,2}\| = \mathcal{O}_P(\|\ell_{m,n}\|K^{-p/(d + 1)}).
\ee
Combining the  above results with \eqref{decom:D112}, we conclude
\bee\label{bounddelta12}
|v_n^\T U^\T\hat{G}_n^{-1}\Delta_{1,2} | = \mathcal{O}_P\Big((\zeta_n + \zeta_n') K^{-p/(d + 1)} \Big).
\ee
\item \textit{Bounding $|v_n^\T U^\T\hat{G}_n^{-1}\Delta_{1,3}|$}: By Lemma \ref{lm:svdR} (iv), we have $U_{\perp}^\T\{\Gamma(x) - \hat{\Gamma}(x)\} = 0$ for any $x\in\mathbb{X}$, which implies
\bee\label{dl3}
U_\perp^\T \Delta_{1,3}  &= \p_n\Big[\big[Y - m(X) - \{\Psi(X,T) - \Gamma(X)\}^\T{\phi^*}\big]\big\{U_\perp^\T\Gamma(X) - U_\perp^\T\hat{\Gamma}(X)\big\}\Big]
\\
& = 0.
\ee 
Then with similar derivations as \eqref{Gndelta112}--\eqref{Gndelta112f}, we have $\Delta_{1,3} = UU^\T\Delta_{1,3}$. And thus similar to \eqref{decom:D11},
\bee\nonumber
v_n^\T U^\T\hat{G}_n^{-1}\Delta_{1,3}  =\sum_{m = 1}^6 \ell_{m,n}^\T \Delta_{1,3}.
\ee
For $m = 1,2$, we can apply Lemma \ref{l3} (iii) and bound
\bee\nonumber
\big|\ell_{m,n}^\T \Delta_{1,3}\big| = \|\ell_{m,n}\|\big|(\ell_{m,n}/\|\ell_{m,n}\|)^\T \Delta_{1,3}\big| = o_P(\|\ell_{m,n}\|r_\gamma/\sqrt{n}),
\ee
since  $\ell_{1,n}/\|\ell_{1,n}\|$ and $\ell_{2,n}/\|\ell_{2,n}\|$  depend  only on $n$. For $m = 3,\dots,6$, we have
\bee\nonumber
\big|\ell_{m,n}^\T \Delta_{1,3}\big| \leq \|\ell_{m,n}\|\|\Delta_{1,3}\| = o_P(\|\ell_{m,n}\|r_{\gamma}'\sqrt{K/n}).
\ee
Combining the above results with \eqref{decom:D112}, we conclude
\bee\label{final:bound:delta3}
|v_n^\T U^\T\hat{G}_n^{-1}\Delta_{1,3} |  = o_P(r_\gamma\zeta_n/\sqrt{n} + r_\gamma'\zeta_n'\sqrt{K/n}).
\ee
\item \textit{Bounding $|v_n^\T U^\T\hat{G}_n^{-1}\Delta_{1,4}|$}: With the same derivations as \eqref{Gndelta112}--\eqref{decom:D11}, we have
\bee\nonumber
v_n^\T U^\T\hat{G}_n^{-1}\Delta_{1,4} = \sum_{m = 1}^6 \ell_{m,n}^\T \Delta_{1,4}.
\ee 
For $m = 1,2$, we can apply Lemma \ref{l3} (iv) and bound
\bee\nonumber
\big|\ell_{m,n}^\T \Delta_{1,4}\big| = \|\ell_{m,n}\|\big|(\ell_{m,n}/\|\ell_{m,n}\|)^\T \Delta_{1,4}\big| = o_P(\|\ell_{m,n}\|/\sqrt{n}),
\ee
since  $\ell_{1,n}/\|\ell_{1,n}\|$ and $\ell_{2,n}/\|\ell_{2,n}\|$  depend only on $n$. For $m = 3,\dots,6$, we have
\bee\nonumber
\big|\ell_{m,n}^\T \Delta_{1,4}\big| \leq \|\ell_{m,n}\|\|\Delta_{1,4}\| = o_P(r_m\|\ell_{m,n}\|\sqrt{K/n} + r_\gamma\|\ell_{m,n}\|\sqrt{K/n}).
\ee
Combining the above results with \eqref{decom:D112}, we conclude
\bee\label{bounddelta14}
|v_n^\T U^\T\hat{G}_n^{-1}\Delta_{1,4} | = o_P\left(\zeta_n/\sqrt{n} + r_m\zeta_n'\sqrt{K/n} + r_\gamma\zeta_n'\sqrt{K/n}\right). 
\ee

\item \textit{Bounding $|v_n^\T U^\T\hat{G}_n^{-1}\Delta_{1,5}|$}: Similar to \eqref{dl3}, we have $U_\perp^\T \Delta_{1,5} = 0$ and thus with the same derivations as \eqref{Gndelta112}--\eqref{decom:D11}, we have
\bee\nonumber
v_n^\T U^\T\hat{G}_n^{-1}\Delta_{1,5} = \sum_{m = 1}^6 \ell_{m,n}^\T \Delta_{1,5}.
\ee 
For $m = 1,2$, we can apply Lemma \ref{l3} (v) and bound
\bee\nonumber
\big|\ell_{m,n}^\T \Delta_{1,5}\big| = \|\ell_{m,n}\|\big|(\ell_{m,n}/\|\ell_{m,n}\|)^\T \Delta_{1,5}\big| = o_P(r_{\gamma}\|\ell_{m,n}\|/\sqrt{n} + \|\ell_{m,n}\|r_{m}r_{\gamma} + \|\ell_{m,n}\|r^2_{\gamma}),
\ee
since  $\ell_{1,n}/\|\ell_{1,n}\|$ and $\ell_{2,n}/\|\ell_{2,n}\|$  depend only on $n$. For $m = 3,\dots,6$, we have
\bee\nonumber
\big|\ell_{m,n}^\T \Delta_{1,5}\big| \leq \|\ell_{m,n}\|\|\Delta_{1,5}\| = o_P(\|\ell_{m,n}\|r_{m}r_{\gamma} + \|\ell_{m,n}\|r^2_{\gamma} + \|\ell_{m,n}\|{r_{\gamma}'} r_m\sqrt{K/n} + \|\ell_{m,n}\|{r_{\gamma}'} r_\gamma\sqrt{K/n}).
\ee
Combining the  above results with \eqref{decom:D112}, we conclude
\bee\label{delta15bound}
&|v_n^\T U^\T\hat{G}_n^{-1}\Delta_{1,5} | 
\\
&= o_P\Big(\zeta_n\big(r_{\gamma}/\sqrt{n} + r_{m}r_{\gamma} + r^2_{\gamma}\big)\Big) 
+o_P\Big(\zeta_n'\big(r_{m}r_{\gamma} + r^2_{\gamma} + {r_{\gamma}'} r_m\sqrt{K/n} + {r_{\gamma}'} r_\gamma\sqrt{K/n}\big)\Big). 
\ee
\end{itemize}
After summarizing all bounds above and dropping some negligible terms, we conclude
\bee\label{p12}
&|v_n^\T U^\T\hat{G}_n^{-1}\Delta_1| 
\\
 &=\mathcal{O}_P\left((n^{-1/2} + K^{-p/(d + 1)})\zeta_n + (n^{-1/2} + K^{-p/(d+1)})\zeta_n'\right)
\\
&+o_P\Big(\big(1/\sqrt{n} + r_{\gamma}/\sqrt{n} + r_{m}r_{\gamma} + r^2_{\gamma}\big)\zeta_n\Big) 
\\
&+o_P\Big(\big(r_{m}r_{\gamma} + r^2_{\gamma} + {r_{\gamma}'} r_m\sqrt{K/n} + {r_{\gamma}'} r_\gamma\sqrt{K/n} + r_m\sqrt{K/n} + r_\gamma\sqrt{K/n} + r'_\gamma\sqrt{K/n}\big)\zeta_n'\Big)
\\
 &=\mathcal{O}_P\left((n^{-1/2} + K^{-p/(d + 1)})\zeta_n + (n^{-1/2} + K^{-p/(d+1)})\zeta_n'\right)
\\
&+o_P\Big(\big(1/\sqrt{n}   + r_{m}r_{\gamma} + r^2_{\gamma}\big)\zeta_n + \big(r_{m}r_{\gamma} + r^2_{\gamma}   + r_m\sqrt{K/n} + r_\gamma\sqrt{K/n} + r'_\gamma\sqrt{K/n}\big)\zeta_n'\Big)   
\ee because $\sqrt{K/n}\prec 1$ and $r_{m}r_{\gamma} + r^2_{\gamma} + {r_{\gamma}'} r_m\sqrt{K/n} + {r_{\gamma}'} r_\gamma\sqrt{K/n} \precsim r_{\gamma}'+r_m + r_\gamma$, due to Assumption \ref{am:Kn} and $r_m,r_{\gamma},r_{\gamma}'\precsim 1$.
\par
\noindent{\fbox{Bound of $\big|v_n^\T  U^\T\hat{G}_n^{-1}\Delta_2\big|$}} Recall that we have
\bee\label{decom:d2}
\big|v_n^\T  U^\T\hat{G}_n^{-1}\Delta_2\big| &= \rho \left| v_n^\T U^\T\hat{G}_n^{-1}\p_n\Big[ \{\Psi(X,T) - \hat{\Gamma}(X)\}\big\{\Psi(X,T) - \hat{\Gamma}(X)\big\}^\T{\phi^*}  - \Psi(X,T)   \Psi^\T(X,T) {\phi^*}\Big]\right|.
\ee
We now decompose 
\begin{align}\nonumber
&v_n^\T U^\T\hat{G}_n^{-1}\p_n\Big[ \{\Psi(X,T) - \hat{\Gamma}(X)\}\big\{\Psi(X,T) - \hat{\Gamma}(X)\big\}^\T{\phi^*}  - \Psi(X,T)   \Psi^\T(X,T) {\phi^*}\Big]
\\\nonumber
&=-v_n^\T U^\T\hat{G}_n^{-1}\p_n\Big[   \hat{\Gamma}(X) \Psi^\T(X,T){\phi^*}  \Big] - v_n^\T U^\T\hat{G}_n^{-1}\p_n\Big[      \Psi(X,T)\hat{\Gamma}^\T(X)\phi^* \Big] \\\nonumber
&\quad + v_n^\T U^\T\hat{G}_n^{-1}\p_n\Big[   \hat{\Gamma}(X)\hat{\Gamma}^\T(X)\phi^*\Big]
\\\nonumber
&=-v_n^\T U^\T\hat{G}_n^{-1}\p_n\Big[   \hat{\Gamma}(X) \Psi^\T(X,T){\phi^*}  \Big] 
\\\nonumber
&\quad - v_n^\T U^\T\hat{G}_n^{-1}\p_n\Big[    \Big\{  \Psi(X,T) - \Gamma(X)\Big\}\Big\{\hat{\Gamma} (X) - \Gamma(X)\Big\}^\T\phi^* \Big] 
\\\nonumber
&\quad - v_n^\T U^\T\hat{G}_n^{-1}\p_n\Big[    \Big\{  \Psi(X,T) - \Gamma(X)\Big\} \Gamma^\T(X) \phi^* \Big] 
\\\nonumber
&\quad + v_n^\T U^\T\hat{G}_n^{-1}\p_n\Big[  \left\{ \Gamma(X) - \hat{\Gamma}(X)\right\}\Big\{\hat{\Gamma}(X) - \Gamma(X)\Big\}^\T \phi^*\Big]
\\\nonumber
&\quad + v_n^\T U^\T\hat{G}_n^{-1}\p_n\Big[  \left\{ \Gamma(X) - \hat{\Gamma}(X)\right\} \Gamma^\T(X)  \phi^*\Big].
\end{align}
We bound the five terms on the right-hand side above respectively. In the following, some of the  arguments for bounding the corresponding terms, are similar to the previous ones and thus we omit the details. 

First, we have
\bee
 v_n^\T U^\T\hat{G}_n^{-1}\p_n\Big[   \hat{\Gamma}(X) \Psi^\T(X,T){\phi^*}  \Big]  =&  v_n^\T U^\T\hat{G}_n^{-1}\p_n\Big[   \left\{\hat{\Gamma}(X) -\Gamma(X)\right\} \tilde{\tau}(X,T) \Big]    
\\
&+v_n^\T U^\T\hat{G}_n^{-1}\p_n\Big[   \left\{\hat{\Gamma}(X) -\Gamma(X)\right\} \Big\{\Psi^\T(X,T){\phi^*}  - \tilde{\tau}(X,T)\Big\} \Big]
\\
&+v_n^\T U^\T\hat{G}_n^{-1}\p_n\Big[   \Gamma(X) \Big\{\Psi^\T(X,T){\phi^*} -\tilde{\tau}(X,T)\Big\} \Big]
\\
&+v_n^\T U^\T\hat{G}_n^{-1}\p_n\Big[   \Gamma(X)  \tilde{\tau}(X,T)  \Big]
\\
 = &\mathcal{O}_P\left((\zeta_n + \zeta_n')n^{-1/2}\right) 
+\mathcal{O}_P\Big((\zeta_n + \zeta_n') K^{-p/(d + 1)} \Big) 
\\
& +  o_P(r_\gamma\zeta_n/\sqrt{n} + r_\gamma'\zeta_n'\sqrt{K/n}),
\ee
where the derivation of the last equation is similar to \eqref{bound:delta11}, \eqref{bounddelta12} and \eqref{final:bound:delta3}.

Second, similar to \eqref{bounddelta14}, we have
\bee\nonumber
 \left|v_n^\T U^\T\hat{G}_n^{-1}\p_n\Big[    \Big\{  \Psi(X,T) - \Gamma(X)\Big\}\Big\{\hat{\Gamma} (X) - \Gamma(X)\Big\}^\T\phi^* \Big] \right| 
 = o_P\left(\zeta_n/\sqrt{n}  + r_\gamma\zeta_n'\sqrt{K/n}\right).
\ee

Third, similar to \eqref{bounddelta12}, we have
\bee\nonumber
&\left|v_n^\T U^\T\hat{G}_n^{-1}\p_n\Big[    \Big\{  \Psi(X,T) - \Gamma(X)\Big\} \Gamma^\T(X) \phi^* \Big] \right|
\\
&=\left|v_n^\T U^\T\hat{G}_n^{-1}\p_n\Big[    \Big\{  \Psi(X,T) - \Gamma(X)\Big\} \E\Big(\Psi(X,T) \phi^* - \tilde{\tau}(X,T)\mid X\Big) \Big] \right|
\\
& = \mathcal{O}_P\Big((\zeta_n + \zeta_n') K^{-p/(d + 1)} \Big),
\\
&\left| v_n^\T U^\T\hat{G}_n^{-1}\p_n\Big[  \left\{ \Gamma(X) - \hat{\Gamma}(X)\right\} \Gamma^\T(X)  \phi^*\Big]\right| 
\\
&= \mathcal{O}_P\Big((\zeta_n + \zeta_n') K^{-p/(d + 1)} \Big).
\ee
Finally, similar to \eqref{delta15bound}, we have
\bee\nonumber
 \left|v_n^\T U^\T\hat{G}_n^{-1}\p_n\Big[  \left\{ \Gamma(X) - \hat{\Gamma}(X)\right\}\Big\{\hat{\Gamma}(X) - \Gamma(X)\Big\}^\T \phi^*\Big]\right|
 =o_P\Big(\zeta_n\big(r_{\gamma}/\sqrt{n}  + r^2_{\gamma}\big)\Big)  
+o_P\Big(\zeta_n'\big(  r^2_{\gamma} +   {r_{\gamma}'} r_\gamma\sqrt{K/n}\big)\Big).
\ee
\par
In summary, we have
\bee\label{P13}
\big|v_n^\T  U^\T\hat{G}_n^{-1}\Delta_2\big| &=\rho\cdot\Big[\mathcal{O}_P\left((\zeta_n + \zeta_n')n^{-1/2}\right)  
+\mathcal{O}_P\Big((\zeta_n + \zeta_n') K^{-p/(d + 1)} \Big) 
\\
&\quad\quad\quad  +  o_P(r_\gamma\zeta_n/\sqrt{n} + r_\gamma'\zeta_n'\sqrt{K/n})
  + o_P\left(\zeta_n/\sqrt{n}   + r_\gamma\zeta_n'\sqrt{K/n}\right)
\\
&\quad\quad\quad + o_P\Big(\zeta_n\big(r_{\gamma}/\sqrt{n}  + r^2_{\gamma}\big)\Big)  
+o_P\Big(\zeta_n'\big(  r^2_{\gamma} +   {r_{\gamma}'} r_\gamma\sqrt{K/n}\big)\Big)\Big].
\ee

\noindent{\fbox{Summary of convergence rates}} We now prove the convergence rate results in Theorem \ref{thm:main}. Combining \eqref{thm2P1}, \eqref{bias:l2}, \eqref{decom:main}, \eqref{p12}, and \eqref{P13} and noting that $\rho\precsim 1$, we finally have
\bee\label{bigboy}
&\big|\hat{\tau}(x_0,t_0) - \tau(x_0,t_0)\big| 
\\
&\leq r(n,K,\beta_n,\rho,r_m,r_\gamma,r_{\gamma}' )
\\
&:=\mathcal{O}_P\left((n^{-1/2} + K^{-p/(d + 1)})\zeta_n + (n^{-1/2} + K^{-p/(d+1)})\zeta_n'\right)
\\
&\quad +o_P\Big(\big(1/\sqrt{n}   + r_{m}r_{\gamma} + r^2_{\gamma}\big)\zeta_n + \big(r_{m}r_{\gamma} + r^2_{\gamma}   + r_m\sqrt{K/n} + r_\gamma\sqrt{K/n} + r'_\gamma\sqrt{K/n}\big)\zeta_n'\Big) .
\ee
Here we recall the definitions:
\bee\nonumber
  v_n &= U^\T\M\Psi(x_0,t_0) - U^\T\M\Psi(x_0,0),
\\
\zeta_n &= \|v_n\|\beta_n^{-1} +\|v_n\|\rho\beta_n^{-2}  ,
\\
\zeta'_n &=    \|v_n\|\beta_n^{-2}\sqrt{K\log n/n} + \|v_n\|\beta_n^{-2}\rho^{-1}{K\log n/n},
\ee
and $\|v_n\| \leq \|U\|_2\|\Psi(x_0,t_0) - \Psi(x_0,0) \| = \|\Psi(x_0,t_0) - \Psi(x_0,0) \|$ depends on $x_0$ and $t_0$, and it has a general bound as shown in \eqref{vn:bound} such that $\|v_n\|\precsim \sqrt{K}$. 
\par
When $\beta_n\asymp 1$, $r_{m}, r_\gamma,r_{\gamma'}\precsim n^{-1/4}$ and $\rho\precsim 1$, and $p > d + 1$, the above bound can be simplified to
\begin{align}\nonumber
&\big|\hat{\tau}(x_0,t_0) - \tau(x_0,t_0)\big| 
\\\nonumber
&=\mathcal{O}_P\Big\{\|v_n\|\big(n^{-1/2}+K^{-p/(d + 1)}\big)\big(1 +\rho^{-1}K\log n/n\big)  \Big\}
\\\nonumber
&\quad +o_P\Big(\|v_n\|\sqrt{K\log n /n}(n^{-1/2} + \sqrt{K/n}\cdot n^{-1/4})+ \|v_n\|{(\log n)}\rho^{-1}K^{3/2}n^{-7/4}\Big)
\\\nonumber
&=\mathcal{O}_P\Big(\sqrt{K/n}+K^{1/2-p/(d + 1)}\Big)
\\\nonumber
&\quad+\mathcal{O}_P\Big(\big(\sqrt{K/n}+K^{1/2-p/(d + 1)}\big)\big(\rho^{-1}K\log n/n\big)  \Big)\\\label{simple:rate}
&\quad+o_P\Big( \sqrt{ \log n }K^{3/2} n^{-5/4}+  {(\log n)}\rho^{-1}K^{2}n^{-7/4}\Big),
\end{align}
where the last equality is derived by $\|v_n\| \precsim \sqrt{K}$ by \eqref{vn:bound}. 

In the following, we balance the rate of \eqref{simple:rate} by selecting $K\asymp n^{(d + 1)/2p}$. We first focus on the first term of \eqref{simple:rate},
\bee\nonumber
\sqrt{K/n}+K^{1/2-p/(d + 1)} \asymp n^{-1/2 + (d + 1)/(4p)}.
\ee
Now we show the second term in \eqref{simple:rate} is negligible. When $\rho\succ n^{-1 +(d + 1)/(2p)}\log n$, we have
\bee\nonumber
\rho^{-1}K\log n/n&\prec n^{1  - (d + 1)/(2p)}\cdot n^{(d + 1)/(2p)}\cdot n^{-1}
\\
& =1.
\ee
This implies $$\big(\sqrt{K/n}+K^{1/2-p/(d + 1)}\big)\big(\rho^{-1}K\log n/n\big) \prec \big(\sqrt{K/n}+K^{1/2-p/(d + 1)}\big),$$ which is negligible compared with the rate of the first term in \eqref{simple:rate}. 
Finally 
\bee\label{rate:final}
{(\log n)}\rho^{-1}K^{2}n^{-7/4} &\prec  n^{1 - (d + 1)/(2p)}\cdot n^{(d + 1)/p } \cdot n^{- 7/4}
\\
&=n^{-1/2 + (d + 1)/(4p) }\cdot n^{- 1/4 + (d + 1)/(4p)}
\\
&\prec n^{-1/2 + (d + 1)/(4p) }
\\
\sqrt{ \log n }K^{3/2} n^{-5/4}&\precsim  \sqrt{\log n}\cdot n^{3(d + 1)/(4p) - 5/4}
\\
& = \sqrt{\log n}\cdot n^{  - 1/2}\cdot n^{3\{(d+1)/(4p)-1/4\}}
\\
&\prec n^{-1/2 + (d + 1)/(4p)},
\ee
since $-1/4 + (d + 1)/(4p) < 0$ due to $d + 1 <p$. Thus the third term in \eqref{simple:rate} is also negligible compared with the first term in \eqref{simple:rate}. In summary, the first term in \eqref{simple:rate} is optimized as $\mathcal{O}_P(n^{-1/2 + (d + 1)/(4p)})$ when $K\asymp n^{(d + 1)/(2p)}$. Moreover, when selecting $n^{-1 + (d + 1)/(2p)}\log n\prec \rho \precsim n^{-1/2}$, other terms than the first term in \eqref{simple:rate} are negligible, and thus the whole rate is minimized to $\mathcal{O}_P(n^{-1/2 + (d + 1)/(4p)})$.
\par
\noindent\fbox{Limiting distribution} In this part, we show the central limiting theorem (CLT) result for our proposed estimator. Now we have $\rho\rightarrow 0$. In  the previous parts, if carefully tracking the derivations, we can show by \eqref{simple:rate} that
\bee\label{clt:basic}
\big|\hat{\tau}(x_0,t_0) - \tau(x_0,t_0)\big| =  v_n^\T U^\T\tilde{U}\tilde{\Sigma}^{-1}\tilde{U}^\T UU^\T\Delta_{1,1} + R_n,
\ee
where, 
\bee\label{rndef}
R_n  &=\mathcal{O}_P\Big\{\|v_n\| K^{-p/(d + 1)} \big(1 +\rho^{-1}K\log n/n\big)  \Big\}
\\\nonumber
&\quad +o_P\Big(\|v_n\|\sqrt{\log n } K n^{-5/4} + \|v_n\|{(\log n)}\rho^{-1}K^{3/2}n^{-7/4}\Big)
\\
 &=\mathcal{O}_P\Big\{\|v_n\| K^{-p/(d + 1)} \big(1 +\rho^{-1}K\log n/n\big)  \Big\} +o_P\Big(\|v_n\|\sqrt{\log n } K n^{-5/4}  \Big).
\ee
This is because only the term $v_n^\T U^\T\tilde{U}\tilde{\Sigma}^{-1}\tilde{U}^\T UU^\T\Delta_{1,1}$ in \eqref{decom:D11} produces the rate term $\mathcal{O}_P\big\{\|v_n\|n^{-1/2}\big\}$, on the right-hand side of the second equality in \eqref{simple:rate} when $\rho\rightarrow 0 $. In the following, we will first show the limiting distribution of the first term $v_n^\T U^\T\tilde{U}\tilde{\Sigma}^{-1}\tilde{U}^\T UU^\T\Delta_{1,1}$ on the right-hand side of \eqref{clt:basic}. We then show that the second term in \eqref{clt:basic} is negligible. Finally, we will show the consistency of the our asymptotic variance estimator, which helps us to construct the confidence interval. 
\par
First we have
\bee\label{clt:main}
{\sqrt{n}}\tilde{\sigma}^{-1}v_n^\T U^\T\tilde{U}\tilde{\Sigma}^{-1}\tilde{U}^\T UU^\T\Delta_{1,1} &= {\sqrt{n}}\tilde{\sigma}^{-1}\p_n\Big[\kappa_n(X,T)\big[Y - m(X) - \tau(X,T) + \E\{\tau(X,T)\mid X\}\big]\Big]
\\
&=\sum_{i = 1}^n\xi_n(X_i,T_i,Y_i),
\ee
where we define
\bee\label{def:tsigma}
\tilde{\sigma} &= \sqrt{E\big[\kappa^2_n(X,T)\big\{Y - \mu(X,T)\big\}^2\big]}
\\
\kappa_n(x,t) &= v_n^\T U^\T\tilde{U}\tilde{\Sigma}^{-1}\tilde{U}^\T UU^\T\big\{\Psi(x,t) - \Gamma(x)\big\}
\\
\xi_n(x,t,y) &= \frac{1}{\sqrt{n}\tilde{\sigma}}\kappa_n(x,t)\big[y - m(x) - \tau(x,t) + \E\{\tau(X,T)\mid X = x\}\big]
\\
& = \frac{1}{\sqrt{n}\tilde{\sigma}}\kappa_n(x,t)\big[y - E(Y\mid X = x) - \mu(x,t) + \mu(x,0) + E\{\mu(X,T)\mid X = x\} -  E\{\mu(X,0)\mid X = x\}\big]
\\
& = \frac{1}{\sqrt{n}\tilde{\sigma}}\kappa_n(x,t)\big[y - E(Y\mid X = x) - \mu(x,t) + \mu(x,0) + E(Y\mid X = x) -  \mu(x,0)\big]
\\
& = \frac{1}{\sqrt{n}\tilde{\sigma}}\kappa_n(x,t)\big\{y - \mu(x,t)\big\},
\ee
recalling that $\mu(x,t) = E(Y\mid X = x, T = t)$. We now verify \eqref{clt:main} satisfies the Lindberg's condition for the CLT. We emphasize $v_n^\T U^\T\tilde{U}\tilde{\Sigma}^{-1}\tilde{U}^\T UU^\T$ is a deterministic vector independent of samples. Therefore, $\xi_n(X_i,T_i,Y_i)$ with $i = 1,\dots,n$, are i.i.d. samples with
\bee\label{fl1}
E\{\xi_n(X,T,Y)\} &= \frac{1}{\sqrt{n}\tilde{\sigma}} E\Big[E\big[\kappa_n(X,T)\big\{Y - \mu(X,T)\big\}\mid X,T\big]\Big]
\\
&=  \frac{1}{\sqrt{n}\tilde{\sigma}} E\Big[\kappa_n(X,T) E\big[Y - \mu(X,T)\mid X,T\big]\Big]
\\
& = 0.
\ee
Therefore we have
\bee\label{fl2}
\text{Var}\Big\{\sum_{i = 1}^n\xi_n(X_i,T_i,Y_i)\Big\} = \frac{1}{\tilde{\sigma}^2}\text{Var}\Big[\kappa_n(X,T)\big\{Y - \mu(X,T)\big\}\Big] = 1.
\ee
Next we derive the lower bound of $\tilde{\sigma}$. By definition, we have 
\bee\label{lower:sigma}
&\tilde{\sigma}^2 
\\
&= v_n^\T U^\T\tilde{U}\tilde{\Sigma}^{-1}\tilde{U}^\T UU^\T E\Big[\big\{\Psi(X,T) - \Gamma(X)\big\}\big\{\Psi(X,T) - \Gamma(X)\big\}^\T\big\{Y - \mu(X,T)\big\}^2\Big] UU^\T\tilde{U}\tilde{\Sigma}^{-1}\tilde{U}^\T U v_n. 
\ee
Observe that by the law of total expectation,
\bee\label{lower:sigma2}
&E\Big[\big\{\Psi(X,T) - \Gamma(X)\big\}\big\{\Psi(X,T) - \Gamma(X)\big\}^\T\big\{Y - \mu(X,T)\big\}^2\Big]
\\
&=E\Big[\big\{\Psi(X,T) - \Gamma(X)\big\}\big\{\Psi(X,T) - \Gamma(X)\big\}^\T\text{Var}(Y\mid X,T)\Big]
\\
&\succeq c_1E\Big[\big\{\Psi(X,T) - \Gamma(X)\big\}\big\{\Psi(X,T) - \Gamma(X)\big\}^\T\Big]
\\
& = c_1 R_n,
\ee
since $\text{Var}(Y\mid X,T)$ is uniformly lower bounded by some fixed $c_1 > 0$ under Assumption \ref{am:moment}. Combining \eqref{lower:sigma} and \eqref{lower:sigma2}, we have
\bee\label{sigmat:1}
\tilde{\sigma}^2 &\geq c_1 v_n^\T U^\T\tilde{U}\tilde{\Sigma}^{-1}\tilde{U}^\T UU^\T  R_nUU^\T\tilde{U}\tilde{\Sigma}^{-1}\tilde{U}^\T U v_n
\\
&= c_1 v_n^\T U^\T\tilde{U}\tilde{\Sigma}^{-1}\tilde{U}^\T UU^\T  U\Sigma U^\T UU^\T\tilde{U}\tilde{\Sigma}^{-1}\tilde{U}^\T U v_n
\\
& = c_1 v_n^\T M_{clt} v_n,
\ee
where $M_{clt} =  U^\T\tilde{U}\tilde{\Sigma}^{-1}\tilde{U}^\T UU^\T  U\Sigma U^\T UU^\T\tilde{U}\tilde{\Sigma}^{-1}\tilde{U}^\T U$ is a $\zeta\times \zeta$ matrix. Its smallest singular value can be bounded by Lemma \ref{lm:iG2iG},
\bee\label{Mclt}
\sigma_{\min}(M_{clt}) &\geq \sigma_{\min}( U^\T\tilde{U})\sigma_{\min}(\tilde{\Sigma}^{-1}\tilde{U}^\T UU^\T  U\Sigma U^\T UU^\T\tilde{U}\tilde{\Sigma}^{-1}\tilde{U}^\T U)
\\
&\geq \sigma_{\min}( U^\T\tilde{U})\sigma_{\min}(\tilde{\Sigma}^{-1})\sigma_{\min}(\tilde{U}^\T UU^\T  U\Sigma U^\T UU^\T\tilde{U}\tilde{\Sigma}^{-1}\tilde{U}^\T U)
\\
&\geq \cdots
\\
&\geq \sigma_{\min}( U^\T\tilde{U})\sigma_{\min}(\tilde{\Sigma}^{-1})\sigma_{\min}(\tilde{U}^\T U)\sigma_{\min}(U^\T  U)\sigma_{\min}(\Sigma)\sigma_{\min}( U^\T U)\sigma_{\min}(U^\T\tilde{U})\sigma_{\min}(\tilde{\Sigma}^{-1})\sigma_{\min}(\tilde{U}^\T U)
\\
&\succsim 1.
\ee
In \eqref{Mclt}, we treat $M_{clt}$ as the multiplication of nine $\zeta \times \zeta$ matrices $ U^\T\tilde{U},\tilde{\Sigma}^{-1},\dots,\tilde{U}^\T U$, where each of them is full-rank with smallest singular values bounded away from $0$; see Lemma \ref{lm:iG2iG} and note $U^\T U = I_\zeta$. Then the first three inequalities in \eqref{Mclt} follow by repeatedly using the fact that 
$$
\sigma_{\min}(AB) \geq \sigma_{\min}(A)\sigma_{\min}(B),
$$ 
for two full-rank and square matrices $A$ and $B$; see, e.g., \citet[Problem III.6.14]{bhatia2013matrix}. Combining \eqref{sigmat:1} and \eqref{Mclt}, we conclude that
\bee\label{sigmabound}
\tilde{\sigma}^2 &\geq   \|v_n\|^2\cdot c_1 (v_n/\|v_n\|)^\T M_{clt} (v_n/\|v_n\|)
\\
&\geq \|v_n\|^2\sigma_{\min}(M_{clt})
\\
&\succsim \|v_n\|^2.
\ee
We thus derive the lower bound $\tilde{\sigma}\succsim \|v_n\|$. On the other hand, we aim at deriving
\bee\label{thm2:lin:1}
\sum_{i = 1}^n\E\Big[\big|\xi_n(X_i,T_i,Y_i)\big|^2  1\big\{\big|\xi_n(X_i,T_i,Y_i)\big|>\delta\big\}\Big] \rightarrow 0,
\ee
to verify the Lindberg's condition. By H\"{o}lder's inequality for fixed $c_0 >0$ and any $\delta > 0$,
\bee\label{holder:in}
&\E\Big[\big|\xi_n(X,T,Y)\big|^2  1\big\{\big|\xi_n(X,T,Y)\big|>\delta\big\}\Big] 
\\
&\leq \Big[\E\Big\{\big|\xi_n(X,T,Y)\big|^{2\cdot{(2+c_0)}/{2}}\Big\}\Big]^{2/(2+c_0)} \Big[\E\Big[ 1\big\{\big|\xi_n(X,T,Y)\big|>\delta\big\}\Big]\Big]^{1-2/(2+c_0)}
\\
&= \Big[\E\Big\{\big|\xi_n(X,T,Y)\big|^{2+c_0}\Big\}\Big]^{2/(2+c_0)} \Big[\Pr\Big\{|\xi_n(X,T,Y)|>\delta\Big\}\Big]^{1-2/(2+c_0)}.
\ee
For the first factor on the right-hand side of \eqref{holder:in}, we have
\bee\label{firtfactor}
&\E\Big\{\big|\xi_n(X,T,Y)\big|^{2+c_0}\Big\} 
\\
&= n^{-(2+c_0)/2}\tilde{\sigma}^{-(2 + c_0)}E\big[ \kappa^{2+c_0}_n(X,T)\big|Y - \mu(X,T)\big|^{2+c_0}\big]
\\
&\leq n^{-(2+c_0)/2}\tilde{\sigma}^{-(2 + c_0)} \cdot\big|\sup_{(x,t)\in\mathbb{X}\times\mathbb{T}} \kappa_n(x,t)\big|^{c_0}\cdot E\big[ \kappa^{2}_n(X,T)\big|Y - \mu(X,T)\big|^{2+c_0}\big]
\\
&\precsim n^{-(2+c_0)/2}K^{c_0/2},
\ee
where the last inequality follows by $\tilde{\sigma}\succsim \|v_n\|$ and the following bounds.
\begin{itemize}
\item By Lemmas \ref{am:psi} and \ref{lm:approx}, we have
\bee\label{kappaclt}
\sup_{(x,t)\in\mathbb{X}\times\mathbb{T}} \kappa_n(x,t) &\leq  \sup_{(x,t)\in\mathbb{X}\times\mathbb{T}} \big|v_n^\T U^\T\tilde{U}\tilde{\Sigma}^{-1}\tilde{U}^\T UU^\T\big\{\Psi(x,t) - \Gamma(x)\big\} \big|
\\
&\leq \|v_n\|\|U\|_2\|\tilde{U}\|_2\|\tilde{\Sigma}^{-1}\|_2\|\tilde{U}\|_2\|U\|_2^2\big(\|\Psi\|_{\mathbb{X}\times\mathbb{T}} + \|\Gamma\|_{\mathbb{X}}\big)
\\
&\precsim \sqrt{K}\|v_n\|.
\ee
\item  Similar to \eqref{lower:sigma}, we have
\bee\nonumber
&E\big\{ \kappa^{2}_n(X,T)\big|Y - \mu(X,T)\big|^{2+c_0}\big\} 
\\
&= v_n^\T U^\T\tilde{U}\tilde{\Sigma}^{-1}\tilde{U}^\T UU^\T E\Big[\big\{\Psi(X,T) - \Gamma(X)\big\}\big\{\Psi(X,T) - \Gamma(X)\big\}^\T\big\{Y - \mu(X,T)\big\}^{2 +c_0}\Big] UU^\T\tilde{U}\tilde{\Sigma}^{-1}\tilde{U}^\T U v_n
\\
&=v_n^\T U^\T\tilde{U}\tilde{\Sigma}^{-1}\tilde{U}^\T UU^\T E\Big[\big\{\Psi(X,T) - \Gamma(X)\big\}\big\{\Psi(X,T) - \Gamma(X)\big\}^\T E\Big[\big\{Y - \mu(X,T)\big\}^{2 +c_0}\mid X,T\Big]\Big] UU^\T\tilde{U}\tilde{\Sigma}^{-1}\tilde{U}^\T U v_n
\\
&\precsim v_n^\T U^\T\tilde{U}\tilde{\Sigma}^{-1}\tilde{U}^\T UU^\T E\Big[\big\{\Psi(X,T) - \Gamma(X)\big\}\big\{\Psi(X,T) - \Gamma(X)\big\}^\T \Big] UU^\T\tilde{U}\tilde{\Sigma}^{-1}\tilde{U}^\T U v_n
\\
& = v_n^\T U^\T\tilde{U}\tilde{\Sigma}^{-1}\tilde{U}^\T UU^\T R_n UU^\T\tilde{U}\tilde{\Sigma}^{-1}\tilde{U}^\T U v_n
\\
&\leq \|v_n\| \|U\|_2\cdots\| v_n\|
\\
&\precsim \|v_n\|^2,
\ee 
where the second equality follows by the law of total expectation, the first inequality follows by \eqref{clt:con}, and the last inequality is similar to \eqref{kappaclt}. 
\end{itemize}
\par
We note the bound \eqref{firtfactor} actually holds for not only $c_0$ but also all $c\in[0,c_0]$ with similar arguments. Thus when $c = 0$, one has $\E|\xi_n(X,T,Y)|^{2} \precsim n^{-1}$. Then for the second factor on the right-hand side of \eqref{holder:in}, by Chebyshev's inequality,
\bee\label{firtfactor2}
\Pr\Big\{|\xi_n(X,T,Y)|>\delta\Big\} \leq \frac{E|\xi_n(X,T,Y)|^2}{\delta^2} \precsim n^{-1},
\ee
with fixed $\delta > 0$. Combining \eqref{holder:in}, \eqref{firtfactor}, and \eqref{firtfactor2}, we conclude
\bee\label{fl3}
\sum_{i = 1}^n\E\Big[\big|\xi_n(X,T,Y)\big|^2  1\big\{\big|\xi_n(X,T,Y)\big|>\delta\big\}\Big]  & \precsim n \cdot(n)^{-1}\cdot K^{c_0/(2 + c_0)} \cdot n^{-1 + 2/(2 + c_0)} 
\\
& = (K/n)^{c_0/(2 + c_0)}
\\
&\rightarrow 0,
\ee
under Assumption \ref{am:Kn}. Thus \eqref{thm2:lin:1} is verified. By \eqref{fl1}, \eqref{fl2}, and \eqref{fl3}, the conditions of the Lindeberg-Feller CLT  
 are verified. Thus we have
\bee\label{Rnneg2}
{\sqrt{n}}\tilde{\sigma}^{-1}v_n^\T U^\T\tilde{U}\tilde{\Sigma}^{-1}\tilde{U}^\T UU^\T\Delta_{1,1} \leadsto \mathcal{N}(0,1).
\ee
By \eqref{rndef} and \eqref{sigmabound}, $K\asymp n^{\epsilon_{clt} + (d + 1)/2p}$ with $\epsilon_{clt} < 1/2 - (d + 1)/(2p)$, we have
\bee\label{Rnneg}
 {\sqrt{n}}\tilde{\sigma}^{-1}R_n 
&=  \mathcal{O}_P\Big\{  {\sqrt{n}} K^{-p/(d + 1)} \big(1 +\rho^{-1}K\log n/n\big)  \Big\}
  +o_P\Big( \sqrt{\log n } K n^{-3/4}  \Big) = o_P(1),
\ee
with similar rate comparison arguments as \eqref{simple:rate}--\eqref{rate:final}. For example, the rate term $\rho^{-1}K\log n/n \precsim (\log n)n^{ -(1 + \delta)\epsilon_{clt} +  1 - (d+1)/(2p) +  (d + 1)/(2p) + \epsilon_{clt}-1}\precsim (\log n)n^{-\delta\epsilon_{clt}}\rightarrow 0$. Other terms can be bounded similarly. Finally combining \eqref{clt:basic}, \eqref{Rnneg2}, \eqref{Rnneg}, and applying the Slutsky's theorem leads to \eqref{main:clt}. 
\par
Notably, when $\delta > 0$ is sufficiently large, we can have
\bee
n^{-1 + (d+1)/(2p) + (1+\delta)\epsilon_{clt}} =  n^{-1/2 + \{-1/2 + (d+1)/(2p) + (1 +\delta)\epsilon_{clt}\}}\prec n^{-1/2},
\ee
where we recall $\epsilon_{clt}\in[0,1/2 -(d + 1)/(2p))$ and $n^{-1/2}\precsim\sqrt{K\log n/n}$ as $K$ is growing. So we can always select $\rho = n^{-1/2}$ to satisfy the general condition $n^{-1 + (d+1)/(2p) + (1+\delta)\epsilon_{clt}}\precsim\rho\precsim \sqrt{K\log n/n}$.
\par
\noindent\fbox{Confidence interval} We finally prove the confidence interval part of Theorem \ref{thm:main}. First we simplify $\sigma_n^2$. From \eqref{l2:simplify}, we have
\bee\label{vsimple}
v_n^\T U^\T = \Psi^\T(x_0,t_0) - \Psi^\T(x_0,0).
\ee
On the other hand, by Lemma \ref{lm:svdR} (i), we have $U_\perp^\T\{\Psi(x,t) - \Gamma(x)\} = 0$ and thus
\bee\label{msimple}
&E\Big[\big\{\Psi(X,T) - \Gamma(X)\big\}\big\{\Psi(X,T) - \Gamma(X)\big\}^\T\big\{Y - \mu(X,T)\big\}^2\Big] 
\\ 
&= UU^\T E\Big[\big\{\Psi(X,T) - \Gamma(X)\big\}\big\{\Psi(X,T) - \Gamma(X)\big\}^\T\big\{Y - \mu(X,T)\big\}^2\Big] UU^\T
\\
& + U_\perp U_\perp^\T E\Big[\big\{\Psi(X,T) - \Gamma(X)\big\}\big\{\Psi(X,T) - \Gamma(X)\big\}^\T\big\{Y - \mu(X,T)\big\}^2\Big] U_\perp U_\perp^\T
\\
&= UU^\T E\Big[\big\{\Psi(X,T) - \Gamma(X)\big\}\big\{\Psi(X,T) - \Gamma(X)\big\}^\T\big\{Y - \mu(X,T)\big\}^2\Big] UU^\T
\\
& + U_\perp  E\Big[\underbrace{U_\perp^\T\big\{\Psi(X,T) - \Gamma(X)\big\}}_{ = 0}\underbrace{\big\{\Psi(X,T) - \Gamma(X)\big\}^\T U_\perp}_{ = 0} \big\{Y - \mu(X,T)\big\}^2\Big]  U_\perp^\T
\\
& = UU^\T E\Big[\big\{\Psi(X,T) - \Gamma(X)\big\}\big\{\Psi(X,T) - \Gamma(X)\big\}^\T\big\{Y - \mu(X,T)\big\}^2\Big] UU^\T.
\ee
By \eqref{lower:sigma}, \eqref{vsimple}, and \eqref{msimple}, $\tilde{\sigma}$ can be simplified to
\bee\nonumber
\tilde{\sigma}^2 = \{ \Psi(x_0,t_0) - \Psi(x_0,0)\}^\T A_n B_n A_n \{ \Psi(x_0,t_0) - \Psi(x_0,0)\},
\ee
where $A_n = \tilde{U}\tilde{\Sigma}^{-1}\tilde{U}^\T$ and $B_n = E\big[\{\Psi(X,T) - \Gamma(X)\}\{\Psi(X,T) - \Gamma(X)\}^\T\{Y - \mu(X,T)\}^2\big]$. Recall that in Algorithm \ref{alg:sigma}, our variance estimator is
\bee\nonumber
\hat{\sigma}^2 =  \tilde{v}_n^\T \hat{A}_n\hat{B}_n\hat{A}_n v_n\tilde{v}_n,
\ee
where we define,
\bee\nonumber
\hat{A}_n &= \hat{U}\hat{\Sigma}^{-1}\hat{U}^\T, 
\\
\hat{B}_n &= P_n\big[\{\Psi(X,T) - \hat{\Gamma}(X)\}\{\Psi(X,T) - \hat{\Gamma}(X)\}^\T\{Y - \hat{\mu}(X,T)\}^2\big]
\\
\tilde{v}_n &=\Psi(x_0,t_0) - \Psi(x_0,0).
\ee
We then decompose
\bee\nonumber
\hat{\sigma}^2 - \tilde{\sigma}^2& = \tilde{v}_n^\T \hat{A}_n\hat{B}_n\hat{A}_n \tilde{v}_n - \tilde{v}_n^\T A_n B_n A_n \tilde{v}_n
\\
&=\tilde{v}_n^\T (\hat{A}_n - A_n)\hat{B}_n\hat{A}_n \tilde{v}_n + \tilde{v}_n^\T  A_n(\hat{B}_n - {B}_n)\hat{A}_n \tilde{v}_n + \tilde{v}_n^\T A_n B_n (\hat{A}_n - {A}_n) \tilde{v}_n.
\ee
By Lemma \ref{lm:iG2iG} (v), we have the spectral norms of $\hat{A}_n,A_n,\hat{B}_n,B_n$ are constantly bounded, while $\|\hat{A}_n - A_n\|_2, \|\hat{B}_n - B_n\|_2 \rightarrow 0$ wpa1. We thus have wpa1,
\bee\label{sigmabound2}
&|\hat{\sigma}^2 - \tilde{\sigma}^2| 
\\
&\leq \|\tilde{v}_n\|^2 \|\hat{A}_n - A_n\|_2\|\hat{B}_n\|_2\|\hat{A}_n\|_2 + \|\tilde{v}_n\|^2  \|A_n\|_2\| \hat{B}_n - {B}_n\|_2\|\hat{A}_n\|_2  + \|\tilde{v}_n\|^2\| A_n \|_2\|B_n \|_2 \|\hat{A}_n - {A}_n\|_2
\\
&\prec\|\Psi(x_0,t_0) - \Psi(x_0,0)\|^2
\\
& = \{\Psi(x_0,t_0) - \Psi(x_0,0)\}^\T \{\Psi(x_0,t_0) - \Psi(x_0,0)\} 
\\
& = \{\Psi(x_0,t_0) - \Psi(x_0,0)\}^\T UU^\T \{\Psi(x_0,t_0) - \Psi(x_0,0)\} +  \{\Psi(x_0,t_0) - \Psi(x_0,0)\}^\T U_\perp U_\perp^\T \{\Psi(x_0,t_0) - \Psi(x_0,0)\}
\\
& = \{\Psi(x_0,t_0) - \Psi(x_0,0)\}^\T UU^\T \{\Psi(x_0,t_0) - \Psi(x_0,0)\} 
\\
&= \|v_n\|^2,
\ee
where the second and third equalities can be derived similar to \eqref{l2:simplify}. With \eqref{sigmabound} and \eqref{sigmabound2}, we have wpa1,
\bee\nonumber
\frac{\hat{\sigma}^2}{\tilde{\sigma}^2} = 1 + \frac{\hat{\sigma}^2 - \tilde{\sigma}^2}{\tilde{\sigma}^2} \rightarrow 1,
\ee
and thus $\hat{\sigma}^{-1}\tilde{\sigma}\rightarrow 1$ wpa1. Then by \eqref{main:clt} and Slutsky's theorem, we finally have
\bee\nonumber
{\sqrt{n}}\hat{\sigma}^{-1}\big\{\hat{\tau}(x_0,t_0) - \tau(x_0,t_0)\big\} &=\hat{\sigma}^{-1}\tilde{\sigma} \cdot {\sqrt{n}}\tilde{\sigma}^{-1}\big\{\hat{\tau}(x_0,t_0) - \tau(x_0,t_0)\big\}
\\
&\leadsto \mathcal{N}(0,1).
\ee
\QED
 
%
\end{document}